\documentclass[11pt]{article}

\usepackage{amssymb,amsmath,bm}
\usepackage{amsthm}

\usepackage{textcomp}
\usepackage{enumerate}      
\usepackage{graphicx}        
\usepackage{caption}

\usepackage{tabu}

\usepackage[export]{adjustbox}

\usepackage{mathrsfs}

\usepackage{url}

\usepackage{array}
\newcommand{\PreserveBackslash}[1]{\let\temp=\\#1\let\\=\temp}
\newcolumntype{C}[1]{>{\PreserveBackslash\centering}p{#1}}
\newcolumntype{R}[1]{>{\PreserveBackslash\raggedleft}p{#1}}
\newcolumntype{L}[1]{>{\PreserveBackslash\raggedright}p{#1}}

\renewcommand{\setminus}{{\smallsetminus}}

\usepackage{amssymb,amsmath,bm}
\usepackage{amsthm}
\usepackage{hyperref}
\usepackage{mathrsfs}
\usepackage{textcomp}
\usepackage{enumerate}
\usepackage{graphicx}
\usepackage{tikz}
\usepackage{verbatim}



\newtheorem{theorem}{Theorem}[section]
\newtheorem{lemma}[theorem]{Lemma}
\newtheorem{proposition}[theorem]{Proposition}
\newtheorem{definition}[theorem]{Definition}
\newtheorem{corollary}[theorem]{Corollary}

\theoremstyle{remark}
\newtheorem{remark}[theorem]{Remark}

\theoremstyle{remark}

\numberwithin{equation}{section}
\newcommand{\ads}{\mathbb A\mathrm d\mathbb S}

\textwidth 16cm \textheight 24.5cm \voffset -2cm \hoffset -2.0cm
\textwidth 16cm \textheight 25.0cm \voffset -2cm \hoffset -2.0cm
\textwidth 16cm \textheight 23cm \voffset -2cm \hoffset -2.0cm

\setlength{\tabcolsep}{1pt}

\addtolength{\columnsep}{2mm}

\begin{document}
\title{\bf Asymptotics of $b$-$6j$ symbols and anti-de Sitter tetrahedra}

\author{Tianyue Liu, Shuang Ming, Xin Sun, Baojun Wu and Tian Yang}

\date{}

\maketitle

\begin{abstract}
In this paper, we study the asymptotics of the $6j$-symbols for the principal series of the modular double of $\mathrm U_q\mathfrak{sl}(2;\mathbb R)$, and of their analytic extension -- what we call the $b$-$6j$ symbols, relating them in various cases to the volume of truncated hyperideal  tetrahedra in the hyperbolic and the anti-de Sitter geometry. To the best of our knowledge, this is the first time that the  anti-de Sitter geometry appears in
the asymptotics of quantum invariants.
In addition, based on the connection to conformal field theory,  we reveal a correspondence between the edge lengths and the dihedral angles of truncated hyperideal anti-de Sitter tetrahedra and the Fenchel-Nielsen coordinates of hyperbolic four-holed spheres. We also provide a concrete instance of $3D/2D$ holography, 
in the spirit of the  AdS/CFT correspondence.
\end{abstract}

\tableofcontents

\section{Introduction}

The relationship between the asymptotics of  $6j$-symbols and the volume of tetrahedra is an extensively studied topic, whose significance is largely driven by its profound implications for 
three-dimensional quantum gravity. Inspired by the work of Wigner \cite{W}~, Ponzano and Regge \cite{PR} proposed a relation between the {Wigner} $6j$-symbols of $\mathfrak{sl}(2;\mathbb C)$ and the Euclidean tetrahedra, which was rigorously proved in \cite{R}. The $6j$-symbols of the quantum group  $\mathrm U_q\mathfrak{sl}(2;\mathbb C)$  was considered in \cite{TW,C,CM,BY2}, which in various cases were related to the tetrahedra in the spherical, Euclidean and hyperbolic geometry. In our previous work \cite{LMSWY}, we considered the $6j$-symbols associated to the principal series (also known as the positive representations) of the modular double of $\mathrm U_{q}\mathfrak{sl}(2;\mathbb R)$, denoted by 
$\mathrm U_{q\tilde{q}}\mathfrak{sl}(2;\mathbb R)$, which is a quantum group of significant interest in mathematical physics \cite{Fad95, Fad16, PT99,TesVar} and representation theory \cite{FI14}. When the six parameters of a $\mathrm U_{q\tilde{q}}\mathfrak{sl}(2;\mathbb R)$  $6j$-symbol scale according to the edge lengths of a truncated hyperideal hyperbolic tetrahedron or a flat tetrahedron, we established the connection between the semi-classical limit of the $6j$-symbol 
and the hyperbolic volume of the tetrahedron. These $6j$-symbols were recently studied in the physics literature \cite{CHJL, CEZ, Hartman1, Hartman2} as the building blocks of a $(2+1)$-dimensional topological quantum field theory (TQFT), called the Virasoro TQFT, which is closely related to the $3D$ quantum  anti-de Sitter (AdS) gravity. The Virasoro TQFT is also believed to be equivalent to the Teichm\"uller TQFT considered by Andersen-Kashaev \cite{AK}. Our previous work \cite{LMSWY} initiated a mathematical study of the Virasoro TQFT by defining the Turaev-Viro type invariants for hyperbolic $3$-manifolds with totally geodesic boundary, and obtained a thorough understanding of their semi-classical limit and its relationship with the hyperbolic volume and the adjoint twisted Reidemeister torsion of the manifolds. 
\medskip

In this paper, we complete the study of the asymptotics of the $\mathrm U_{q\tilde{q}}\mathfrak{sl}(2;\mathbb R)$ $6j$-symbols. As an intermediate step, we showed that if the six parameters of a $\mathrm U_{q\tilde{q}}\mathfrak{sl}(2;\mathbb R)$ $6j$-symbol do not correspond to the edge lengths of a truncated hyperideal hyperbolic tetrahedron nor a flat tetrahedron, then they must correspond to the edge lengths of a truncated hyperideal anti-de Sitter tetrahedron (see Theorem \ref{classification}). Then we established the connection between the semi-classical limit of the $6j$-symbol 
and the anti-de Sitter volume of the tetrahedron in this case (see Theorem \ref{cov}). 
The $\mathrm U_{q\tilde{q}}\mathfrak{sl}(2;\mathbb R)$ $6j$-symbol, as a function of its six parameters, has a meromorphic extension out of the spectrum of the principal series of $\mathrm U_{q\tilde{q}}\mathfrak{sl}(2;\mathbb R)$, which we call the $b$-$6j$ symbol (see Definition \ref{def: b-6j}).
We also study the asymptotics of these  $b$-$6j$ symbols when the six parameters scale according to the dihedral angles of a truncated hyperideal hyperbolic tetrahedron or a truncated hyperideal anti-de Sitter tetrahedron (both beyond the spectrum of the principal series of $\mathrm U_{q\tilde{q}}\mathfrak{sl}(2;\mathbb R)$), relating them to the volume of the tetrahedron in the corresponding geometry (see Theorem \ref{vol3} and Theorem \ref{vol2}). The asymptotics of the $b$-$6j$ symbols was previously considered in the physics literature \cite{CHJL, TesVar, Hartman2}. However, to the best of our knowledge, the volume of anti-de Sitter tetrahedra has never appeared prior to our work. To prove the main results, we establish several  basic properties of truncated hyperideal anti-de Sitter tetrahedra, including criteria of the edge lengths and the dihedral angles, and the  volume formulae, which are also of independent interest (see Subsection \ref{sec1.3}). 

The $b$-$6j$ symbols play an important role in conformal field theory (CFT). On the one hand, they describe the fusion kernel of the  Virasoro conformal blocks,  which is intimately related to the quantization of the Teichm\"uller space. In light of this, we find an explicit correspondence between the edge lengths and the dihedral angles of truncated hyperideal anti-de Sitter tetrahedra and the Fenchel-Nielsen coordinates of hyperbolic four-hole spheres (see Subsection \ref{dig1}), realizing a proposal from the physics literature~\cite{NRS}.
On the other hand, the $b$-$6j$ symbols describe the boundary three-point function of the Liouville CFT. From the path integral perspective, the classical Liouville action  arises in the semi-classical limit of the $b$-$6j$ symbols, which is connected to $3D$ hyperbolic and $3D$ anti-de Sitter geometry  through our Theorem \ref{Thm1.1} and Theorem \ref{cov}. We view this correspondence as an instance of the $3D/2D$ holography, in the spirit of the AdS/CFT correspondence. See Subsection \ref{CFT} for more discussion.


\subsection{Definition and known results}
Let $S_b$ be the  \emph{double sine function} defined for $z\in \mathbb C$ with $0<\mathrm{Re}(z)<Q$ by
\begin{equation*}
S_b(z)=\exp\Bigg(\int_\Omega\frac{\sinh\Big(\big(\frac{Q}{2}-z\big)t\Big)}{4t\sinh(\frac{bt}{2})\sinh(\frac{t}{2b})}dt\Bigg),
\end{equation*}
where $b\in (0,1)$ and $Q=b+b^{-1}$, 
and the contour $\Omega$ goes along the real line and passes above the pole of the integrand at the origin.

We call a six-tuple $(a_1,\dots,a_6)$ of complex numbers  \emph{$b$-admissible} if for all $i,j\in\{1,2,3,4\},$ 
$$0<\mathrm{Re}q_j-\mathrm{Re}t_i<Q,$$
 where 
\begin{equation*}
\begin{split}
   & t_1=a_1+a_2+a_3,\quad  t_2=a_1+a_5+a_6,\quad t_3=a_2+a_4+a_6,\quad  t_4=a_3+a_4+a_5,\\
   & q_1=a_1+a_2+a_4+a_5,\quad q_2=a_1+a_3+a_4+a_6,\quad  
q_3=a_2+a_3+a_5+a_6\quad\text{and}\quad q_4=2Q.
\end{split}
\end{equation*}
In particular, if $(a_1,\dots,a_6)$ is $b$-admissible, then we have
$$\max\{\mathrm{Re}t_1,\mathrm{Re}t_2,\mathrm{Re}t_3,\mathrm{Re}t_4\}<\min\{\mathrm{Re}q_1,\mathrm{Re}q_2,\mathrm{Re}q_3,\mathrm{Re}q_4\}.$$

\begin{definition}[$b$-$6j$ symbols]\label{def: b-6j}
The \emph{$b$-$6j$ symbol} for a $b$-admissible six-tuple $(a_1,\dots,a_6)$ is given by 
\begin{equation*}
\bigg\{\begin{matrix} a_1 & a_2 & a_3 \\ a_4 & a_5 & a_6 \end{matrix} \bigg\}_b=\Bigg(\frac{1}{\prod_{i=1}^4\prod_{j=1}^4S_b(q_j-t_i)}\Bigg)^{\frac{1}{2}}\int_\Gamma \prod_{i=1}^4S_b(u-t_i)\prod_{j=1}^4S_b(q_j-u)d u,
\end{equation*}
where the contour $\Gamma$ is any vertical line passing the interval $(\max\{\mathrm{Re}t_i\},\min\{\mathrm{Re}q_j\})$.
\end{definition}

It is proved in Proposition \ref{lem: ab conver} that this integral converges absolutely, and is independent of the choice of the vertical line $\Gamma.$


\begin{remark}
When the six-tuples $(a_1,\dots,a_6)$ are of the form $\Big(\frac{Q}{2}+\mathbf i\frac{l_1}{2\pi b},\dots,\frac{Q}{2}+\mathbf i\frac{l_6}{2\pi b}\Big) $ for some $(l_1,\dots,l_6)\in\mathbb R^6_{>0},$  the $b$-$6j$ symbols are the $6j$-symbols for the principal series of the quantum group $\mathrm U_{q\tilde q}\mathfrak{sl}(2;\mathbb R)$, the modular double of $\mathrm U_{q}\mathfrak{sl}(2;\mathbb R).$ See e.g. \cite[Section 5.1]{LMSWY} for a brief overview. Although conceptually important, the technical part of the current paper does not rely on this  quantum group interpretation.
\end{remark}

In the previous work \cite{LMSWY}, we studied the asymptotics of the $\mathrm U_{q\tilde q}\mathfrak{sl}(2;\mathbb R)$ $6j$-symbols, and obtained the following results. 

\begin{theorem}\label{Thm1.1}\cite[Theorem 1.2 and Theorem 1.3]{LMSWY}
\begin{enumerate}[(1)]
\item If $(l_1,\dots, l_6)\in{\mathbb R_{>0 }^6}$ are the lengths of the edges of a truncated hyperideal tetrahedron $\Delta$, then as $b\to 0,$
$$\bigg\{\begin{matrix} \frac{Q}{2}{+}\mathbf i\frac{l_1}{2\pi b} & \frac{Q}{2}{+} \mathbf i\frac{l_2}{2\pi b} & \frac{Q}{2}{+} \mathbf i\frac{l_3}{2\pi b} \\ \frac{Q}{2}{+} \mathbf i\frac{l_4}{2\pi b} & \frac{Q}{2}{+} \mathbf i\frac{l_5}{2\pi b} & \frac{Q}{2}{+} \mathbf i\frac{l_6}{2\pi b} \end{matrix} \bigg\}_b=\frac{e^{\frac{-\mathrm{Cov}(\Delta)}{\pi b^2}}}{\sqrt[4]{-\det\mathrm{Gram}(\Delta)}}\Big(1+O\big(b^2\big)\Big),$$
where $\mathrm{Cov}(\Delta)$ is the co-volume of $\Delta$, and  $\mathrm{Gram}(\Delta)$ is the Gram matrix of $\Delta$ in the edge lengths.
\item If  $(l_1,\dots, l_6)\in{\mathbb R_{>0}^6}$ are the lengths of the edges of a flat tetrahedron, then as $b\to 0,$
$$\lim_{b\to 0}\pi b^2\log\bigg\{\begin{matrix} \frac{Q}{2}{+} \mathbf i\frac{l_1}{2\pi b} & \frac{Q}{2}{+} \mathbf i\frac{l_2}{2\pi b} & \frac{Q}{2}{+} \mathbf i\frac{l_3}{2\pi b} \\ \frac{Q}{2}{+} \mathbf i\frac{l_4}{2\pi b} & \frac{Q}{2}{+} \mathbf i\frac{l_5}{2\pi b} & \frac{Q}{2}{+} \mathbf i\frac{l_6}{2\pi b} \end{matrix} \bigg\}_b=-\widetilde{\mathrm{Cov}}(l_1,\dots,l_6),$$
where $\widetilde{\mathrm{Cov}}(l_1,\dots,l_6)$ is the extended co-volume function of $(l_1,\dots,l_6)$ defined in \cite{LY}.
\item  If $(l_1,\dots, l_6)\in{\mathbb R_{>0}^6}$ are neither the lengths of the edges of a truncated hyperideal tetrahedron nor the lengths of the edges a flat tetrahedron, then
$$\limsup_{b\to 0}\pi b^2\log\Bigg|\bigg\{\begin{matrix} \frac{Q}{2}{+} \mathbf i\frac{l_1}{2\pi b} & \frac{Q}{2}{+} \mathbf i\frac{l_2}{2\pi b} & \frac{Q}{2}{+} \mathbf i\frac{l_3}{2\pi b} \\ \frac{Q}{2}{+} \mathbf i\frac{l_4}{2\pi b} & \frac{Q}{2}{+} \mathbf i\frac{l_5}{2\pi b} & \frac{Q}{2}{+} \mathbf i\frac{l_6}{2\pi b} \end{matrix} \bigg\}_b\Bigg|= -\widetilde{\mathrm{Cov}}(l_1,\dots,l_6),$$
where $\widetilde{\mathrm{Cov}}(l_1,\dots,l_6)$ is the extended co-volume function of $(l_1,\dots,l_6)$ defined in \cite{LY}.
\end{enumerate}
\end{theorem}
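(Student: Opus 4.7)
The plan is to apply the method of steepest descent to the contour integral in Definition \ref{def: b-6j}. After substituting the scaled parameters $a_k=\frac{Q}{2}+\mathbf{i}\frac{l_k}{2\pi b}$, both the prefactor $\prod_{i,j}S_b(q_j-t_i)^{-1/2}$ and the integrand are analyzed in the semiclassical regime $b\to 0$ using the standard asymptotic expansion of the double sine function, which (after rescaling by $b$) takes the schematic form
$$
\log S_b\Big(\tfrac{Q}{2}+\tfrac{\mathbf{i}x}{2\pi b}\Big)=\frac{1}{2\pi\mathbf{i}b^2}\,\mathrm{Li}_2(-e^{x})+O(1),
$$
uniformly on compact subsets avoiding the poles. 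Changing the integration variable to $u=\frac{Q}{2}+\mathbf{i}\frac{\xi}{2\pi b}$ recasts the integral as a Laplace-type expression
$$
\int_\Gamma e^{\Phi(\xi;\,l_1,\dots,l_6)/(\pi b^2)}\,d\xi\cdot\big(1+O(b^2)\big),
$$
where $\Phi$ is an explicit finite combination of dilogarithm values in the variables $\xi$ and $l_1,\dots,l_6$.

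The next step is to locate the critical point $\xi^\ast$ of $\Phi$. The equation $\partial_\xi\Phi(\xi^\ast)=0$ unfolds into a multiplicative identity between $e^{\xi^\ast}$ and the exponentials of the $l_k/2$ that is precisely the dual cosine law of a truncated hyperideal hyperbolic tetrahedron. I would match the two sides to show that, in case (1), the sixtuple $(l_1,\dots,l_6)$ is the edge-length data of such a tetrahedron $\Delta$ if and only if a real critical point exists on the integration contour, and in that case $\xi^\ast$ is parametrized by one of the dihedral angles of $\Delta$. The Schl\"afli formula then identifies the sum of $\Phi(\xi^\ast)$ with the prefactor contribution as $-\mathrm{Cov}(\Delta)$. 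A Gaussian integration in a neighbourhood of $\xi^\ast$ supplies the subleading factor $\sqrt{2\pi b^2/|\Phi''(\xi^\ast)|}$, which must be matched to $\big(-\det\mathrm{Gram}(\Delta)\big)^{-1/4}$; this reduces to a classical determinant identity relating the Hessian of the dilogarithm sum to the Gram matrix of the tetrahedron.

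For case (2) the saddle degenerates: the edge condition for a flat tetrahedron coincides with $\Phi''(\xi^\ast)=0$, and although one loses control of the subleading prefactor, the leading rate $\lim_{b\to 0}\pi b^2\log(\cdot)=\Phi(\xi^\ast)$ extends continuously and is identified with $-\widetilde{\mathrm{Cov}}(l_1,\dots,l_6)$ via the extension constructed in \cite{LY}. For case (3) there is no real critical point on the original contour; deforming $\Gamma$ to the steepest-descent path through the nearest complex critical point gives an exponential upper bound whose exponent, by analytic continuation of the same dilogarithm identity, equals $-\widetilde{\mathrm{Cov}}(l_1,\dots,l_6)$.

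The main technical obstacles are: (i) justifying the contour deformation in the presence of the infinitely many poles of $S_b$ and verifying that no pole is crossed (or, if some are, accounting for the residues in the final count); (ii) identifying the critical-point equation with the hyperbolic dual cosine law, which requires careful bookkeeping of signs and branches of the dilogarithm, especially because the arguments of $\mathrm{Li}_2$ range over large regions of the complex plane; (iii) matching the Hessian determinant with $-\det\mathrm{Gram}(\Delta)$, an identity transparent only after a canonical trigonometric substitution; and (iv) handling the degenerate case (2), where the steepest-descent prefactor blows up and the subleading behaviour must be absorbed into the logarithmic asymptotic rather than computed explicitly.
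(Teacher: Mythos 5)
This theorem is quoted from \cite{LMSWY} rather than reproved here, but the method used there (and mirrored in Section \ref{sec3} for Theorems \ref{cov}, \ref{vol3} and \ref{vol2}) is exactly the scheme you outline: rescale the integration variable, expand $\log S_b$ into the dilogarithm potential $U_{\boldsymbol\alpha}$, locate the critical points via the quadratic equation $Az^2+Bz+C=0$, identify the critical value with the (extended) co-volume through the Schl\"afli formula, and match the Hessian with the Gram determinant via the identity $B^2-4AC=16\det\mathrm{Gram}$ (Propositions \ref{Hess}, \ref{Hess2}), with the saddle point lemma of Proposition \ref{saddle}. For parts (1) and (2) your proposal is therefore essentially the same route, and the obstacles (i)--(iv) you list are precisely the points that have to be, and are, handled there.

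Your treatment of part (3), however, has a genuine gap. The statement is a $\limsup$ \emph{equality}, while your argument --- deform $\Gamma$ to the steepest-descent path through the nearest complex critical point and record an exponential upper bound --- only yields $\limsup_{b\to0}\pi b^2\log|\cdot|\leqslant-\widetilde{\mathrm{Cov}}(l_1,\dots,l_6)$. Moreover the picture ``no critical point on the contour'' is not what actually happens: by Theorem \ref{classification}, in case (3) the $l_k$ are the edge lengths of an anti-de Sitter tetrahedron, and by Proposition \ref{Prop3.1} there are \emph{two} nondegenerate critical points $\xi_1^*,\xi_2^*$ with $\mathrm{Re}\xi_1^*=\mathrm{Re}\xi_2^*=2\pi$, equal imaginary parts $\mathrm{Im}U_{\boldsymbol\alpha}=-2\widetilde{\mathrm{Cov}}$ and opposite real parts. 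Both contribute at the same exponential rate, producing the oscillatory factor $2\cos\big(\mathrm{Cov}(\Delta)/\pi b^2+\pi/4\big)$ of Theorem \ref{cov}; the $\limsup$ equality then requires showing that this factor is bounded away from zero along a sequence of $b\to0$, i.e.\ that the two saddle contributions do not cancel for all small $b$. Without this two-saddle analysis (or some other lower bound on the integral) the ``$=$'' in (3) is not established. A milder issue of the same kind occurs in (2): for a flat tetrahedron $\det\mathrm{Gram}=0$ forces $B^2-4AC=0$, so the Hessian vanishes and the saddle is degenerate; the claimed log-limit needs a matching \emph{lower} bound near the degenerate critical point, not merely continuity of the exponent.
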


The main results of the current  article are listed in Subsections \ref{sec1.1}, \ref{sec1.2} and \ref{sec1.3} below. 

\subsection{Asymptotics of  $\mathrm U_{q\tilde{q}}\mathfrak{sl}(2;\mathbb R)$ $6j$-symbols} \label{sec1.1}

In Section \ref{sec2}, we first give the following classification of six-tuples $(l_1,\dots,l_6)$ in $\mathbb R_{>0}^6$.
\bigskip

\noindent {\bf Theorem  \ref{classification}.}
{\em Any  $(l_1,\dots, l_6)\in \mathbb R^6_{>0}$ is exclusively the six-tuple of the edge lengths of:
\begin{enumerate}[(1)]
\item a truncated hyperideal hyperbolic tetrahedron, 
\item a (truncated hyperideal) flat tetrahedron, or
\item a truncated hyperideal anti-de Sitter tetrahedron.  
\end{enumerate}}
\medskip

Based on this classification then, we in Subsection \ref{el} obtain  a refinement of Theorem \ref{Thm1.1} relating the semi-classical limit of the $b$-$6j$ symbols in case (3) to geometric quantities  of a truncated hyperideal tetrahedron in the anti-de Sitter space. 

\begin{theorem}\label{cov} Let $(l_1,\dots, l_6)\in\mathbb R_{>0 }^6$ be the lengths of the edges of a truncated  hyperidel  anti-de Sittertetrahedron $\Delta.$ Then  as $b\to 0,$
\begin{equation*}\label{AdS/CFT}
\bigg\{\begin{matrix} \frac{Q}{2} +  \mathbf i\frac{l_1}{2\pi b} & \frac{Q}{2}+\mathbf i\frac{l_2}{2\pi b} & \frac{Q}{2}+\mathbf i\frac{l_3}{2\pi b} \\ \frac{Q}{2}+\mathbf i\frac{l_4}{2\pi b} & \frac{Q}{2}+ \mathbf i\frac{l_5}{2\pi b} & \frac{Q}{2}+\mathbf i\frac{l_6}{2\pi b} \end{matrix} \bigg\}_b=\frac{e^{\frac{-\widetilde{\mathrm{Cov}}(l_1,\dots,l_6)}{\pi b^2}} }{\sqrt[4]{\det\mathrm{Gram}(\Delta)}} \Bigg(2\cos\bigg( {\frac{\mathrm{Cov}(\Delta)}{\pi b^2}}+\frac{\pi}{4}\bigg) +O\big(b^2\big)\Bigg),
\end{equation*}
where $\widetilde{\mathrm{Cov}}(l_1,\dots,l_6)$ is the extended co-volume function of $(l_1,\dots,l_6)$ defined in \cite{LY}, $\mathrm{Cov}(\Delta)$ is the anti-de Sitter co-volume of $\Delta$, and $\mathrm{Gram}(\Delta)$ is the Gram matrix of $\Delta$ in the edge lengths.
\end{theorem}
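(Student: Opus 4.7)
The plan is to apply the saddle-point (steepest-descent) method to the integral representation of the $b$-$6j$ symbol in Definition \ref{def: b-6j}, paralleling the strategy used for the hyperbolic case in \cite{LMSWY}, but now handling the feature peculiar to the anti-de Sitter regime: a pair of complex conjugate saddles whose combined contribution is responsible for the oscillatory cosine factor.

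First I would substitute $a_k=\frac{Q}{2}+\mathbf{i}\frac{l_k}{2\pi b}$ into the integral and rescale $u=\frac{Q}{2}+\mathbf{i}\frac{w}{2\pi b}$. Using the $b\to 0$ asymptotic expansion of the double sine function, $\log S_b\big(\frac{Q}{2}+\mathbf{i}\frac{x}{2\pi b}\big)\sim\frac{1}{2\pi b^2}\mathrm{Li}_2(-e^{x})$ plus subleading corrections, the integrand takes the form $\exp\!\big(\Phi(w;l_1,\dots,l_6)/(\pi b^2)\big)$ up to a factor bounded in $b$, where $\Phi$ is an explicit combination of dilogarithms. The normalization prefactor $\prod S_b(q_j-t_i)^{-1/2}$ contributes a further explicit exponential in $1/(\pi b^2)$ that must be carefully combined with the saddle value, and the rescaling also converts the $du$ measure into $\frac{\mathbf{i}}{2\pi b}dw$.

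Next I would locate the critical points of $\Phi$. The equation $\partial_w\Phi=0$ is a Gauss-type identity whose solutions encode the dihedral-angle data of the candidate tetrahedron with prescribed edge lengths $(l_1,\dots,l_6)$. By the classification in Theorem \ref{classification}, in case (3) the relevant saddles form a complex conjugate pair $w_+$ and $w_-=\overline{w_+}$, in contrast with the single real saddle of the hyperbolic case. One must then deform the vertical contour $\Gamma$ to a steepest-descent path passing through both $w_\pm$. Justifying this deformation, with uniform control over the growth of the integrand near infinity and ensuring that no poles of $S_b$ are crossed in the process, is the step I expect to be the main technical obstacle; it will require an enhancement of the estimates used to prove convergence in Proposition \ref{lem: ab conver}, incorporating the full asymptotic expansion of $S_b$ away from its poles.

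Once the contour deformation is justified, the standard saddle formula contributes $\sqrt{2\pi^{2}b^{2}/(-\Phi''(w_\pm))}\,e^{\Phi(w_\pm)/(\pi b^2)}$ at each of the two conjugate saddles. To identify the output with the geometric data I would verify three items: (i) $\mathrm{Re}\,\Phi(w_\pm)=-\widetilde{\mathrm{Cov}}(l_1,\dots,l_6)$, using the extended co-volume definition from \cite{LY} together with standard dilogarithm identities; (ii) $\mathrm{Im}\,\Phi(w_+)=\mathrm{Cov}(\Delta)$, obtained by interpreting the critical value through the Schl\"afli identity for AdS tetrahedra, based on the properties of truncated hyperideal AdS tetrahedra established in Subsection \ref{sec1.3}; and (iii) the Hessian $\Phi''(w_\pm)$ is purely imaginary and, together with the normalization prefactor, reconstructs $\sqrt[4]{\det\mathrm{Gram}(\Delta)}$, where the positivity of $\det\mathrm{Gram}(\Delta)$ in the AdS case (as opposed to the negative determinant in the hyperbolic case of Theorem \ref{Thm1.1}(1)) provides a real fourth root. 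The fact that $-\Phi''(w_+)$ has argument $-\pi/2$ produces the $e^{\mathbf{i}\pi/4}$ factor whose real part, after adding the contributions of $w_+$ and $w_-$, is precisely the $2\cos\!\big(\mathrm{Cov}(\Delta)/(\pi b^2)+\pi/4\big)$ of the statement.
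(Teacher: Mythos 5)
Your outline does reproduce the strategy that ultimately lies behind the theorem (saddle-point analysis of the contour integral plus a Schl\"afli-type identification of the critical value), but it is worth noting that the paper's actual proof of Theorem \ref{cov} is a two-step citation: the entire analytic part — the two relevant saddles, the contour work, the Hessian, and the resulting expansion with the factor $2\cos\big(\mathrm{Re}\,U_{\boldsymbol\alpha}(\xi_1^*)/(2\pi b^2)+\pi/4\big)$ and prefactor $e^{-\widetilde{\mathrm{Cov}}/(\pi b^2)}/\sqrt[4]{\det\mathrm{Gram}(\Delta)}$ — is quoted from \cite[Eq.\ (3.54)]{LMSWY} (this is the analysis behind Theorem \ref{Thm1.1}(3), summarized here in Proposition \ref{Prop3.1}), and the only new ingredient is Theorem \ref{ab}(1), which identifies $\mathrm{Re}\,U_{\boldsymbol\alpha}(\xi_1^*)=2\,\mathrm{Cov}(\Delta)$. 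So your analytic steps redo work that is cited, while your item (ii) is where the genuinely new content of this theorem sits.

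Two concrete points need fixing in your sketch. First, the two saddles are not a complex-conjugate pair: in the paper's normalization both critical points $\xi_1^*,\xi_2^*$ lie on the same vertical line $\mathrm{Re}\,\xi=2\pi$ (equivalently $z_{1,2}^*=e^{-2\mathbf i\xi^*}$ are two distinct positive real roots of an explicit quadratic); what is true is that the critical \emph{values} satisfy $\mathrm{Im}\,U_{\boldsymbol\alpha}(\xi_1^*)=\mathrm{Im}\,U_{\boldsymbol\alpha}(\xi_2^*)=-2\widetilde{\mathrm{Cov}}(l_1,\dots,l_6)$ and $\mathrm{Re}\,U_{\boldsymbol\alpha}(\xi_1^*)=-\mathrm{Re}\,U_{\boldsymbol\alpha}(\xi_2^*)$, so the two saddle contributions are complex conjugates of each other, which is what produces the cosine; note also that these saddles sit exactly on the line $\mathrm{Re}\,\xi=2\pi=\min_j\mathrm{Re}\,\eta_j$, i.e.\ at the edge of the admissible strip where the zero/pole rays accumulate, so the contour difficulty you flag is real and is handled in \cite{LMSWY} by working in a region avoiding $\delta$-neighborhoods of the points $\tau_i$ and $\eta_j$. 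Second, in your step (ii) the Schl\"afli formula determines $\mathrm{Re}\,W$ only up to an additive constant: differentiating the critical value gives $\partial W/\partial l_k=-\mathbf i\theta_k$, which matches $2\,\mathrm{Cov}_{\mathbb C}(\Delta)$ only after the constant is fixed. The paper does this by degenerating the tetrahedron to a flat one and showing that the anti-de Sitter volume tends to $0$ along the degeneration (Proposition \ref{cont}, which requires comparing the AdS volume form with a Euclidean one); without this normalization step your identification $\mathrm{Im}\,\Phi(w_+)=\mathrm{Cov}(\Delta)$ is not yet justified.
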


See also Theorem \ref{covpm}, and see Definition \ref{vol-cov} and Definition \ref{Gramel} for the relevant definitions.

\subsection{Asymptotics of $b$-$6j$ symbols}\label{sec1.2}

In Subsection \ref{da}, we also obtain the semi-classical limit of the $b$-$6j$ symbols when the parameters are rescaled  according to the dihedral angles of either a truncated hyperideal hyperbolic tetrahedron, or a truncated hyperideal anti-de Sitter tetrahedron.

\begin{theorem}\label{vol3} Let $(\theta_1,\dots, \theta_6)$ be the dihedral angles of a truncated hyperideal hyperbolic tetrahedron $\Delta.$ Then as $b\to0,$
$$\bigg\{\begin{matrix} \frac{Q}{2} + \frac{\theta_4}{2\pi b} & \frac{Q}{2} +  \frac{\theta_5}{2\pi b} & \frac{Q}{2} + \frac{\theta_6}{2\pi b} \\ \frac{Q}{2} +  \frac{\theta_1}{2\pi b} & \frac{Q}{2} +  \frac{\theta_2}{2\pi b} & \frac{Q}{2} +  \frac{\theta_3}{2\pi b} \end{matrix} \bigg\}_b=\frac{e^{\frac{-\mathrm{Vol}(\Delta)}{\pi b^2}}}{\sqrt[4]{-\det\mathrm{Gram}(\Delta)}} \Big(1 +O\big(b^2\big)\Big),$$
where $\mathrm{Vol}(\Delta)$ is the hyperbolic volume of $\Delta$, and $\mathrm{Gram}(\Delta)$ is the Gram matrix of $\Delta$ in the dihedral angles. 
\end{theorem}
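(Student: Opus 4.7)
The plan is to apply the saddle-point method to the contour integral in Definition \ref{def: b-6j} after the dihedral-angle rescaling $a_i = Q/2 + \theta_i/(2\pi b)$. The argument should closely parallel the proof of Theorem \ref{Thm1.1}(1) in \cite{LMSWY}, with the co-volume/edge-length data replaced everywhere by the volume/dihedral-angle data; indeed this parallel reflects the projective duality between truncated hyperideal hyperbolic tetrahedra viewed through edge lengths versus through dihedral angles.

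First, after the change of variables $u = Q/2 + z/(2\pi b)$, I would invoke the classical small-$b$ asymptotics of the double sine function, which at leading order expresses $\log S_b$ in terms of the dilogarithm $\mathrm{Li}_2$. This recasts the integrand as $e^{-V(z)/(\pi b^2)}\bigl(1+O(b^2)\bigr)$, where $V(z)$ is a sum of eight dilogarithms whose arguments are exponentials of linear combinations of $z$ and the $\theta_i$; the normalizing factor $\prod_{i,j}S_b(q_j - t_i)^{-1/2}$ reduces in the same way to a $\theta_i$-dependent dilogarithm prefactor.

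Next, I would locate the relevant critical point $z^*$ of $V$. The equation $V'(z^*)=0$ is a trigonometric identity among sines of half-linear combinations of the $\theta_i$, and should be recognizable as the geometric relation that, given the six dihedral angles of the truncated hyperideal hyperbolic tetrahedron $\Delta$, determines the length of the edge dual to the integration variable. Combining the critical value $V(z^*)$ with the normalizing contribution, I would recover the Murakami--Yano / Cho--Kim dilogarithm formula for the hyperbolic volume, producing the leading exponential $e^{-\mathrm{Vol}(\Delta)/(\pi b^2)}$. I would then deform the vertical contour $\Gamma$ to the steepest-descent contour through $z^*$, verifying that no poles of the double sines are crossed for $b$ small and that $b$-admissibility is preserved along the deformation, and apply Laplace's method to produce a Gaussian factor proportional to $\sqrt{\pi b^2/V''(z^*)}$ with error $O(b^2)$.

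The main obstacle will be identifying the full prefactor --- the product of this Gaussian factor with the boundary contribution from $\prod S_b(q_j - t_i)^{-1/2}$ --- as $1/\sqrt[4]{-\det\mathrm{Gram}(\Delta)}$. This amounts to a nontrivial trigonometric identity expressing $V''(z^*)$ together with the normalization in terms of a principal minor of the dihedral-angle Gram matrix, and one must in particular check the sign so that $-\det\mathrm{Gram}(\Delta)>0$, which is part of the geometric criterion for truncated hyperideal hyperbolic tetrahedra recorded in Subsection \ref{sec1.3}. A secondary technicality is the uniform control of the $O(b^2)$ error across the steepest-descent contour, which should be manageable by adapting the complex stationary-phase estimates used in \cite{LMSWY}.
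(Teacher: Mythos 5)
Your plan is essentially the paper's proof: after the rescaling, the paper expands $\log S_b$ into the dilogarithm-type potential $U_{\boldsymbol\alpha}$ (with an explicit $b^2$ correction $\kappa_{\boldsymbol\alpha}$ and a uniformly bounded remainder), shows there is a unique real critical point $\xi^*$ whose critical value is $-2\mathbf i\,\mathrm{Vol}(\Delta)$ (the volume identification being taken from \cite{C}), proves the Hessian identity $-U''_{\boldsymbol\alpha}(\xi^*)\,e^{-\kappa_{\boldsymbol\alpha}(\xi^*)/\pi\mathbf i}=16\sqrt{\det\mathrm{Gram}(\Delta)}$ following \cite{CM} and \cite{U}, and then runs the saddle point approximation along the vertical line through $\xi^*$ together with tail estimates. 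Two minor corrections to your sketch: no deformation to a separate steepest-descent curve is needed, since the admissible vertical contour through the real critical point already makes $\xi^*$ the unique maximum of $\mathrm{Im}U_{\boldsymbol\alpha}$ (this is the content of the monotonicity and decay propositions), and the prefactor identity involves the full dihedral-angle Gram determinant, whose negativity comes from the signature $(3,1)$ result of \cite{U} rather than the anti-de Sitter criteria of Subsection \ref{sec1.3}.
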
 

See also Theorem \ref{vol}.

\begin{theorem}\label{vol2}
Let $(\theta_1,\dots, \theta_6)$ be the dihedral angles of  a truncated hyperideal anti-de Sitter tetrahedron $\Delta.$ Then  as $b\to 0,$
\begin{equation*}\label{AdS/CFT}
\bigg\{\begin{matrix}  \frac{Q}{2}+\frac{\theta_4}{2\pi b} & \frac{Q}{2}+ \frac{\theta_5}{2\pi b} & \frac{Q}{2}+\frac{\theta_6}{2\pi b}\\\frac{Q}{2} + \frac{\theta_1}{2\pi b} & \frac{Q}{2}+\frac{\theta_2}{2\pi b} & \frac{Q}{2}+\frac{\theta_3}{2\pi b}  \end{matrix} \bigg\}_b=\frac{e^{\frac{-\mathrm{Vol}(\Delta)\cdot \mathbf i}{\pi b^2} }}{\sqrt[4]{-\det\mathrm{Gram}(\Delta)}} \Big(1 +O\big(b^2\big)\Big),
\end{equation*}
where $\mathrm{Vol}(\Delta)$ is the anti-de Sitter volume of $\Delta$, and $\mathrm{Gram}(\Delta)$ is the Gram matrix of $\Delta$ in the dihedral angles.
\end{theorem}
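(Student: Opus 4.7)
The plan is to perform a steepest-descent analysis on the integral representation in Definition \ref{def: b-6j}, following the same overall strategy used to prove Theorem \ref{vol3} in the hyperbolic case but adapted to the anti-de Sitter regime. After rescaling the integration variable as $u = Q/2 + w/(2\pi b)$, the contour integral takes a WKB form
$$\int_{\Gamma'} \exp\!\Bigl(\tfrac{1}{2\pi^2 b^{2}}\, V_0(w) + O(1)\Bigr)\, dw,$$
where $V_0(w)$ is an explicit combination of dilogarithms determined by $(\theta_1,\dots,\theta_6)$ via the standard $b\to 0$ asymptotic of $\log S_b(Q/2 + z/(2\pi b))$. The same asymptotic, applied to the $\prod_{i,j} S_b(q_j - t_i)^{-1/2}$ prefactor, contributes an explicit counterterm which I would absorb into the comparison with the exponential and the Gaussian factors at the end.

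The main geometric input, to be extracted from the basic AdS results of Subsection \ref{sec1.3}, is the analog of the Murakami--Yano/Schl\"afli dilogarithm formula expressing the anti-de Sitter volume as the critical value of $V_0$. Concretely, I would solve $V_0'(w) = 0$, argue (using the AdS analog of the angle constraints) that exactly one critical point $w^{*}$ is relevant in this range of angles, and identify its critical value with $-\mathbf i\,\mathrm{Vol}(\Delta)$. The appearance of the factor $\mathbf i$ is expected to come directly from the analytic continuation relating the AdS volume integral to a dilogarithm sum evaluated on the opposite branch from the hyperbolic case; this is precisely the structural difference between Theorem \ref{vol3} and Theorem \ref{vol2}.

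Next I would deform the vertical contour $\Gamma$ into a steepest-descent path through $w^{*}$. Here one must verify that the deformation does not cross any poles of $S_b(u - t_i)$ or $S_b(q_j - u)$ — this uses the $b$-admissibility of the rescaled six-tuple $(Q/2 + \theta_i/(2\pi b),\dots)$ — and that the integrand decays at infinity along the new contour. Unlike the edge-length AdS case of Theorem \ref{cov}, where two complex-conjugate saddles give the $2\cos(\cdot + \pi/4)$ factor, in the dihedral-angle case I expect only one saddle to contribute, producing the single oscillatory exponential $e^{-\mathbf i\,\mathrm{Vol}(\Delta)/(\pi b^{2})}$.

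The Gaussian contribution about $w^{*}$ gives a factor $(V_0''(w^{*}))^{-1/2}$, which I would combine with the prefactor asymptotics via an identity of the form
$$\frac{1}{V_0''(w^{*})\,\prod_{i,j} e^{\text{leading}(q_j - t_i)}} \;=\; \frac{C}{\sqrt{-\det \mathrm{Gram}(\Delta)}},$$
where $\mathrm{Gram}(\Delta)$ is the Gram matrix in the dihedral angles, established in Subsection \ref{sec1.3}. Expanding $V_b$ and the prefactor to next order yields the $O(b^2)$ error bound stated in the theorem. The \emph{main obstacle} I anticipate is justifying the contour deformation in the AdS regime and singling out the correct saddle: the asymptotic expansion of $\log S_b$ is valid only in a strip, and one must control the error uniformly on the deformed path. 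A second subtlety is verifying the precise branch of the dilogarithm used, so that the critical value of $V_0$ produces the real volume $\mathrm{Vol}(\Delta)$ multiplied by $\mathbf i$ (as opposed to $-\mathbf i$, which would give the complex conjugate answer), and that the sign $-\det\mathrm{Gram}(\Delta)$ under the fourth root is positive in the AdS case.
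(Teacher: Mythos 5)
Your overall strategy---rescale, reduce to a one--dimensional WKB integral whose phase is a sum of dilogarithms, locate a single relevant saddle, and match the Gaussian factor against the Gram determinant---is the same route the paper takes. The first genuine gap is the step you defer: you propose to ``identify the critical value with $-\mathbf i\,\mathrm{Vol}(\Delta)$'' using an AdS analog of the Murakami--Yano/Schl\"afli formula ``extracted from the basic AdS results of Subsection \ref{sec1.3}''. No such formula is available as an input: the volume formula in the dihedral angles (Theorem \ref{volume2}) is itself \emph{proved inside} this argument, as Proposition \ref{critical2}. Concretely, one must show that the critical value $W(\theta_1,\dots,\theta_6)=U_{\boldsymbol\alpha}(\xi^*)$ satisfies the Schl\"afli identity $\partial W/\partial\,\mathrm{Im}\theta_k=-l_k$ (this uses the cofactor computation for $\partial W$ together with Proposition \ref{cos} and (\ref{inner}) to recognize $\cosh l_k$), and then fix the additive constant by degenerating the tetrahedron to a flat one and invoking the continuity of the AdS volume under this degeneration (Proposition \ref{cont}). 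Without this, the exponent is not identified; note also that what one actually proves is that $U_{\boldsymbol\alpha}(\xi^*)=2\,\mathrm{Vol}(\Delta)$ is \emph{real}, and the factor $\mathbf i$ in $e^{-\mathbf i\,\mathrm{Vol}(\Delta)/\pi b^{2}}$ then comes from the $2\pi\mathbf i b^{2}$ in the denominator of the phase, not from a branch choice of the dilogarithm. A related point you gloss over: the critical point equation determines $U_{\boldsymbol\alpha}'$ only mod $4\pi$, and both roots of the quadratic in $z=e^{-2\mathbf i\xi}$ are positive (Lemma \ref{MathCompetition}), giving two candidate points on $\mathrm{Re}\xi=2\pi$; only one is a genuine critical point, which the paper pins down by a continuity-in-parameters argument at a special configuration together with Proposition \ref{PL}.

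The second genuine gap is the contour deformation. The obstacle is not the strip of validity of the $\log S_b$ expansion, and it is more than avoiding poles: along the natural vertical contour $\mathrm{Re}\xi=2\pi$ the function $\mathrm{Im}U_{\boldsymbol\alpha}$ is piecewise linear and is \emph{identically equal} to its saddle value $0$ on the entire segment between $\tau_{i_4}$ and $\eta_4$ (Proposition \ref{PL}), so the saddle is not a strict maximum on that contour and condition (ii) of the saddle--point lemma (Proposition \ref{saddle}) fails; moreover the line passes through points where $U_{\boldsymbol\alpha}$ is continuous but not holomorphic. The paper resolves this by first deforming into the boundary of $D_{\delta,c}$ away from the saddle and then pushing the flat segment off the line by the flow of the field $\bigl(-\partial\,\mathrm{Im}U_{\boldsymbol\alpha}/\partial\,\mathrm{Re}\xi,\,0\bigr)$, with Propositions \ref{cor4} and \ref{nonvanish} guaranteeing $\mathrm{Im}U_{\boldsymbol\alpha}(\xi)<\mathrm{Im}U_{\boldsymbol\alpha}(\xi^*)$ strictly off $\xi^*$; some construction of this kind is needed before your Gaussian analysis and tail estimates can be run. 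Finally, a small but real correction: in the AdS case the Gram matrix in the dihedral angles has signature $(2,2)$ (Theorem \ref{criterion2}), so $\det\mathrm{Gram}(\Delta)>0$ and $-\det\mathrm{Gram}(\Delta)<0$; the fourth root in the statement is of a negative number, and its phase is exactly what the single oscillatory saddle produces (cf.\ Proposition \ref{Hess2}), so your expectation that $-\det\mathrm{Gram}(\Delta)$ should be positive would send you hunting for a sign error that is not there.
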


See also Theorem \ref{vol2pm}. See Definition \ref{vol-cov} and Definition \ref{Gramda} for the relevant definitions, and see also Remark \ref{ccov}.

\subsection{Hyperideal tetrahedra in the anti-de Sitter space}\label{sec1.3}

As a preparation for the proof of Theorem \ref{cov} and Theorem \ref{vol2}, we recall in Section \ref{sec2} the basics of truncated hyperideal tetrahedra in the anti-de Sitter space (Definition \ref{truncated}), including the edge lengths (Definition \ref{edgelength}), the dihedral angles (Definition \ref{dihedralangle}), the Gram matrices in both the edge lengths and the dihedral angles (Definitions \ref{Gramel} and \ref{Gramda}), the volume and the co-volume (Definition \ref{vol-cov}), and the Schl\"afli formulae (Propositions \ref{Sch} and \ref{CoSch}). 

We also obtain below the criteria of both the edge lengths and the dihedral angles of a truncated hyperideal anti-de Sitter tetrahedron.
\medskip

\noindent{\bf Theorem \ref{criterion}.}  {\em For $(l_{1},\dots, l_{6})\in \mathbb R^6_{>0},$ the following are equivalent:
\begin{enumerate}[(1)]
\item $(l_{1},\dots, l_{6})$ are the edge lengths of a truncated hyperideal anti-de Sitter tetrahedron.

\item The Gram matrix of $(l_{1},\dots, l_{6})$ has signature $(2,2).$ 
\end{enumerate}}
\medskip

\noindent{\bf Theorem \ref{criterion2}.} 
{\em For a six-tuple $(\theta_1,\dots,\theta_6)$ in $\mathbb A$ (see Eq. (\ref{A}) for the definition), the following are equivalent:
\begin{enumerate}[(1)]
\item $(\theta_1,\dots,\theta_6)$ are the dihedral angles of a truncated hyperideal anti-de Sitter  tetrahedron. 

\item The Gram matrix $G$ of $(\theta_1,\dots,\theta_6)$ satisfies the following three conditions:
\begin{enumerate}[(a)]
\item $G$ has signature $(2,2).$
\item For each $i\in\{1,2,3,4\},$ $G_{ii}>0.$
\item For each $\{i,j\}\subset\{1,2,3,4\},$ $G_{ij}<0.$
\end{enumerate}
\end{enumerate}}

In Section \ref{sec3}, we  obtain explicit formulae of the volume and the co-volume of a truncated hyperideal anti-di Sitter tetrahedron,
 respectively in terms of the edge lengths and in terms of the dihedral angles.
 \medskip

\noindent {\bf Theorem \ref{ab}} and {\bf Theorem  \ref{volume}.}
{\em Let $\Delta$ be a truncated hyperideal anti-de Sitter tetrahedron  with edge lengths $(l_1,\dots ,l_6),$  then the co-volume of $\Delta$ can be computed by
$$\mathrm{Cov}(\Delta)=\frac{1}{2}\cdot \mathrm{Re} W(l_1,\dots,l_6),$$
and the  volume of $\Delta$ can be computed by
$$\mathrm{Vol}(\Delta)= {\frac{1}{2}}\Bigg(\mathrm{Re}W(l_1,\dots,l_6)-\sum_{k=1}^6\frac{\partial \mathrm{Re}W}{\partial l_k}(l_1,\dots,l_6)\cdot l_k\Bigg),$$
where $W$ is the function defined therein.
}
\medskip

\noindent{\bf Theorem \ref{volume2}.} {\em Let $\Delta$ be a truncated hyperideal anti-de Sitter tetrahedron with dihedral angles $(\theta_1,\dots ,\theta_6),$  then the volume of $\Delta$ can be computed by
$$\mathrm{Vol}(\Delta)=\frac{1}{2} \cdot W(\theta_1,\dots,\theta_6),$$
where $W$ is the function defined therein.}
\bigskip

\subsection{Connection to conformal field theory}\label{CFT}

In Subsection \ref{dig1}, we  reveal a correspondence between the edge lengths and the dihedral angles of truncated hyperideal anti-de Sitter tetrahedra and the Fenchel-Nielsen coordinates of hyperbolic four-holed spheres
and oriented hyperbolic four-hole spheres with geodesic boundary. See Formulae (\ref{lengths=}) and (\ref{angles=}).
In this subsection, we provide the CFT background leading to this purely geometric observation. 

The starting point of our observation is that the $b$-6j symbol agrees with the fusion kernel for the Virasoro conformal blocks. $2D$ CFT has a rich algebraic structure coming from its symmetry algebra - the Virasoro algebra. 
According to the conformal bootstrap program \cite{BPZ84}, the correlation functions of a $2D$ CFT can be expressed as a family of special functions, determined by the Virasoro algebra, called conformal blocks, together with model-dependent data, including the spectrum and structure constants. This expression depends on a pants decomposition of the surface associated with the correlation function. 
The most basic setting for the conformal bootstrap is the case of the four-puncture sphere, and the fusion kernel describes how the conformal blocks are transformed when switching between pants decompositions. The fusion kernel, together with the modular kernel coming from the one-puncture torus, governs the transformation rule of the conformal blocks associated with the general surfaces.  See \cite{Teschner03,CEZ}for the physics account of this picture.  A central assertion in \cite{Teschner03,CEZ}, originally due to Ponsot and Teschner \cite{PT99}, is that the fusion kernel equals the $b$-6j symbol up to a simple normalization factor. Furthermore, the spectrum for the principal series of $\mathrm U_{q\tilde{q}}\mathfrak{sl}(2;\mathbb R)$ agrees with the spectrum of a CFT of central importance called the Liouville CFT. Recently, substantial progress has been made in the mathematical understanding of Virasoro conformal blocks based on the probabilistic approach to Liouville CFT \cite{RV25}. In particular, a proof of Ponsot and Teschner's conjecture for the fusion kernel was given in \cite{GRSSW}.
Based on the connection to $4D$ supersymmetric gauge theory~\cite{AGT2010}, 
it was suggested in \cite{NRS} that the semi-classical limits of the fusion kernels generate the Fenchel-Nielsen coordinates of hyperbolic four-hole spheres. Our Theorem \ref{cov}, combined with the result in Subsection \ref{dig1}, provides a mathematical realization of this proposal.

On the other hand, it was conjectured in~\cite{PT02}, and recently rigorously proved  in \cite{ARSZ23}, that the boundary three-point function for the Liouville CFT agrees with the $b$-$6j$ symbol after a suitable reparameterization.
From the path integral perspective, the semi-classical limit of Liouville correlation functions can be expressed in terms of the classical Liouville action. For closed surfaces, this was rigorously established in~\cite{LRV2019}; and the boundary analog of it, which includes the case of the boundary three-point function, was recently announced in \cite{Cer25}. In light of this, our Theorem \ref{Thm1.1} and Theorem \ref{cov} yield a relation between the classical Liouville action and the $3D$ hyperbolic and the $3D$ anti-de Sitter geometry.  This may be viewed as an instance of the $3D/2D$ holography in the sense of the conjectural AdS/CFT correspondence \cite{Ma1998,Witten1998}. 
For other instances of holography, a notion of $3D$ hyperbolic volume called the renormalized volume was related to
the classical Liouville action on the boundary  of Schottky and quasi-Fuchsian $3$-manifolds \cite{TT2003,KS2008}.
See also the relation between the universal Liouville action and the renormalized volume \cite{BBPW2025}, which can be interpreted as a holographic description of the Schramm-Loewner Evolution (SLE). To the best of our knowledge, an explicit AdS/Liouville relation as the one we established in this article has never appeared before, even in the physics literature. 

We believe that  all the discussion above is only tip of an iceberg connecting the $3D$ anti-de Sitter geometry with the $2D$ conformal field theory, and plan to pursue the exploration along this direction.

\bigskip

\noindent\textbf{Acknowledgments.} We thank Qiyu Chen, Pengyu Le, Feng Luo, Sara Maloni, Jean-Marc Schlenker, and  Yilin Wang for inspiring discussions. We also thank Igor Frenkel for showing interest to this work and encouraging the authors to write it down. T.L., X.S., and B.W.\ are supported by National Key R\&D Program of China (No.\ 2023YFA1010700). S.M.\ is supported by National Natural Science Foundation of China (No.12371124). T.Y.\ is supported by NSF Grant  DMS-2505908. 

\section{Hyperideal tetrahedra in the anti-de Sitter space}\label{sec2}

\subsection{$3$-dimensional anti-de Sitter space}

The Lorentzian space $\mathbb E^{2,2}$ is the vector space $\mathbb R^4$ with the inner product $\langle,\rangle$ defined for  $\mathbf x = (x_1,x_2,x_3,x_4)$ and $\mathbf y= (y_1,y_2,y_3,y_4)$ by
$$\langle \mathbf x, \mathbf y\rangle=x_1y_1+x_2y_2-x_3y_3-x_4y_4.$$
The \emph{anti-de Sitter space} is
 $$\mathbb A\mathrm d\mathbb S^3= \{\mathbf v\in\mathbb E^{2,2}\ |\ \langle \mathbf v,\mathbf v \rangle =-1 \},$$ 
 and the \emph{dual anti-de Sitter space} is
 $$\mathbb A\mathrm d\mathbb  S^{3^*}= \{\mathbf v\in\mathbb E^{2,2}\ |\ \langle \mathbf v,\mathbf v \rangle =1 \}.$$ 
 The inner product $\langle,\rangle$ on $\mathbb E^{2,2}$ restricts to a semi-Riemannian metric on the tangent space of each point of $\mathbb A\mathrm d\mathrm S^3$ with signature $(2,1)$ and constant curvature $-1.$  We call $\ads^3$ with this induced metric the \emph{Lorentzian model} of the $3$-dimensional anti-de Sitter space.

The geodesics  in $\ads^3$ are the  intersections of $\ads^3$ with $2$-dimensional subspaces of $\mathbb E^{2,2};$ and the totally geodesic planes in $\ads^3$ are the  intersections of $\ads^3$ with $3$-dimensional subspaces of $\mathbb E^{2,2}.$  For $\mathbf v\in\mathbb A\mathrm d\mathbb S^{3}\cup \mathbb A\mathrm d\mathbb S^{3^*},$ let 
$\boldsymbol \Pi_{\mathbf v}=\{ \mathbf w \in \mathbb E^{2,2}\ | \langle \mathbf v, \mathbf w \rangle =0\}$
be the hyperplane containing all the vectors perpendicular to $\mathbf v,$ and let 
$$\Pi_{\mathbf v} = \boldsymbol \Pi_{\mathbf v}\cap \mathbb A\mathrm d\mathbb S^3.$$ 
The totally geodesic plane $\Pi_{\mathbf v}\subset \mathbb A\mathrm d\mathbb S^3$ is \emph{space-like} if $ \mathbf v\in\mathbb A\mathrm d\mathbb S^{3},$ and is  \emph{time-like}  if $\mathbf v\in\mathbb A\mathrm d\mathbb S^{3^*}.$ The restriction of the inner product $\langle, \rangle$ to each space-like totally geodesic plane is positive definite, and to each time-like totally geodesic plane has signature $(1,1).$ 

The group of isometries of $\ads^3$ is the orthogonal group $\mathrm O(2,2),$ which  includes transformations that can reverse spatial orientation or time orientation. The subgroup that preserves both spatial and time orientation is the identity component of $\mathrm O(2,2)$, which is the special orthogonal group $\mathrm{SO}(2,2).$
 \bigskip

\begin{remark}Let the affine hyperplane $\mathbb P^3_1=\{\mathbf x\in\mathbb E^{2,2}\ | x_4=1\},$ and let 
$\mathrm{proj}: \mathbb E^{2,2} \setminus \{\mathbf x\in\mathbb E^{2,2}\ | x_4=0\}\to \mathbb P^3_1$
be the radial projection along the ray from the origin $\mathbf 0.$ Then $\mathrm{proj}$ continuously extends to
$\mathrm{proj}:\mathbb E^{2,2}\setminus\{\mathbf 0\}\to\mathbb P^3_1\cup\mathbb P^3_{\infty},$
where  $\mathbb P^3_\infty$ is the set of lines in the linear subspace $\{\mathbf x\in\mathbb E^{2,2}\ | x_4=0\}$ passing through the origin $\mathbf 0.$ Let 
$\mathbb P^3=\mathbb P^3_1\cup \mathbb P^3_{\infty},$
then the radial projection $\mathrm{proj}$ restricts to a two-to-one map from $\ads^3$ to $\mathbb P^3,$ and the image 
$\mathrm{AdS}^3\doteq \mathrm{proj} (\mathbb A\mathrm d\mathbb S^3)$
with the metric  induced from $\langle,\rangle$ via $\mathrm{proj}$ is the \emph{projective model} of the $3$-dimensional anti-de Sitter space. 
\end{remark}

\subsection{Truncated hyperideal tetrahedra} 

Let 
$$\mathbb B^{2,2}=\{\mathbf v\in\mathbb E^{2,2}\ |\ \langle \mathbf v,\mathbf v \rangle <0 \},$$
which is the pre-image  of the projective model $\mathrm{AdS}^3$ of the anti-de Sitter space under the radial projection $\mathrm{proj}:\mathbb E^{2,2}\setminus \mathbf \{0\}\to\mathbb P^3.$

A \emph{hyperideal tetrahedron} in $\ads^3$ is defined to be  a quadruple  of linearly independent vectors $\{\mathbf v_1,$ $\mathbf v_2,$ $\mathbf v_3,\mathbf v_4\}$ in  $\mathbb A\mathrm d\mathbb S^{3^*},$   such that  for each pair $\{i,j\}\subset\{1,2,3,4\},$ the straight  line  segment $L_{ij}$ connecting $\mathbf v_i$ and $\mathbf v_j$ intersects $\mathbb B^{2,2}.$ The vectors $\mathbf v_1,\dots, \mathbf v_4$ are the \emph{hyperideal vertices}, or simply the \emph{vertices}, of the hyperideal tetrahedron. Two hyperideal tetrahedra $\{\mathbf v_1,$ $\mathbf v_2,$ $\mathbf v_3,\mathbf v_4\}$ and $\{\mathbf v'_1,$ $\mathbf v'_2,$ $\mathbf v'_3,\mathbf v'_4\}$ are \emph{isometric} if there is a spatial-orientation and time-orientation preserving isometry of $\mathbb E^{2,2}$ sending $\mathbf v_i$ to $\mathbf v_i'$, for each $i\in\{1,2,3,4\}.$

\begin{remark} The radial projection of the convex hull of the vertices of a hyperideal tetrahedron in $\ads^3$  is then a tetrahedron in an affine chart of $\mathbb P^3$ whose vertices are all outside of the projective model $\mathrm{AdS}^3$ and whose edges all intersect $\mathrm{AdS}^3,$ which coincides with the definition of the hyperideal polyhedron in \cite{CS}.
 \end{remark} 

For each $i\in\{1,\dots,4\},$  the \emph{face} of the hyperideal tetrahedron  opposite to $\mathbf v_i$ is the totally geodesic plane  $\mathbf F_i$  containing $\mathbf v_j,$ $\mathbf v_k$ and $\mathbf v_l,$ $\{j,k,l\}=\{1,2,3,4\}\setminus\{i\},$ i.e.,
$$\mathbf F_i=\{c_j\mathbf v_j+c_k\mathbf v_k+c_l\mathbf v_l\ |\ c_j,c_k,c_l\in\mathbb R\}\cap \mathbb A\mathrm d\mathbb S^3.$$
 The \emph{outward unit normal vector} of $\mathbf F_i$ is then the unique vector $\mathbf u_i$ satisfying:
\begin{enumerate}[(1)]
\item 
$\langle \mathbf u_i,\mathbf v_i\rangle >0,$ i.e., $\mathbf u_i$ is outward, 
\item $|\langle \mathbf u_i,\mathbf u_i\rangle| =1,$ i.e., $\mathbf u_i$ is unit, and 
\item 
$\langle \mathbf u_i,\mathbf v\rangle =0$ for any vector $\mathbf v\in \mathbf F_i,$ i.e., $\mathbf u_i$ is normal. 
\end{enumerate}
Condition (1) is to make sure that $\mathbf u_i$ and $\mathbf v_i$ are on the two different sides of $\mathbf F_i;$ and from Condition (3), we have $ \mathbf F_i=\Pi_{\mathbf u_i}.$ 

For each $i\in\{1,\dots,4\},$  the \emph{plane of truncation} at $\mathbf v_i$ is the plane  $\mathbf T_i$  containing the outward unit normal vectors  $\mathbf u_j,$ $\mathbf u_k$ and $\mathbf u_l,$ $\{j,k,l\}=\{1,2,3,4\}\setminus\{i\},$ i.e.,
$$\mathbf T_i=\{c_j\mathbf u_j+c_k\mathbf u_k+c_l\mathbf u_l\ |\ c_j,c_k,c_l\in\mathbb R\}\cap\mathbb A\mathrm d\mathbb S^3.$$
Then $\langle \mathbf u,\mathbf v_i\rangle =0$ for any vector $\mathbf u\in \mathbf T_i,$ and $\mathbf T_i= \Pi_{\mathbf v_i}.$ 

\begin{definition}[Truncated hyperideal tetrahedra in $\ads^3$]\label{truncated} A truncated hyperideal tetrahedron in $\ads^3$ is the subset of $\ads^3$ bounded by the four faces and the four  planes of truncation  of a hyperideal tetrahedron in $\mathbb A\mathrm d\mathbb S^3.$
\end{definition}

\begin{remark}\label{lift} In the projective model, a \emph{truncated hyperideal tetrahedron} is  the projection of a truncated hyperideal tetrahedron in $\mathbb A\mathrm d\mathbb S^3,$ which 
is a polyhedron in $\mathrm{AdS}^3$ bounded by the projections  of the faces and the projections of the planes of truncation at the vertices. A truncated hyperideal tetrahedron in the projective model has two lifts in the Lorentzian model, differing by the antipode map of $\mathbb E^{2,2}$.
\end{remark}

Let $\Delta$ be a truncated hyperideal tetrahedron in $\mathbb A\mathrm d\mathbb S^3.$  We still call the vectors $\mathbf v_1,\dots,\mathbf v_4$ the \emph{vertices} of $\Delta,$ and call the vectors  $\mathbf u_1,\dots,\mathbf u_4$ the \emph{outward unit normal vectors} of $\Delta.$  For each $i\in\{1,2,3,4\},$ we call
$$F_i\doteq \mathbf F_i \cap \Delta$$
the \emph{face} of $\Delta$ opposite to $\mathbf v_i,$  and call
$$T_i\doteq \mathbf T_i \cap \Delta$$
the \emph{triangle of truncation} at $\mathbf v_i.$ As all $\mathbf v_i$'s are in $\mathbb A\mathrm d\mathbb S^{3^*},$ we have the following proposition.

\begin{proposition}
All the four triangles of truncation of a truncated  hyperideal tetrahedron in $\ads^3$ are time-like.
\end{proposition}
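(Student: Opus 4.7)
The statement is essentially an immediate consequence of the characterization of totally geodesic planes $\Pi_{\mathbf v}$ recalled earlier in this section. The plan is to identify the plane $\mathbf T_i$ carrying the $i$-th triangle of truncation with the hyperplane $\Pi_{\mathbf v_i}$, and then to invoke the fact that $\Pi_{\mathbf v}$ is time-like precisely when $\mathbf v\in\mathbb A\mathrm d\mathbb S^{3^*}$.

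The first step is to verify that $\mathbf T_i=\Pi_{\mathbf v_i}$. By construction, $\mathbf T_i$ is the totally geodesic plane containing the three outward unit normals $\mathbf u_j,\mathbf u_k,\mathbf u_l$, where $\{j,k,l\}=\{1,2,3,4\}\setminus\{i\}$. For each such index, say $j$, the opposite face $\mathbf F_j$ contains the three vertices $\mathbf v_i,\mathbf v_k,\mathbf v_l$; hence by the normality condition defining $\mathbf u_j$, we have $\langle \mathbf u_j,\mathbf v_i\rangle=0$, and symmetrically $\langle \mathbf u_k,\mathbf v_i\rangle=\langle \mathbf u_l,\mathbf v_i\rangle=0$. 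Therefore the linear span of $\{\mathbf u_j,\mathbf u_k,\mathbf u_l\}$ lies inside $\boldsymbol\Pi_{\mathbf v_i}$. Since the three faces $\mathbf F_j,\mathbf F_k,\mathbf F_l$ meet at $\mathbf v_i$ and are pairwise distinct, their normals $\mathbf u_j,\mathbf u_k,\mathbf u_l$ are linearly independent, so this span is $3$-dimensional and hence equals the full $3$-dimensional subspace $\boldsymbol\Pi_{\mathbf v_i}$. Intersecting with $\mathbb A\mathrm d\mathbb S^3$ gives $\mathbf T_i=\Pi_{\mathbf v_i}$, as already noted in the text.

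The second step is to apply the dichotomy from the first subsection: a totally geodesic plane $\Pi_{\mathbf v}$ is space-like when $\mathbf v\in\mathbb A\mathrm d\mathbb S^3$ and time-like when $\mathbf v\in\mathbb A\mathrm d\mathbb S^{3^*}$. By the very definition of a hyperideal tetrahedron, each vertex $\mathbf v_i$ lies in $\mathbb A\mathrm d\mathbb S^{3^*}$, so $\Pi_{\mathbf v_i}$ is time-like, meaning the restriction of $\langle,\rangle$ to it has signature $(1,1)$. The triangle $T_i=\mathbf T_i\cap\Delta$ is a subset of this time-like plane and therefore inherits the time-like character.

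There is no genuine obstacle here: the proposition is a straightforward unpacking of the definitions. The only point requiring any care is the linear independence of the three outward normals used above, which is guaranteed by the non-degeneracy of the hyperideal tetrahedron (the four vectors $\mathbf v_1,\dots,\mathbf v_4$ are linearly independent, so the three faces meeting at $\mathbf v_i$ are distinct totally geodesic planes with distinct normal directions).
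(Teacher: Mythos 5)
Your proof is correct and takes essentially the same route as the paper, which presents the proposition as an immediate consequence of the identification $\mathbf T_i=\Pi_{\mathbf v_i}$ (asserted just before Definition \ref{truncated}) together with the fact that $\Pi_{\mathbf v}$ is time-like exactly when $\mathbf v\in\ads^{3^*}$; your write-up simply makes the identification explicit. One small caveat: the inference ``the three faces are pairwise distinct, hence $\mathbf u_j,\mathbf u_k,\mathbf u_l$ are linearly independent'' is not valid as stated (distinct hyperplanes can have linearly dependent normals); the quick correct justification is to pair a vanishing combination $a_j\mathbf u_j+a_k\mathbf u_k+a_l\mathbf u_l=0$ with $\mathbf v_j,\mathbf v_k,\mathbf v_l$ and use $\langle\mathbf u_m,\mathbf v_m\rangle>0$ and $\langle\mathbf u_m,\mathbf v_n\rangle=0$ for $m\neq n$ --- exactly the argument appearing in the paper's proof of Theorem \ref{criterion2} --- and in any case only the containment $\mathbf T_i\subseteq\Pi_{\mathbf v_i}$ is needed to conclude time-likeness.
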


 In Proposition \ref{space-like}, we will show that all the four faces of $\Delta$ are space-like. See also \cite[Lemma 2.1]{CS}.

\subsection{Edge lengths, Gram matrix, and a criterion}

\begin{definition}[Edge lengths]\label{edgelength}
Let $\Delta$ be a truncated hyperideal tetrahedron in  $\ads^3$ with vertices $\mathbf v_1,\dots,\mathbf v_4.$ For $\{i,j\}\subset\{1,2,3,4\},$ the \emph{edge} $e_{ij}$ of $\Delta$ is the intersection of $\Delta$ with the geodesic in $\ads^3$ connecting $\mathbf v_i$ and $\mathbf v_j,$ and the \emph{edge length} $l_{ij}$ is the geodesic length of $e_{ij}.$ 
\end{definition}
Observe that $e_{ij}$ is the shortest geodesic connecting the planes of truncation $\mathbf T_i$ and $\mathbf T_j,$  and $l_{ij}$ is then the distance between  $\mathbf T_i$ and $\mathbf T_j.$ As such, we have
\begin{equation}\label{cosh}
\langle \mathbf v_i,\mathbf v_j \rangle = -\cosh l_{ij}.
\end{equation}

\begin{definition}[Gram matrix in the edge lengths]\label{Gramel}
For a truncated hyperideal tetrahedron $\Delta$ in  $\ads^3$ with the edge lengths  $(l_{12},\dots,l_{34}),$  the \emph{Gram matrix} of  $\Delta$ \emph{in the edge lengths}, denoted by $\mathrm{Gram}(\Delta)$,  is defined by 
\begin{equation*}
\begin{bmatrix}
1 & -\cosh l_{12} & -\cosh l_{13}
    & -\cosh l_{14} \\
-\cosh l_{12} & 1 & -\cosh l_{23}
    & -\cosh l_{24}\\
    -\cosh l_{13} & -\cosh l_{23} & 1
    & -\cosh l_{34} \\
 -\cosh l_{14} & -\cosh l_{24} & -\cosh l_{34}
    & 1
  \end{bmatrix}.
\end{equation*}
More generally, for a six-tuple $(l_{12},\dots,l_{34})$ in $\mathbb R^6_{>0},$ we call the above matrix the \emph{Gram matrix} of the six-tuple, and denote it by $\mathrm{Gram}(l_{12},\dots,l_{34}).$
\end{definition}

In the rest of the paper, we will also use the labeling $l_1,\dots,l_6$ for the edge lengths with the identification $(l_{12},l_{13},l_{14},l_{23},l_{24},l_{34})=(l_1,l_2,l_6,l_3,l_5,l_4)$ as depicted in Figure \ref{convention1}, in which way the Gram matrix becomes 
\begin{equation*}
\begin{bmatrix}
1 & -\cosh l_{1} & -\cosh l_{2}
    & -\cosh l_{6} \\
-\cosh l_{1} & 1 & -\cosh l_{3}
    & -\cosh l_{5}\\
    -\cosh l_{2} & -\cosh l_{3} & 1
    & -\cosh l_{4} \\
 -\cosh l_{6} & -\cosh l_{5} & -\cosh l_{4}
    & 1
  \end{bmatrix}.
\end{equation*}

\begin{figure}[htbp]
\centering
\includegraphics[scale=0.3]{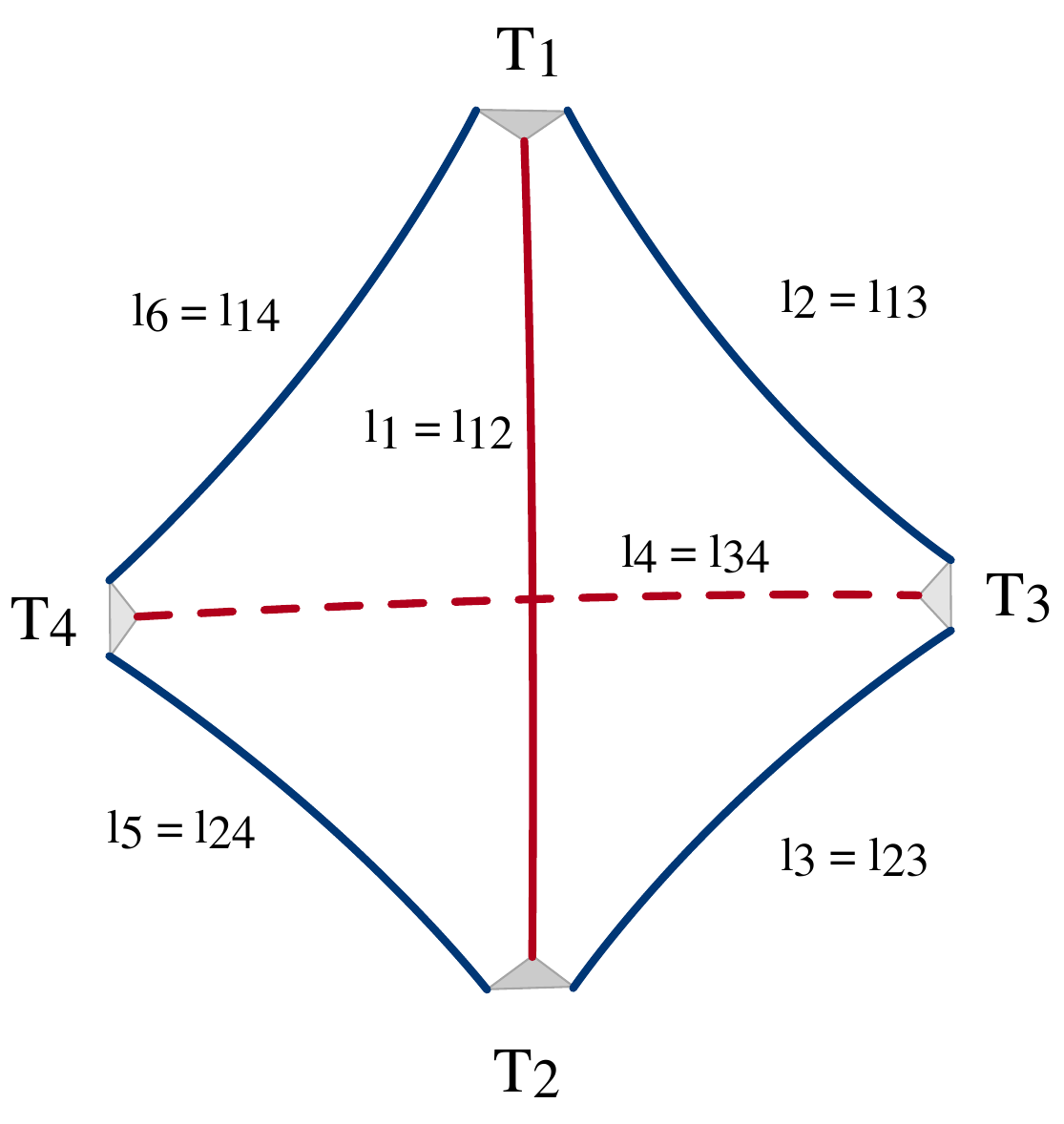}
\caption{Labeling for the edge lengths: in which way the edge lengths appearing in  row $i$ and in  column $i$ of the Gram matrix are adjacent to the triangle of truncation $T_i$ of $\Delta$.}
\label{convention1}
\end{figure}

\begin{theorem}\label{criterion}  For $(l_{1},\dots, l_{6})\in \mathbb R^6_{>0},$ the following are equivalent:
\begin{enumerate}[(1)]
\item $(l_{1},\dots, l_{6})$ are the edge lengths of a truncated hyperideal tetrahedron in $\mathbb A\mathrm d\mathbb S^3.$ 
\item The Gram matrix of $(l_{1},\dots, l_{6})$ has signature $(2,2).$ 
\end{enumerate}
\end{theorem}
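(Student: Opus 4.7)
The approach is to translate the statement into the linear algebra of the ambient inner product on $\mathbb E^{2,2}$ and invoke Sylvester's law of inertia. The crucial input is formula (\ref{cosh}) combined with $\langle\mathbf v_i,\mathbf v_i\rangle=1$ (since $\mathbf v_i\in\mathbb A\mathrm d\mathbb S^{3^*}$): writing $V=(\mathbf v_1\ \mathbf v_2\ \mathbf v_3\ \mathbf v_4)$ for the $4\times 4$ matrix whose columns are the vertex vectors in the standard coordinates, and letting $J=\mathrm{diag}(1,1,-1,-1)$ denote the matrix of the ambient form (which has signature $(2,2)$), the Gram matrix in the edge lengths is exactly $V^TJV$.

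For the forward direction (1) $\Rightarrow$ (2), linear independence of the vertices makes $V$ invertible, so $\mathrm{Gram}(l_1,\dots,l_6)=V^TJV$ is congruent to $J$ and inherits signature $(2,2)$ by Sylvester's law.

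For the converse (2) $\Rightarrow$ (1), since $\mathrm{Gram}(l_1,\dots,l_6)$ has signature $(2,2)$, Sylvester's law produces an invertible linear isomorphism $V\colon\mathbb R^4\to\mathbb E^{2,2}$ with $V^TJV=\mathrm{Gram}(l_1,\dots,l_6)$. Its columns then give four linearly independent vectors $\mathbf v_1,\dots,\mathbf v_4$ realizing the prescribed inner products, and the diagonal entries $1$ place each $\mathbf v_i$ in $\mathbb A\mathrm d\mathbb S^{3^*}$. The only remaining geometric condition from the definition of a hyperideal tetrahedron is that each segment $L_{ij}$ meet $\mathbb B^{2,2}$, which I would verify at the midpoint: a direct computation using $l_{ij}>0$ gives
$$\bigl\langle\tfrac12(\mathbf v_i+\mathbf v_j),\tfrac12(\mathbf v_i+\mathbf v_j)\bigr\rangle=\tfrac{1-\cosh l_{ij}}{2}<0.$$
The resulting truncated hyperideal tetrahedron then has edge lengths $(l_1,\dots,l_6)$ by applying (\ref{cosh}) in the reverse direction.

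I do not anticipate a serious obstacle: the linear algebra is standard and the midpoint argument is a single line. The only subtlety is confirming that the full data of a hyperideal tetrahedron is captured entirely by the mutual inner products of its vertices, so that Sylvester's law and the midpoint check truly exhaust the definition; this is immediate from Definition \ref{truncated} and formula (\ref{cosh}) recalled just above the statement.
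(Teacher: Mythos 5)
Your proposal is correct and follows essentially the same route as the paper: both directions rest on writing $\mathrm{Gram}(l_1,\dots,l_6)=\mathrm V^T\mathrm I_{2,2}\mathrm V$ and invoking Sylvester's law, then checking linear independence, $\langle\mathbf v_i,\mathbf v_i\rangle=1$, and that each segment $L_{ij}$ meets $\mathbb B^{2,2}$ using $\langle\mathbf v_i,\mathbf v_j\rangle=-\cosh l_{ij}<-1$. Your midpoint evaluation is just the paper's quadratic $2(1-\langle\mathbf v_i,\mathbf v_j\rangle)t^2-2(1-\langle\mathbf v_i,\mathbf v_j\rangle)t+1$ evaluated at its minimizer $t=\tfrac12$, so the two verifications coincide.
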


\begin{proof}

Suppose (1) holds, and $\Delta$ is the truncated hyperideal tetrahedron in $\ads^3$ with edge lengths  $(l_{12},\dots, l_{34}).$ Let $G$ be the Gram matrix of $(l_{12},\dots, l_{34}).$ Then $G$ is the Gram matrix of $\Delta$ in the edge lengths. Let $\mathbf v_1, \dots, \mathbf v_4\in \mathbb A\mathrm d\mathbb S^{3^*}$ be the vertices of $\Delta,$ and let 
$$\mathrm V=[\mathbf v_1,\mathbf v_2,\mathbf v_3,\mathbf v_4]$$
be the $4\times 4$ matrix with $\mathbf v_1,\dots,\mathbf v_4$ the columns. Then by (\ref{cosh}),
$$G=\mathrm V^T\cdot \mathrm I_{2,2} \cdot \mathrm V,$$
where 
$$\mathrm I_{2,2}=\left[\begin{matrix}
1& 0& 0 & 0\\
0 & 1&0 &0\\
0& 0& -1&0 \\
0&0 &0 &  -1\\
 \end{matrix}\right].$$
Therefore, $G$ has signature $(2,2).$ 
\\

Suppose (2) holds, and $G$  is the Gram matrix of $(l_{12},\dots, l_{34}).$ Then by Sylvester's Law of Inertia,
$$G=\mathrm V^T\cdot \mathrm I_{2,2} \cdot \mathrm V$$
for an invertible matrix $\mathrm V,$ uniquely determined up to an action of $\mathrm O(2,2).$ Let $\mathbf v_1,\dots,\mathbf v_4$ be the columns of $\mathrm V,$ considered as vectors in $\mathbb E^{2,2}.$ We will show that $\mathbf v_1,\dots,\mathbf v_4$ are the vertices of a hyperideal tetrahedron in $\mathbb A\mathrm d\mathbb S^3,$ by verifying the following:
\begin{enumerate}[(a)]

\item   $\mathbf v_1,\dots,\mathbf v_4$ are linearly independent. 

\item For $i\in\{1,2,3,4\},$ $\mathbf v_i\in \mathbb A\mathrm d\mathbb S^{3^*}.$ 

\item  For $\{i,j\}\subset\{1,\dots, 4\},$ $\mathrm L_{ij}\cap \mathbb B^{2,2}\neq \emptyset,$ where
$L_{ij}$ is the line segment connecting $\mathbf v_i$ and $\mathbf v_j,$ and $\mathbb B^{2,2}=\{\mathbf v\in\mathbb E^{2,2}\ |\ \langle \mathbf v,\mathbf v \rangle <0 \}.$
\end{enumerate}

From the fact that $\mathrm V$ is invertible, (a) follows.

To see (b), we have for each $i\in \{1,2,3,4\}$ that $\langle \mathbf v_i,\mathbf v_i \rangle$ equals the $ii$-th entry of $G,$ which equals $1.$ Hence $\mathbf v_i\in \mathbb A\mathrm d\mathbb S^{3^*}.$ 

To see (c), we have 
$$\mathrm L_{ij} = \{ t\mathbf v_i + (1-t) \mathbf v_j\ |\ t\in [0,1]\},$$
and $\mathrm L_{ij}\cap \mathbb B^{2,2}\neq \emptyset$ if and only if the following inequality 
$$\langle  t\mathbf v_i + (1-t) \mathbf v_j,  t\mathbf v_i + (1-t) \mathbf  v_j \rangle <0 $$
in $t$ has a solution in $[0,1].$ As the left hand side of the inequality equals the following quadratic function
$$2(1-\langle \mathbf v_i,\mathbf v_j \rangle)t^2-2(1-\langle \mathbf v_i,\mathbf v_j \rangle)t+1$$
in $t,$ and $\langle \mathbf v_i,\mathbf v_j \rangle=-\cosh l_{ij} <-1,$  the inequality always has a solution in $(0,1).$ This completes the verification of (c).\end{proof}

As a consequence of Theorem \ref{criterion}, we have the following classification.

\begin{theorem}\label{classification}
Every $(l_1,\dots, l_6)\in \mathbb R^6_{>0}$ is exclusively the six-tuple of the edge lengths of:
\begin{enumerate}[(1)]
\item a truncated hyperideal tetrahedron in $\mathbb H^3,$ 
\item a (truncated hyperideal) flat tetrahedron, or
\item a truncated hyperideal tetrahedron in  $\mathbb A\mathrm d\mathbb S^3.$  
\end{enumerate}
\end{theorem}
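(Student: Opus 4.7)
The plan is to classify $(l_1,\dots,l_6)\in\mathbb R^6_{>0}$ by the signature of the Gram matrix $G=\mathrm{Gram}(l_1,\dots,l_6)$ from Definition \ref{Gramel}, mirroring the structure of the proof of Theorem \ref{criterion}. The heart of the argument is that the constraint $\cosh l_i>1$ forces the signature of $G$ into exactly three possibilities, and these are distinguished by the sign of $\det G$.

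First, I would pin down the possible signatures of $G$ by examining its $3\times 3$ principal submatrices. A direct expansion gives, for every $\{i,j,k\}\subset\{1,2,3,4\}$,
\[
\det G_{\{i,j,k\}}=1-\cosh^2 l_{ij}-\cosh^2 l_{ik}-\cosh^2 l_{jk}-2\cosh l_{ij}\cosh l_{ik}\cosh l_{jk}<0,
\]
since each $\cosh l>1$. Combined with the positive diagonal entries (which preclude signature $(0,3)$), this forces every $G_{\{i,j,k\}}$ to have signature $(2,1)$, hence eigenvalues $\mu_1\geq\mu_2>0>\mu_3$. Cauchy's interlacing theorem
\[
\lambda_1\geq\mu_1\geq\lambda_2\geq\mu_2\geq\lambda_3\geq\mu_3\geq\lambda_4
\]
then forces the eigenvalues of $G$ to satisfy $\lambda_2>0$ and $\lambda_4<0$, while $\lambda_3$ is free in sign. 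Thus the only possible signatures of $G$ are $(3,1)$, the degenerate rank-$3$ signature $(2,1)$, and $(2,2)$, distinguished respectively by $\det G<0$, $\det G=0$, and $\det G>0$.

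Next, I would match each case to a geometry. The case $\det G>0$ is precisely Theorem \ref{criterion} and yields a truncated hyperideal anti-de Sitter tetrahedron. The case $\det G<0$ is handled by repeating the proof of Theorem \ref{criterion} verbatim with $\mathbb E^{2,2}$ and $\mathrm I_{2,2}$ replaced by $\mathbb E^{3,1}$ and $\mathrm I_{3,1}$: Sylvester's Law of Inertia produces vertices $\mathbf v_i$ in the hyperbolic dual hyperboloid $\{\mathbf v\in\mathbb E^{3,1}:\langle\mathbf v,\mathbf v\rangle=1\}$, and the same quadratic inequality argument using $\langle\mathbf v_i,\mathbf v_j\rangle=-\cosh l_{ij}<-1$ shows each edge segment $L_{ij}$ meets $\mathbb H^3$. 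Thus $(l_1,\dots,l_6)$ are the edge lengths of a truncated hyperideal hyperbolic tetrahedron. For $\det G=0$, the kernel of $G$ gives a linear dependence among the $\mathbf v_i$, and Sylvester in the rank-$3$ signature-$(2,1)$ setting produces vectors $\mathbf v_i\in\mathbb E^{2,1}$ with the prescribed pairwise inner products, realizing $(l_1,\dots,l_6)$ as the edge lengths of a flat truncated hyperideal tetrahedron in the sense of \cite{LY}. Mutual exclusivity of the three cases is immediate from the well-definedness of the sign of $\det G$.

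The main obstacle will be the flat case: one must confirm that a rank-$3$, signature-$(2,1)$ Gram matrix really does produce the ``(truncated hyperideal) flat tetrahedron'' of \cite{LY} (and matches the extended co-volume appearing in Theorem \ref{Thm1.1}(2)), rather than some further-degenerate configuration. Once this identification is in place, the rest of the proof is pure linear algebra via Cauchy interlacing and Sylvester's Law of Inertia.
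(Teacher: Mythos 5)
Your proposal follows the same skeleton as the paper's proof: trichotomize $(l_1,\dots,l_6)$ by the sign of $\det\mathrm{Gram}(l_1,\dots,l_6)$ and settle the case $\det G>0$ via Theorem \ref{criterion}. The differences are in how the other ingredients are supplied. Where the paper disposes of the hyperbolic case ($\det G<0$) and the flat case ($\det G=0$) in one stroke by citing \cite[Proposition 2.7]{LMSWY}, you re-derive the hyperbolic case by running the proof of Theorem \ref{criterion} in $\mathbb E^{3,1}$; that adaptation is sound (Sylvester factorization, $\langle\mathbf v_i,\mathbf v_j\rangle=-\cosh l_{ij}<-1$, and the same quadratic computation showing the midpoint of $L_{ij}$ lies in the negative cone), and it is essentially a reproof of the cited result. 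Your Cauchy-interlacing argument with the $3\times3$ principal minors $1-\cosh^2 l_{ij}-\cosh^2 l_{ik}-\cosh^2 l_{jk}-2\cosh l_{ij}\cosh l_{ik}\cosh l_{jk}<0$ is correct and pins down the full list of possible signatures $(3,1)$, degenerate $(2,1)$, $(2,2)$; the paper only needs the implication $\det G>0\Rightarrow$ signature $(2,2)$, which its Lemma \ref{22} gets more cheaply (positive diagonal entry rules out $(0,4)$, a negative $2\times2$ minor rules out $(4,0)$), but your version buys a self-contained trichotomy and makes the exclusivity statement transparent.

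The one place your write-up is not complete is the one you flag yourself: the case $\det G=0$. Producing vectors in $\mathbb E^{2,1}$ with the prescribed inner products is the right construction, but identifying the resulting configuration with the ``(truncated hyperideal) flat tetrahedron'' of \cite{LY} (the object whose extended co-volume appears in Theorem \ref{Thm1.1}) requires matching that paper's definition, which is not reproduced here. The paper closes exactly this step by invoking \cite[Proposition 2.7]{LMSWY}, which states the equivalences $\det G<0\Leftrightarrow$ hyperbolic and $\det G=0\Leftrightarrow$ flat; citing that result (or carrying out the verification against the definition in \cite{LY}) is what your argument still needs to be complete.
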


\begin{proof}Let $G$ be the Gram matrix of $(l_{12},\dots, l_{34}).$ Then by \cite[Proposition 2.7]{LMSWY}, if $\det G<0,$ then $l_{12},\dots, l_{34}$ are the edge lengths of a a truncated hyperideal tetrahedron in $\mathbb H^3,$ and if $\det G=0,$ then $l_{12},\dots, l_{34}$ are the edge lengths of a flat tetrahedron; and by Lemma \ref{22} below and Theorem \ref{criterion}, if $\det G>0,$ then $l_{12},\dots, l_{34}$ are the edge lengths of a  truncated hyperideal tetrahedron in  $\mathbb A\mathrm d\mathbb S^3.$  
\end{proof}

\begin{lemma}\label{22} Let $G$ be the Gram matrix of $(l_{12},\dots, l_{34})\in \mathbb R^6_{>0}.$ If $\det G >0,$ then $G$ has signature $(2,2).$ 
\end{lemma}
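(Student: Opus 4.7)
The plan is to exploit two facts: any non-degenerate symmetric $4\times 4$ real matrix with positive determinant must have signature $(4,0),(0,4),$ or $(2,2)$, and then to rule out the two definite cases directly from the structure of $G$.

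First I would observe that among the signatures of a non-degenerate symmetric real $4\times 4$ matrix, only $(4,0),(0,4),(2,2)$ yield a positive determinant: for $(3,1)$ and $(1,3)$ the product of the eigenvalues is negative. Since the hypothesis $\det G>0$ forces $G$ to be non-degenerate, it suffices to exclude the two definite possibilities.

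The case $(0,4)$ is ruled out immediately from the diagonal: each diagonal entry of $G$ equals $1$, so $e_i^T G e_i = 1 > 0$ and hence $G$ is not negative definite. To rule out $(4,0)$, I would look at any $2\times 2$ principal submatrix, say the one indexed by $\{1,2\}$,
\begin{equation*}
\begin{pmatrix} 1 & -\cosh l_{12} \\ -\cosh l_{12} & 1 \end{pmatrix},
\end{equation*}
whose determinant is $1-\cosh^2 l_{12} = -\sinh^2 l_{12} < 0$ since $l_{12}>0$. Thus this $2\times 2$ submatrix has a negative eigenvalue. By Cauchy's interlacing theorem, the smallest eigenvalue of $G$ is no greater than the smallest eigenvalue of any principal submatrix, so $G$ also has a non-positive eigenvalue and is not positive definite.

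Combining the two exclusions with the initial trichotomy under $\det G>0$ leaves signature $(2,2)$ as the only possibility. There is no substantive obstacle here; the only thing to be careful about is invoking interlacing (or equivalently Sylvester's criterion applied to the leading principal minors after a suitable relabeling) to convert the sign of a $2\times 2$ minor into the existence of a non-positive eigenvalue of $G$.
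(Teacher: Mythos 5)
Your argument is correct and matches the paper's proof essentially step for step: both use the trichotomy of signatures $(4,0)$, $(2,2)$, $(0,4)$ forced by $\det G>0$, exclude $(0,4)$ via the positive diagonal entry, and exclude $(4,0)$ via the negative determinant of the $2\times 2$ principal submatrix $\left[\begin{smallmatrix}1 & -\cosh l_{12}\\ -\cosh l_{12} & 1\end{smallmatrix}\right]$. Your appeal to Cauchy interlacing is just a slightly more explicit way of stating the standard fact (used implicitly in the paper) that a positive definite matrix has all principal minors positive.
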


\begin{proof} Since $\det G>0,$ the signature of $G$ has the possibilities $(4,0),$ $(2,2)$ and $(0,4).$ Since the first diagonal entry of $G$ equals $1,$ which is positive, $G$ cannot be negative definite and the signature of $G$ cannot be $(0,4);$ and since  $\det \left[\begin{matrix}
1& -\cosh l_{12} \\
-\cosh l_{12} & 1\\
 \end{matrix}\right] = 1 -\cosh^2l_{12}<0,$ $G$ cannot be positive definite and the signature of $G$ cannot be $(4,0).$ Therefore, the signature of $G$ is $(2,2).$
\end{proof}

The next result will be needed to define the dihedral angles of a hyperideal tetrahedron in $\ads^3$.

\begin{proposition}\label{cos} Let  $(l_{12},\dots, l_{34})\in \mathbb R^6_{>0}$ be the edge lengths of a  truncated hyperideal tetrahedron  in  $\mathbb A\mathrm d\mathbb S^3.$ For $i\in\{1,2,3,4\},$ let $\mathbf u_i$ be the outward unit normal vector of the face opposite to the vertex $\mathbf v_i.$ For $i,j\in\{1,\dots, 4\},$ let $G_{ij}$ be the $ij$-th cofactor of the Gram matrix of $(l_{12},\dots, l_{34}).$ Then 
$$\langle \mathbf u_i,\mathbf u_j \rangle = \frac{G_{ij}}{\sqrt{G_{ii}G_{jj}}}.$$
\end{proposition}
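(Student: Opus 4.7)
The plan is to introduce the dual basis of $\{\mathbf v_1,\dots,\mathbf v_4\}$ with respect to the Lorentzian form $\langle\cdot,\cdot\rangle$ on $\mathbb E^{2,2}$, express each outward unit normal $\mathbf u_i$ in that basis, and then read off $\langle \mathbf u_i,\mathbf u_j\rangle$ directly from the entries of $G^{-1}$, converting to cofactors at the very end.

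First, I would use the identity $G = V^T I_{2,2} V$ (with $V=[\mathbf v_1,\mathbf v_2,\mathbf v_3,\mathbf v_4]$), which gives $\det G = (\det V)^2$. Since by Theorem \ref{criterion} the matrix $G$ has signature $(2,2)$, in particular $\det G>0$, so $V$ is invertible and $\{\mathbf v_1,\dots,\mathbf v_4\}$ is a basis of $\mathbb E^{2,2}$. I would then define the dual basis $\{\mathbf v_1^*,\dots,\mathbf v_4^*\}$ by $\langle \mathbf v_i^*,\mathbf v_j\rangle = \delta_{ij}$, write $\mathbf v_i^* = \sum_k a_{ik}\mathbf v_k$, and see from the linear system $\sum_k a_{ik}\langle\mathbf v_k,\mathbf v_j\rangle=\delta_{ij}$ that $a_{ik}=(G^{-1})_{ik}$. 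A short calculation then yields
$$\langle \mathbf v_i^*,\mathbf v_j^*\rangle = (G^{-1})_{ij}.$$

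Next, the defining condition that $\mathbf u_i$ be orthogonal to every $\mathbf v\in\mathbf F_i$ says exactly that $\mathbf u_i$ is a scalar multiple of $\mathbf v_i^*$, say $\mathbf u_i = c_i\mathbf v_i^*$. The outward condition $\langle\mathbf u_i,\mathbf v_i\rangle>0$ combined with $\langle\mathbf v_i^*,\mathbf v_i\rangle=1$ forces $c_i>0$. Invoking Proposition \ref{space-like}, which asserts that all four faces of an AdS truncated hyperideal tetrahedron are space-like so that $\langle\mathbf u_i,\mathbf u_i\rangle=-1$, the unit condition becomes $c_i^2(G^{-1})_{ii}=-1$, whence $(G^{-1})_{ii}<0$ and $c_i = 1/\sqrt{-(G^{-1})_{ii}}$.

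Combining these steps and substituting the cofactor identity $(G^{-1})_{ij}=G_{ij}/\det G$, together with $\det G>0$, one then computes
$$\langle\mathbf u_i,\mathbf u_j\rangle = c_ic_j(G^{-1})_{ij} = \frac{(G^{-1})_{ij}}{\sqrt{(G^{-1})_{ii}(G^{-1})_{jj}}} = \frac{G_{ij}}{\sqrt{G_{ii}G_{jj}}}.$$
The main obstacle, modest as it is, is the sign bookkeeping: for the right-hand side to parse one needs $G_{ii}G_{jj}>0$ so that $\sqrt{G_{ii}G_{jj}}$ is a positive real number, and one needs $\det G>0$ so that the factors of $\det G$ cancel with the correct sign. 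Both facts rest on the two geometric inputs recorded above, namely signature $(2,2)$ from Theorem \ref{criterion} and all faces being space-like from Proposition \ref{space-like}; modulo these, the proof is pure linear algebra via the dual basis.
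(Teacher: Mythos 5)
Your dual-basis argument is, up to normalization, the same construction the paper uses: the paper sets $\mathbf w_i=\sum_{j}G_{ij}\mathbf v_j$, which is exactly $\det G$ times your $\mathbf v_i^*$, computes $\langle\mathbf w_i,\mathbf v_j\rangle=\delta_{ij}\det G$ and $\langle\mathbf w_i,\mathbf w_j\rangle=G_{ij}\det G$, rescales to get $\mathbf u_i$, and verifies the outward, unit and normal conditions. So the linear algebra in your proposal is sound and matches the paper's route, and your use of Theorem \ref{criterion} to get $\det G=(\det V)^2>0$ is also how the paper secures invertibility and the sign of $\det G$.

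The one genuine problem is your sign step. You fix $c_i$ by invoking Proposition \ref{space-like} ($\langle\mathbf u_i,\mathbf u_i\rangle=-1$, i.e.\ $(G^{-1})_{ii}<0$), but in this paper Proposition \ref{space-like} is not available at that point: it is stated \emph{after} Proposition \ref{cos} and is deduced precisely from equation (\ref{uu1}), which is established inside the proof of Proposition \ref{cos}. As written, your argument is therefore circular within the paper's logical order (it would only be legitimate if you imported the independent proof of space-likeness from \cite[Lemma 2.1]{CS}). The repair is short and is what the paper actually does: determine the sign directly from the Gram matrix, namely
\begin{equation*}
G_{ii}=1-\cosh^2 l_j-\cosh^2 l_k-\cosh^2 l_l-2\cosh l_j\cosh l_k\cosh l_l<0,
\end{equation*}
where $\{i,j,k,l\}=\{1,2,3,4\}$, so that $(G^{-1})_{ii}=G_{ii}/\det G<0$, $c_i=1/\sqrt{-(G^{-1})_{ii}}$ is well defined, and $G_{ii}G_{jj}>0$. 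With that substitution your proof is complete, and in fact it then yields Proposition \ref{space-like} as a corollary rather than assuming it.
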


\begin{proof} Let $\Delta$ be the  truncated hyperideal tetrahedron  in  $\ads^3$ with edge lengths $(l_{12},\dots, l_{34})$ and vertices $\mathbf v_1,\dots,\mathbf v_4.$ 

For $i\in\{1,2,3,4\},$ let 
$$\mathbf w_i=\sum_{j=1}^4 G_{ij}\mathbf v_j.$$
The we have
$$\langle \mathbf w_i, \mathbf v_j \rangle = \delta_{ij}\det G,$$
where $\delta_{ij}$ is the Kronecker symbol; and as a consequence, 
$$\langle \mathbf w_i, \mathbf w_j \rangle = G_{ij}\det G.$$
Here we recall  for each $i\in\{1,2,3,4\}$ that $\det G_{ii} = 1 - \cosh^2l_j-\cosh^2l_k-\cosh^2l_l -2\cosh l_j\cosh l_k\cosh l_l <0,$ where $\{i,j,k,l\}=\{1,2,3,4\};$  and $\det G>0.$ For $i\in\{1,2,3,4\},$ we define 
$$\mathbf u_i =  \frac{\mathbf w_i}{\sqrt{-G_{ii}\det G}}.$$
Then we have 
$$\langle \mathbf u_i,\mathbf u_j \rangle =  \frac{\langle \mathbf w_i,\mathbf w_j \rangle}{\sqrt{G_{ii}G_{jj}}\det G} = \frac{G_{ij}}{\sqrt{G_{ii}G_{jj}}}.$$

We are left to show that $\mathbf u_1,\dots,\mathbf u_4$ are the outward unit normal vectors of $\Delta,$ and we have the following verifications:
\begin{enumerate}[(a)]
\item For $i\in\{1,2,3,4\},$ $$\langle \mathbf u_i,\mathbf v_i \rangle  =   \frac{\langle \mathbf w_i, \mathbf v_i\rangle }{\sqrt{-G_{ii}\det G}}=  \frac{\det G}{\sqrt{-G_{ii}\det G}}= \sqrt{\frac{\det G}{-G_{ii}}}>0,$$
hence $\mathbf u_i$ is outward. 

\item For  $i\in\{1,2,3,4\},$ 
\begin{equation}\label{uu1}
    \langle \mathbf u_i,\mathbf u_i \rangle = \frac{\langle \mathbf w_i,\mathbf w_i \rangle }{-G_{ii}\det G}=\frac{G_{ii}\det G}{-G_{ii}\det G}= -1,
\end{equation}
hence $\mathbf u_i$ is unit. 

\item For $\{i,j\}\subset \{1,2,3,4\},$ $$\langle \mathbf u_i,\mathbf v_j \rangle = \frac{\langle \mathbf w_i, \mathbf v_j\rangle }{\sqrt{-G_{ii}\det G}} =  \frac{\delta_{ij}\det G}{\sqrt{-G_{ii}\det G}}= 0,$$
hence $\mathbf u_i$ is normal. 
\end{enumerate}
This completes the proof.
\end{proof}

As an immediate consequence of (\ref{uu1}), we have the following proposition.

\begin{proposition}\label{space-like}
All the four faces of a truncated hyperideal tetrahedron in $\ads^3$
 are space-like.
\end{proposition}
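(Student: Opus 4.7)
The plan is to read off the conclusion directly from the computation already performed in Proposition \ref{cos}. The key observation is that equation (\ref{uu1}) establishes that for each $i \in \{1,2,3,4\}$, the outward unit normal vector $\mathbf u_i$ satisfies $\langle \mathbf u_i, \mathbf u_i\rangle = -1$, so $\mathbf u_i \in \mathbb A\mathrm d\mathbb S^3$ (not $\mathbb A\mathrm d\mathbb S^{3^*}$).

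Next, I would invoke the construction of the outward unit normal vector $\mathbf u_i$ to the face $\mathbf F_i$: by Condition (3) in the definition of $\mathbf u_i$, every vector in $\mathbf F_i$ is orthogonal to $\mathbf u_i$, so $\mathbf F_i \subseteq \Pi_{\mathbf u_i}$; and since both are $2$-dimensional totally geodesic planes in $\mathbb A\mathrm d\mathbb S^3$, equality holds, giving $\mathbf F_i = \Pi_{\mathbf u_i}$ (this identity was already noted in the text following the definition of the outward unit normal vector).

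The final step is then to apply the general criterion recorded in Subsection 2.1: a totally geodesic plane $\Pi_{\mathbf v}$ is space-like precisely when $\mathbf v \in \mathbb A\mathrm d\mathbb S^3$. Combining the three ingredients gives that each $\mathbf F_i$ is space-like, for $i = 1,2,3,4$. There is no real obstacle here; the only thing to be careful about is that Proposition \ref{cos}, and therefore formula (\ref{uu1}), has been established for \emph{any} six-tuple of edge lengths coming from a truncated hyperideal tetrahedron in $\mathbb A\mathrm d\mathbb S^3$, so the identification of $\mathbf u_i$ appearing in Proposition \ref{cos} with the outward unit normal vector in Definition \ref{truncated} is legitimate, and the conclusion applies uniformly to all four faces.
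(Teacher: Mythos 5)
Your argument is correct and is essentially the paper's own proof: the paper deduces Proposition \ref{space-like} directly from (\ref{uu1}), i.e.\ $\langle\mathbf u_i,\mathbf u_i\rangle=-1$ so $\mathbf u_i\in\ads^3$, combined with $\mathbf F_i=\Pi_{\mathbf u_i}$ and the fact from Subsection 2.1 that $\Pi_{\mathbf v}$ is space-like when $\mathbf v\in\ads^3$. Your extra care in identifying the vectors constructed in Proposition \ref{cos} with the outward unit normal vectors of Definition \ref{truncated} is exactly what the verification steps (a)--(c) in that proof provide, so nothing further is needed.
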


 \begin{remark}
 The same result was first proved in \cite[Lemma 2.1]{CS} for hyperideal polyhedra in $\mathbb A\mathrm d\mathbb S^3.$
 \end{remark}

\subsection{Dihedral angles, Gram matrix, and another criterion}

Let $\Delta$ be a truncated hyperideal tetrahedron  in  $\ads^3$ with  vertices $\mathbf v_1,\dots,\mathbf v_4$ and the corresponding outward unit normal vectors of the faces $\mathbf u_1,\dots, \mathbf u_4.$  For $\{i,j\}\subset \{1,2,3,4\},$ the dihedral angle $\theta_{ij}$ at the edge $e_{ij}$ should satisfy 
\begin{equation}\label{cos2}
 {\langle \mathbf u_k, \mathbf u_l \rangle = \cos \theta_{ij}},
\end{equation}
where $\{k,l\}=\{1,2,3,4\}\setminus\{i,j\}.$  {The idea behind this requirement is that the dihedral angle between two faces should equal $\pi$ minus the angle between the corresponding normal vectors, and the angle between a vector $\mathbf u$ in $\ads^3$ with itself should be defined to be $0.$ }

 As an immediate consequence of  Proposition \ref{cos} and \cite[Proposition 4.5]{LY} (see also \cite[Proposition 2.6]{LMSWY}), we have the following corollary.

\begin{corollary}\label{>1} Let $\Delta$ be a truncated hyperideal tetrahedron  in  $\ads^3$ with outward unit normal vectors $\mathbf u_1,\dots,\mathbf u_4.$ Then for $\{i,j\}\subset\{1,2,3,4\},$ either $\langle \mathbf u_i,\mathbf u_j\rangle <-1$ or $\langle \mathbf u_i,\mathbf u_j\rangle >1.$
\end{corollary}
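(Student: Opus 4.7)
The plan is to translate the claim into an inequality on principal cofactors of the edge-length Gram matrix $G$ and then invoke the classical Jacobi cofactor identity together with the sign information already at our disposal. By Proposition \ref{cos}, we have $\langle \mathbf u_i,\mathbf u_j\rangle = G_{ij}/\sqrt{G_{ii}G_{jj}}$, and from the proof of that proposition each $G_{ii}<0$, so $G_{ii}G_{jj}>0$ and the quantity is a well-defined real number. It therefore suffices to prove the strict inequality $G_{ij}^2 > G_{ii}G_{jj}$, which is exactly the condition that $\langle \mathbf u_i, \mathbf u_j\rangle$ has absolute value greater than $1$.

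For this I would use the Jacobi identity for cofactors of a $4\times 4$ matrix, which states
\begin{equation*}
G_{ii}G_{jj} - G_{ij}^2 \;=\; (\det G)\cdot \det G^{\{i,j\}},
\end{equation*}
where $G^{\{i,j\}}$ denotes the $2\times 2$ principal submatrix of $G$ obtained by deleting rows and columns indexed by $i$ and $j$. This is the same algebraic identity that underlies \cite[Proposition 4.5]{LY} and \cite[Proposition 2.6]{LMSWY} in the hyperbolic setting, from which it may be cited rather than rederived. Using the explicit form of the Gram matrix in Definition \ref{Gramel}, if $\{k,l\}=\{1,2,3,4\}\setminus\{i,j\}$ then
\begin{equation*}
\det G^{\{i,j\}} \;=\; \det\begin{bmatrix} 1 & -\cosh l_{kl} \\ -\cosh l_{kl} & 1 \end{bmatrix} \;=\; 1-\cosh^{2}l_{kl} \;=\; -\sinh^{2}l_{kl} \;<\; 0,
\end{equation*}
since $l_{kl}>0$.

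Finally, by Lemma \ref{22}, $\det G>0$, so the two factors on the right-hand side of the Jacobi identity have opposite signs, and $G_{ii}G_{jj} - G_{ij}^2<0$. This yields the desired strict inequality $|\langle \mathbf u_i,\mathbf u_j\rangle|>1$, which is precisely the claimed dichotomy $\langle \mathbf u_i,\mathbf u_j\rangle<-1$ or $\langle \mathbf u_i,\mathbf u_j\rangle>1$.

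There is essentially no substantive obstacle here, consistent with the authors' description of the corollary as immediate: all the geometric content is already encoded in Proposition \ref{cos} and Lemma \ref{22}, and the only piece of algebra is the Jacobi cofactor identity, a single determinant computation. The only point demanding care is matching sign conventions when quoting \cite{LY, LMSWY}; and it is instructive to observe that the contrast with the hyperbolic case lies entirely in the sign of $\det G$, which is negative there and forces the reversed inequality $|\langle \mathbf u_i,\mathbf u_j\rangle|<1$, so that $\theta_{ij}$ is a genuine angle in $(0,\pi)$ rather than the hyperbolic-distance-like quantity that governs the anti-de Sitter case.
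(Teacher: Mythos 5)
Your proof is correct, and it follows the same basic route as the paper: Proposition \ref{cos} to express $\langle \mathbf u_i,\mathbf u_j\rangle$ as $G_{ij}/\sqrt{G_{ii}G_{jj}}$, and then a cofactor inequality forcing this ratio outside $[-1,1]$. The only difference is that the paper obtains that inequality by simply citing \cite[Proposition 4.5]{LY} (see also \cite[Proposition 2.6]{LMSWY}), whereas you rederive it from Jacobi's identity $G_{ii}G_{jj}-G_{ij}^2=(\det G)\det G^{\{i,j\}}$ together with $\det G^{\{i,j\}}=-\sinh^2 l_{kl}<0$; this is exactly the computation the paper itself performs (with the same reference \cite[2.5.1.~Theorem]{P}) for the angle Gram matrix in Step 3 of the proof of Theorem \ref{criterion2}, so your argument is self-contained and consistent with the paper's conventions, and it makes transparent that the hyperbolic/anti-de Sitter dichotomy is governed solely by the sign of $\det G$. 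One small citation slip: Lemma \ref{22} states the implication $\det G>0\Rightarrow$ signature $(2,2)$, which is the wrong direction for your purposes; the fact you need, $\det G>0$ for the edge-length Gram matrix of a truncated hyperideal tetrahedron in $\ads^3$, comes from Theorem \ref{criterion} (signature $(2,2)$, hence positive determinant) and is also recorded explicitly in the proof of Proposition \ref{cos}, so the gap is purely one of attribution and is trivially repaired.
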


 Guaranteed by Corollary \ref{>1}, we make the following definition. 
 
\begin{definition}[Dihedral angles]\label{dihedralangle} The \emph{dihedral angle} $\theta_{ij}$ at the edge $e_{ij}$ is defined as follows.  If $\langle \mathbf u_k, \mathbf u_l \rangle<-1,$ then we define 
$$ {\theta_{ij}\doteq \pi- \mathbf i \cosh^{-1} \big(-\langle \mathbf u_k, \mathbf u_l \rangle\big)};$$
and if $\langle \mathbf u_k, \mathbf u_l \rangle>1,$ then we define 
$$ {\theta_{ij}\doteq \mathbf i \cosh^{-1} \big(\langle \mathbf u_k, \mathbf u_l \rangle\big)},$$
where $\{i,j,k,l\}=\{1,2,3,4\}.$
\end{definition}

In the rest of this paper, we let 
\begin{equation}\label{A}
    \mathbb A=\mathrm A_1\cup \mathrm A_2\cup \mathrm A_3,
\end{equation}
where
$$\mathrm A_1=(\pi-\mathbf i\mathbb R_{>0}) \times \mathbf i\mathbb R_{>0} \times  \mathbf i\mathbb R_{>0}  \times \mathbf i\mathbb R_{>0} \times  \mathbf i\mathbb R_{>0} \times (\pi-\mathbf i\mathbb R_{>0}) ,$$
$$\mathrm A_2= \mathbf i\mathbb R_{>0} \times (\pi-\mathbf i\mathbb R_{>0}) \times \mathbf i\mathbb R_{>0} \times  \mathbf i\mathbb R_{>0} \times (\pi-\mathbf i\mathbb R_{>0}) \times \mathbf i\mathbb R_{>0},$$
$$\mathrm A_3=\mathbf i\mathbb R_{>0} \times  \mathbf i\mathbb R_{>0} \times (\pi-\mathbf i\mathbb R_{>0}) \times (\pi-\mathbf i\mathbb R_{>0}) \times \mathbf i\mathbb R_{>0} \times  \mathbf i\mathbb R_{>0}.$$
Then by \cite[Proposition 4.5]{LY} and \cite[Proposition 2.6]{LMSWY}, the set of all the possible dihedral angles of a truncated hyperideal tetrahedra in $\ads^3$ is a subset of $\mathbb A.$

\begin{remark}\label{extcov}
We observe that, if $(l_{12},\dots,l_{34})$ are the edge lengths of a truncated  hyperideal tetrahedron in $\ads^3,$ then the extended co-volume function defined in \cite{LY} can be computed by
$$\widetilde{\mathrm{Cov}}(l_{12},\dots,l_{34})=\frac{\pi}{2}\cdot(l_{ij}+l_{kl}),$$
where $\{i,j,k,l\}=\{1,2,3,4\}$ with  $\cos\theta_{ij}<-1$ and $\cos\theta_{kl}<-1,$ or equivalently, with $\mathrm{Re}\theta_{ij}=\mathrm{Re}\theta_{kl}=\pi.$
\end{remark}

\begin{remark}
The imaginary part of $\theta_{ij}$ coincides with the negative of the exterior dihedral angle defined in \cite{CS}. 
\end{remark}

\begin{remark}\label{rm} It was proved as a special case of  \cite[Theorem 1.3]{CS} that
a six-tuple $(\theta_{12},\dots,\theta_{34})$ in $\mathbb A$ is the six-tuple of dihedral angles of a hyperideal tetrahedron in $\ads^3$, with $\mathrm{Re}\theta_{ij}=\mathrm{Re}\theta_{kl}=\pi,$ if and only if:
\begin{enumerate}[(1)]
\item For each $i\in\{1,2,3,4\}$ and $\{j,k,l\}=\{1,2,3,4\}\setminus\{i\},$
$$\mathrm{Im}\theta_{ij}+\mathrm{Im}\theta_{ik}+\mathrm{Im}\theta_{il}<0.$$

\item For $\{i,j,k,l\}=\{1,2,3,4\}$, $$\mathrm{Im}\theta_{ij}+\mathrm{Im}\theta_{ik}+\mathrm{Im}\theta_{jl}+\mathrm{Im}\theta_{kl}<0.$$
\item For $k\in\{1,2,3,4\}\setminus \{i,j\},$
$$\mathrm{Im}\theta_{ij}+\mathrm{Im}\theta_{ik}<0;$$
 and for $i\in\{1,2,3,4\}\setminus\{k,l\},$
$$\mathrm{Im}\theta_{ik}+\mathrm{Im}\theta_{kl}<0.$$
\end{enumerate}
It is easy to see that Condition (1) implies Conditions (2) and (3).
\end{remark}

In Theorem \ref{criterion2} below, we provide another criterion of the dihedral angles of a truncated hyperideal tetrahedron in $\mathbb A\mathrm d\mathbb S^3,$ in terms of the Gram matrix, which will be needed in the proof of Theorem \ref{vol2}. 

\begin{definition}[Gram matrix in the dihedral angles]\label{Gramda}
For a truncated hyperideal tetrahedron $\Delta$ in  $\ads^3$ with the dehidral angles  $(\theta_{12},\dots,\theta_{34}),$  the \emph{Gram matrix} of  $\Delta$ \emph{in the dihedral angles}, denoted by $\mathrm{Gram}(\Delta)$,  is defined by 
\begin{equation*}
\begin{bmatrix}
-1 & \cos \theta_{34} & \cos \theta_{24}
    & \cos \theta_{23} \\
\cos \theta_{34} & -1 & \cos \theta_{14}
    & \cos \theta_{13}\\
    \cos \theta_{24} & \cos \theta_{14} & -1
    & \cos \theta_{12} \\
 \cos \theta_{23} & \cos \theta_{13} & \cos \theta_{12}
    & -1
  \end{bmatrix}.
\end{equation*}
For a six-tuple $(\theta_{12},\dots,\theta_{34})$ in $\mathbb A,$ we call the above matrix the \emph{Gram matrix} of the six-tuple, and denote it by $\mathrm{Gram}(\theta_{12},\dots,\theta_{34}).$
\end{definition}

In the rest of the paper, we will also use the labeling $\theta_1,\dots,\theta_6$ for the dihedral angles with the identification $(\theta_{12},\theta_{13},\theta_{14},\theta_{23},\theta_{24},\theta_{34})=(\theta_1,\theta_2,\theta_6,\theta_3,\theta_5,\theta_4)$ as depicted in Figure \ref{convention2}, in which way the Gram matrix becomes 
\begin{equation*}
\begin{bmatrix}
-1 & \cos \theta_{4} & \cos \theta_{5}
    & \cos \theta_{3} \\
\cos \theta_{4} & -1 & \cos \theta_{6}
    & \cos \theta_{2}\\
    \cos \theta_{5} & \cos \theta_{6} & -1
    & \cos \theta_{1} \\
 \cos \theta_{3} & \cos \theta_{2} & \cos \theta_{1}
    & -1
  \end{bmatrix}.
\end{equation*}
In the rest of the paper, when we say ``a six-tuple $(\theta_1,\dots,\theta_6)$ in $\mathbb A$'', we mean that the corresponding six-tuple $(\theta_{12},\theta_{13},\theta_{14},\theta_{23},\theta_{24},\theta_{34})$ is in $\mathbb A.$

\begin{figure}[htbp]
\centering
\includegraphics[scale=0.3]{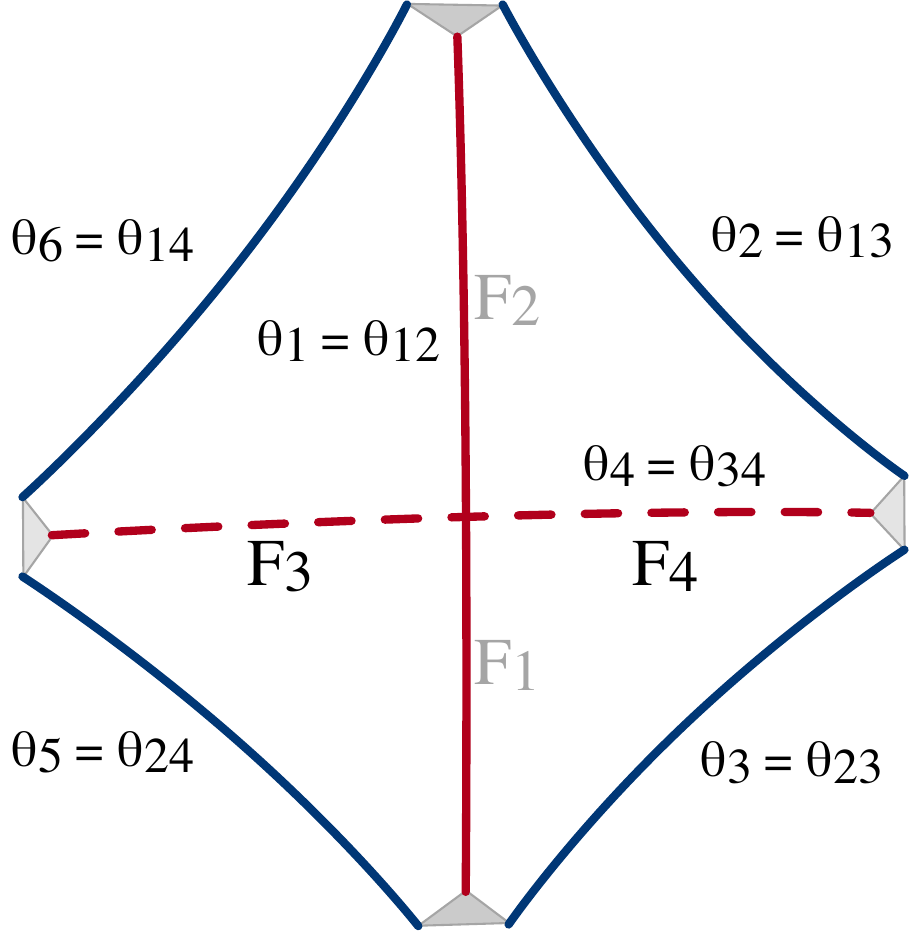}
\caption{Labeling for the dihedral angles: in which way the dihedral angles appearing in row $i$ and in column $i$ of the Gram matrix are adjacent to the face $F_i$ of $\Delta$.}
\label{convention2}
\end{figure}

\begin{theorem}\label{criterion2} For a six-tuple $(\theta_1,\dots,\theta_6)$ in $\mathbb A,$ the following are equivalent:
\begin{enumerate}[(1)]
\item $(\theta_1,\dots,\theta_6)$ are the dihedral angles of a truncated hyperideal tetrahedron in $\mathbb A\mathrm d\mathbb S^3.$ 
\item The Gram matrix $G$ of $(\theta_1,\dots,\theta_6)$ satisfies the following three conditions:
\begin{enumerate}[(a)]
\item $G$ has signature $(2,2).$
\item For each $i\in\{1,2,3,4\},$ $G_{ii}>0.$
\item For each $\{i,j\}\subset\{1,2,3,4\},$ $G_{ij}<0.$
\end{enumerate}
\end{enumerate}
\end{theorem}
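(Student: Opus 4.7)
The plan is to exploit a duality between the vertices $\mathbf v_i\in\mathbb A\mathrm d\mathbb S^{3^*}$ and the outward unit normals $\mathbf u_i\in\mathbb A\mathrm d\mathbb S^3$ (so $\langle\mathbf u_i,\mathbf u_i\rangle=-1$), dual to the construction in Proposition \ref{cos}. Since the $(i,j)$-entry of the Gram matrix in the dihedral angles is $\langle\mathbf u_i,\mathbf u_j\rangle$, writing $\mathrm U=[\mathbf u_1,\dots,\mathbf u_4]$ one has $G=\mathrm U^{T}\,\mathrm I_{2,2}\,\mathrm U$, whence $\det G=(\det\mathrm U)^{2}$ whenever $\mathrm U$ is invertible. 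Throughout, $G_{ij}$ denotes the $(i,j)$-cofactor of $G$ (matching the convention of Proposition \ref{cos}).

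For direction (1)~$\Rightarrow$~(2), start from $\Delta$ with outward normals furnished by Proposition \ref{cos}. These $\mathbf u_i$'s are linearly independent, so $\mathrm U$ is invertible and $G$ has signature $(2,2)$ with $\det G>0$, which is (a). Running the cofactor construction of Proposition \ref{cos} in its dual form expresses the vertices as
$$\mathbf v_i=\frac{\sum_{j}G_{ij}\,\mathbf u_j}{\sqrt{G_{ii}\det G}},$$
and direct computation yields $\langle\mathbf v_i,\mathbf v_i\rangle=1$ and $\langle\mathbf v_i,\mathbf v_j\rangle=G_{ij}/\sqrt{G_{ii}G_{jj}}=-\cosh l_{ij}<0$. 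The first identity forces $G_{ii}>0$, i.e.\ (b), and the second forces $G_{ij}<0$, i.e.\ (c).

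For direction (2)~$\Rightarrow$~(1), Sylvester's law of inertia applied to $G$ provides an invertible $\mathrm U$ with $G=\mathrm U^{T}\mathrm I_{2,2}\mathrm U$; take $\mathbf u_1,\dots,\mathbf u_4$ to be its columns, which automatically lie in $\mathbb A\mathrm d\mathbb S^{3}$ since the diagonal of $G$ is $-1$. Define $\mathbf v_i$ by the formula above; this is well-defined by (a) and (b). A short calculation gives $\langle\mathbf v_i,\mathbf v_i\rangle=1$, so $\mathbf v_i\in\mathbb A\mathrm d\mathbb S^{3^*}$, and the $\mathbf v_i$'s are linearly independent by the invertibility of $G$. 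Following the pattern used in the proof of Theorem \ref{criterion}, the line-segment condition $L_{ij}\cap\mathbb B^{2,2}\neq\emptyset$ reduces to $\langle\mathbf v_i,\mathbf v_j\rangle<-1$, and (b), (c) already give $\langle\mathbf v_i,\mathbf v_j\rangle=G_{ij}/\sqrt{G_{ii}G_{jj}}<0$, so the remaining task is to prove $G_{ij}^{2}>G_{ii}G_{jj}$. Granting this, uniqueness identifies $\mathbf u_1,\dots,\mathbf u_4$ with the outward unit normals of the constructed $\Delta$, and Definition \ref{dihedralangle} then recovers the prescribed dihedral angles from $\langle\mathbf u_i,\mathbf u_j\rangle=\cos\theta_{kl}$.

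The hard part of the whole argument is the inequality $G_{ij}^{2}>G_{ii}G_{jj}$, which is exactly where the domain assumption $(\theta_1,\dots,\theta_6)\in\mathbb A$ enters. I plan to invoke Jacobi's complementary-minor identity for the adjugate: for $\{i,j,k,l\}=\{1,2,3,4\}$ one has
$$G_{kk}G_{ll}-G_{kl}^{2}=\det G\cdot\det G\bigl[\{i,j\},\{i,j\}\bigr]=\det G\cdot\bigl(1-\cos^{2}\theta_{kl}\bigr).$$
Because $(\theta_1,\dots,\theta_6)\in\mathbb A$ forces $|\cos\theta_{kl}|>1$, the second factor on the right is strictly negative, while $\det G>0$ by (a); hence the left side is negative, giving $G_{kl}^{2}>G_{kk}G_{ll}$ as needed. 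This Jacobi step is the one place where all three ingredients (signature (a), positivity (b) of the diagonal cofactors which makes the square root well-defined, and the off-diagonal size constraint coming from $\mathbb A$) must be combined, and I expect it to be the main technical obstacle.
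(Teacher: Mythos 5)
Your proposal is correct and follows essentially the same route as the paper: both directions use $G=\mathrm U^{T}\mathrm I_{2,2}\mathrm U$, the cofactor vectors $\mathbf w_i=\sum_j G_{ij}\mathbf u_j$ to pass between normals and vertices, and, for (2)$\Rightarrow$(1), exactly the same Jacobi cofactor identity combined with $\det G>0$, $G_{ij}<0$ and $\cos^2\theta>1$ on $\mathbb A$ to get $\langle\mathbf v_i,\mathbf v_j\rangle<-1$. The only (purely presentational) caveat is in (1)$\Rightarrow$(2): you should derive $G_{ii}>0$ first, from $\mathbf w_i$ being parallel to the space-like $\mathbf v_i$ so that $\langle\mathbf w_i,\mathbf w_i\rangle=G_{ii}\det G>0$, before writing the normalized formula $\mathbf v_i=\mathbf w_i/\sqrt{G_{ii}\det G}$, as the paper does.
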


\begin{remark}\label{rm2} By a direct computation, it is easy to check that Condition (2)(b) implies condition (1) of \cite[Theorem 1.3]{CS} mentioned in Remark \ref{rm}. It can also be seen that Condition (2)(b) implies Conditions (2)(a) and (2)(c), but we will not use this fact in this paper. 
\end{remark}

\begin{proof}[Proof of Theorem \ref{criterion2}]  The proof is similar to that of Theorem \ref{criterion}. 
\medskip

Namely, suppose (1) holds that $\theta_1,\dots,\theta_6$ are the dihedral angles of a truncated hyperideal tetrahedron $\Delta$ in $\mathbb A\mathrm d\mathrm S^3$. Then the Gram matrix $G$ of $(\theta_1,\dots,\theta_6)$ is the Gram matrix of $\Delta$ in the dihedral angles. Let $\mathbf u_1,\dots,\mathbf u_4$ be the outward unit normal vectors of $\Delta,$ and let 
$$\mathrm U=[\mathbf u_1,\mathbf u_2,\mathbf u_3,\mathbf u_4]$$
be the $4\times 4$ matrix with $\mathbf u_i$'s the columns. Then 
$$G=\mathrm U^T\cdot \mathrm I_{2,2}\cdot \mathrm U.$$
Hence $G$ has signature $(2,2),$ and (a) of (2) holds, provided that $\mathrm U$ is invertible. To see the invertibility of $\mathrm U$, we let $\mathbf v_1,\dots,\mathbf v_4$ be the vertices of $\Delta$ considered as vectors of $\mathbb E^{2,2}.$ Suppose a linear combination $\sum_{k=1}^4a_k\mathbf u_k=0$ for some $a_1,\dots,a_4\in\mathbb R.$ Then for each $i\in\{1,2,3,4\},$ we have 
$$a_i \langle \mathbf u_i, \mathbf v_i\rangle=\Big\langle \sum_{k=1}^4a_k\mathbf u_k, \mathbf v_i\Big\rangle = 0;$$ and since $\langle \mathbf u_i, \mathbf v_i\rangle>0,$ we have  $a_i=0.$ This implies that  $\mathbf u_1,\dots,\mathbf u_4$ are linearly independent, and $\mathrm U$ is invertible. 

\smallskip

To see (b) of (2), as a consequence of (a), we have
\begin{equation}\label{det>0}
\det G>0.
\end{equation}
Let 
$$\mathbf w_i=\sum_{j=1}^4 G_{ij}\mathbf u_j.$$
Then we have
\begin{equation}\label{wu}
\langle \mathbf w_i,\mathbf u_j\rangle =\delta_{ij}\det G,
\end{equation}
and as a consequence,
\begin{equation}\label{ww}
\langle \mathbf w_i,\mathbf w_j\rangle =G_{ij}\det G.
\end{equation}
Let $\mathbf v_1,\dots,\mathbf v_4$ be the vertices of $\Delta$  considered as vectors of $\mathbb E^{2,2}.$ Then for each $i\in\{1,2,3,4\},$ (\ref{wu}) implies that $\mathbf w_i$ is a scalar multiple of $\mathbf v_i.$ Since $\mathbf v_i\in\mathbb A\mathrm d\mathbb S^{3^*},$ we have
\begin{equation}\label{www}
\langle \mathbf w_i,\mathbf w_i \rangle>0.
\end{equation}
As a consequence of (\ref{det>0}),(\ref{ww}) and (\ref{www}), we have
$$G_{ii}=\frac{\langle \mathbf w_i,\mathbf w_i\rangle}{\det G}>0.$$
This proves (b) of (2).
\smallskip

To see (c) of (2), for  $i\in\{1,2,3,4\},$ let 
\begin{equation}\label{v}
\mathbf v_i' =\frac{\mathbf w_i}{\sqrt{G_{ii}\det G}}. 
\end{equation}
We first claim that 
\begin{equation}\label{v=v}
\mathbf v_i' = \mathbf v_i,
\end{equation}
i.e., $\mathbf v_1',\dots,\mathbf v_4'$ are the vertices of $\Delta.$
Indeed, by (\ref{ww}), 
\begin{equation}\label{vvv}
\langle \mathbf v_i', \mathbf v_i' \rangle = \frac{\langle \mathbf w_i, \mathbf w_i \rangle }{G_{ii}\det G}=1,
\end{equation}
hence $\mathbf v_i' \in \mathbb A\mathrm d\mathbb S^{3^*}$; and by (\ref{wu}), we have for each $i\in \{1,2,3,4\}$ that 
$$\langle \mathbf v_i', \mathbf u_i \rangle = \frac{\langle \mathbf w_i, \mathbf u_i \rangle }{\sqrt{G_{ii}\det G}}=\sqrt{\frac{\det G}{G_{ii}}}>0,$$
and for each $\{i,j\}\subset\{1,2,3,4\}$ that 
$$\langle \mathbf v_i', \mathbf u_j \rangle = \frac{\langle \mathbf w_i, \mathbf u_j \rangle }{\sqrt{G_{ii}\det G}}=0.$$
This implies that $\mathbf u_1,\dots, \mathbf u_4$ are the outward unit normal vectors of the tetrahedron $\Delta'$ in $\mathbb A\mathbf d\mathbb S^3$ with vertices $\mathbf v_1',\dots,\mathbf v_4'.$ As $\mathbf u_1,\dots, \mathbf u_4$ are also the outward unit normal vectors of $\Delta,$ we have $\Delta'=\Delta,$ and $\mathbf v_1',\dots,\mathbf v_4'$ are respectively the vertices $\mathbf v_1,\dots,\mathbf v_4$ of $\Delta.$

Next, as the line segment $L_{ij}=\{t\mathbf v_i + (1-t)\mathbf v_j\ |\ t\in [0,1]\}$ connecting $\mathbf v_i$ and $\mathbf v_j$ intersects $\mathbb B^{2,2}=\{\mathbf v \in \mathbb E^{2,2}\ |\ \langle \mathbf v, \mathbf v\rangle<0\},$ we have that the quadratic inequality 
$$\langle t\mathbf v_i + (1-t)\mathbf v_j, t\mathbf v_i + (1-t)\mathbf v_j \rangle  = 2(1-\langle \mathbf v_i,\mathbf v_j \rangle)t^2-2(1-\langle \mathbf v_i,\mathbf v_j \rangle)t+1<0$$
in $t$ has a solution in $[0,1],$ where the equality comes from (\ref{vvv}).  This implies that 
\begin{equation}\label{vv}
\langle \mathbf v_i, \mathbf v_j \rangle <-1. 
\end{equation}
By (\ref{v}), (\ref{v=v}) and (\ref{ww}, we  have
\begin{equation}\label{inner}
\langle \mathbf v_i,\mathbf v_j\rangle = \frac{\langle \mathbf w_i,\mathbf w_j\rangle }{\sqrt{G_{ii}\det G}\sqrt{G_{jj}\det G}}=\frac{G_{ij}}{\sqrt{G_{ii}G_{jj}}},
\end{equation}
which, together with (\ref{vv}), implies that 
$$G_{ij}<0.$$
This proves (c) of (2).
\medskip

Now, suppose (2) holds. Then by (a), $G$ has signature $(2,2),$ and hence
\begin{equation}\label{det>02}
\det G>0.
\end{equation}
By Sylvester's Law of Inertia,
$$G=\mathrm U^T\cdot \mathrm I_{2,2}\cdot \mathrm U$$
for an invertible matrix $\mathrm U,$ uniquely determined up to an action of  elements of $O(2,2).$  Let $\mathbf u_1,\dots, \mathbf u_4$ be the columns of $\mathrm U$ considered as vectors in $\mathbb E^{2,2}$ and let
$$\mathbf w_i=\sum_{j=1}^4G_{ij}\mathbf u_j.$$
Then we have
\begin{equation}\label{wu2}
\langle \mathbf w_i,\mathbf u_j\rangle =\delta_{ij}\det G,
\end{equation}
and as a consequence 
\begin{equation}\label{ww2}
\langle \mathbf w_i,\mathbf w_j\rangle =G_{ij}\det G.
\end{equation}
By (b), we have $G_{ii}\neq 0,$ hence we can define 
\begin{equation}\label{v2}
\mathbf v_i= \frac{\mathbf w_i}{\sqrt{G_{ii}\det G}}.
\end{equation}

Next we show that 
\begin{enumerate}[(i)]
\item $\mathbf v_1,\dots, \mathbf v_4$ are the vertices of a hyperideal tetrahedron $\Delta$ in $\mathbb A\mathrm d\mathbb S^3.$ 
\item $\mathbf u_1,\dots,\mathbf u_4$ are the outward unit normal vectors of $\Delta$, so that $G$ is the Gram matrix of $\Delta$ in the dihedral angles.  
\end{enumerate}
From (i) and  (ii), it follows that $(\theta_{12},\dots,\theta_{34})$ is the six-tuple of dihedral angles of $\Delta,$ and (1) holds.
\smallskip

To prove (i), we need following three steps. 
\begin{enumerate}[\text{Step} 1.]
\item We show that $\mathbf v_1,\dots, \mathbf v_4$ are linearly independent. Indeed, let 
$$\mathrm W=[\mathbf w_1,\mathbf w_2,\mathbf w_3,\mathbf w_4]$$
be the $4\times 4$ matrix containing $\mathbf w_i$'s as the columns. Then
$$\mathrm W=\mathrm U\cdot \mathrm{Ad}(G),$$
where $\mathrm{Ad}(G)$ is the adjugate matrix of $G.$ By Cramer's rule,
$$\mathrm{Ad}(G)=\det G\cdot G^{-1}.$$
Therefore, $\det \mathrm{Ad}(G)\neq 0,$ and as a consequence, 
$$\det\mathrm W=\det \mathrm U\cdot \det \mathrm{Ad}(G)\neq 0, $$
and the columns $\mathbf w_1,\dots, \mathbf w_4$ of $W$ are linearly independent. Since $\mathbf v_1,\dots,\mathbf v_4$ are non-zero scalar multiples of $\mathbf w_1,\dots, \mathbf w_4,$ they are linearly independent.

\item  Since by (b) that $G_{ii}>0$ for each $i\in \{1,2,3,4\},$ by (\ref{v2}) and (\ref{ww2}), we have
\begin{equation}\label{vv2}
\langle \mathbf v_i,\mathbf v_i\rangle =   \frac{\langle \mathbf w_i, \mathbf w_i\rangle}{G_{ii}\det G}=1.
\end{equation}
This shows that $\mathbf v_1,\dots, \mathbf v_4$ are in $\mathbb A\mathrm d\mathbb S^{3^*}.$ 

\item We show that for each $\{i,j\}\subset\{1,2,3,4\},$ the line segment  $L_{ij}$ connecting $\mathbf v_i$ and $\mathbf v_j$ intersects $\mathbb B^{2,2}.$ To see this, by Jacobi's Theorem (see \cite[2.5.1. Theorem]{P}) and (\ref{det>02}),
we have
$$G_{ij}^2-G_{ii}G_{jj}=(\cos^2\theta_{kl}-1)\det G=\big(\cosh^2(\mathrm{Im}\theta_{kl})-1\big)\det G>0,$$
where $\{i,j,k,l\}=\{1,2,3,4\}.$
Then either $G_{ij}> \sqrt{G_{ii}G_{jj}}$ or $G_{ij}< -\sqrt{G_{ii}G_{jj}};$ and by (c) that $G_{ij}<0$, the latter is the only possibility. This, together with (\ref{v2}) and (\ref{ww2}), implies that 
$$\langle \mathbf v_i,\mathbf v_j \rangle =    \frac{\langle \mathbf w_i, \mathbf w_j\rangle}{\sqrt{G_{ii}\det G}\sqrt{G_{jj}\det G}}=\frac{G_{ij}}{\sqrt{G_{ii}G_{jj}}} < -1,$$
which, together with (\ref{vv2}), further implies that the quadratic inequality 
$$\langle t\mathbf v_i + (1-t)\mathbf v_j, t\mathbf v_i + (1-t)\mathbf v_j \rangle  = 2(1-\langle \mathbf v_i,\mathbf v_j \rangle)t^2-2(1-\langle \mathbf v_i,\mathbf v_j \rangle)t+1<0$$
has a solution in $(0,1)$, where the equality comes from (\ref{vv2}). This is equivalent to that $L_{ij}$ intersects $\mathbb B^{2,2}.$
\end{enumerate}
This completes the proof of (i).
\smallskip

For (ii), we verify the conditions of an outward normal vector in the following steps.
\begin{enumerate}[\text{Step} 1.]
\item Since $\langle \mathbf u_i,\mathbf u_i\rangle$ equals the $i$-th diagonal entry of $G$ which equals $-1,$ $\mathbf u_i\in\mathbb A\mathrm d\mathbb S^{3^*}$ is a unit vector.

\item For $j\neq i,$ by (\ref{v2}) and (\ref{wu2}), we have
$$\langle \mathbf u_i, \mathbf v_j \rangle = \frac{\langle \mathbf u_i, \mathbf w_j \rangle}{\sqrt{G_{jj}\det G}} =0.$$
Therefore, for any vector $\mathbf v$ in the plane $F_i$ containing  $\mathbf v_j,$ $\mathbf v_k$ and $\mathbf v_l,$ $\{j,k,l\}=\{1,2,3,4\}\setminus \{i\},$ 
$$\langle \mathbf u_i, \mathbf v \rangle =0,$$
and $\mathbf u_i$ is a unit normal vector of $\Delta$. 

\item For each $i\in\{1,2,3,4\},$ by (\ref{v2}) and (\ref{wu2}),  we have
$$\langle \mathbf u_i, \mathbf v_i \rangle = \frac{\langle \mathbf u_i, \mathbf w_i \rangle}{\sqrt{G_{ii}\det G}}= \sqrt{\frac{\det G}{G_{ii}}} >0,$$
and $\mathbf u_i$ is a outward unit normal vector of $\Delta.$
\end{enumerate}
This completes the proof of (ii).
\end{proof}

\subsection{Volume, co-volume and Schl\"afli formula}

The volume and co-volume of a truncated hyperideal tetrahedron in $\ads^3$ are defined as follows. Let $(x_1,x_2,x_3)$ be a local coordinates of $\ads^3,$ and let $g_{\ads^3}=\sum_{i,j} g_{ij}dx_i\otimes dx_j$ be the semi-Riemannian metric induced from the inner product $\langle,\rangle$ of $\mathbb E^{2,2}.$ Then the  $3\times 3$ symmetric matrix $G_{\ads^3}=[g_{ij}]$ has signature $(2,1),$ and hence  $\det G_{\ads^3}<0.$ Define the volume form of $\ads^3$ as
$$d\omega_{\ads^3}=\sqrt{\det G_{\ads^3}}dx_1\wedge dx_2\wedge dx_3.$$

\begin{definition}[Volume and Co-volume]\label{vol-cov}
The \emph{volume} of a truncated hyperideal tetrahedron $\Delta$ in $\ads^3$ is defined by 
$$\mathrm{Vol}(\Delta)\doteq\int_{\Delta} |d\omega_{\ads^3}|,$$
and the \emph{co-volume} of $\Delta$ is defined 
by
$$\mathrm{Cov}(\Delta)\doteq\mathrm{Vol}(\Delta)+\frac{1}{2}\sum_{k=1}^6 \mathrm{Im}\theta_k\cdot l_i,$$
where $(\theta_1,\dots,\theta_6)$ and $(l_1,\dots,l_6)$ are respectively the dihedral angles and the edge lengths of $\Delta.$ 
\end{definition}

In Theorem \ref{ab}, Theorem \ref{volume} and Theorem \ref{volume2}, we will respectively give explicit formulae of $\mathrm{Vol}(\Delta)$ and $\mathrm{Cov}(\Delta)$ in terms of the edge lengths $(l_1,\dots,l_6)$ and of the dihedral angles $(\theta_1,\dots,\theta_6)$.  
\medskip

The Schl\"afli formula for the volume of a truncated hyperideal tetrahedron in $\ads^3$ was first stated in \cite{S} in a seemingly different form.  We re-state the result in Theorem \ref{thm:souam} below making it better adapted for our purpose.

\begin{theorem} \cite[Theorem 2]{S}
\label{thm:souam}
Let $\Delta(t)$ be a smooth deformation of a hyperideal tetrahedron $\Delta$ in $\ads^{3}$. Then the dihedral angles $(\theta_1(t),\dots,\theta_6(t))$ of $\Delta(t)$ vary smoothly with
$$\frac{d\mathrm{Vol}(\Delta(t))}{dt}=-\sum_{k=1}^6\frac{d\mathrm{Im}\theta_k(t)}{dt}\cdot l_k(t),$$
where $(l_1(t),\dots,l_6(t))$ are the edge lengths of $\Delta(t).$
\end{theorem}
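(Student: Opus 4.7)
The plan is to deduce this formula from Souam's original Schl\"afli identity \cite[Theorem 2]{S} by carefully translating between the two different conventions for the dihedral angles. As recorded in the remark preceding Definition \ref{dihedralangle}, the imaginary part $\mathrm{Im}\theta_k$ used in this paper coincides with the negative of the exterior dihedral angle $\alpha_k$ in the sense of \cite{CS}, which is also the angle appearing in Souam's version of the Schl\"afli formula. Souam's identity is stated in terms of these real-valued exterior angles, and substituting the relation $\mathrm{Im}\theta_k(t) = -\alpha_k(t)$ should directly produce the identity on the right-hand side.

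To carry this out rigorously, I would first match the geometric data: verify that the six edges $e_k$ carrying the dihedral angles $\theta_k$ in Figure \ref{convention2} are exactly the six edges across which Souam attaches his exterior angles, and that Souam's edge lengths agree with the $l_k$ of Definition \ref{edgelength} (both are geodesic distances between adjacent truncation planes, by Equation (\ref{cosh})). In particular, I would check that the remaining twelve edges of the polyhedron, along which a face $F_i$ meets an adjacent triangle of truncation $T_j$ at a constant angle (a right angle, built into the construction of $\mathbf u_i$), contribute nothing to any variation. Next, I would justify the smoothness of $\theta_k(t)$ from the smoothness of the vertices $\mathbf v_i(t)$ and the outward unit normals $\mathbf u_i(t)$ via the identity $\cos\theta_{ij}=\langle \mathbf u_k,\mathbf u_l\rangle$ together with Corollary \ref{>1} and the explicit branch choices in Definition \ref{dihedralangle}. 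Since each $\mathrm{Re}\theta_k$ is locally constant (either $0$ or $\pi$) along any continuous deformation that does not change the type stratum $\mathrm A_1,\mathrm A_2,\mathrm A_3$ of $\mathbb A$, all of the variation is absorbed into $\mathrm{Im}\theta_k$, and the identity reduces to a purely real statement.

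The main obstacle I anticipate is tracking signs and normalizations: Souam's paper uses specific orientation conventions for the exterior dihedral angles across time-like edges, whereas this paper's branch conventions for $\theta_k$ are dictated by Definition \ref{dihedralangle}; a miscount would produce the wrong overall sign or a spurious factor of two. To confirm the precise form of the right-hand side, I would cross-check against a symmetric one-parameter family of truncated hyperideal anti-de Sitter tetrahedra (for instance, an AdS analog of a regular tetrahedron with a single scaling parameter, where both $\mathrm{Vol}(\Delta(t))$ and $\mathrm{Im}\theta_k(t)$ can be computed explicitly from the volume formula of Theorem \ref{volume2} and the Gram-matrix description in Definition \ref{Gramda}). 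This sanity check on a closed-form example would pin down the minus sign in front of the sum and confirm the absence of any combinatorial factor.
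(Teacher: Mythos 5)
Your approach coincides with the paper's: Theorem \ref{thm:souam} is not proved there but simply re-stated from \cite[Theorem 2]{S}, and the two accompanying remarks do exactly what you propose — verifying that Souam's nondegeneracy hypotheses are automatically satisfied by a truncated hyperideal tetrahedron viewed as a polyhedron in $\ads^3$, and recording that the dihedral-angle convention differs by multiplication by $-\mathbf i$ (equivalently, that $\mathrm{Im}\theta_k$ is the negative of the exterior angle of \cite{CS}). One caution: your proposed cross-check against Theorem \ref{volume2} would be circular, since that volume formula is itself derived from this Schl\"afli identity (via Propositions \ref{Sch}, \ref{CoSch} and \ref{critical2}), so the sign and normalization must be pinned down from Souam's conventions directly — and note in this regard that Theorem \ref{thm:souam} as printed and Proposition \ref{Sch} differ by the customary factor $\tfrac{1}{2}$, which is exactly the kind of bookkeeping discrepancy you anticipated.
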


\begin{remark}
 \cite[Theorem 2]{S} was originally stated for polyhedra in $\ads^3$ satisfying the following two conditions:
\begin{enumerate}[(1)]
    \item The metrics restricted to each of the edges and faces is non-degenerate.
    \item If the outward unit normal vectors  $\mathbf v_{1}$ and $\mathbf v_{2}$  of two adjacent faces span a definite metric space, then $\mathbf v_{1}+\mathbf v_{2}\ne 0.$
\end{enumerate}
These conditions are automatically satisfied by a truncated hyperideal tetrahedron considered as a polyhedron in $\ads^{3}$.
\end{remark}

\begin{remark} 
In \cite{S}, the definition of the dihedral angles differ from ours by a multiplication by $-\mathbf{i}.$ 
\end{remark}

As immediate consequences of Theorem \ref{thm:souam}, we have 

\begin{proposition}\label{Sch}
Let $\mathrm{Vol}(\Delta)=\mathrm{Vol}(\theta_1,\dots,\theta_6)$ be the volume of the truncated hyperideal tetrahedron $\Delta$ in $\ads^{3}$ considered as a function of the dihedral angles $(\theta_{1}, \ldots \theta_{6})$. Then 
\begin{equation}\label{sch}
\frac{\partial \mathrm{Vol}(\Delta)}{\partial \mathrm{Im}\theta_k}=-\frac{l_k}{2},
\end{equation}
where $(l_1,\dots,l_6)$ are the edge lengths of $\Delta.$
\end{proposition}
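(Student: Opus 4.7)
The plan is to deduce Proposition \ref{Sch} directly from the deformation-theoretic Schl\"afli identity in Theorem \ref{thm:souam} via the chain rule, once one has established that the six imaginary parts $(\mathrm{Im}\theta_1,\dots,\mathrm{Im}\theta_6)$ serve as local smooth coordinates on the moduli space of truncated hyperideal anti-de Sitter tetrahedra. The subtlety is that Theorem \ref{thm:souam} only asserts an identity along smooth one-parameter families, whereas Proposition \ref{Sch} asks for individual partial derivatives; so we must be able to vary one $\mathrm{Im}\theta_k$ while keeping the other five fixed.

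To set up the coordinate picture, I would first observe that on each connected component $\mathrm A_i$ of the angle domain $\mathbb A$, the real parts $\mathrm{Re}\theta_k$ are locally constant (each either $0$ or $\pi$), so smooth deformations of $\Delta$ correspond precisely to smooth variations of $(\mathrm{Im}\theta_1,\dots,\mathrm{Im}\theta_6)$. Combining the length criterion of Theorem \ref{criterion} with the angle criterion of Theorem \ref{criterion2} shows that both the edge lengths and the imaginary parts of the dihedral angles parameterize the same six-dimensional moduli space; differentiating the explicit relation $\cos\theta_{ij}=G_{ij}/\sqrt{G_{ii}G_{jj}}$ from Proposition \ref{cos}, and using that the cofactors $G_{ii}$ are nonvanishing (guaranteed by the signature $(2,2)$ of the Gram matrix), yields a nondegenerate Jacobian by a standard rigidity argument. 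Granting this, the derivation becomes routine: for each fixed $k\in\{1,\dots,6\}$, pick a smooth one-parameter family $\Delta(t)$ of truncated hyperideal AdS tetrahedra with $\Delta(0)=\Delta$ and $\frac{d\mathrm{Im}\theta_j(t)}{dt}\big|_{t=0}=\delta_{jk}$, apply Theorem \ref{thm:souam} to isolate the single term $-l_k$ on the right, and expand the left-hand side via the chain rule to read off $\partial\mathrm{Vol}/\partial\mathrm{Im}\theta_k$; the factor $\tfrac{1}{2}$ appearing in Equation~(\ref{sch}) reflects the standard Schl\"afli normalization $dV=-\tfrac{1}{2}\sum l_k\,d\theta_k$ that is implicit in the conventions of Theorem~\ref{thm:souam}.

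I expect the main obstacle to be the rigorous verification that $(\mathrm{Im}\theta_1,\dots,\mathrm{Im}\theta_6)$ actually furnish local coordinates---this is the familiar rigidity ingredient that also appears in the hyperbolic-tetrahedron analog and in the proofs of Andreev-type theorems. An alternative route that sidesteps a separate diffeomorphism statement is to write $\frac{d\mathrm{Vol}}{dt}$ in two ways, once via Theorem \ref{thm:souam} and once via the chain rule through the (certainly smooth) maps $l_j\mapsto\mathrm{Im}\theta_k$, and then equate the coefficients of the six independent perturbations $\frac{dl_j}{dt}$; the nonsingularity of the resulting linear system---which again reduces to the nondegeneracy of a Gram-cofactor matrix---then yields Equation~(\ref{sch}) directly without needing to invert the length-to-angle map as a preliminary step.
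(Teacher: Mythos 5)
Your argument is essentially the paper's: Proposition \ref{Sch} is presented there as an immediate consequence of Theorem \ref{thm:souam}, which is exactly the chain-rule specialization you describe, with the role of $(\mathrm{Im}\theta_1,\dots,\mathrm{Im}\theta_6)$ as local coordinates and the $\tfrac{1}{2}$ normalization left implicit. Your additional remarks on why the angles furnish local coordinates and on the Schl\"afli normalization only spell out what the paper takes for granted, so the two proofs do not differ in substance.
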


\begin{proposition}\label{CoSch}
Let $\mathrm{Cov}(\Delta)=\mathrm{Cov}(l_1,\dots,l_6)$ be the co-volume of the truncated  hyperideal tetrahedron $\Delta$ in $\ads^{3}$ considered as a function of the  edge lengths $(l_{1}, \ldots l_{6})$. Then 
\begin{equation}\label{cosch}\frac{\partial \mathrm{Cov}(\Delta)}{\partial l_k}=\frac{\mathrm{Im}\theta_k}{2},
\end{equation}
where $(\theta_1,\dots,\theta_6)$ are the dihedral angles of $\Delta.$
\end{proposition}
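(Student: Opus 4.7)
The plan is to obtain Proposition~\ref{CoSch} as a direct consequence of the Schl\"afli formula in Proposition~\ref{Sch} via a Legendre-transform-type calculation. The definition
$$\mathrm{Cov}(\Delta) = \mathrm{Vol}(\Delta) + \frac{1}{2}\sum_{k=1}^6 \mathrm{Im}\theta_k \cdot l_k$$
is precisely the correction needed to turn $\mathrm{Vol}$, viewed as a function of the imaginary parts $(\mathrm{Im}\theta_1,\dots,\mathrm{Im}\theta_6)$, into a function whose partial derivatives with respect to the conjugate variables $(l_1,\dots,l_6)$ are $\mathrm{Im}\theta_k/2$, so the computation should be a two-line application of the chain rule.

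Concretely, I would first argue that the dihedral angles of a truncated hyperideal $\ads^3$ tetrahedron depend smoothly on the edge lengths. This follows from Proposition~\ref{cos} together with Definition~\ref{dihedralangle}: the edge lengths determine the Gram matrix $G$, its cofactors $G_{ii}$ are non-vanishing (in fact strictly negative, since $1-\cosh^2 l_j - \cosh^2 l_k - \cosh^2 l_l - 2\cosh l_j\cosh l_k\cosh l_l < 0$), so $\langle \mathbf u_k,\mathbf u_l\rangle=G_{kl}/\sqrt{G_{kk}G_{ll}}$ is a smooth function of $(l_1,\dots,l_6)$, and taking $\cosh^{-1}$ on the appropriate branch determined by Corollary~\ref{>1} yields a smooth function $\mathrm{Im}\theta_k(l_1,\dots,l_6)$. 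The real part $\mathrm{Re}\theta_k \in \{0,\pi\}$ is locally constant, so only the imaginary parts vary smoothly along any deformation, and Proposition~\ref{Sch} applies to give $\partial \mathrm{Vol}/\partial \mathrm{Im}\theta_k = -l_k/2$. The chain rule then yields
$$\frac{\partial \mathrm{Cov}}{\partial l_j} \;=\; \sum_{k=1}^6 \frac{\partial \mathrm{Vol}}{\partial \mathrm{Im}\theta_k}\cdot \frac{\partial \mathrm{Im}\theta_k}{\partial l_j} \;+\; \frac{1}{2}\sum_{k=1}^6 l_k\cdot \frac{\partial \mathrm{Im}\theta_k}{\partial l_j} \;+\; \frac{1}{2}\mathrm{Im}\theta_j,$$
and substituting the Schl\"afli identity into the first sum cancels the middle sum term by term, leaving $\mathrm{Im}\theta_j/2$, as required.

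The calculation has essentially no obstacle beyond the bookkeeping of the chain rule; the one point that deserves verification is the smoothness statement above, but this is elementary given Proposition~\ref{cos}, Corollary~\ref{>1}, and Theorem~\ref{criterion}, which together guarantee that the Gram-matrix formulas are regular at any truncated hyperideal $\ads^3$ tetrahedron. I would therefore present the proof as a short two-paragraph argument, with Proposition~\ref{Sch} as the only substantive input.
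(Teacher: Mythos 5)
Your proposal is correct and coincides with the paper's own proof: the paper also writes $\mathrm{Cov}(\boldsymbol l)=\mathrm{Vol}(\theta_1(\boldsymbol l),\dots,\theta_6(\boldsymbol l))+\frac{1}{2}\sum_i \mathrm{Im}\theta_i(\boldsymbol l)\cdot l_i$, differentiates by the chain rule, and cancels the two sums using the Schl\"afli identity of Proposition \ref{Sch}. Your extra paragraph on smoothness of $\mathrm{Im}\theta_k$ in the edge lengths via Proposition \ref{cos} and Corollary \ref{>1} is a harmless (and reasonable) addition that the paper leaves implicit.
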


\begin{proof}
Let $\boldsymbol l=(l_1,\dots,l_6)$ and $\boldsymbol\theta=(\theta_1,\dots,\theta_6),$ and write  the co-volume function as
$$\mathrm{Cov}(\boldsymbol l)=\mathrm{Vol}(\theta_{1}(\boldsymbol l),\ldots,\theta_{6}(\boldsymbol l))+\frac{1}{2}\sum_{i=1}^{6}\mathrm{Im}\theta_i(\boldsymbol l)\cdot l_i.$$
Then we have
\begin{equation*}
\begin{split}
\frac{\partial \mathrm{Cov}(\boldsymbol l)}{\partial l_{k}}&=\sum_{i=1}^{6}\frac{\partial \mathrm{Vol}(\boldsymbol\theta)}{\partial \mathrm{Im}\theta_{i}}\cdot \frac{\partial \mathrm{Im}\theta_{i}}{\partial l_{k}}+\frac{\mathrm{Im}\theta_{k}}{2}+\frac{1}{2}\sum_{i=1}^{6}\frac{\partial \mathrm{Im}\theta_{i}}{\partial l_{k}}\cdot l_{k}\\
&=\sum_{i=1}^{6} \left(-\frac{l_{k}}{2}\right)\cdot \frac{\partial\mathrm{Im} \theta_{i}}{\partial l_{k}}+\frac{\mathrm{Im}\theta_{k}}{2}+\frac{1}{2}\sum_{i=1}^{6}\frac{\partial \mathrm{Im}\theta_{i}}{\partial l_{k}}\cdot l_{k}=\frac{\mathrm{Im}\theta_{k}}{2},
\end{split}
\end{equation*}
where the second equality comes from (\ref{sch}).
\end{proof}

The following Proposition \ref{cont} will be needed in the proof of Proposition \ref{critical2}, which plays a key role in the proof of Theorem \ref{vol2}.

\begin{proposition}\label{cont}
Let $\Delta$ be a truncated Euclidean tetrahedron and let $\Phi: \Delta\times [0,1]\rightarrow \ads^3$ be a continuous map such that for each $t\in (0,1]$ the image $\Phi(\Delta\times\{t\})$ is a truncated hyperideal tetrahedron in $\ads^3,$ and the image $\Phi(\Delta\times\{0\})$ is a flat (truncated hyperideal) tetrahedron in a space-like totally geodesic plane in $\ads^3$.
Then $$\lim_{t\to 0}\mathrm{Vol}(\Phi(\Delta\times\{t\}))=\mathrm{Vol}(\Phi(\Delta\times\{0\}))=0.$$
\end{proposition}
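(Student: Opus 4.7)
The second equality is immediate: the flat tetrahedron $\Delta_0 := \Phi(\Delta \times \{0\})$ lies in a $2$-dimensional space-like totally geodesic plane of $\ads^3$ and hence has vanishing $3$-dimensional volume.

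For the first equality, $\lim_{t \to 0} \mathrm{Vol}(\Delta_t) = 0$ where $\Delta_t := \Phi(\Delta \times \{t\})$, my plan is a tubular neighborhood estimate. First I would use that $\Phi$, being continuous on the compact set $\Delta \times [0,1]$, is uniformly continuous with respect to any background Riemannian metric on a compact neighborhood of its image in $\ads^3$ (for instance, the one induced from the ambient Euclidean metric on $\mathbb E^{2,2}$). In particular, $\Phi(\cdot, t) \to \Phi(\cdot, 0)$ uniformly on $\Delta$ as $t \to 0$, so for every $\eta > 0$ there exists $\delta > 0$ such that $\Delta_t$ is contained in the $\eta$-neighborhood $N_\eta(\Delta_0)$ for all $t < \delta$.

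Next I would bound the volume of the tube $N_\eta(\Delta_0)$. Since $\Phi(\Delta \times [0,1])$ is compact, all the $\Delta_t$ lie in a common compact set $K \subset \ads^3$, and a finite cover of $K$ by coordinate charts shows that $d\omega_{\ads^3}$ is comparable to Lebesgue measure within uniform constants on $K$. Because $\Delta_0$ is a compact region of a smooth $2$-submanifold of $K$, a standard Minkowski content estimate in $\mathbb R^3$ yields $\mathrm{Leb}(N_\eta(\Delta_0) \cap K) = O(\eta)$ as $\eta \to 0$, with the implicit constant controlled by the surface area of $\Delta_0$ and local geometric bounds on $K$. Combining these facts, $\mathrm{Vol}(\Delta_t) \leq C \eta$ for all $t < \delta$, and letting $\eta \to 0$ concludes the argument.

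The main technical ingredient is the Minkowski content estimate for tubes around smooth $2$-surfaces in $\mathbb R^3$, which is standard; the Lorentzian signature of $\ads^3$ creates no additional difficulty here, since $d\omega_{\ads^3}$ is a smooth nowhere-vanishing $3$-form and is therefore comparable, on the compact set $K$, to any Riemannian volume form.
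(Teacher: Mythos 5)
Your proposal is correct and follows essentially the same route as the paper: use (uniform) continuity of $\Phi$ on the compact domain to trap $\Phi(\Delta\times\{t\})$ in a small neighborhood of the flat tetrahedron, compare $|d\omega_{\ads^3}|$ with a Riemannian (Euclidean-induced) volume form on a compact set containing the images, and conclude because a small neighborhood of the $2$-dimensional flat tetrahedron has small $3$-dimensional Riemannian volume. The only cosmetic difference is that you quantify the last step via a Minkowski content bound $O(\eta)$, whereas the paper simply notes that the Euclidean volume of the $\epsilon$-neighborhood tends to $0$ as $\epsilon\to 0$.
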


\begin{proof} The Euclidean metric $ds^2=dx_1^2+dx_2^2+dx^2_3+dx_4^2$ on $\mathbb{R}^4$ induces a Riemannian metric $g_{\mathbb E}$ on $\ads^3\subset \mathbb E^{2,2}\cong \mathbb R^4$, which further induces the (subspace) topology on $\ads^3.$
Since $\Phi$ is continuous and $\Delta$ is compact, for each $\epsilon>0$ there exists a $\delta>0$ such that the image  $\Phi(\Delta\times [0, \delta))$ lies in the region 
$$D_{\epsilon}(\Phi(\Delta\times\{0\}))\doteq \big\{x\in \ads^{3}\ \big|\ d_E(x, \Phi(\Delta\times\{ 0\}))\leqslant  \epsilon\big\},$$
where $d_{\mathbb E}$ is the distance on $\ads^3$ induced by $g_{\mathbb E}$. On the other hand, since $D_{\epsilon}(\Phi(\Delta, 0))$ lies in $\ads^3$ which has a positive Euclidean distance from the origin, the ratio $\frac{|\sqrt{\det(g_{\ads^3})}|}{\sqrt{\det(g_{\mathbb E})}}=\frac{1}{\sqrt{x_1^2+x_2^2+x_3^2+x_4^2}}$ is bounded from above by some constant $C>0$ on it.
 As a consequence, letting $\mathrm{Vol}_{\mathbb E}$ be the volume in the Riemannian metric $g_{\mathbb E},$ we have 
 $\mathrm{Vol}(\Delta(t))\leqslant  C\cdot\mathrm{Vol}_{\mathbb E}(D_{\epsilon}(\Phi(\Delta\times\{ 0\}))),$ which converges to $0$ as $\epsilon\to 0$. 
\end{proof}

\begin{remark}\label{ccov} It is in fact more nature to consider the \emph{genuine  volume} of $\Delta$ defined by
$$\mathrm{Vol}_{\mathbb C}(\Delta)\doteq\int _{\Delta} d\mathrm{\omega_{\ads^3}}=\mathrm{Vol}(\Delta)\cdot \mathbf i,$$ and the \emph{genuine co-volume} of $\Delta$ defined by
$$\mathrm{Cov}_{\mathbb C}(\Delta)\doteq\mathrm{Vol}_{\mathbb C}(\Delta)+\frac{1}{2}\sum_{k=1}^6 \theta_k\cdot l_i.$$
Then by Remark \ref{extcov},
$$\mathrm{Re}\mathrm{Cov}_{\mathbb C}(\Delta)=\frac{\pi}{2} \cdot (l_i+l_{i+3})=\widetilde{\mathrm{Cov}}(l_1,\dots,l_6),$$
where $i\in\{1,2,3\}$ is such that $\mathrm{Re}\theta_i=\mathrm{Re}\theta_{i+3}=\pi;$ and 
$$\mathrm{Im}\mathrm{Cov}_{\mathbb C}(\Delta)=\mathrm{Vol}(\Delta)+\frac{1}{2}\sum_{k=1}^6 \mathrm{Im}\theta_k\cdot l_i=\mathrm{Cov}(\Delta).$$
In this way, (\ref{sch}) and  (\ref{cosch}) can be re-written as
$$\frac{\partial \mathrm{Vol}_{\mathbb C}(\Delta)}{\partial \theta_k}=-\frac{l_k}{2}\quad\text{and}\quad\frac{\partial \mathrm{Cov}_{\mathbb C}(\Delta)}{\partial l_k}=\frac{\theta_k}{2}.$$
Moreover, the formula in Theorem \ref{vol2} can be re-written as 
$$\bigg\{\begin{matrix}  \frac{Q}{2}+\frac{\theta_4}{2\pi b} & \frac{Q}{2}+ \frac{\theta_5}{2\pi b} & \frac{Q}{2}+\frac{\theta_6}{2\pi b}\\\frac{Q}{2} + \frac{\theta_1}{2\pi b} & \frac{Q}{2}+\frac{\theta_2}{2\pi b} & \frac{Q}{2}+\frac{\theta_3}{2\pi b}  \end{matrix} \bigg\}_b=\frac{e^{\frac{-\mathrm{Vol}_{\mathbb C}(\Delta)}{\pi b^2} }}{\sqrt[4]{-\det\mathrm{Gram}(\Delta)}} \Big(1 +O\big(b^2\big)\Big).$$
Compare with Theorem \ref{vol3}, and see also Remark \ref{3.4} and Remark \ref{3.20}.
\end{remark}

\subsection{Relationship with hyperbolic four-hole spheres}\label{dig1}

In this subsection, we  describe a natural  one-to-one correspondence between truncated hyperideal tetrahedra in $\ads^3$ and oriented hyperbolic four-hole spheres with geodesic boundary, relating the edge lengths and the dihedral angles of the former to the Fenchel-Nielsen coordinates of the latter. See Formulae (\ref{lengths=}) and (\ref{angles=}). This correspondence plays a key role in the relationship between the asymptotics of $b$-$6j$ symbols and of the Liouville fusion kernel. The idea is inspired by a construction  in \cite{NRS}.
\medskip

Let $\mathbb M(2,2)$ be the space of $2\times 2$ matrices with real entries, and consider the isomorphism $\Phi: \mathbb M(2,2)\to \mathbb E^{2,2}$ defined by 
$$\Phi\bigg(\left[\begin{matrix}
a & b \\
c & d\\
 \end{matrix}\right]\bigg)=\bigg(\frac{a+d}{2},\frac{b-c}{2},\frac{a-d}{2},\frac{b+c}{2}\bigg).$$
By abuse of notation, let $\langle,\rangle$ also be the inner products on $\mathbb M(2,2)$ induced  from  $\mathbb E^{2,2}$ via $\Phi$. Then for $A,B\in \mathbb M(2,2),$  we have 
$$\langle A, B\rangle=\frac{1}{2}\mathrm{Tr}\big(A\cdot\mathrm{Ad}(B)\big),$$
where $\mathrm{Ad}(B)$ is the adjugate matrix of $B.$ As a consequence, we have
$$\Phi^{-1}(\ads^{3^*})=\{A\in\mathbb M(2,2)\ |\ \det(A)=1\}=\mathrm{SL}(2;\mathbb R),$$
which identifies the Lie group $\mathrm{SL}(2;\mathbb R)$ with $\ads^{3^*}.$ 
\medskip

To see the correspondence, for one direction, let $S$ be an oriented hyperbolic four-hole sphere with four  boundary geodesics $\gamma_1,\dots,\gamma_4$ with the induced orientation. Let $\gamma_5$ be a simple closed geodesic that separates $S$ into two hyperbolic pairs of pants $P_1$ containing $\gamma_1,\gamma_2$ and $P_2$ containing $\gamma_3,\gamma_4$, with the orientation induces from that of $P_1;$ and let $\gamma_6$ be a simple closed geodesic that intersects $\gamma_5$ at two points, and that separates $S$ into two hyperbolic pairs of pants $P_3$ containing $\gamma_2,\gamma_3$ and $P_4$ containing $\gamma_1,\gamma_4$, with the orientation induces from that of $P_3.$  Let $\rho:\pi_1(S)\to \mathrm{PSL}(2;\mathbb R)$ be the holonomy representation of the hyperbolic structure of $S,$ and for each $k\in \{1,2,3\},$ let $A_k\in \mathrm{SL}(2;\mathbb R)$ be the lift of $\rho([\gamma_k])\in\mathrm{PSL}(2;\mathbb R)$ with  $\mathrm{Tr(A_k)}<-2.$ Let $A_4=(A_3A_2A_1)^{-1}$, $A_5=(A_2A_1)^{-1}$ and $A_6=(A_3A_2)^{-1}.$ Then for each $k\in\{3,4,5\},$ $A_k\in \mathrm{SL}(2;\mathbb R)$ is a lift of $\rho([\gamma_k])\in\mathrm{PSL}(2;\mathbb R)$; and due to the fact that the holonomy representation for each hyperbolic pair of pants has the relative Euler class $-1$, we have $\mathrm{Tr(A_k)}<-2.$ (See \cite[Section 3.2 and Proposition 3.5]{G} for the definition and more details.)
Recall the identification $\Phi:\mathrm{SL}(2;\mathbb R)\cong \ads^{3^*}$, and let 
$$\mathbf v_1=\Phi(\mathrm{I}),\quad \mathbf v_2=\Phi(\mathrm{A_2A_1}),\quad \mathbf v_3=\Phi(\mathrm{A_1}) \quad \text{and} \quad \mathbf v_4=\Phi(\mathrm{A_3A_2A_1}),$$
where $\mathrm I$ is the $2\times 2$ identity matrix.
Then we claim that $\{\mathbf v_1,\mathbf v_2,\mathbf v_3,\mathbf v_4\}$ determine a hyperideal tetrahedron $\Delta$ in $\ads^3$ with them as the vertices. Indeed, we have 
\begin{equation}\label{boundarylengths}
\begin{split}
\langle\mathbf v_1,\mathbf v_2\rangle &= \langle \mathrm{I},A_2A_1\rangle=\frac{1}{2}\mathrm{Tr}\big((A_2A_1)^{-1}\big)=\frac{1}{2}\mathrm{Tr}A_5< -1,\\
\langle\mathbf v_1,\mathbf v_3\rangle  &= \langle \mathrm{I},A_1\rangle=\frac{1}{2}\mathrm{Tr}(A_1^{-1})=\frac{1}{2}\mathrm{Tr}A_1< -1,\\
\langle\mathbf v_1,\mathbf v_4\rangle & = \langle \mathrm{I},A_3A_2A_1\rangle =\frac{1}{2}\mathrm{Tr}\big((A_3A_2A_1)^{-1}\big)=\frac{1}{2}\mathrm{Tr}A_4< -1,\\
\langle\mathbf v_2,\mathbf v_3\rangle &= \langle A_2A_1,A_1\rangle =\frac{1}{2}\mathrm{Tr}A_2< -1,\\
\langle\mathbf v_2,\mathbf v_4\rangle & = \langle A_2A_1,A_3A_2A_1\rangle=\frac{1}{2}\mathrm{Tr}(A_3^{-1})=\frac{1}{2}\mathrm{Tr}A_3< -1,\\
\langle\mathbf v_3,\mathbf v_4\rangle & = \langle A_1,A_3A_2A_1\rangle =\frac{1}{2}\mathrm{Tr}\big((A_3A_2)^{-1}\big)=\frac{1}{2}\mathrm{Tr}A_6< -1.
\end{split}
\end{equation}
As a consequence, for each pair $\{i,j\}\subset \{1,2,3,4\},$  the straight line segment $L_{ij}$ connecting $\mathbf v_i$ and $\mathbf v_j$ intersects $\mathbb B^{2,2},$ and hence $\{\mathbf v_1,\mathbf v_2,\mathbf v_3,\mathbf v_4\}$ are the vertices of a truncated hyperideal tetrahedron $\Delta$ in $\ads^3.$ 
\smallskip

Conversely, let $\Delta$ be a truncated hyperideal tetrahedron in $\ads^3$ with vertices $\mathbf v_1,\mathbf v_2,\mathbf v_3,\mathbf v_4\in \ads^{3^*}.$ For $i\in\{1,2,3,4\},$ let 
$$B_i=\Phi^{-1}(\mathbf v_i)\in\mathrm{SL}(2;\mathbb R),$$
and let 
$$A_1=B_1B_3^{-1},\ \ 
A_2=B_2B_3^{-1},\ \ 
A_3=B_2B_4^{-1},\ \ 
A_4=B_1B_4^{-1},\ \ 
A_5=B_1B_2^{-1}\ \ \text{and}\ \ 
A_6=B_3B_4^{-1}.$$
Then by the equalities in (\ref{boundarylengths}), we have 
$$\mathrm{Tr}A_k=2\langle\mathbf v_i,\mathbf v_j\rangle$$
for all the triples $(k,i,j)\in\{(1,1,3),(2,2,3),(3,2,4),(4,1,4),(5,1,2),(6,3,4)\}$; and by (\ref{cosh}), we have
\begin{equation}\label{<-2}
    \mathrm{Tr}A_k<-2
\end{equation}
for each $k\in\{1,\dots, 6\}.$ Now for an oriented four-hole sphere $S$ with boundary curves $\gamma_1,\dots,\gamma_4$ with the induced orientation, let $\gamma_5$ be a simple closed curve that separates $S$ into two pants $P_1$ containing $\gamma_1,\gamma_2$ and $P_2$ containing $\gamma_3,\gamma_4$ with the orientation induces from that of $P_1,$ and let $\gamma_6$ be a simple closed curve that intersects $\gamma_5$ at points and that separates $S$ into two pants $P_3$ containing $\gamma_2,\gamma_3$ and $P_4$ containing $\gamma_1,\gamma_4$ with the orientation induces from that of $P_3.$
We define a representation $\rho:\pi_1(S)\to\mathrm{PSL}(2;\mathbb R)$ by
letting $\rho([\gamma_k])=\pm A_k$
for $k\in\{1,2,3\}.$
 Then we have $\rho([\gamma_k])=\pm A_k$ for $k\in\{4,5,6\}.$ By (\ref{<-2}) and \cite[Proposition 3.5, Proposition 3.7 and Theorem 3.4]{G}, the restriction of $\rho$ to each of $P_1$ and $P_2$ has relative Euler class $-1$, hence $\rho$ itself has relative Euler class $-2,$ which is the hononomy representation of a hyperbolic structure on $S$ with geodesic boundary. 
\medskip

Observe that if we orient $S$ oppositely, then the induced orientation on $\gamma_i$ is reversed for each $i\in\{1,\dots,4\}$, and the holonomy representation $\rho$ sends each $[\gamma_i]\in\pi_1(S)$ to the inverse element in $\mathrm{PSL}(2;\mathbb R)$ which lifts to $A_i^{-1}\in\mathrm{SL}(2;\mathbb R).$ As a consequence, the corresponding hyperideal tetrahedron $\overline{\Delta}$ in $\ads^3$ has vertices $\Phi(A_i^{-1})=\overline {\mathbf v}_i,$ $i\in\{1,2,3,4\},$ where for $\mathbf v=(v_1,v_2,v_3,v_4)\in\mathbb E^{2,2}$, $\overline{\mathbf v}=(v_1,-v_2,-v_3,-v_4)$. Therefore, $\overline{\Delta}$ is isometric to the mirror image of $\Delta$.
\medskip

Now, under this correspondence, if for $k\in\{1,\dots,6\}$ we let $l(\gamma_k)$ be the length of the simple closed geodesic $\gamma_k$ on $S,$ and for $\{i,j\}\subset\{1,2,3,4\}$ we let $l_{ij}$ be the length of the edge $e_{ij}$ of $\Delta$ connecting the vertices $\mathbf v_i$ and $\mathbf v_j,$ then by the equalities in (\ref{boundarylengths}) and (\ref{cosh}), we have 
\begin{equation}\label{lengths=}
\big(l(\gamma_1),l(\gamma_2),l(\gamma_3),l(\gamma_4),l(\gamma_5),l(\gamma_6)\big)=2\big(l_{13},l_{23},l_{24},l_{14},l_{12},l_{34}\big).
    \end{equation}

Next, we show that in the Fenchel-Nielsen coordinates of $S$ with respect to the pants decomposition  given by the system of curves $\{\gamma_1,\gamma_2,\gamma_3,\gamma_4,\gamma_5\}$,
if we let $t(\gamma_5)$ be the twist parameter at $\gamma_5$, then 
$$t(\gamma_5)=\pm \mathrm{Im}\theta_{12},$$
where $\theta_{12}$ is the dihedral angle of $\Delta$ at the edge $e_{12}.$ Indeed, using the Law of Cosine for right-angled hyperbolic  hexagons and for  twisted right-angled hyperbolic hexagons (see Figure \ref{folded} and \cite[Appendix A, Case 6]{RY}), 
one has 
\begin{equation}\label{cosh3}
 {\cosh t(\gamma_5)=-\frac{G_{34}}{\sqrt{G_{33}G_{44}}}},
\end{equation}
where for $i,j\in\{1,2,3,4\},$ $G_{ij}$ is the $ij$-th co-factor of the matrix 
\begin{equation*}
\begin{bmatrix}
1 & -\cosh \frac{l(\gamma_5)}{2} & -\cosh \frac{l(\gamma_1)}{2}
    & -\cosh \frac{l(\gamma_4)}{2} \\
-\cosh \frac{l(\gamma_5)}{2} & 1 & -\cosh \frac{l(\gamma_2)}{2}
    & -\cosh \frac{l(\gamma_3)}{2}\\
    -\cosh \frac{l(\gamma_1)}{2} & -\cosh \frac{l(\gamma_2)}{2} & 1
    & -\cosh \frac{l(\gamma_6)}{2} \\
 -\cosh \frac{l(\gamma_4)}{2}& -\cosh \frac{l(\gamma_3)}{2}& -\cosh \frac{l(\gamma_6)}{2}
    & 1
  \end{bmatrix},
\end{equation*}
which by (\ref{lengths=}) coincides with the Gram matrix of $\Delta$ in the edge lengths. On the other hand, by Proposition \ref{cos} and (\ref{cos2}),  we have 
\begin{equation}\label{cos3}
 {\cos\theta_{12}=\frac{G_{34}}{\sqrt{G_{33}G_{44}}}}.
\end{equation}
 Comparing (\ref{cosh3}) and (\ref{cos3}), we have the result.

\begin{figure}[htbp]
\centering
\includegraphics[scale=0.2]{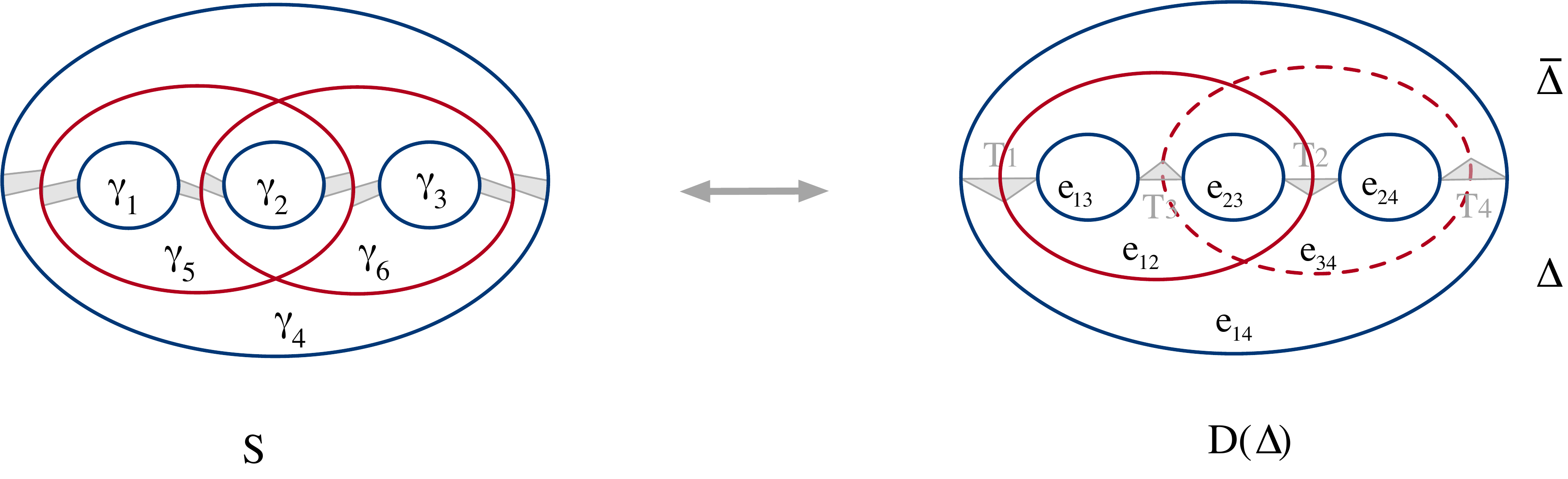}
\caption{On the left we have the hyperbolic four-hole sphere $S,$ where the shaded regions are bounded by $\gamma_i$'s and the shortest geodesic arcs between them, hence are twisted right-angled hyperbolic hexagons as in \cite[Appendix A, Case 6]{RY}. On the right we have the double $D(\Delta)$ of the truncated hyperideal tetrahedron $\Delta$ in $\ads^3$ obtained by gluing $\Delta$ and its mirror image $\overline\Delta$ together along the shaded triangles of truncation $T_1,\dots, T_4$, which is topologically a handlebody  whose boundary is a closed surface of genus $3$. }
\label{folded}
\end{figure}

Finally, if we also consider the pants decomposition given by the system of curves $\{\gamma_1,\gamma_2,\gamma_3,\gamma_4,\gamma_6\}$, combine it with the previous pants decompositions, and consider them together as a pants decomposition of the closed hyperbolic surface of genus $3$ pleated along the curves $\gamma_1,\dots,\gamma_6$ with bending angles $0,0,0,0,\pi,\pi$, which folds in a two-to-one manner onto $S$ as depicted in Figure \ref{folded}, then we can also talk about the twist parameters $t(\gamma_1),\dots,t(\gamma_6)$ on the curves $\gamma_1,\dots,\gamma_6;$ and by the same computations, we have 
\begin{equation}\label{angles=}
\big(t(\gamma_1),t(\gamma_2),t(\gamma_3),t(\gamma_4),t(\gamma_5),t(\gamma_6)\big)=\pm\big(\mathrm{Im}\theta_{13},\mathrm{Im}\theta_{23},\mathrm{Im}\theta_{24},\mathrm{Im}\theta_{14},\mathrm{Im}\theta_{12},\mathrm{Im}\theta_{34}\big),
\end{equation}
where $\theta_{ij}$ is the dihedral angle of $\Delta$ at the edge $e_{ij}$ connecting the vertices  $\mathbf v_i$ and $\mathbf v_j,$  for $\{i,j\}\subset\{1,2,3,4\}$, and the $\pm$ is an overall sign depending on the orientations of $S.$

\section{Asymptotics of $b$-$6j$ symbols}\label{sec3}

\subsection{Definition and basic properties}

Let $S_b$ be the  \emph{double sine function} defined for $z\in \mathbb C$ with $0<\mathrm{Re}(z)<Q$ by
\begin{equation}\label{eq:def-S}
S_b(z)=\exp\Bigg(\int_\Omega\frac{\sinh\Big(\big(\frac{Q}{2}-z\big)t\Big)}{4t\sinh(\frac{bt}{2})\sinh(\frac{t}{2b})}dt\Bigg),
\end{equation}
where $b\in (0,1)$ and $Q=b+b^{-1}$, 
and the contour $\Omega$ goes along the real line and passes above the pole at the origin. By the functional equation (see e.g \cite[A.15]{TesVar})
\begin{equation}\label{FE1}
S_b(z+b^{\pm 1})=2\sin (\pi b ^{\pm 1} z)S_b(z),
\end{equation}
$S_b$ extended to a meromorphic function on $\mathbb C$ with the set of poles $\{ -nb-mb^{-1} \ |\ m, n\in \mathbb Z_{\geqslant 0}\}$ and the set of  zeros  $\{ Q + nb + mb^{-1} \ |\ m, n\in \mathbb Z_{\geqslant 0}\}.$

For a six-tuple of complex numbers  $(a_1,\dots,a_6)$, we let in the rest of this article \begin{equation*}
\begin{split}
   & t_1=a_1+a_2+a_3,\quad  t_2=a_1+a_5+a_6,\quad t_3=a_2+a_4+a_6,\quad  t_4=a_3+a_4+a_5,\\
   & q_1=a_1+a_2+a_4+a_5,\quad q_2=a_1+a_3+a_4+a_6,\quad  
q_3=a_2+a_3+a_5+a_6\quad\text{and}\quad q_4=2Q.
\end{split}
\end{equation*}
Then the  six-tuple $(a_1,\dots,a_6)\in\mathbb C^6$ is called \emph{$b$-admissible} if 
$$0<\mathrm{Re}q_j-\mathrm{Re}t_i<Q$$
for all $i,j\in\{1,2,3,4\}.$ 
In particular, if $(a_1,\dots,a_6)$ is $b$-admissible, then 
$$\max\{\mathrm{Re}t_1,\mathrm{Re}t_2,\mathrm{Re}t_3,\mathrm{Re}t_4\}<\min\{\mathrm{Re}q_1,\mathrm{Re}q_2,\mathrm{Re}q_3,\mathrm{Re}q_4\}.$$

\noindent{\bf Definition \ref{def: b-6j}} ($b$-$6j$ symbols).
The \emph{$b$-$6j$ symbol} for a $b$-admissible six-tuple $(a_1,\dots,a_6)\in\mathbb C^6$ is given by 
\begin{equation}\label{b6j}
\bigg\{\begin{matrix} a_1 & a_2 & a_3 \\ a_4 & a_5 & a_6 \end{matrix} \bigg\}_b=\Bigg(\frac{1}{\prod_{i=1}^4\prod_{j=1}^4S_b(q_j-t_i)}\Bigg)^{\frac{1}{2}}\int_\Gamma \prod_{i=1}^4S_b(u-t_i)\prod_{j=1}^4S_b(q_j-u)d u,
\end{equation}
where the contour $\Gamma$ is any vertical line passing the interval $(\max\{\mathrm{Re}t_i\},\min\{\mathrm{Re}q_j\})$.  See Figure \ref{Gamma} (a).
\\

\begin{figure}[htbp]
\centering
\includegraphics[scale=0.33]{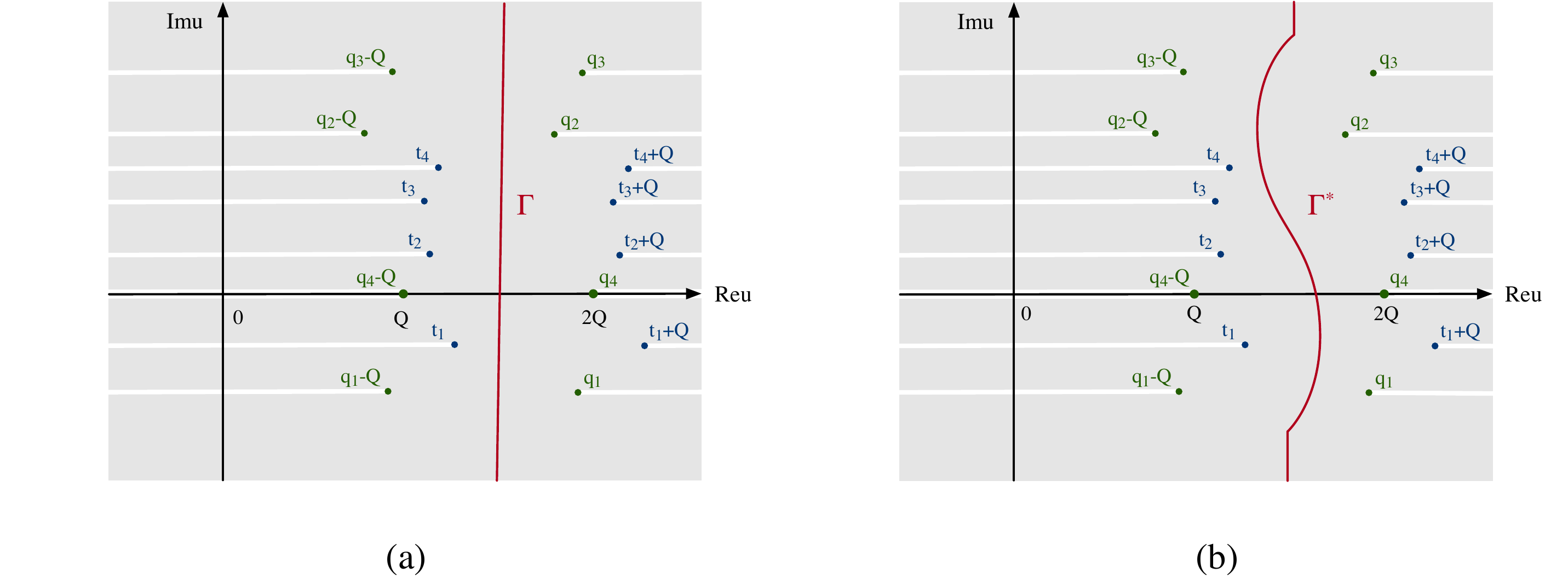}
\caption{Contours $\Gamma$, $\Gamma^*$ and possible zeros and poles (located in the white rays) of the integrand in~\eqref{b6j}.}
\label{Gamma}
\end{figure}

\begin{proposition}\label{lem: ab conver}
   The integral in (\ref{b6j}) converges absolutely, and is independent of the choice of the vertical line $\Gamma.$
\end{proposition}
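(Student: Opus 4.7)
The plan is to establish both claims by combining the known asymptotics of the double sine function with a standard contour-shift argument.

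First, I would verify that the integrand
\begin{equation*}
\mathcal{I}(u) \doteq \prod_{i=1}^4 S_b(u-t_i)\prod_{j=1}^4 S_b(q_j-u)
\end{equation*}
is holomorphic in the open strip $\Sigma \doteq \{u\in\mathbb{C} : \max_i \mathrm{Re}\,t_i < \mathrm{Re}\,u < \min_j \mathrm{Re}\,q_j\}$. Since the poles of $S_b$ lie at $-nb-mb^{-1}$ with $n,m\in\mathbb{Z}_{\geqslant 0}$, the poles of $S_b(u-t_i)$ sit at $u = t_i - nb - mb^{-1}$ with real parts at most $\mathrm{Re}\,t_i$, while the poles of $S_b(q_j - u)$ sit at $u = q_j + nb + mb^{-1}$ with real parts at least $\mathrm{Re}\,q_j$. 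Under $b$-admissibility all of these lie outside $\Sigma$, so $\mathcal{I}$ is holomorphic on $\Sigma$ and in particular continuous along any vertical line $\Gamma\subset\Sigma$.

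Second, for absolute convergence I would invoke the standard asymptotic expansion of $S_b$: with $\mathrm{Re}\,z$ confined to a bounded set and $|\mathrm{Im}\,z|\to\infty$,
\begin{equation*}
|S_b(z)| = \exp\bigl(\pi|\mathrm{Im}\,z|(\mathrm{Re}\,z - Q/2) + O(1)\bigr).
\end{equation*}
Writing $u=\alpha + iv$ on $\Gamma$ and applying this to each of the eight factors, the coefficient of $|v|$ in $\log|\mathcal{I}(u)|$ simplifies to
\begin{equation*}
\pi\Bigl[\sum_{i=1}^4(\alpha-\mathrm{Re}\,t_i-Q/2) + \sum_{j=1}^4(\mathrm{Re}\,q_j-\alpha-Q/2)\Bigr] = \pi\Bigl[\mathrm{Re}\Bigl(\sum_{j=1}^4 q_j - \sum_{i=1}^4 t_i\Bigr) - 4Q\Bigr].
\end{equation*}
A short computation from the definitions of $t_i,q_j$ yields $\sum_j q_j - \sum_i t_i = 2Q$, so this coefficient equals $-2\pi Q$. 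Hence $|\mathcal{I}(\alpha+iv)|$ decays like $e^{-2\pi Q|v|}$ as $|v|\to\infty$, which combined with continuity gives absolute integrability along $\Gamma$.

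Third, for independence of $\Gamma$, I would fix two admissible vertical lines $\Gamma_{\alpha}$ and $\Gamma_{\alpha'}$ with $\alpha<\alpha'$, and apply Cauchy's theorem to $\mathcal{I}$ on the rectangles $[\alpha,\alpha']\times[-R,R]\subset\Sigma$. Since the decay exponent $-2\pi Q$ is independent of $\alpha$ and the $O(1)$ error is uniform as $\alpha$ ranges over the compact interval $[\alpha,\alpha']$, the two horizontal side integrals are bounded by a constant times $(\alpha'-\alpha)e^{-2\pi Q R}$ and vanish as $R\to\infty$, leaving equality of the two vertical integrals. The only non-routine ingredient is the algebraic identity $\sum_j q_j - \sum_i t_i = 2Q$; this is the cancellation that makes the eight-fold product decay despite each $S_b$-factor growing exponentially in one direction, and without it the integrand would fail to be integrable. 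Once this identity is in hand, both conclusions follow by standard complex-analytic arguments.
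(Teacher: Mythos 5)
Your proposal is correct and follows essentially the same route as the paper: holomorphy of the integrand in the strip, the asymptotic $|S_b(z)|=\exp\big(\pi|\mathrm{Im}\,z|(\mathrm{Re}\,z-\tfrac{Q}{2})+O(1)\big)$ combined with the identity $\sum_j q_j-\sum_i t_i=2Q$ to get $e^{-2\pi Q|\mathrm{Im}\,u|}$ decay, and Cauchy's theorem on truncated rectangles with vanishing horizontal contributions. The only cosmetic difference is that the paper first pairs the factors via the reflection identity $S_b(z)S_b(Q-z)=1$ before applying the asymptotics, whereas you apply them directly to all eight factors; the estimates and conclusion are identical.
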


\begin{proof} We show the absolute convergence first. 
Recall from \cite[Formulae (B.32), (B.52)]{Eber} that 
\begin{equation}\label{eq: reflection}
    S_b(z)S_b(Q-z)=1,
\end{equation}
    and  
   \begin{equation}\label{eq: asy}
       \lim_{|\mathrm{Im}z|\to\infty}e^{\text{sgn}(\mathrm{Im}z)(\frac{\pi \mathbf{i} }{2}z(z-Q)+\frac{\pi\mathbf{i}}{12}(Q^2+1))}S_b(z)=1.
   \end{equation}  
Then we have 
\begin{equation}\label{est}
\begin{split}
\Bigg|
 \prod_{i=1}^4S_b(u-t_i)\prod_{j=1}^4S_b(q_j-u)\Bigg|=&\Bigg|
 \prod_{i=1}^4\frac{S_b(u-t_i)}{S_b(u+Q-q_i) } \Bigg|
 \\ \sim
&\begin{cases}
C\prod_{i=1}^4e^{\pi \mathrm{Im}u\mathrm{Re}(q_i-t_i-Q)+\pi\mathrm{Re}u\mathrm{Im}(q_i-t_i-Q)}& \text{as} \quad \textnormal{Im}u \to \infty,\\
C^{-1} \prod_{i=1}^4e^{-\pi \mathrm{Im}u\mathrm{Re}(q_i-t_i-Q)-\pi\mathrm{Re}u\mathrm{Im}(q_i-t_i-Q)} &\text{as} \quad \textnormal{Im}u\to -\infty,
\end{cases}\\
=
&\begin{cases}
Ce^{-2\pi Q\mathrm{Im}u} & \text{as} \quad \textnormal{Im}u \to \infty, \\
C^{-1} e^{2\pi Q\mathrm{Im}u} & \text{as} \quad \textnormal{Im}u \to -\infty,
\end{cases}
\end{split}
\end{equation}
where $C=\prod_{i=1}^4e^{\frac{\pi}{2}\mathrm{Im}(t_i^2-q_i^2+t_iQ+q_iQ)}$ is a quantity independent of $u,$
and the last equality comes from that $\sum_{i=1}^4(q_i-t_i-Q)=-2Q$. Therefore, the integrand in (\ref{b6j}) decays exponentially at infinity, and the integral absolutely converges. 

Next, we show that the integral does not depend on the choice of the contour. For a vertical line $\Gamma$ and any $\lambda>0$, we let $\Gamma_\lambda=\{z\in \Gamma\mid -\lambda\leqslant \mathrm{Im}z\leqslant \lambda\}$; and to simplify the notation we denote the integrand of (\ref{b6j}) by $F(u)$. Then for two vertical lines   $\Gamma_1$ and $\Gamma_2$ as described in the statement of the proposition, we have
$$\int_{\Gamma_1}F(u)du-\int_{\Gamma_2}F(u)du=\lim_{\lambda\to\infty}\bigg(\int_{\Gamma_{1,\lambda}}F(u)du-\int_{\Gamma_{2,\lambda}}F(u)du\bigg),$$
which vanishes due to the analyticity of $F(u)$, the Residue Theorem and the estimates in (\ref{est}).
\end{proof}

By Proposition \ref{lem: ab conver} and the analyticity of the integrand in (\ref{b6j}), we have the following Proposition \ref{contour}, which gives us more flexibility in the computation of the $b$-$6j$ symbols. In particular, it will be used in the proof of Theorem \ref{vol2}.

\begin{proposition}\label{contour}
Given $L>0$ and $c\in [\max\{\mathrm{Re}t_i\},\min\{\mathrm{Re}q_j\}]$, let 
$\Gamma^*$ be a contour satisfying the following condition: for $u\in \Gamma^*$ with $|\mathrm {Im} u|\geqslant L$, $\mathrm {Re} u=c$; and for $u\in \Gamma^*$ with $|\mathrm {Im} u|\leqslant L,$ if $\mathrm{Im}u=\mathrm{Im}t_i$ for some $i\in\{1,2,3,4\},$ then 
$$\mathrm{Re}t_i<\mathrm{Re}u<\mathrm{Re}t_i+Q,$$
and if $\mathrm{Im}u=\mathrm{Im}q_j$ for some $j\in\{1,2,3,4\},$ then 
$$\mathrm{Re}q_j-Q<\mathrm{Re}u<\mathrm{Re}q_j.$$ 
See Figure \ref{Gamma} (b).
Then the $b$-$6j$ symbol can be computed by
\begin{equation*}
\bigg\{\begin{matrix} a_1 & a_2 & a_3 \\ a_4 & a_5 & a_6 \end{matrix} \bigg\}_b=\Bigg(\frac{1}{\prod_{i=1}^4\prod_{j=1}^4S_b(q_j-t_i)}\Bigg)^{\frac{1}{2}}\int_{\Gamma^*} \prod_{i=1}^4S_b(u-t_i)\prod_{j=1}^4S_b(q_j-u)du.
\end{equation*} 
\end{proposition}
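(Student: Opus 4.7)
The plan is to deform $\Gamma^*$ to a vertical line via Cauchy's theorem, and then invoke Proposition \ref{lem: ab conver} to recognize the resulting integral as the defining integral of the $b$-$6j$ symbol. The technical content is locating the poles of the integrand
$$F(u) = \prod_{i=1}^4 S_b(u - t_i) \prod_{j=1}^4 S_b(q_j - u)$$
and checking that $\Gamma^*$ is disjoint from them.

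Using the pole set $\{-nb - mb^{-1} : m, n \in \mathbb Z_{\geqslant 0}\}$ of $S_b$, the poles of $F$ lie on the union of the ``left rays''
$$R_i^- = \big\{u \in \mathbb C : \mathrm{Im}\,u = \mathrm{Im}\,t_i,\ \mathrm{Re}\,u = \mathrm{Re}\,t_i - nb - mb^{-1},\ m,n \in \mathbb Z_{\geqslant 0}\big\}$$
emanating leftward from each $t_i$, and the ``right rays''
$$R_j^+ = \big\{u \in \mathbb C : \mathrm{Im}\,u = \mathrm{Im}\,q_j,\ \mathrm{Re}\,u = \mathrm{Re}\,q_j + nb + mb^{-1},\ m,n \in \mathbb Z_{\geqslant 0}\big\}$$
emanating rightward from each $q_j$; these are the white rays in Figure \ref{Gamma}. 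At any $u \in \Gamma^*$ with $\mathrm{Im}\,u = \mathrm{Im}\,t_i$, the hypothesis $\mathrm{Re}\,t_i < \mathrm{Re}\,u < \mathrm{Re}\,t_i + Q$ places $u$ strictly to the right of $R_i^-$; similarly, when $\mathrm{Im}\,u = \mathrm{Im}\,q_j$, the hypothesis $\mathrm{Re}\,q_j - Q < \mathrm{Re}\,u < \mathrm{Re}\,q_j$ places $u$ strictly to the left of $R_j^+$. If both coincidences occur simultaneously (that is, $\mathrm{Im}\,t_i = \mathrm{Im}\,q_j$ for some $i,j$), the intersection of the two intervals is $(\mathrm{Re}\,t_i, \mathrm{Re}\,q_j)$ by $b$-admissibility, which is precisely the pole-free corridor between the two rays. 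Hence $\Gamma^*$ avoids every pole of $F$.

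With this in hand, I would fix a vertical line $\Gamma = \{\mathrm{Re}\,u = d\}$ with $d$ in the open interval $(\max\mathrm{Re}\,t_i, \min\mathrm{Re}\,q_j)$, choose $\lambda$ larger than both $L$ and all $|\mathrm{Im}\,t_i|, |\mathrm{Im}\,q_j|$, and form the closed contour $C_\lambda$ by concatenating $\Gamma|_{|\mathrm{Im}\,u| \leqslant \lambda}$, the horizontal segment from $d + \mathbf i\lambda$ to $c + \mathbf i\lambda$, $\Gamma^*|_{|\mathrm{Im}\,u| \leqslant \lambda}$ traversed in reverse, and the horizontal segment from $c - \mathbf i\lambda$ to $d - \mathbf i\lambda$. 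The choice of $\lambda$ ensures that the two horizontal segments avoid all pole lines, and the region enclosed by $C_\lambda$ meets each pole line $\mathrm{Im}\,u = \mathrm{Im}\,t_i$ or $\mathrm{Im}\,u = \mathrm{Im}\,q_j$ in a union of intervals whose endpoints lie on $\Gamma \cup \Gamma^*$, hence in the safe corridor identified above. Consequently no poles are enclosed, and $\oint_{C_\lambda} F\,du = 0$ by Cauchy's theorem. Sending $\lambda \to \infty$, the horizontal-segment contributions vanish by the exponential decay estimate (\ref{est}), yielding $\int_\Gamma F\,du = \int_{\Gamma^*} F\,du$; multiplying by the normalization prefactor and applying Proposition \ref{lem: ab conver} gives the claimed identity.

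The main bookkeeping obstacle I anticipate is justifying that the region bounded by $C_\lambda$ is genuinely pole-free, since $\Gamma^*$ may zigzag repeatedly across pole lines inside $|\mathrm{Im}\,u| \leqslant L$ and the bounded region need not be convex. This reduces, by the safe-corridor argument, to a slice-by-slice verification on each horizontal pole line: the intersections with the bounded region always consist of intervals whose endpoints are crossings of $\Gamma$ or $\Gamma^*$, and hence lie in the pole-free corridor between the two nearest pole rays.
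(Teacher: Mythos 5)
Your argument is correct and is essentially the paper's own: the paper derives Proposition \ref{contour} directly from Proposition \ref{lem: ab conver} together with the analyticity of the integrand, i.e.\ precisely the Cauchy-theorem contour deformation from $\Gamma^*$ to a vertical line $\Gamma$, using the pole locations of $S_b$ to see that the stated conditions keep $\Gamma^*$ (and the swept region) in the pole-free corridors, and the decay estimate (\ref{est}) to kill the horizontal segments. You have simply written out the pole bookkeeping that the paper leaves implicit.
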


From \eqref{b6j}, one sees immediately the following tetrahedral symmetry of the b-6j symbols.
\begin{proposition}[Tetrahedral symmetry]\label{tetra}
    \begin{align*}
\begin{Bmatrix} 
      a_1 & a_2 & a_3 \\
      a_4 & a_5 & a_6
   \end{Bmatrix}_b=\begin{Bmatrix} 
      a_2 & a_1 & a_3 \\
      a_5 & a_4 & a_6
   \end{Bmatrix}_b=\begin{Bmatrix} 
      a_1 & a_3 & a_2 \\
      a_4 & a_6 & a_5
   \end{Bmatrix}_b=\begin{Bmatrix} 
      a_1 & a_5 & a_6 \\
      a_4 & a_2 & a_3
   \end{Bmatrix}_b.
\end{align*}
\end{proposition}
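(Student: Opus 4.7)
The plan is to reduce everything to the observation that both the prefactor and the integrand in the definition~\eqref{b6j} of the $b$-$6j$ symbol are symmetric under arbitrary permutations of the multisets $\{t_1,t_2,t_3,t_4\}$ and $\{q_1,q_2,q_3,q_4\}$, and then to verify directly that each of the three generating permutations of $(a_1,\dots,a_6)$ listed in the statement induces a permutation of the four-element set $\{t_i\}$ and a (possibly different) permutation of the four-element set $\{q_j\}$.

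First I would note that the integrand $\prod_{i=1}^4 S_b(u-t_i)\prod_{j=1}^4 S_b(q_j-u)$ is a product indexed over $\{t_i\}$ and $\{q_j\}$, hence invariant under any rearrangement of these two index sets; likewise for the double product $\prod_{i,j}S_b(q_j-t_i)$ appearing in the prefactor. Consequently, if a permutation of $(a_1,\dots,a_6)$ induces permutations of $\{t_i\}$ and $\{q_j\}$, both factors in~\eqref{b6j} are literally unchanged. Moreover, such a permutation preserves the admissibility condition $0<\mathrm{Re}(q_j-t_i)<Q$ and the interval $(\max\{\mathrm{Re}\,t_i\},\min\{\mathrm{Re}\,q_j\})$, so the integral on both sides can be taken along the same contour $\Gamma$.

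Next I would verify the three generating symmetries by direct substitution into the formulas for $t_i$ and $q_j$. Swapping the first two columns, $(a_1,a_2,a_3,a_4,a_5,a_6)\mapsto(a_2,a_1,a_3,a_5,a_4,a_6)$, sends $(t_1,t_2,t_3,t_4)$ to $(t_1,t_3,t_2,t_4)$ and $(q_1,q_2,q_3,q_4)$ to $(q_1,q_3,q_2,q_4)$. Swapping the last two columns, $(a_1,a_2,a_3,a_4,a_5,a_6)\mapsto(a_1,a_3,a_2,a_4,a_6,a_5)$, sends $(t_1,t_2,t_3,t_4)$ to $(t_1,t_2,t_4,t_3)$ and $(q_1,q_2,q_3,q_4)$ to $(q_2,q_1,q_3,q_4)$. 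Finally, interchanging the two rows in the second and third columns, $(a_1,a_2,a_3,a_4,a_5,a_6)\mapsto(a_1,a_5,a_6,a_4,a_2,a_3)$, sends $(t_1,t_2,t_3,t_4)$ to $(t_2,t_1,t_4,t_3)$ while leaving each $q_j$ invariant (in particular $q_4=2Q$ is fixed throughout).

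Each of these is a one-line computation from the definitions of $t_i$ and $q_j$, so there is no serious obstacle beyond bookkeeping of indices. Conceptually, this reflects the fact that the $t_i$'s correspond to the four faces of a tetrahedron (each a triple of edges meeting at a vertex) and the $q_j$'s correspond to the three pairs of opposite edges together with the universal term $2Q$, so the evident symmetries of the edge-labeled tetrahedron act on these sets by permutations and therefore leave the $b$-$6j$ symbol invariant.
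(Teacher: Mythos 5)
Your proof is correct and is essentially the paper's argument: the paper simply notes that the symmetry is immediate from the integral formula \eqref{b6j}, since each listed permutation of $(a_1,\dots,a_6)$ permutes the multisets $\{t_i\}$ and $\{q_j\}$ (and preserves admissibility and the contour), which is exactly what you verify explicitly. Your index computations for the three generating swaps check out, so this is just a fully written-out version of the paper's one-line observation.
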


The next proposition describes the reflection symmetry of the $b$-$6j$ symbols that if we change any $a_k\mapsto Q-a_k$, the value of the $b$-$6j$ symbol does not change.
\begin{proposition}[Reflection symmetry]\label{reflection}
    If $(a_1,\dots,a_6)$ is $b$-admissible, then $(Q-a_1,a_2,\dots,a_6)$ is also $b$-admissible, and   
$$\bigg\{\begin{matrix} a_1 & a_2 & a_3 \\ a_4 & a_5 & a_6 \end{matrix} \bigg\}_b=\bigg\{\begin{matrix} Q-a_1 & a_2 & a_3 \\ a_4 & a_5 & a_6 \end{matrix} \bigg\}_b.$$
Moreover, the same result holds if we change $a_k\mapsto Q-a_k$ for any $k\in\{1,\dots,6\}.$
\end{proposition}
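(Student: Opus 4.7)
The plan is to prove the proposition in two logical steps: first, $b$-admissibility of $(Q-a_1, a_2, \ldots, a_6)$; second, the equality of $b$-$6j$ symbols. The ``Moreover'' extension to any $a_k$ then follows immediately from the tetrahedral symmetry (Proposition \ref{tetra}), which cycles any $a_k$ into the slot of $a_1$.

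For $b$-admissibility, set $c = Q - 2a_1$. Exactly the four quantities $t_1, t_2, q_1, q_2$ (those containing $a_1$) shift by $+c$, while the remaining four are unchanged. Among the sixteen differences $q_j - t_i$, the eight in which precisely one of $t_i, q_j$ involves $a_1$ shift by $\pm c$. A direct calculation exhibits an involution on these eight ``changed'' pairs---namely $(q_1, t_3) \leftrightarrow (q_2, t_4)$, $(q_1, t_4) \leftrightarrow (q_2, t_3)$, $(q_3, t_1) \leftrightarrow (q_4, t_2)$, and $(q_3, t_2) \leftrightarrow (q_4, t_1)$---for which the shifted value satisfies $q_j' - t_i' = Q - (q_{j'} - t_{i'})$, where $(j',i')$ denotes the paired index. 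Since $\mathrm{Re}(q_{j'} - t_{i'}) \in (0,Q)$ by hypothesis, so is $Q - \mathrm{Re}(q_{j'} - t_{i'})$, confirming $b$-admissibility.

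For the integral equality, applying $S_b(Q-z) = 1/S_b(z)$ to the involution identity above yields $\prod_{j,i} S_b(q_j' - t_i') = \prod_{\text{unchanged}} S_b(q_j - t_i) \big/ \prod_{\text{changed}} S_b(q_j - t_i)$, and therefore the new prefactor equals the old prefactor multiplied by $\prod_{\text{changed}} S_b(q_j - t_i)$. Hence invariance of the $b$-$6j$ symbol reduces to the integral identity that the new integral equals the old one divided by $\prod_{\text{changed}} S_b(q_j - t_i)$. The strategy I would adopt is to substitute $u = v + c$ in the new integral: this instantly restores the four factors involving $t_1, t_2, q_1, q_2$ to their original form, while the other four acquire shifts by $\pm c$, which one then attempts to rewrite via $S_b(Q-z) = 1/S_b(z)$ and the involution pairing at the level of the integrand to produce the required compensating factor.

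The main obstacle will be precisely this last step---the shifted factors $S_b(v + c - t_i)$ and $S_b(q_j - c - v)$ for $i = 3,4$ and $j = 3,4$ are not automatically equal to their involution counterparts at the pointwise level, so some further manipulation via the functional equations of $S_b$ appears necessary. One must also carefully justify the contour translation induced by $u = v + c$ by appealing to Proposition \ref{contour} together with the discrete pole structure of $S_b$ to verify that no residues are picked up, while the decay estimates from Proposition \ref{lem: ab conver} ensure the contributions at $|\mathrm{Im}\,u| \to \infty$ vanish and make the deformation rigorous.
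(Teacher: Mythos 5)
Your treatment of the $b$-admissibility and of the ``moreover'' clause is correct and matches the paper: your involution pairing of the mixed index pairs is exactly the paper's observation that $q_j'-t_i'=q_j-t_i$ when $i,j$ both lie in $\{1,2\}$ or both in $\{3,4\}$, while $q_j'-t_i'=Q-(q_l-t_k)$ in the mixed cases, and the reduction of the general $a_k\mapsto Q-a_k$ statement to the tetrahedral symmetry of Proposition \ref{tetra} is also what the paper does. You have likewise correctly reduced the main claim, via $S_b(z)S_b(Q-z)=1$, to the integral identity that the new integral equals the old one divided by the product of $S_b(q_j-t_i)$ over the eight mixed pairs.

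The genuine gap is that this integral identity is never proved, and the route you sketch cannot close it. After the substitution $u=v+c$ with $c=Q-2a_1$, the integrand contains $S_b(v+c-t_3)S_b(v+c-t_4)S_b(q_3-c-v)S_b(q_4-c-v)$, and there is no pointwise manipulation turning this into $S_b(v-t_3)S_b(v-t_4)S_b(q_3-v)S_b(q_4-v)$ times a $v$-independent constant: the reflection identity $S_b(z)S_b(Q-z)=1$ does not apply (the arguments are shifted by $c$, not reflected), and the shift equations (\ref{FE1}) only relate $S_b(z+b^{\pm1})$ to $S_b(z)$, whereas $c$ is an arbitrary complex number depending on $a_1$. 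The required statement is a nontrivial symmetry of this Barnes-type integral, and it is precisely where the content of the proposition lies; the contour bookkeeping you mention is a side issue by comparison. The paper does not attempt this direct computation: it invokes the alternative integral representation of the $b$-$6j$ symbol from Teschner--Vartanov \cite[Formulae (2.9), (2.16), (2.17), (2.24)]{TesVar}, in which the integral part is manifestly invariant under $a_1\mapsto Q-a_1$, and then checks invariance of the prefactor there by the same computation (a), (b) together with (\ref{eq: reflection}). To complete your argument you would either have to prove the integral identity itself or, as the paper does, pass to a representation in which the symmetry is manifest.
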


\begin{proof}
To see the $b$-admissibility of $(Q-a_1,a_2,\cdots,a_6)$, we denote the $t_i$'s and  $q_j$'s for this new six-tuples by $t_i'$'s and $q_j'$'s, and reserve the $t_i$'s and $q_j$'s for the original six-tuple $(a_1,\dots,a_6).$  By a direction computation, we have:
\begin{enumerate}[(a)]
\item If $i,j\in\{1,2\}$ or $i,j\in\{3,4\},$ then 
$$q_j'-t_i'=q_j-t_i.$$
\item 
If $i\in\{1,2\}$ and $j\in\{3,4\},$ or $i\in\{3,4\}$ and $j\in\{1,2\},$ then 
$$q_j'-t_i'= Q-(q_l-t_k),$$
 where $\big\{\{i,k\},\{j,l\}\big\}=\big\{\{1,2\},\{3,4\}\big\}.$
\end{enumerate}
Then the $b$-admissibility of the six-tuple $(Q-a_1,a_2,\cdots,a_6)$ follows form (a), (b) and the $b$-admissibility of the six-tuple $(a_1,\dots,a_6).$

  To see the reflection symmetry, we recall in  \cite[Formulae (2.9),(2.16), (2.17), (2.24)]{TesVar} that the $b$-$6j$ symbol admits an alternative integral representation, in which 
   (specifically in (2.16)) the integral part is invariant under the substitution $a_1\mapsto Q-a_1$. The invariance of the factor in front of the integral in \cite[(2.24)]{TesVar} can be verified by (a), (b), (\ref{eq: reflection}) and a direct computation.

   The last statement follows immediately from Proposition \ref{tetra}, the tetrahedral symmetry.
\end{proof}

To prove Theorems \ref{cov}, \ref{vol3} and \ref{vol2}, following \cite{LMSWY}, we introduce the following a new set of variables $\alpha_k$, $\xi$, $\tau_i$, $\eta_j$:
\begin{align}
&\alpha_k=\pi b a_k-\frac{\pi b^2}{2} \textrm{ for }k\in\{1,\dots, 6\},\quad \xi=\pi b u, \label{alpha-a}\\
 &\tau_i=\pi b t_i-\frac{3\pi b^2}{2}\textrm{ for }i\in\{1,2,3,4\},\quad\text{and}\quad \eta_j=\pi b q_j-2\pi b^2 \textrm{ for }j\in\{1,2,3,4\}.\label{tau-t}
\end{align}
Then 
\begin{equation*}
\begin{split}
&\tau_1=\alpha_1+\alpha_2+\alpha_3,\quad\tau_2=\alpha_1+\alpha_5+\alpha_6,\quad
\tau_3=\alpha_2+\alpha_4+\alpha_6,\quad\tau_4=\alpha_3+\alpha_4+\alpha_5,\\
&\eta_1=\alpha_1+\alpha_2+\alpha_4+\alpha_5,\quad \eta_2=\alpha_1+\alpha_3+\alpha_4+\alpha_6,\quad 
\eta_3=\alpha_2+\alpha_3+\alpha_5+\alpha_6\quad\text{and}\quad \eta_4=2\pi;
\end{split}
\end{equation*}
and for $i,j\in\{1,2,3,4\},$ 
$$0< \mathrm{Re}\eta_j-\mathrm{Re}\tau_i < \pi,$$ and
$$\max\{\mathrm{Re}\tau_1,\mathrm{Re}\tau_2,\mathrm{Re}\tau_3,\mathrm{Re}\tau_4\}<\min\{\mathrm{Re}\eta_1,\mathrm{Re}\eta_2,\mathrm{Re}\eta_3,\mathrm{Re}\eta_4\}.$$

Let $\boldsymbol\alpha=(\alpha_1,\dots,\alpha_6)$ and let
\begin{equation*}
\begin{split}
U_{\boldsymbol \alpha,b}(\xi)= & -\pi \mathbf i b^2 \sum_{i=1}^4\sum_{j=1}^4 \log S_b(q_j-t_i) + 2\pi \mathbf i b^2 \sum_{i=1}^4  \log S_b(u-t_i) + 2\pi \mathbf i b^2 \sum _{j=1}^4\log S_b(q_j-u).
\end{split}
\end{equation*}
Then we have 
\begin{equation}\label{fk}
\bigg\{\begin{matrix} a_1 & a_2 & a_3 \\ a_4 & a_5 & a_6 \end{matrix} \bigg\}_b=\frac{1}{\pi b}\int_{\Gamma}\exp\bigg(\frac{U_{\boldsymbol \alpha,b}(\xi)}{2\pi \mathbf ib^2} \bigg)d\xi,
\end{equation}
where $\Gamma$ is a vertical line passing  the real axis in the interval $(\max\{\mathrm{Re}\tau_i\},\min\{\mathrm{Re}\eta_j\})$.

Define the function $U_{\boldsymbol \alpha}$ by 
\begin{equation}\label{U}
U_{\boldsymbol \alpha}(\xi)=-\frac{1}{2}\sum_{i=1}^4\sum_{j=1}^4L(\eta_j-\tau_i)+\sum_{i=1}^4L(\xi-\tau_i)+\sum_{j=1}^4L(\eta_j-\xi),
\end{equation}
where 
\begin{equation}\label{eq:Lx}
L(x)= x^2-\pi x +\frac{\pi^2}{6}-\mathrm{Li}_2\big(e^{2\mathbf ix}\big)
\end{equation}
and $\mathrm{Li}_2$ is the dilogarithm function defined on $\mathbb C\setminus (1,+\infty)$ by
$$\mathrm{Li}_2(z)=-\int_0^z\frac{\log (1-u)}{u}du.$$
Here the integral is along any path in $\mathbb C\setminus (1,\infty)$ connecting $0$ and $z.$ It is proved in \cite[Section 2.2]{LMSWY} that the function $L(x)$ extends continuously on $\mathbb C\setminus (-\infty,0)\cup(\pi,\infty)$ and  holomorphically on $\mathbb C\setminus (-\infty,0]\cup[\pi,\infty),$ by the following functional equations:
 If $x\in\mathbb C\setminus (-\infty,0]\cup[\pi,\infty)$ with $\mathrm{Im}x>0$, then
\begin{equation*}\label{period3}
L(x+\pi)=L(x)+2\pi x;
\end{equation*}
and if $x\in \mathbb C\setminus (-\infty,0]\cup[\pi,\infty)$ with $\mathrm{Im}x< 0$, then
\begin{equation*}\label{period4}
L(x+\pi)=L(x)-2\pi x.
\end{equation*}
Define also
\begin{equation}\label{kappa}
\kappa_{\boldsymbol \alpha}(\xi)=8\pi^2+14\pi\sum_{k=1}^6\alpha_k-28\pi\xi-4\pi \mathbf i\sum_{i=1}^4\log\Big(1-e^{2\mathbf i(\xi-\tau_i)}\Big)+3\pi \mathbf i\sum_{j=1}^4\log\Big(1-e^{2\mathbf i(\eta_j-\xi)}\Big),
\end{equation}
and 
\begin{equation}\label{nuU}\nu_{\boldsymbol\alpha,b}(\xi)=\frac{U_{\boldsymbol \alpha,b}(\xi)-\kappa_{\boldsymbol \alpha}(\xi)b^2-U_{\boldsymbol \alpha}(\xi)}{b^4}.
\end{equation} 
Then by (\ref{fk}) and (\ref{nuU}), we have  
\begin{equation}\label{6jint}
\bigg\{\begin{matrix} a_1 & a_2 & a_3 \\ a_4 & a_5 & a_6 \end{matrix} \bigg\}_b=\frac{1}{\pi b}\int_{\Gamma}\exp\bigg(\frac{U_{\boldsymbol \alpha}(\xi)+\kappa_{\boldsymbol \alpha}(\xi)b^2+
\nu_{\boldsymbol \alpha,b}(\xi)b^4}{2\pi \mathbf ib^2} \bigg)d\xi,
\end{equation}
where $\Gamma$ is a vertical line passing the real axis in the interval $(\max\{\mathrm{Re}\tau_i\},\min\{\mathrm{Re}\eta_j\})$.

\subsection{Asymptotics in the edge lengths}\label{el}

The goal of this subsection is to prove Theorem \ref{cov}.

Let $(l_1,\dots, l_6)\in \mathbb R_{>0}^6;$  
and for $k\in\{1,\dots,6\}$, let $a_k =\frac{Q}{2} + \mathbf i \frac{l_k}{2\pi b}$.
Then the six-tuple $(a_1,\dots,a_6)$ is $b$-admissible, and by Definition \ref{b6j}, the $b$-$6j$ symbol of the six-tuple $(a_1,\dots,a_6)$ is computed by
\begin{equation}\label{b-6j}
\bigg\{\begin{matrix} a_1 & a_2 & a_3 \\ a_4 & a_5 & a_6 \end{matrix} \bigg\}_b=\Bigg(\frac{1}{\prod_{i=1}^4\prod_{j=1}^4S_b(q_j-t_i)}\Bigg)^{\frac{1}{2}}\int_\Gamma \prod_{i=1}^4S_b(u-t_i)\prod_{j=1}^4S_b(q_j-u)d u,
\end{equation}
where the contour $\Gamma$ is any vertical line passing the interval $(\frac{3Q}{2},2Q)$. See Figure \ref{Da}.

\begin{figure}[htbp]
\centering
\includegraphics[scale=0.4]{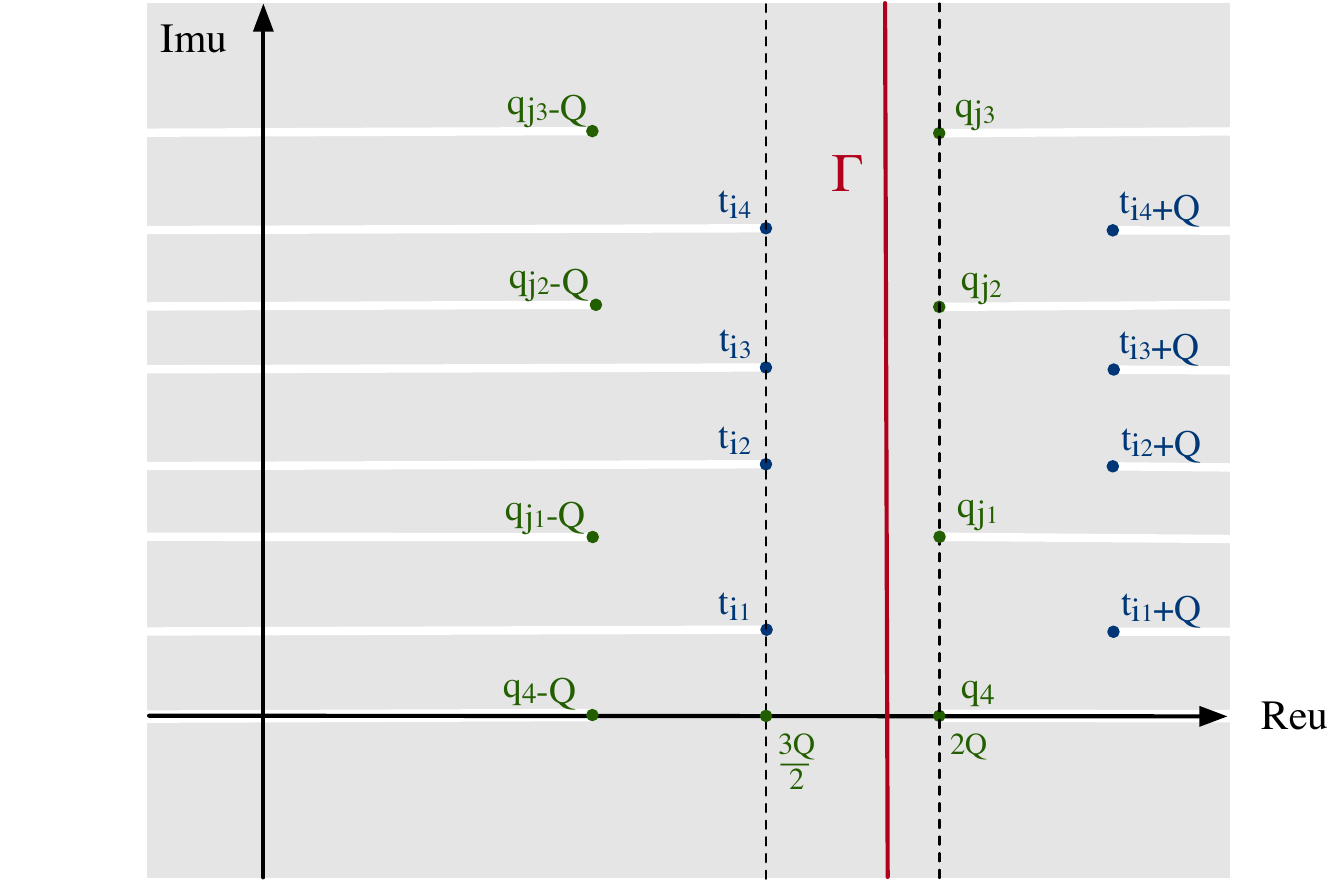}
\caption{Contour $\Gamma$ and possible zeros and poles (located in the white rays) of the integrand in~\eqref{b-6j}, where $\{i_1,i_2,i_3,i_4\}=\{1,2,3,4\}$  with $\mathrm{Im}t_{i_1}\leqslant \mathrm{Im}t_{i_2}\leqslant \mathrm{Im}t_{i_3}\leqslant \mathrm{Im}t_{i_4},$ and $\{j_1,j_2,j_3\}=\{1,2,3\}$ with $0=\mathrm{Im}q_4<\mathrm{Im}q_{j_1}\leqslant \mathrm{Im}q_{j_2}\leqslant \mathrm{Im}q_{j_3}.$}
\label{Da}
\end{figure}

Under the change of variables (\ref{alpha-a}) and (\ref{tau-t}), let $\boldsymbol \alpha = (\alpha_1,\dots,\alpha_6)$, and for $\delta >0$ sufficiently small consider the region $D_{\delta}=D^{\boldsymbol \alpha}_{\delta}$  consisting of $\xi\in \mathbb C$ either with $\mathrm {Re}\xi\in[\frac{3\pi}{2}+\delta, 2\pi-\delta],$ or with $|\mathrm{Im}(\xi-\tau_i)|\geqslant \delta,$ $|\mathrm{Im}(\eta_j-\xi)|\geqslant \delta$ for all $i, j\in\{1,2,3,4\}$.  See Figure \ref{Ddc}. Let $U_{\boldsymbol \alpha}$ be the function on $D_{\delta}$ defined by (\ref{U}). 

\begin{figure}[htbp]
\centering
\includegraphics[scale=0.5]{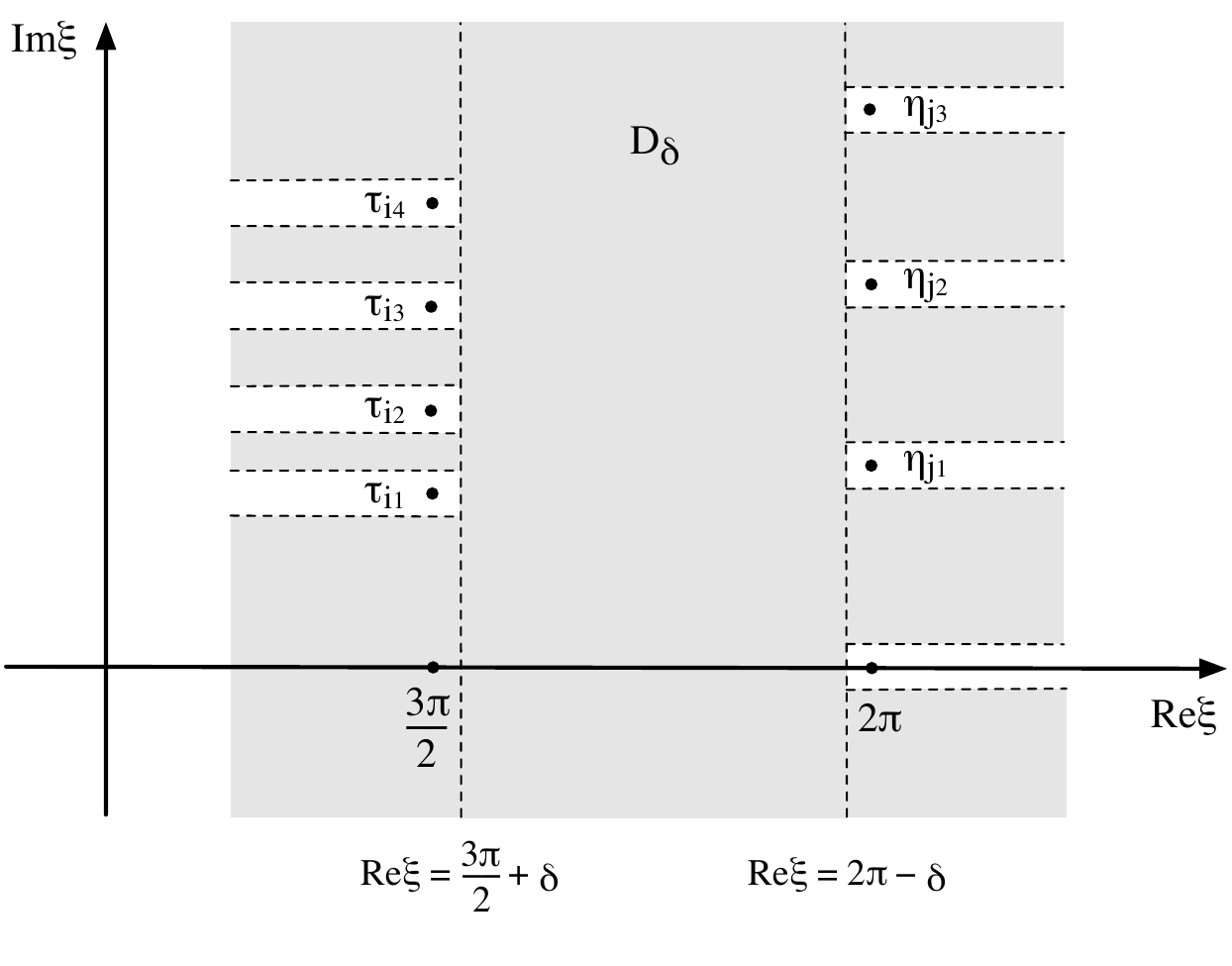}
\caption{Region $D_{\delta}$, where $\{i_1,i_2,i_3,i_4\}=\{1,2,3,4\}$ with $\mathrm{Im}\tau_{i_1}\leqslant \mathrm{Im}\tau_{i_2}\leqslant \mathrm{Im}\tau_{i_3}\leqslant \mathrm{Im}\tau_{i_4},$ and $\{j_1,j_2,j_3\}=\{1,2,3\}$ with $0=\mathrm{Im}\eta_4<\mathrm{Im}\eta_{j_1}\leqslant \mathrm{Im}\eta_{j_2}\leqslant \mathrm{Im}\eta_{j_3}.$}
\label{Ddc}
\end{figure}

\begin{proposition}\cite[Proposition 3.1 (3)]{LMSWY}\label{Prop3.1}  If $(l_1,\dots,l_6)$ are the edge lengths of a truncated hyperideal tetrahedron in $\mathbb A\mathrm d\mathbb S^3,$ then
there are two non-degenerate critical points $\xi_1^*$ and $\xi_2^*$ of $U_{\boldsymbol \alpha}$ in $D_{\delta}$ with 
$$\mathrm{Re}\xi_1^*=\mathrm{Re}\xi_2^*=2\pi.$$
 Moreover,
$$\mathrm{Im}U_{\boldsymbol \alpha}(\xi_1^*)=\mathrm{Im}U_{\boldsymbol \alpha}(\xi_2^*)=-2\widetilde{\mathrm{Cov}}(l_1,\dots,l_6),$$
where $\widetilde{\mathrm{Cov}}$ is the extended co-volume function defined in \cite{LY},
and 
$$\mathrm{Re}U_{\boldsymbol \alpha}(\xi_1^*)=-\mathrm{Re}U_{\boldsymbol \alpha}(\xi_2^*).$$
\end{proposition}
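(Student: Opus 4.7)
Under the rescaling $a_k = Q/2 + \mathbf i l_k/(2\pi b)$, the variables $\alpha_k = \pi/2 + \mathbf i l_k/2$ satisfy $\mathrm{Re}\tau_i = 3\pi/2$ for every $i$, $\mathrm{Re}\eta_j = 2\pi$ for every $j$, and $\eta_4 = 2\pi$ exactly. I would therefore look for critical points of $U_{\boldsymbol\alpha}$ on the vertical line $\mathrm{Re}\xi = 2\pi$, writing $\xi = 2\pi + \mathbf i s$ with $s \in \mathbb R$. Using $L'(x) = 2x - \pi + 2\mathbf i \log(1 - e^{2\mathbf i x})$ and setting $w = e^{-2s}$, the critical equation $U'_{\boldsymbol\alpha}(\xi) = 0$ reduces to a polynomial equation $P(w) = 0$ of degree four with real coefficients expressible in terms of the $\cosh l_k$.

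I would next identify $P(w)$ with a polynomial naturally attached to the Gram matrix $G = \mathrm{Gram}(l_1,\dots,l_6)$ of Definition \ref{Gramel}; this identification parallels the one carried out in the hyperbolic case in the proof of \cite[Proposition 3.1 (1)]{LMSWY}. By Theorem \ref{criterion} and Lemma \ref{22}, the AdS hypothesis forces $G$ to have signature $(2,2)$, equivalently $\det G > 0$, and this sign is exactly what guarantees that $P(w)$ admits two positive real roots $w_1, w_2$, producing two critical points $\xi_k^* = 2\pi + \mathbf i s_k$ on the prescribed line. Non-degeneracy $U''_{\boldsymbol\alpha}(\xi_k^*) \neq 0$ then reduces to a minor condition on an auxiliary bordered matrix controlled by $\det G \neq 0$. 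The trichotomy mirrors Theorem \ref{classification}: signature $(3,1)$ (hyperbolic) yields a single critical point on the line, degenerate signature (flat) yields a collapsed double root, and signature $(2,2)$ (AdS) yields precisely two.

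For the value $\mathrm{Im}\, U_{\boldsymbol\alpha}(\xi_k^*)$, I would view $f(l_1,\dots,l_6) = U_{\boldsymbol\alpha}(\xi_k^*(l_1,\dots,l_6))$ as a function of the edge lengths and differentiate. The critical equation eliminates the chain-rule contribution through $\xi^*$, leaving only the direct dependence of the $\tau_i$'s and $\eta_j$'s on the $l_k$'s. Extracting imaginary parts and comparing with Proposition \ref{CoSch} together with Remark \ref{extcov}, these derivatives match $-2\,\partial \widetilde{\mathrm{Cov}}/\partial l_k$ pointwise. Fixing the constant of integration by a limit into the degenerate flat regime of Theorem \ref{classification}, one obtains $\mathrm{Im}\, U_{\boldsymbol\alpha}(\xi_k^*) = -2\widetilde{\mathrm{Cov}}(l_1,\dots,l_6)$ for both $k = 1, 2$.

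Finally, the identity $\mathrm{Re}\, U_{\boldsymbol\alpha}(\xi_1^*) = -\mathrm{Re}\, U_{\boldsymbol\alpha}(\xi_2^*)$ reduces to $s_1 = -s_2$, equivalently $w_1 w_2 = 1$; this reciprocal relation comes from a palindromic symmetry of the polynomial $P$ which itself reflects the invariance of the setup under $\xi \mapsto 4\pi - \xi$, combined with an appropriate branch-tracking of $L$. The main obstacle I anticipate is step two: explicitly matching $P(w)$ with a Gram-matrix polynomial and carrying out the root-counting to produce exactly two positive real roots, as this is precisely where the geometric content of Theorem \ref{criterion} enters and separates the AdS regime from the hyperbolic and flat ones. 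Once this identification is in hand, the remaining pieces amount to routine differentiation and limit arguments.
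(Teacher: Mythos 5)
First, note that in this paper the statement is not reproved: it is quoted from \cite[Proposition 3.1 (3)]{LMSWY}, and what the paper records after it is the mechanism for locating $\xi_1^*,\xi_2^*$, namely that with $u_k=e^{2\mathbf i\alpha_k}=-e^{-l_k}$ the critical equation $e^{U_{\boldsymbol\alpha}'(\xi)}=1$ is, under $z=e^{-2\mathbf i\xi}$, a \emph{quadratic} equation $Az^2+Bz+C=0$ with real coefficients, with $A>0$, $B<0$ and $B^2-4AC=16\det\mathrm{Gram}(l_1,\dots,l_6)>0$ by Theorem \ref{criterion}, so that both roots $z_1^*,z_2^*$ are positive and give two critical points on the line $\mathrm{Re}\xi=2\pi$. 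Your first step is the same route in spirit, but the equation is quadratic, not quartic, and $\det G>0$ alone does not ``guarantee two positive real roots'': you also need the sign information $A>0$, $C>0$, $B<0$ (which here follows from $u_k=-e^{-l_k}$, e.g. $B=-4\sum_{i=1}^3\sinh l_i\sinh l_{i+3}<0$), since the discriminant only gives two real roots. Your Schl\"afli-type argument for $\mathrm{Im}\,U_{\boldsymbol\alpha}(\xi_k^*)=-2\widetilde{\mathrm{Cov}}$ (differentiate the critical value in the $l_k$, kill the chain-rule term by criticality, fix the constant at the flat degeneration) is consistent with how critical values are handled in the paper (compare the proof of Theorem \ref{ab}).

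The genuine gap is your last step. You reduce $\mathrm{Re}\,U_{\boldsymbol\alpha}(\xi_1^*)=-\mathrm{Re}\,U_{\boldsymbol\alpha}(\xi_2^*)$ to $s_1=-s_2$, i.e. $w_1w_2=1$, via a claimed palindromic symmetry of the critical-point polynomial and an invariance under $\xi\mapsto 4\pi-\xi$. Neither holds. Writing $\xi=2\pi+\mathbf i s$ gives $z=e^{2s}$, and the product of the two roots is $z_1^*z_2^*=C/A$, which is generically far from $1$: for instance for the regular case $l_1=\dots=l_6=l$ one gets $A=3e^{-2l}+4e^{-3l}+e^{-6l}$ and $C=3e^{2l}+4e^{3l}+e^{6l}$, so $C/A\gg 1$. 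Relatedly, the data $\tau_1,\dots,\tau_4,\eta_1,\eta_2,\eta_3$ all have strictly positive imaginary part (only $\eta_4=2\pi$ is real), so there is no symmetry of $U_{\boldsymbol\alpha}$ about the real axis or under $\xi\mapsto 4\pi-\xi$ that would force the critical points to sit symmetrically, nor would $\mathrm{Re}\,U_{\boldsymbol\alpha}(2\pi+\mathbf i s)$ be odd in $s$ even if they did. The identity $\mathrm{Re}\,U_{\boldsymbol\alpha}(\xi_1^*)=-\mathrm{Re}\,U_{\boldsymbol\alpha}(\xi_2^*)$ is a statement about the two critical \emph{values}, not the positions of the critical points; in the cited work it is obtained by analyzing the critical values themselves (in the present paper's language, both are identified with $\pm 2\mathrm{Cov}(\Delta)$, cf. Theorem \ref{ab} and the two-term cosine in Theorem \ref{cov}), so this part of your argument needs to be replaced rather than repaired.
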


In \cite{LMSWY} the critical points $\xi_1^*$ and $\xi_2^*$ are computed as follows: For $k\in\{1,\dots,6\},$ let $u_k=e^{2\mathbf i\alpha_k}.$ Then consider the following quadratic equation 
\begin{equation*}
Az^2+Bz+C=0
\end{equation*}
with \begin{equation*}
\begin{split}
A=&u_1u_4+u_2u_5+u_3u_6-u_1u_2u_6-u_1u_3u_5-u_2u_3u_4-u_4u_5u_6+u_1u_2u_3u_4u_5u_6,\\
B=&-\Big(u_1-\frac{1}{u_1}\Big)\Big(u_4-\frac{1}{u_4}\Big)-\Big(u_2-\frac{1}{u_2}\Big)\Big(u_5-\frac{1}{u_5}\Big)-\Big(u_3-\frac{1}{u_3}\Big)\Big(u_6-\frac{1}{u_6}\Big),\\
C=&\frac{1}{u_1u_4}+\frac{1}{u_2u_5}+\frac{1}{u_3u_6}-\frac{1}{u_1u_2u_6}-\frac{1}{u_1u_3u_5}-\frac{1}{u_2u_3u_4}-\frac{1}{u_4u_5u_6}+\frac{1}{u_1u_2u_3u_4u_5u_6}.\\
\end{split}
\end{equation*}
This quadratic equation is under the identification $z=e^{-2\mathbf i\xi}$ equivalent to the equation 
$e^{U'_{\boldsymbol\alpha}(\xi)}=1.$
Since $(l_1,\dots,l_6)$ are the lengths of a truncated hyperideal tetrahedron in $\mathbb A\mathrm d\mathbb S^3,$ as explained in \cite[Proof of Proposition 3.1 (3)]{LMSWY} and by Theorem \ref{criterion}, $A>0,$ $B<0$ and  $B^2-4AC=16\mathrm{Gram}(l_1,\dots,l_6)>0.$ Hence 
$z_1^*=\frac{-B+\sqrt{B^2-4AC}}{2A}>0$ and $z_2^*=\frac{-B-\sqrt{B^2-4AC}}{2A}>0.$ 
Then $\xi_1^*$ and $\xi_2^*$ are respectively  the complex numbers with $\mathrm{Re}\xi_1^*=\mathrm{Re}\xi_2^*=2\pi$  satisfying 
$e^{-2\mathbf i\xi^*_1}=z_1^*\quad\text{and}\quad e^{-2\mathbf i{\xi^*_2}}=z_2^*.$

\begin{theorem}\label{ab} Let $\Delta$  be a truncated hyperideal tetrahedron in $\ads^3$ with edge lengths $(l_1,\dots,l_6)$. Let   $U_{\boldsymbol\alpha}$ and  $\xi_1^*$ be as in Proposition \ref{Prop3.1}, and let 
$$W(l_1,\dots,l_6)=U_{\boldsymbol\alpha}(\xi_1^*).$$
Then 
\begin{enumerate}[(1)]
\item $\mathrm{Re}W(l_1,\dots,l_6)= {2\mathrm{Cov}(\Delta)},$ the anti-de Sitter co-volume of $\Delta,$  and 
\item $\mathrm{Im}W(l_1,\dots,l_6)=-2\widetilde{\mathrm{Cov}}(l_1,\dots,l_6),$ the extended co-volume of $(l_1,\dots,l_6)$ defined in \cite{LY}.
\end{enumerate}
\end{theorem}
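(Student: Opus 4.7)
Part (2) is an immediate consequence of Proposition \ref{Prop3.1}, so my focus is on establishing part (1). My plan is a Schl\"afli-type argument: I will show that $\tfrac{1}{2}\mathrm{Re}W(l_1,\dots,l_6)$ and $\mathrm{Cov}(\Delta)$, viewed as smooth functions on the open region of $\mathbb{R}^6_{>0}$ cut out by the signature $(2,2)$ condition on the Gram matrix (which is exactly the region of edge lengths of truncated hyperideal AdS tetrahedra, by Theorem \ref{criterion}), have identical partial derivatives in each $l_k$, and then pin down the integration constant by a limiting argument to the flat boundary.

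For the derivative calculation, since $\xi_1^*$ is a non-degenerate critical point of $\xi\mapsto U_{\boldsymbol\alpha}(\xi)$, the envelope theorem gives
$$\frac{\partial W}{\partial l_k} = \frac{\mathbf i}{2}\cdot \frac{\partial U_{\boldsymbol\alpha}}{\partial \alpha_k}\bigg|_{\xi=\xi_1^*},$$
using $\alpha_k=(\pi+\mathbf i l_k)/2$. From (\ref{U}) and the identity $L'(x)=2x-\pi+2\mathbf i\log(1-e^{2\mathbf i x})$, the partial $\partial U_{\boldsymbol\alpha}/\partial\alpha_k$ can be written as an explicit combination of polynomial terms and logarithms of $1-e^{2\mathbf i(\xi_1^*-\tau_i)}$ and $1-e^{2\mathbf i(\eta_j-\xi_1^*)}$; the critical-point equation $\sum_i L'(\xi_1^*-\tau_i)=\sum_j L'(\eta_j-\xi_1^*)$ allows further simplification. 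Taking the real part of $\tfrac{\mathbf i}{2}$ times this expression, I aim to recover $\mathrm{Im}\theta_k$, matching twice the right-hand side of $\partial\mathrm{Cov}/\partial l_k=\mathrm{Im}\theta_k/2$ from Proposition \ref{CoSch}.

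The hard part is this final identification: linking the analytic quantities evaluated at $\xi_1^*$ with the AdS dihedral angles of $\Delta$. The bridge is the quadratic equation $Az^2+Bz+C=0$ recalled just after Proposition \ref{Prop3.1}: its root $z_1^*=e^{-2\mathbf i\xi_1^*}$ has discriminant proportional to $\det\mathrm{Gram}(l_1,\dots,l_6)>0$ on the AdS locus, and its coefficients are elementary symmetric polynomials in $u_k=e^{2\mathbf i\alpha_k}=-e^{-l_k}$. Combined with Proposition \ref{cos}, which expresses $\cos\theta_k$ in terms of cofactors of the Gram matrix, the required identification becomes a purely algebraic manipulation. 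I expect this step to closely parallel the analogous computation for hyperbolic tetrahedra in \cite{LMSWY}, with sign adjustments reflecting the positivity of $\det\mathrm{Gram}$ and the specific form $\theta_k\in(\pi-\mathbf i\mathbb R_{>0})\cup \mathbf i\mathbb R_{>0}$ of AdS dihedral angles.

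Finally, to fix the integration constant, I take the limit as $(l_1,\dots,l_6)$ approaches the flat boundary $\{\det\mathrm{Gram}=0\}$. There the discriminant $B^2-4AC$ vanishes, so $\xi_1^*=\xi_2^*$, and the identity $\mathrm{Re}U_{\boldsymbol\alpha}(\xi_1^*)=-\mathrm{Re}U_{\boldsymbol\alpha}(\xi_2^*)$ from Proposition \ref{Prop3.1} forces $\mathrm{Re}W\to 0$. On the geometric side, the cofactor ratios $G_{ij}/\sqrt{G_{ii}G_{jj}}$ tend to $\pm 1$, so $\mathrm{Im}\theta_k\to 0$; together with $\mathrm{Vol}(\Delta)\to 0$ from Proposition \ref{cont}, this yields $\mathrm{Cov}(\Delta)\to 0$. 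Hence the integration constant vanishes, giving $\mathrm{Re}W=2\mathrm{Cov}(\Delta)$ throughout the AdS region.
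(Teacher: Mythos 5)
Your proposal is correct and follows essentially the same route as the paper: match $\partial(\mathrm{Re}W)/\partial l_k$ with $2\,\partial\mathrm{Cov}/\partial l_k=\mathrm{Im}\theta_k$ from Proposition \ref{CoSch} by differentiating at the non-degenerate critical point and identifying the result with the dihedral angle through the Gram cofactors of Proposition \ref{cos}, then fix the integration constant in the flat limit via Proposition \ref{cont} (with part (2) indeed already contained in Proposition \ref{Prop3.1}). The one step you leave as an expectation—the identification of $\tfrac{\mathbf i}{2}\,\partial U_{\boldsymbol\alpha}/\partial\alpha_k\big|_{\xi_1^*}$ with $-\mathbf i\theta_k$—is precisely what the paper settles by quoting \cite[Eq.\ (3.14)]{LMSWY} for $\partial W/\partial l_k$ in terms of cofactors and then checking case by case that $\bigl(\cos\theta_k-\sqrt{\cos^2\theta_k-1}\bigr)/\bigl(\cos\theta_k+\sqrt{\cos^2\theta_k-1}\bigr)=e^{2\mathbf i\theta_k}$, so it goes through exactly as you anticipate.
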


\begin{remark}\label{3.4}
By Remark \ref{ccov}, Theorem \ref{ab} can be re-stated in terms of the genuine co-volume as 
$$W(l_1,\dots,l_6)=-2\mathbf i\mathrm{Cov}_{\mathbb C}(\Delta).$$
Compare with \cite[Proposition 3.1 (1)]{LMSWY}, Proposition \ref{critical1} and Remark \ref{3.20}.
\end{remark}

As a direct consequence of Theorem \ref{ab} and Proposition \ref{CoSch}, the Schl\"afli formula, we have the following volume formula of a truncated hyperideal tetrahedron in $\ads^3$ in terms of the edge lengths.

\begin{theorem}\label{volume}
Let $\Delta$ be a truncated hyperideal tetrahedron in $\ads^3$ with edge lengths $(l_1,\dots ,l_6).$ Then
$$\mathrm{Vol}(\Delta)= {\frac{1}{2}}\Bigg(\mathrm{Re}W(l_1,\dots,l_6)-\sum_{k=1}^6\frac{\partial \mathrm{Re}W}{\partial l_k}(l_1,\dots,l_6)\cdot l_k\Bigg).$$
\end{theorem}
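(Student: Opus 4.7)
The plan is to combine Theorem \ref{ab}, which identifies $\mathrm{Re}W(l_1,\dots,l_6)$ with twice the anti-de Sitter co-volume, with the Schl\"afli-type formula in Proposition \ref{CoSch} that controls the derivatives of $\mathrm{Cov}(\Delta)$ in terms of the imaginary parts of the dihedral angles. The result should then be a completely mechanical rearrangement of the definition of co-volume.

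More precisely, I would proceed as follows. First, by Theorem \ref{ab},
$$\mathrm{Cov}(\Delta)=\frac{1}{2}\,\mathrm{Re}W(l_1,\dots,l_6),$$
so it suffices to show that
$$\mathrm{Vol}(\Delta)=\mathrm{Cov}(\Delta)-\frac{1}{2}\sum_{k=1}^6\frac{\partial\mathrm{Re}W}{\partial l_k}(l_1,\dots,l_6)\cdot l_k.$$
Next, I would differentiate the identity $\mathrm{Re}W=2\,\mathrm{Cov}(\Delta)$ with respect to $l_k$ and apply Proposition \ref{CoSch} to obtain
$$\frac{\partial \mathrm{Re}W}{\partial l_k}=2\cdot\frac{\partial \mathrm{Cov}(\Delta)}{\partial l_k}=\mathrm{Im}\theta_k.$$
Substituting this into the right hand side, I would recognize the definition of the co-volume from Definition \ref{vol-cov}:
$$\mathrm{Cov}(\Delta)-\frac{1}{2}\sum_{k=1}^6\mathrm{Im}\theta_k\cdot l_k=\mathrm{Vol}(\Delta),$$
which finishes the proof.

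There is essentially no obstacle: the content of the theorem lies entirely in Theorem \ref{ab} (identifying $\tfrac12\mathrm{Re}W$ with the AdS co-volume) and the Schl\"afli formula (Proposition \ref{CoSch}); once those are in hand, the volume formula is just the Legendre-transform-type rewriting of the co-volume as a function of edge lengths. The only point worth a sentence of justification is that differentiation of $\mathrm{Re}W$ with respect to $l_k$ commutes with taking real parts and that one may indeed apply Proposition \ref{CoSch} pointwise (it holds on the open set of edge-length six-tuples giving genuine AdS hyperideal tetrahedra, which by Theorem \ref{criterion} is open in $\mathbb R_{>0}^6$, so $\mathrm{Cov}$ is smooth there).
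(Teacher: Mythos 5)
Your proposal is correct and follows essentially the same route as the paper, which derives Theorem \ref{volume} as a direct consequence of Theorem \ref{ab} together with the Schl\"afli formula of Proposition \ref{CoSch} and the definition of the co-volume. The rearrangement you describe (and the remark that $\partial\,\mathrm{Re}W/\partial l_k=\mathrm{Im}\theta_k$, which also follows from the identity $\partial W/\partial l_k=-\mathbf i\theta_k$ established in the proof of Theorem \ref{ab}) is exactly the intended argument.
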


\begin{proof}[Proof of Theorem \ref{ab}] 
Let $(\theta_1,\dots,\theta_6)$ be the dihedral angles of $\Delta.$
We first prove that for each $k\in\{1,\dots, 6\},$ 
\begin{equation}\label{co-schlafli}
\frac{\partial W}{\partial l_k}=  {-}\mathbf i \theta_k.
\end{equation}
By the symmetry of $W$ in its variables, it suffices to prove the result for $k=1.$ From \cite[Eq. (3.14)]{LMSWY}, we have 
\begin{equation}\label{pW}
\frac{\partial W(l_1,\dots,l_6)}{\partial l_1}=-\frac{1}{2}\log\frac{G_{34}-\sqrt{G_{34}^2-G_{33}G_{44}}}{G_{34}+\sqrt{G_{34}^2-G_{33}G_{44}}};
\end{equation}
and we are left to show that the right hand side of (\ref{pW}) equals $-\mathbf i\theta_1.$ By Proposition \ref{cos} and (\ref{cos2}),  we have 
\begin{equation*}
 {\cos\theta_{ij}=\frac{G_{kl}}{\sqrt{G_{kk}G_{ll}}}},
\end{equation*}
where $\{k,l\}=\{1,2,3,4\}\setminus\{i,j\},$ which implies that 
$$\frac{G_{34}-\sqrt{G_{34}^2-G_{33}G_{44}}}{G_{34}+\sqrt{G_{34}^2-G_{33}G_{44}}}=\frac{\frac{G_{34}}{\sqrt{G_{33}G_{44}}}-\sqrt{\frac{G_{34}^2}{G_{33}G_{44}}-1}}{\frac{G_{34}}{\sqrt{G_{33}G_{44}}}+\sqrt{\frac{G_{34}^2}{G_{33}G_{44}}-1}}=\frac{\cos \theta_1-\sqrt{\cos^2\theta_1-1}}{\cos \theta_1+\sqrt{\cos^2\theta_1-1}}.$$
Then by a case-by-case discussion for  the case $\theta_1=\mathbf i\varphi$ with $\varphi>0,$ and the case $\theta_1=\pi-\mathbf i\varphi$ with $\varphi>0,$ we have that 
$$\frac{\cos \theta_1-\sqrt{\cos^2\theta_1-1}}{\cos \theta_1+\sqrt{\cos^2\theta_1-1}}=e^{2\mathbf i\theta_1},$$
from which (\ref{co-schlafli}) follows. 

This implies that (1) and (2) hold respectively up to a constant. By checking the values for the edge lengths $(l_1,\dots,l_6)$ of a flat tetrahedron $\Delta,$ we have 
$$\mathrm{Re}W(l_1,\dots,l_6)=0=2\mathrm{Cov}(\Delta),$$
and 
$$\mathrm{Im}W(l_1,\dots,l_6)=-2\mathrm{Cov}(l_1,\dots,l_6)=-2\widetilde{\mathrm{Cov}}(l_1,\dots,l_6).$$
This, together with Proposition \ref{cont},  implies that the constants for (1) and (2) are both $0,$ and the result follows.  
\end{proof}

\begin{proof}[Proof of Theorem \ref{cov}] By \cite[Eq. (3.54)]{LMSWY}, we have 
\begin{equation*}\label{eq:ads}
\bigg\{\begin{matrix} \frac{Q}{2} +  \mathbf i\frac{l_1}{2\pi b} & \frac{Q}{2}+\mathbf i\frac{l_2}{2\pi b} & \frac{Q}{2}+\mathbf i\frac{l_3}{2\pi b} \\ \frac{Q}{2}+\mathbf i\frac{l_4}{2\pi b} & \frac{Q}{2}+ \mathbf i\frac{l_5}{2\pi b} & \frac{Q}{2}+\mathbf i\frac{l_6}{2\pi b} \end{matrix} \bigg\}_b=\frac{e^{\frac{-\widetilde{\mathrm{Cov}}(\boldsymbol l)}{\pi b^2}} }{\sqrt[4]{\det\mathrm{Gram}(\Delta)}} \Bigg(2\cos\bigg(\frac{\mathrm{Re}U_{\boldsymbol\alpha}(\xi_1^*)}{2\pi b^2}+\frac{\pi}{4}\bigg) +O\big(b^2\big)\Bigg);
\end{equation*}
and by Theorem \ref{ab} (1), the result follows. 
\end{proof}

As an immediate consequence of Theorem \ref{cov} and Proposition \ref{reflection}, the reflection symmetry, we have 

\begin{theorem}\label{covpm} Let $(l_1,\dots, l_6)\in\mathbb R_{>0 }^6$ be the edge lengths of a truncated  hyperideal anti-de Sitter tetrahedron $\Delta.$ Then  as $b\to 0,$
\begin{equation*}\label{AdS/CFT}
\bigg\{\begin{matrix} \frac{Q}{2} \pm  \mathbf i\frac{l_1}{2\pi b} & \frac{Q}{2}\pm \mathbf i\frac{l_2}{2\pi b} & \frac{Q}{2}\pm \mathbf i\frac{l_3}{2\pi b} \\ \frac{Q}{2}\pm \mathbf i\frac{l_4}{2\pi b} & \frac{Q}{2}\pm  \mathbf i\frac{l_5}{2\pi b} & \frac{Q}{2}\pm \mathbf i\frac{l_6}{2\pi b} \end{matrix} \bigg\}_b=\frac{e^{\frac{-\widetilde{\mathrm{Cov}}(l_1,\dots,l_6)}{\pi b^2}} }{\sqrt[4]{\det\mathrm{Gram}(\Delta)}} \Bigg(2\cos\bigg( {\frac{\mathrm{Cov}(\Delta)}{\pi b^2}}+\frac{\pi}{4}\bigg) +O\big(b^2\big)\Bigg),
\end{equation*}
where in each entry of the $b$-$6j$ symbol the sign $+$ and $-$ can be chosen arbitrarily, $\widetilde{\mathrm{Cov}}(l_1,\dots,l_6)$ is the extended co-volume function of $(l_1,\dots,l_6)$ defined in \cite{LY}, $\mathrm{Cov}(\Delta)$ is the anti-de Sitter co-volume of $\Delta$, and $\mathrm{Gram}(\Delta)$ is the Gram matrix of $\Delta$ in the edge lengths.
\end{theorem}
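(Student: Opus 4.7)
The plan is to reduce the statement to Theorem \ref{cov} via the reflection symmetry of Proposition \ref{reflection}. The key observation is that for $a_k = \frac{Q}{2} + \mathbf{i}\frac{l_k}{2\pi b}$, we have $Q - a_k = \frac{Q}{2} - \mathbf{i}\frac{l_k}{2\pi b}$. Thus each sign flip from $+$ to $-$ in the $k$-th entry of the $b$-$6j$ symbol is precisely the reflection $a_k \mapsto Q - a_k$.

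First I would verify that the six-tuple with arbitrary $\pm$ signs is $b$-admissible. The six-tuple with all $+$ signs is $b$-admissible (this is used implicitly in the statement of Theorem \ref{cov}), and by Proposition \ref{reflection} the substitution $a_k \mapsto Q - a_k$ preserves $b$-admissibility. Iterating this substitution over those indices where the sign is $-$ shows that the six-tuple with arbitrary signs is $b$-admissible, so the $b$-$6j$ symbol on the left-hand side is well-defined.

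Next I would apply Proposition \ref{reflection} sequentially: for each index $k \in \{1,\dots,6\}$ where the chosen sign is $-$, the reflection transforms that entry from $\frac{Q}{2} - \mathbf{i}\frac{l_k}{2\pi b}$ to $\frac{Q}{2} + \mathbf{i}\frac{l_k}{2\pi b}$ without changing the value of the $b$-$6j$ symbol. After processing all such indices, the resulting expression is precisely the $b$-$6j$ symbol with all $+$ signs, whose semi-classical asymptotics is exactly Theorem \ref{cov}. Composing these two inputs yields the claimed formula, with the same extended co-volume, co-volume, and Gram determinant appearing on the right-hand side (each of these quantities depends only on $(l_1,\dots,l_6)$ and on $\Delta$, not on the sign choices).

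There is essentially no obstacle: the result is a transparent corollary, and the only mild check is that the intermediate six-tuples produced by partial reflections remain $b$-admissible, which is handled automatically by Proposition \ref{reflection}. The tetrahedral symmetry of Proposition \ref{tetra} is not needed here because the reflection can be performed entry-by-entry.
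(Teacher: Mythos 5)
Your proposal is correct and matches the paper's argument: the paper derives Theorem \ref{covpm} exactly as an immediate consequence of Theorem \ref{cov} together with Proposition \ref{reflection}, using that flipping the sign in the $k$-th entry is the reflection $a_k\mapsto Q-a_k$. Your additional remark on $b$-admissibility of the intermediate six-tuples is already built into Proposition \ref{reflection}, so nothing further is needed.
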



\subsection{Asymptotics in the dihedral angles}\label{da}

The goal of this subsection is to prove Theorem \ref{vol3} and Theorem \ref{vol2}. The main tool is the following Saddle Point Approximation.

\begin{proposition}\cite[Proposition 6.1]{WY2}\label{saddle}
Let $D$ be a region in $\mathbb C$ and let $f(z)$ and $g(z)$ be holomorphic functions on $D$. Let $f_\hbar(z)$ be a holomorphic function of the form
$$ f_\hbar(z) = f(z) + \upsilon_\hbar(z)\hbar^2.$$
Let $S$ be a curve in $D$ and let $c$ be a point on $S.$ Suppose
\begin{enumerate}[(i)]
\item $c$ is a critical point of $f$ in $D,$
\item $\mathrm{Re}(f)(c) > \mathrm{Re}(f)(z)$ for all $z \in S\setminus \{c\},$
\item $f''(c)\neq 0$, 
\item $g(c) \neq 0,$ 
\item $|\upsilon_\hbar(z)|$ is bounded from above by a constant independent of $\hbar$ in $D,$ and
\item $S$ is smooth  around $c.$ 
\end{enumerate}
Then
\begin{equation*}
\int_S g(z) e^{\frac{f_\hbar(z)}{\hbar}} dz= \big(2\pi \hbar\big)^{\frac{1}{2}}\frac{g(c)}{\sqrt{-f''(c)}}e^{\frac{f(c)}{\hbar}} \Big(1 + O \big(\hbar\big)\Big).
 \end{equation*}
\end{proposition}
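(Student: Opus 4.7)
The plan is to apply the classical Laplace/saddle point method in the complex analytic category, treating the perturbation $\upsilon_\hbar(z)\hbar^2$ and the non-canonical shape of the contour $S$ as small corrections to a model Gaussian integral.

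\textbf{Step 1: Localization.} By hypothesis (ii) and continuity, for every $\delta>0$ there exists $\eta=\eta(\delta)>0$ with $\mathrm{Re}(f)(z)\leqslant \mathrm{Re}(f)(c)-\eta$ for $z\in S$ with $|z-c|\geqslant\delta$. Combined with the uniform bound (v) on $\upsilon_\hbar$, the tail $\int_{S\setminus B_\delta(c)}g(z)e^{f_\hbar(z)/\hbar}\,dz$ is $O\big(e^{(\mathrm{Re}f(c)-\eta)/\hbar}\big)$, which is exponentially smaller than the claimed main term and therefore absorbed in the $O(\hbar)$ error (under the standing assumption that the integral over $S$ converges absolutely, as it does in all applications in this paper).

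\textbf{Step 2: Holomorphic Morse normal form.} Since $f''(c)\neq 0$, the holomorphic Morse lemma produces a biholomorphism $z\mapsto w(z)$ from a neighborhood of $c$ onto a neighborhood of $0$ with $w(c)=0$ and
$$f(z)-f(c)=-\tfrac{1}{2}w^{2}.$$
Differentiating twice at $z=c$ gives $w'(c)^{2}=-f''(c)$, so $\frac{dz}{dw}\big|_{w=0}=\frac{1}{\sqrt{-f''(c)}}$. The branch of the square root is fixed by the orientation of $S$ near $c$ and hypothesis (ii): the image $\tilde S=w(S\cap B_\delta(c))$ must pass through $0$ in a direction along which $\mathrm{Re}(-w^2/2)\leqslant 0$, i.e.\ close to the real $w$-axis, which is the steepest-descent direction.

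\textbf{Step 3: Deformation and Gaussian evaluation.} In the $w$-coordinate the integrand is holomorphic on a disc $\{|w|<r\}$, so by Cauchy's theorem $\tilde S$ can be deformed to the segment $[-r,r]\subset\mathbb{R}$, with endpoint arcs producing only exponentially small contributions because $\mathrm{Re}(-w^2/2)<0$ on them. Using (v) to write $e^{\upsilon_\hbar(z(w))\hbar}=1+O(\hbar)$ uniformly on the disc and Taylor expanding,
$$g(z(w))\,\frac{dz}{dw}=\frac{g(c)}{\sqrt{-f''(c)}}+O(w),$$
the integral becomes
$$e^{f(c)/\hbar}\int_{-r}^{r}\left[\frac{g(c)}{\sqrt{-f''(c)}}+O(w)\right]e^{-w^{2}/(2\hbar)}\,dw\cdot\big(1+O(\hbar)\big).$$
Extending the integration to $\mathbb{R}$ (with an exponentially small error), invoking $\int_{-\infty}^{\infty}e^{-w^{2}/(2\hbar)}dw=\sqrt{2\pi\hbar}$, and observing that the linear $O(w)$ correction vanishes by parity and leaves only an $O(\hbar^{3/2})$ remainder, yields the claimed asymptotic.

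\textbf{Main obstacle.} The principal delicacy is the contour deformation in Step 3: one must check that $\tilde S$ and $[-r,r]$ are genuinely homotopic within $\{|w|<r\}$ through curves on which $\mathrm{Re}(-w^{2}/2)\leqslant 0$, so that the connecting arcs near $|w|=r$ give exponentially small boundary contributions rather than order-one corrections. Hypothesis (vi) guarantees that $\tilde S$ has a well-defined tangent at $0$, and condition (ii) forces this tangent to lie in the open sector $\{\arg w\in(-\pi/4,\pi/4)\cup(3\pi/4,5\pi/4)\}$, so $\tilde S$ is transverse to the ``ascending'' imaginary $w$-axis and the homotopy to the real axis is unobstructed. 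Once this geometric setup is verified, the remaining analytic steps are routine.
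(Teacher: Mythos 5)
The paper does not actually prove Proposition \ref{saddle}: it is imported verbatim from \cite[Proposition 6.1]{WY2}, and in this article it is only applied, to the compact arcs $\Gamma^*_d$ in the proofs of Theorems \ref{vol3} and \ref{vol2}. Your argument is the standard steepest-descent proof — localization, holomorphic Morse coordinate $f(z)-f(c)=-\tfrac12 w^2$, deformation to the real $w$-axis, Gaussian evaluation with the linear term killed by parity — which is in substance the same saddle-point computation as in the cited source, so as a blind reconstruction it is the right proof, and it does yield the stated relative error $O(\hbar)$ rather than $O(\hbar^{1/2})$ thanks to the parity cancellation you invoke.

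Two points need tightening. First, in Step 1 the uniform gap $\eta(\delta)$ does not follow from hypothesis (ii) and continuity alone unless $S\setminus B_\delta(c)$ is compact; absolute convergence of the integral is not enough, since $\mathrm{Re}f$ could approach $\mathrm{Re}f(c)$ along a non-compact tail, in which case the tail need not be exponentially subdominant and the asserted expansion can fail. The proposition is stated and used here for compact curves (the unbounded parts of the actual contours are estimated separately in parts (II) and (III) of the proofs of Theorems \ref{vol3} and \ref{vol2}), so you should either take $S$ compact as a standing hypothesis or assume $\sup_{S\setminus B_\delta(c)}\mathrm{Re}f<\mathrm{Re}f(c)$. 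Second, your ``main obstacle'' discussion is slightly off: condition (ii) only places $\tilde S\setminus\{0\}$ in the open double sector where $\mathrm{Re}(w^2)>0$, hence its tangent at $0$ in the \emph{closed} sector; the tangent may lie on the boundary $\arg w=\pm\pi/4$, so the claim that (ii) forces it into the open sector is not correct as stated. This does not harm the argument: Cauchy's theorem requires only holomorphy of the integrand on the disc, not that the homotopy stay in $\{\mathrm{Re}(-w^2/2)\le 0\}$, and the connecting arcs are exponentially small because the endpoints of the truncated curve lie strictly inside the open sectors, so along circular arcs joining them to the real axis one has $\mathrm{Re}(-w^2/2)$ bounded above by a negative constant independent of $\hbar$. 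With these repairs the proof is complete.
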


\subsubsection{The hyperbolic case}

Let $(\theta_1,\dots, \theta_6)$ be the dihedral angles of a truncated hyperideal hyperbolic tetrahedron $\Delta;$ 
and for $k\in\{1,\dots,6\}$, let $a_k =\frac{Q}{2} + \frac{\theta_k}{2\pi b}$.
Then by \cite{BB}, for $i,j\in\{1,2,3,4\},$ 
$$\frac{b}{2}< q_j-t_i < Q-\frac{b}{2},$$
and the six-tuple  $(a_1,\dots,a_6)$ is $b$-admissible. By Definition \ref{b6j}, the 
$b$-$6j$ symbol of $(a_1,\dots,a_6)$ is computed by
\begin{equation}\label{fk3}
\bigg\{\begin{matrix} a_1 & a_2 & a_3 \\ a_4 & a_5 & a_6 \end{matrix} \bigg\}_b=\Bigg(\frac{1}{\prod_{i=1}^4\prod_{j=1}^4S_b(q_j-t_i)}\Bigg)^{\frac{1}{2}}\int_\Gamma \prod_{i=1}^4S_b(u-t_i)\prod_{j=1}^4S_b(q_j-u)d u,
\end{equation}
where the contour $\Gamma$ is any vertical line passing the interval $(\max\{t_1,t_2,t_3,t_4\},\min\{q_1,q_2,q_3,q_4\})$. See Figure \ref{Da2}.

\begin{figure}[htbp]
\centering
\includegraphics[scale=0.4]{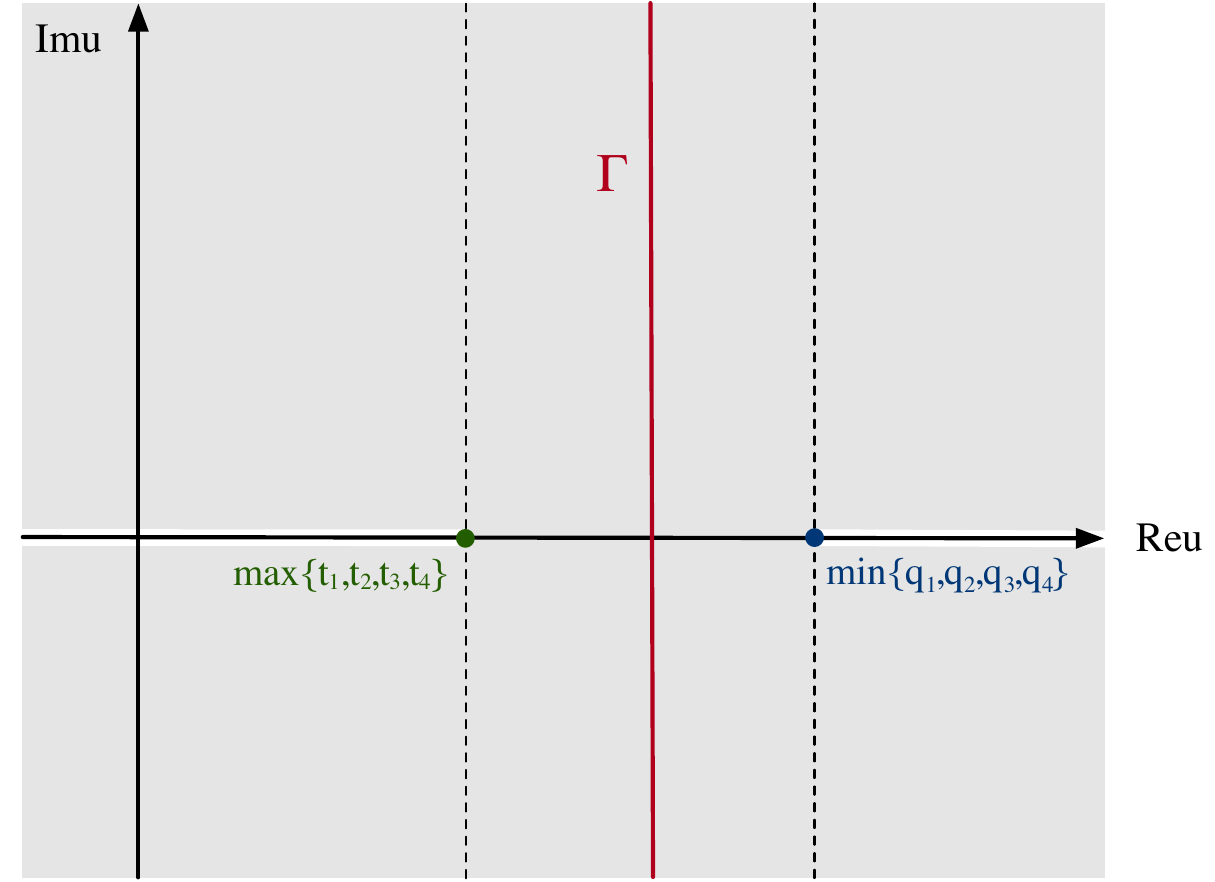}
\caption{Contour $\Gamma$ and possible zeros and poles (located in the white rays) of the integrand in~\eqref{fk3}.}
\label{Da2}
\end{figure}

To prove Theorem \ref{vol3},  we have by  (\ref{6jint}) that 
\begin{equation}
\bigg\{\begin{matrix} a_1 & a_2 & a_3 \\ a_4 & a_5 & a_6 \end{matrix} \bigg\}_b=\frac{1}{\pi b}\int_{\Gamma}\exp\bigg(\frac{U_{\boldsymbol \alpha}(\xi)+\kappa_{\boldsymbol \alpha}(\xi)b^2+
\nu_{\boldsymbol \alpha,b}(\xi)b^4}{2\pi \mathbf ib^2} \bigg)d\xi,
\end{equation}
where $\Gamma$ is a vertical line passing  the real axis in the interval $(\max\{\tau_1,\tau_2,\tau_3,\tau_4\},\min\{\eta_1,\eta_2,\eta_3,\eta_4\})$, and $U_{\boldsymbol\alpha}$, $\kappa_{\boldsymbol\alpha}$ and $\nu_{\boldsymbol\alpha,b}$ are respectively as defined in (\ref{U}), (\ref{kappa}) and (\ref{nuU}).

\begin{proposition}\label{critical1} Let $(\theta_1,\dots, \theta_6)$ be the dihedral angles of a truncated hyperideal hyperbolic  tetrahedron $\Delta$. Then the function $U_{\boldsymbol \alpha}(\xi)$ has a unique critical point $\xi^*$ in the domain
$$D=\Big\{ \xi\in\mathbb C\ \Big|\ \max\{\tau_1,\tau_2,\tau_3,\tau_4\}< \mathrm{Re}\xi< \min\{\eta_1,\eta_2,\eta_3,\eta_4\} \Big\}.$$
Moreover, $\xi^*\in\mathbb R$ and 
$$U_{\boldsymbol \alpha}(\xi^*)=-2\mathbf i\mathrm{Vol}(\Delta).$$
\end{proposition}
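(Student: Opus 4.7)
The argument follows the strategy of the proof of Proposition~3.1 in \cite{LMSWY}, adapted to the dihedral-angle regime. First, using the identity $L'(x) = 2\mathbf{i}\log(2\sin x)$ (valid on the principal branch), the equation $U'_{\boldsymbol\alpha}(\xi)=0$ becomes, upon exponentiation, $\prod_{i=1}^{4}\sin(\xi-\tau_{i}) = \prod_{j=1}^{4}\sin(\eta_{j}-\xi)$. Under the substitutions $z=e^{-2\mathbf{i}\xi}$ and $u_{k}=e^{2\mathbf{i}\alpha_{k}}$, and after expansion and cancellation of two common factors, this reduces to the same quadratic $Az^{2}+Bz+C=0$ displayed after Proposition~\ref{Prop3.1}, with coefficients now evaluated at $u_{k}=-e^{\mathbf{i}\theta_{k}}$ in place of the $u_{k}=-e^{-l_{k}}$ that appears in the edge-length case.

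Next I analyze the roots in the present regime. Since $\theta_{k}\in(0,\pi)$, each $u_{k}$ lies on the unit circle, and the symmetry $A|_{u\to u^{-1}}=C$ yields $\overline{A}=C$ and $B\in\mathbb{R}$. A direct identity, parallel to the one used in \cite{LMSWY}, gives $B^{2}-4AC = 16\det\mathrm{Gram}(\Delta)$, where $\mathrm{Gram}(\Delta)$ is the Gram matrix in the dihedral angles. This determinant is negative under our hypothesis, since the Gram matrix of a hyperbolic tetrahedron in the dihedral angles has signature $(3,1)$. Hence both roots of the quadratic lie on the unit circle: writing $z=e^{\mathbf{i}\psi}$ with $\psi\in\mathbb{R}$ reduces the equation to the real linear relation $2\mathrm{Re}(Ae^{\mathbf{i}\psi})=-B$, admitting exactly two solutions modulo $2\pi$. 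Each yields a real $\xi$, and a range check against the strip $(\max\tau_{i},\min\eta_{j})$ leaves exactly one inside $D$. Uniqueness throughout $D$ (not merely $D\cap\mathbb{R}$) follows from the conjugation symmetry of the equation: non-real critical points would come in complex-conjugate pairs, which the root count precludes.

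Finally, I identify $U_{\boldsymbol\alpha}(\xi^{*})$ with $-2\mathbf{i}\mathrm{Vol}(\Delta)$ via the Schl\"afli formula. By the envelope theorem, $\partial_{\theta_{k}}[U_{\boldsymbol\alpha}(\xi^{*})]=\tfrac{1}{2}(\partial_{\alpha_{k}}U_{\boldsymbol\alpha})(\xi^{*})$, and the Gram-cofactor identity used in the proof of Theorem~\ref{ab} identifies this derivative with $\mathbf{i}l_{k}$, where $l_{k}$ is the $k$-th edge length of $\Delta$. The classical hyperbolic Schl\"afli formula $\partial_{\theta_{k}}\mathrm{Vol}(\Delta)=-l_{k}/2$ then shows that $U_{\boldsymbol\alpha}(\xi^{*})+2\mathbf{i}\mathrm{Vol}(\Delta)$ is independent of the dihedral angles, and evaluating at a degenerate flat limit (where both sides vanish by direct inspection) pins the constant to zero. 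The main obstacle of the argument is the discriminant identity $B^{2}-4AC=16\det\mathrm{Gram}(\Delta)$ in the dihedral-angle variables; its edge-length analog underlies the AdS case of \cite{LMSWY}, but adapting it to the present setting requires careful branch-tracking.
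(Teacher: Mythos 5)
Your reduction of the critical point equation to the quadratic $Az^{2}+Bz+C=0$, the observation that $\bar A=C$, $B\in\mathbb R$, and $B^{2}-4AC=16\det\mathrm{Gram}(\Delta)<0$ force both roots onto the unit circle (so that any critical point of $U_{\boldsymbol\alpha}$ in $D$ must be real), is sound, and the discriminant identity you worry about is not the real obstacle --- it is exactly the computation in Proposition \ref{Hess}, taken from \cite{U}. The genuine gap is the step ``a range check against the strip $(\max\tau_i,\min\eta_j)$ leaves exactly one inside $D$.'' This is precisely the content of the proposition, and the quadratic gives you no more than \emph{at most two} candidate real points modulo $\pi$: nothing in your argument shows that at least one of them lands in the interval (existence), nor that not both do (uniqueness), and your conjugate-pair remark cannot exclude a second \emph{real} critical point. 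There is also an unaddressed branch issue: $U'_{\boldsymbol\alpha}$ equals $2\mathbf i\log(\cdots)$ only modulo $4\pi$, so a root of the quadratic lying in the strip only guarantees $U'_{\boldsymbol\alpha}(\xi)\in 4\pi\mathbb Z$; on the real interval this can be repaired by noting that there $U'_{\boldsymbol\alpha}=-2\mathbf i V'_{\boldsymbol\alpha}$ is purely imaginary, but you never make this observation, and in the paper's anti-de Sitter analogue (Proposition \ref{critical2}) closing exactly this loop required the special-value computation $k=0$, $k'=2$ together with Proposition \ref{PL}. The paper's proof of Proposition \ref{critical1} avoids all of this by exploiting that the data are real: on the real interval $U_{\boldsymbol\alpha}=-2\mathbf i V_{\boldsymbol\alpha}$ with $V_{\boldsymbol\alpha}$ a real Lobachevsky-function sum, $V'_{\boldsymbol\alpha}\to+\infty$ and $-\infty$ at the two endpoints and $V''_{\boldsymbol\alpha}>0$ (the $\cot\alpha+\cot\beta>0$ argument), so existence and uniqueness on the interval follow from the intermediate value theorem and strict convexity, and Proposition \ref{decreasing} rules out non-real critical points in $D$. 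You need an argument of this strength, or a special-value-plus-continuity argument as in the AdS case, to replace the ``range check.''

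On the critical value, your Schl\"afli-plus-flat-limit route parallels what the paper does in the anti-de Sitter setting and could be carried out, but as written it is only a sketch: the envelope-theorem derivative must be matched with $\mathbf i\,l_k$ through the cofactor identity for the \emph{hyperbolic} Gram matrix in the dihedral angles, the constant can only be propagated if the space of angle data is connected, and the flat degeneration requires a continuity statement for the volume and for $\xi^{*}$ (the analogue of Proposition \ref{cont}); for hyperbolic hyperideal tetrahedra the flat limit sits on the boundary of the Bao--Bonahon angle space, so ``both sides vanish by direct inspection'' needs justification. The paper instead observes $\mathrm{Re}\,U_{\boldsymbol\alpha}(\xi^{*})=0$ directly and cites \cite{C} for $\mathrm{Im}\,U_{\boldsymbol\alpha}(\xi^{*})=-2V_{\boldsymbol\alpha}(\xi^{*})=-2\mathrm{Vol}(\Delta)$, which is shorter once the real-interval picture is in place.
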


\begin{proof} We first show that $U_{\boldsymbol \alpha}$ has a unique critical point $\xi^*$ on the interval $(\min\{\tau_i\},\min\{\eta_j\})$ following the argument in \cite{C}. 

By the functional property of the dilogarithm function that 
$\mathrm{Li}_2\big(e^{2\mathbf i\theta}\big)=\frac{\pi^2}{6}+\theta(\theta-\pi)+2\mathbf i\Lambda(\theta)$  for $\theta\in(0,\pi),$ 
where $\Lambda$ is the Lobachevsky function defined by $\Lambda(\theta)=-\int_{0}^\theta\ln|2\sin t|dt,$ we have for $\theta\in(0,\pi),$
\begin{equation}\label{L}
L(\theta)=-2\mathbf i\Lambda(\theta).
\end{equation}
As a consequence, we have for $\xi\in(\max\{\tau_i\},\min\{\eta_j\}),$ 
$$U_{\boldsymbol \alpha}(\xi)=-2\mathbf i V_{\boldsymbol \alpha}(\xi),$$
where 
$$V_{\boldsymbol \alpha}(\xi)=-\frac{1}{2}\sum_{i=1}^4\sum_{j=1}^4\Lambda(\eta_j-\tau_i)+\sum_{i=1}^4\Lambda(\xi-\tau_i)+\sum_{j=1}^4\Lambda(\eta_j-\xi).$$
As
$$V_{\boldsymbol \alpha}'(\xi)=-\sum_{i=1}^4\ln|2\sin(\xi-\tau_i)|+\sum_{j=1}^4\ln|2\sin(\eta_j-\xi)|,$$
we have
\begin{equation}\label{r}
\lim_{\xi\to\min\{\eta_j\}}V_{\boldsymbol \alpha}'(\xi)=-\infty
\end{equation}
and
\begin{equation}\label{l}
\lim_{\xi\to\max\{\tau_i\}}V_{\boldsymbol \alpha}'(\xi)=+\infty;
\end{equation}
and as 
\begin{equation}\label{convex}
V_{\boldsymbol \alpha}''(\xi)=\sum_{i=1}^4 \big(\cot(\xi-\tau_i)+\cot(\eta_i-\xi)\big)>0,
\end{equation}
we have that $V_{\boldsymbol \alpha}$ is strictly concave up in $(\min\{\tau_i\},\min\{\eta_j\}).$ The inequality in (\ref{convex}) can be seen as follows: for $\alpha, \beta\in(0,\pi)$ with $\alpha+\beta<\pi,$ there is an Euclidean triangle with two of its angles $\alpha$ and $\beta.$ Let $C$ be the lengths of the side between $\alpha$ and $\beta$ and let $h_C$ be the height at this side, then $\cot \alpha+\cot\beta=\frac{h_C}{C}>0.$ Here we let $\alpha=\xi-\tau_i$ and $\beta=\eta_i-\xi,$ then $\alpha+\beta=\eta_i-\tau_i<\pi$ due to the fact that $(\theta_1,\dots,\theta_6)$ are the dihedral angles of a hyperideal tetrahedron.

Putting (\ref{L}), (\ref{r}), (\ref{l}) and (\ref{convex}) together, we have that $U_{\boldsymbol \alpha}$ has a unique critical point $\xi^*$ in the interval $(\min\{\tau_i\},\min\{\eta_j\}).$ 

Then Proposition \ref{decreasing} below implies that  $\xi^*$ is the unique critical point of $U_{\boldsymbol \alpha}$ on the entire domain $D.$

Finally, by (\ref{L}) and a direct computation, $\mathrm{Re}U_{\boldsymbol \alpha}(\xi^*)=0;$ and by  \cite{C}, $\mathrm{Im}U_{\boldsymbol \alpha}(\xi^*)=-2V_{\boldsymbol\alpha}(\xi^*)=-2\mathrm{Vol}(\Delta).$
 \end{proof}

\begin{proposition}\label{decreasing}
On the domain $D,$ $\mathrm{Im}U_{\boldsymbol \alpha}(\xi)$ is increasing in  $\mathrm{Im}\xi$ when $\mathrm{Im}\xi<0,$ and is decreasing in  $\mathrm{Im}\xi$ when $\mathrm{Im}\xi>0.$ As a consequence, $\xi^*$ achieves the maximum of $\mathrm{Im}U_{\boldsymbol \alpha}$ on the contour 
$$\Gamma^*=\Big\{\xi\in D\ \Big|\ \mathrm{Re}\xi = \mathrm{Re}\xi^*  \Big\}.$$ 
\end{proposition}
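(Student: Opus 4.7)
The plan is to reduce the monotonicity claim to the positivity of an elementary trigonometric sum, and then verify that positivity via a Fourier-type expansion. By the Cauchy--Riemann equations applied to the holomorphic function $U_{\boldsymbol\alpha}$,
\[
\partial_y\,\mathrm{Im}\,U_{\boldsymbol\alpha}(x+\mathbf iy)=\mathrm{Re}\,U_{\boldsymbol\alpha}'(x+\mathbf iy).
\]
Differentiating the formula (\ref{U}) for $U_{\boldsymbol\alpha}$ and using the identity $1-e^{2\mathbf iw}=-2\mathbf i\,e^{\mathbf iw}\sin w$ to simplify $L'(w)=2\mathbf i\log(2\sin w)$ on the relevant strip, I obtain
\[
U_{\boldsymbol\alpha}'(\xi)=2\mathbf i\Bigl[\sum_{i=1}^{4}\log\sin(\xi-\tau_i)-\sum_{j=1}^{4}\log\sin(\eta_j-\xi)\Bigr].
\]
For $\xi=x+\mathbf iy$ with $x\in(\max_i\tau_i,\min_j\eta_j)$, both $\sin((x-\tau_i)+\mathbf iy)=\sin(x-\tau_i)\cosh y+\mathbf i\cos(x-\tau_i)\sinh y$ and $\sin((\eta_j-x)-\mathbf iy)$ have strictly positive real part, so their principal arguments lie in $(-\pi/2,\pi/2)$ and equal $\arctan(\cot(x-\tau_i)\tanh y)$ and $-\arctan(\cot(\eta_j-x)\tanh y)$, respectively. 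Therefore
\[
\mathrm{Re}\,U_{\boldsymbol\alpha}'(x+\mathbf iy)=-2\,S(y),\qquad S(y):=\sum_{k=1}^{8}\arctan(\cot\theta_k\tanh y),
\]
where the eight angles $\theta_k\in(0,\pi)$ run over $\{x-\tau_i\}_{i=1}^{4}\cup\{\eta_j-x\}_{j=1}^{4}$ and satisfy $\sum_k\theta_k=\sum_j\eta_j-\sum_i\tau_i=2\pi$. The monotonicity claim is therefore equivalent to $\mathrm{sgn}\,S(y)=\mathrm{sgn}\,y$, and by the oddness $S(-y)=-S(y)$ it suffices to prove $S(y)>0$ for $y>0$.

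The strategy is to show that $S$ is strictly increasing on $[0,\infty)$. Note that $S(0)=0$ and $S(+\infty)=\sum_k(\pi/2-\theta_k)=4\pi-2\pi=2\pi>0$, so the endpoint values are already consistent. Writing $h=\sinh^2 y$ and simplifying the derivative,
\[
S'(y)=\frac{1}{2}\sum_{k=1}^{8}\frac{\sin 2\theta_k}{\sin^2\theta_k+h},
\]
so the task becomes to prove $\sum_{k}\sin(2\theta_k)/(\sin^2\theta_k+h)>0$ for every $h\geqslant 0$. At $h=0$ this is $2\sum_k\cot\theta_k>0$, which is precisely the inequality appearing in the proof of Proposition~\ref{critical1}: it follows from the Euclidean-triangle identity $\cot\alpha+\cot\beta>0$ for $\alpha+\beta<\pi$, applied to the cross pairs $(x-\tau_i,\eta_j-x)$ whose sum $\eta_j-\tau_i<\pi$ by $b$-admissibility.

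To extend positivity to all $h>0$ I plan to use the Fourier expansion
\[
\frac{\sin\varphi}{c-\cos\varphi}=2\sum_{m\geqslant 1}r^{m}\sin m\varphi,\qquad c:=1+2h>1,\quad r:=c-\sqrt{c^{2}-1}\in(0,1),
\]
applied to $\varphi_k=2\theta_k$. Since $\sin^2\theta+h=(c-\cos 2\theta)/2$, this recasts $2S'(y)$ as a positive multiple of $\sum_{m\geqslant 1}r^{m}\Sigma_{m}$, where $\Sigma_{m}:=\sum_{k=1}^{8}\sin(2m\theta_k)$. The main obstacle is that the individual coefficients $\Sigma_m$ need not be positive; the hard part of the proof is to obtain enough control over $\Sigma_m$ --- using the constraints $\eta_j-\tau_i\in(0,\pi)$ from $b$-admissibility together with the global identity $\sum_k\theta_k=2\pi$, and exploiting that $\Sigma_1$ itself is controlled by the same cross-pair argument that gives the $h=0$ case --- to ensure that the weighted series $\sum_{m\geqslant 1}r^m\Sigma_m$ remains strictly positive for all $r\in(0,1)$. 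This is the key technical step. Once $S(y)>0$ for $y>0$ is established, the final assertion is immediate: by Proposition~\ref{critical1}, $\xi^{*}\in\mathbb R$ lies on $\Gamma^{*}$, and the strict monotonicity of $\mathrm{Im}\,U_{\boldsymbol\alpha}$ along $\Gamma^{*}$ forces its maximum to be attained at $y=0$, i.e., at $\xi^{*}$.
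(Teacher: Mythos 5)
Your reduction is correct as far as it goes: the Cauchy--Riemann identity, the formula $L'(w)=2\mathbf i\log(2\sin w)$ on the strip, and the resulting identity $\frac{\partial\,\mathrm{Im}U_{\boldsymbol\alpha}}{\partial\,\mathrm{Im}\xi}=-2S(\mathrm{Im}\xi)$ are all fine, and indeed $-2S(y)=\sum_{i=1}^4\psi_i'(y)$, where the $\psi_i$ are exactly the paired functions used in the paper's proof; so the proposition is equivalent to $S(y)>0$ for $y>0$, as you say. The gap is in the only substantive step that remains. First, you never prove the positivity of $\sum_{m\geqslant 1}r^m\Sigma_m$ for all $r\in(0,1)$ --- you defer it as ``the key technical step,'' so the argument is incomplete as written. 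Second, the route you choose is strictly stronger than the proposition and is not expected to hold: you try to prove $S'(y)>0$ for all $y$, whereas only the sign of $S$ itself is needed. Already for a single pair $a_i=\mathrm{Re}\xi-\tau_i$, $b_i=\eta_i-\mathrm{Re}\xi$, the corresponding two terms of $S'$ combine, up to a positive factor, to $\sin(a_i+b_i)\big(\cos(a_i-b_i)\cosh^2y-\cos a_i\cos b_i\big)$ (this is exactly the computation in Lemma \ref{dd}), which becomes negative for large $\sinh^2 y$ whenever $|a_i-b_i|>\pi/2$. Correspondingly, your assertion that ``$\Sigma_1$ is controlled by the same cross-pair argument'' is unfounded: $\cot a+\cot b>0$ follows from $a+b<\pi$, but $\sin 2a+\sin 2b=2\sin(a+b)\cos(a-b)$ is positive only if in addition $|a-b|<\pi/2$, so $\Sigma_1=\sum_k\sin 2\theta_k$ has no a priori sign, and with it the small-$r$ (large-$h$) behaviour of your series; there is no evident mechanism forcing $\sum_m r^m\Sigma_m>0$.

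The statement can be rescued by proving the sign of $S$ pairwise rather than the sign of $S'$. Writing $S=\sum_{i=1}^4 s_i$ with $s_i(y)=\arctan(\cot a_i\tanh y)+\arctan(\cot b_i\tanh y)$, one has $s_i(0)=0$, $s_i(+\infty)=\pi-(\eta_i-\tau_i)>0$, and $s_i'$ is positive on $(0,l_0)$ and negative on $(l_0,\infty)$, with $l_0=+\infty$ when $|a_i-b_i|\leqslant\frac{\pi}{2}$; hence $s_i$ increases from $0$ on $(0,l_0)$ and then decreases while remaining above its positive limit, so $s_i(y)>0$ for every $y>0$, and summing gives $S(y)>0$. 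This is precisely the paper's argument (Lemma \ref{d}, Lemma \ref{dd} and the proof of Proposition \ref{decreasing}) rewritten in your variables, and it is what your Fourier-series step needs to be replaced by; the final deduction about $\Gamma^*$ and $\xi^*$ is then immediate, as you note.
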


The proof of Proposition \ref{decreasing} relies on the following lemmas.

\begin{lemma}\label{d} Let $L$ be the function defined in (\ref{eq:Lx}). Then
\begin{enumerate}[(1)] 

\item for $\alpha,\beta\in(0,\pi),$ we have $$ \frac{\partial}{\partial l}\Big|_{l=0}\Big( \mathrm{Im}L(\alpha+\mathbf il)+ \mathrm{Im}L(\beta-\mathbf il)\Big)=0;$$

\item   for $\alpha,\beta\in(0,\pi)$ with $\alpha+\beta<\pi,$  we have $$\lim_{l\to +\infty} \frac{\partial}{\partial l}\Big( \mathrm{Im}L(\alpha+\mathbf il) +\mathrm{Im}L(\beta-\mathbf il)\Big)<0,$$
and
  $$ \lim_{l\to -\infty} \frac{\partial }{\partial l}\Big(\mathrm{Im}L(\alpha+\mathbf il)+\mathrm{Im}L(\beta-\mathbf il)\Big)>0.$$

\end{enumerate}
\end{lemma}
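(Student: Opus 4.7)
The plan is to differentiate $L(x) = x^2 - \pi x + \frac{\pi^2}{6} - \mathrm{Li}_2(e^{2\mathbf{i}x})$ explicitly to obtain
\[ L'(x) = 2x - \pi + 2\mathbf{i}\log(1-e^{2\mathbf{i}x}) \]
via the identity $\frac{d}{dz}\mathrm{Li}_2(z) = -\log(1-z)/z$, and then to analyze $\mathrm{Re}L'$ along vertical lines. The chain rule yields
\[ \frac{d}{dl}\bigl(\mathrm{Im}L(\alpha+\mathbf{i}l) + \mathrm{Im}L(\beta-\mathbf{i}l)\bigr) = \mathrm{Re}L'(\alpha + \mathbf{i}l) - \mathrm{Re}L'(\beta - \mathbf{i}l), \]
so both assertions reduce to computing $\mathrm{Re}L'$ on the line $\{\alpha + \mathbf{i}l : l \in \mathbb{R}\}$ for $\alpha \in (0,\pi)$.

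For part (1) I would evaluate at $l = 0$ using the factorization $1 - e^{2\mathbf{i}\alpha} = 2\sin(\alpha)\,e^{\mathbf{i}(\alpha - \pi/2)}$, which is valid for $\alpha \in (0,\pi)$. This gives $\log(1 - e^{2\mathbf{i}\alpha}) = \log(2\sin\alpha) + \mathbf{i}(\alpha - \pi/2)$, so $L'(\alpha) = 2\mathbf{i}\log(2\sin\alpha)$ is purely imaginary, and consequently $\mathrm{Re}L'(\alpha) = \mathrm{Re}L'(\beta) = 0$.

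For part (2) I would use $\mathrm{Re}L'(\alpha + \mathbf{i}l) = (2\alpha - \pi) - 2\,\mathrm{Im}\log(1 - e^{2\mathbf{i}\alpha - 2l})$. As $l \to +\infty$ the exponential tends to $0$ and $\mathrm{Im}\log \to 0$, giving $\mathrm{Re}L'(\alpha + \mathbf{i}l) \to 2\alpha - \pi$. As $l \to -\infty$ I would write $1 - e^{2\mathbf{i}\alpha - 2l} = -e^{2\mathbf{i}\alpha - 2l}(1 - e^{2l - 2\mathbf{i}\alpha})$ and continuously continue $\log$ from its value at $l = 0$; since $\mathrm{Im}(1 - e^{2\mathbf{i}\alpha - 2l}) = -e^{-2l}\sin(2\alpha)$ retains a fixed sign for $\alpha \neq \pi/2$, the trajectory avoids the negative real axis and the principal branch of $\log$ varies continuously, forcing $\mathrm{Im}\log \to 2\alpha - \pi$ and hence $\mathrm{Re}L'(\alpha + \mathbf{i}l) \to \pi - 2\alpha$. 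The analogous computation for $L'(\beta - \mathbf{i}l)$ then gives $\lim_{l\to+\infty} f'(l) = 2(\alpha+\beta-\pi) < 0$ and $\lim_{l\to-\infty} f'(l) = 2(\pi-\alpha-\beta) > 0$ under the hypothesis $\alpha + \beta < \pi$.

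The main obstacle is the branch-tracking as $l \to -\infty$: the naive leading term $-e^{2\mathbf{i}\alpha - 2l}$ has argument $\pi + 2\alpha$, which lies outside the principal branch for $\alpha \in (0,\pi)$, and the correct limiting value $2\alpha - \pi$ (not $2\alpha + \pi$) must be justified by following the trajectory continuously from $l = 0$, where the imaginary part of $\log$ starts in $(-\pi/2, \pi/2)$. The degenerate case $\alpha = \pi/2$, where $1 - e^{2\mathbf{i}\alpha - 2l}$ is real and positive throughout, is handled separately by direct inspection and continuity.
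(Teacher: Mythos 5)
Your proof is correct and takes essentially the same route as the paper: differentiating in $l$ to get $\frac{\partial}{\partial l}\mathrm{Im}L(\alpha+\mathbf il)=2\alpha-\pi-2\arg\big(1-e^{2\mathbf i\alpha-2l}\big)$, evaluating at $l=0$ via $1-e^{2\mathbf i\alpha}=2\sin\alpha\, e^{\mathbf i(\alpha-\pi/2)}$, and computing the limits of the argument as $l\to\pm\infty$. Your careful branch-tracking as $l\to-\infty$ simply makes explicit what the paper does implicitly when it writes $\arg(-e^{2\mathbf i\alpha})=2\alpha-\pi$ with the principal argument.
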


\begin{proof} By a direct computation, we have
\begin{equation}\label{limp}
\frac{\partial}{\partial l} \mathrm{Im}L(\alpha+\mathbf il)=2\alpha-\pi-2\arg\Big(1-e^{-2l+2\mathbf i\alpha}\Big),
\end{equation}
and
$$ \frac{\partial}{\partial l} \mathrm{Im}L(\alpha-\mathbf il)=-2\alpha+\pi+2\arg\Big(1-e^{2l+2\mathbf i\alpha}\Big).$$

For (1), we have 
$$ \frac{\partial}{\partial l}\Big|_{l=0} \mathrm{Im}L(\alpha+\mathbf il)=2\alpha-\pi-2\arg\Big(1-e^{2\mathbf i\alpha}\Big)=0$$
and
$$ \frac{\partial}{\partial l}\Big|_{l=0} \mathrm{Im}L(\alpha-\mathbf il)=-2\alpha+\pi+2\arg\Big(1-e^{2\mathbf i\alpha}\Big)=0.$$
Therefore, 
$$ \frac{\partial}{\partial l}\Big|_{l=0}\Big( \mathrm{Im}L(\alpha+\mathbf il)+ \mathrm{Im}L(\beta-\mathbf il)\Big)=0,$$
which proves (1).
\medskip

For (2), we have
$$\lim_{l\to +\infty} \frac{\partial}{\partial l} \mathrm{Im}L(\alpha+\mathbf il)=2\alpha-\pi-2\arg(1)=2\alpha-\pi,$$
$$\lim_{l\to -\infty} \frac{\partial}{\partial l} \mathrm{Im}L(\alpha+\mathbf il)=2\alpha-\pi-2\arg(-e^{2\mathbf i\alpha})=\pi-2\alpha,$$
$$\lim_{l\to +\infty} \frac{\partial}{\partial l} \mathrm{Im}L(\alpha-\mathbf il)=-2\alpha+\pi+2\arg(-e^{2\mathbf i\alpha})=2\alpha-\pi$$
and
$$\lim_{l\to -\infty} \frac{\partial}{\partial l} \mathrm{Im}L(\alpha-\mathbf il)=-2\alpha+\pi+2\arg(1)=-2\alpha+\pi.$$
As a consequence, if $\alpha+\beta<\pi,$ then
\begin{equation}\label{lim+}
\lim_{l\to +\infty} \frac{\partial}{\partial l}\Big( \mathrm{Im}L(\alpha+\mathbf il)+ \mathrm{Im}L(\beta-\mathbf il)\Big)=2(\alpha+\beta)-2\pi<0
\end{equation}
and
\begin{equation}\label{lim-}
\lim_{l\to -\infty} \frac{\partial}{\partial l}\Big( \mathrm{Im}L(\alpha+\mathbf il) +\mathrm{Im}L(\beta-\mathbf il)\Big)=2\pi-2(\alpha+\beta)>0,
\end{equation}
which proves (2).
\end{proof}

\begin{lemma}\label{dd} Let $\alpha,\beta\in (0,\pi)$ such that $\alpha+\beta<\pi.$ 
\begin{enumerate}[(1)]
\item If $|\alpha-\beta|\leqslant \frac{\pi}{2},$ then
 $$\frac{\partial ^2}{\partial l^2}\Big(\mathrm{Im}L(\alpha+\mathbf il)+\mathrm{Im}L(\beta-\mathbf il)\Big) <0.$$

\item  If $|\alpha-\beta|>\frac{\pi}{2},$ then there is a unique $l_0>0$ such that:
\begin{enumerate}[(i)]
\item For $-l_0<l<l_0,$ 
$$\frac{\partial ^2}{\partial l^2}\Big(\mathrm{Im}L(\alpha+\mathbf il)+\mathrm{Im}L(\beta-\mathbf il)\Big)<0.$$ 

\item For $l<-l_0\ \text{or}\ l>l_0,$ 
$$\frac{\partial ^2}{\partial l^2}\Big(\mathrm{Im}L(\alpha+\mathbf il)+\mathrm{Im}L(\beta-\mathbf il)\Big)>0.$$
\end{enumerate}

\end{enumerate}
\end{lemma}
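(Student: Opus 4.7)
The plan is to reduce the expression in question to a rational trigonometric function of $\cosh 2l$ and then analyze its sign by a case split on $\cos(\alpha-\beta)$.

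First, differentiating the identity $L'(z) = 2z - \pi + 2\mathbf{i}\log(1-e^{2\mathbf{i}z})$ (which follows directly from the defining series for $\mathrm{Li}_2$ used already in the proof of Lemma \ref{d}) yields the compact formula $L''(z) = 2\mathbf{i}\cot z$. Combined with the elementary expansion
$$\cot(\alpha+\mathbf{i}l) = \frac{\sin 2\alpha - \mathbf{i}\sinh 2l}{\cosh 2l - \cos 2\alpha}$$
and the relations $\partial_l^2\,\mathrm{Im}L(\alpha+\mathbf{i}l) = -\mathrm{Im}L''(\alpha+\mathbf{i}l)$ and $\partial_l^2\,\mathrm{Im}L(\beta-\mathbf{i}l) = -\mathrm{Im}L''(\beta-\mathbf{i}l)$, I would arrive at the closed form
$$F(l) := \frac{\partial^2}{\partial l^2}\Big(\mathrm{Im}L(\alpha+\mathbf{i}l) + \mathrm{Im}L(\beta-\mathbf{i}l)\Big) = -\frac{2\sin 2\alpha}{\cosh 2l - \cos 2\alpha} - \frac{2\sin 2\beta}{\cosh 2l - \cos 2\beta}.$$
Since both denominators are strictly positive, placing $F$ over a common denominator and applying the identities $\sin 2\alpha + \sin 2\beta = 2\sin(\alpha+\beta)\cos(\alpha-\beta)$ and $\sin(2\alpha+2\beta) = 2\sin(\alpha+\beta)\cos(\alpha+\beta)$ collapses the numerator to $-4\sin(\alpha+\beta)[\cosh 2l \cdot \cos(\alpha-\beta) - \cos(\alpha+\beta)]$. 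Because $\sin(\alpha+\beta) > 0$, the sign of $F(l)$ is opposite to the sign of
$$H(l) := \cosh 2l \cdot \cos(\alpha-\beta) - \cos(\alpha+\beta).$$

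For part (1), if $|\alpha-\beta| < \pi/2$ then $\cos(\alpha-\beta) > 0$, so $H$ is minimized at $l=0$, where $H(0) = \cos(\alpha-\beta)-\cos(\alpha+\beta) = 2\sin\alpha\sin\beta > 0$; hence $F(l) < 0$ everywhere. In the borderline case $|\alpha-\beta| = \pi/2$, one has $\cos(\alpha-\beta) = 0$, and the positivity of $\alpha,\beta$ together with $|\alpha-\beta|=\pi/2$ forces $\alpha+\beta > \pi/2$, so $H(l) = -\cos(\alpha+\beta) > 0$ and $F(l) < 0$ still holds. For part (2), if $|\alpha-\beta| > \pi/2$ then $\cos(\alpha-\beta) < 0$, so $H$ is strictly decreasing in $|l|$; moreover $H(0) = 2\sin\alpha\sin\beta > 0$ while $H(l) \to -\infty$ as $|l| \to \infty$. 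Since $H$ is even, the intermediate value theorem produces a unique $l_0 > 0$ with $H(\pm l_0) = 0$, and the stated sign behavior of $F$ follows: $F(l) < 0$ for $|l| < l_0$ and $F(l) > 0$ for $|l| > l_0$.

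The argument is essentially a bookkeeping exercise once the identity $L'' = 2\mathbf{i}\cot$ is in hand; the main care needed is the borderline case $|\alpha-\beta|=\pi/2$ in (1), which requires invoking $\alpha+\beta > \pi/2$ rather than the weaker hypothesis $\alpha+\beta < \pi$.
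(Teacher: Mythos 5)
Your proof is correct and takes essentially the same route as the paper: both compute the second derivative in closed form, namely $-\tfrac{2\sin 2\alpha}{\cosh 2l-\cos 2\alpha}-\tfrac{2\sin 2\beta}{\cosh 2l-\cos 2\beta}$ (the paper writes the equivalent expression with denominators $\cosh^2 l-\cos^2\alpha$ and $\cosh^2 l-\cos^2\beta$), then factor the combined numerator as $-4\sin(\alpha+\beta)$ times a linear function of $\cosh 2l$ whose slope is $\cos(\alpha-\beta)$, and conclude by the case split on the sign of $\cos(\alpha-\beta)$ together with the positive value at $l=0$. Your explicit handling of the borderline case $|\alpha-\beta|=\tfrac{\pi}{2}$ (using $\alpha+\beta>\tfrac{\pi}{2}$) is a detail the paper leaves implicit, but the argument is otherwise the same.
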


\begin{proof}  Denote
\begin{equation*}
\phi_{\alpha\beta}(l)=\frac{\partial ^2}{\partial l^2}\Big(\mathrm{Im}L(\alpha+\mathbf il)+\mathrm{Im}L(\beta-\mathbf  il)\Big).
\end{equation*}By a direct computation, we have
$$\phi_{\alpha\beta}(l)=-\frac{4\sin(2\alpha)}{\cosh^2l-\cos^2\alpha}-\frac{4\sin(2\beta)}{\cosh^2l-\cos^2\beta},$$
and
$$\phi_{\alpha\beta}(0)=-8(\cot \alpha+\cot\beta)<0$$
as argued after (\ref{convex}).

Now by a further direct computation,
$$\phi_{\alpha\beta}(l)=\frac{8\sin(\alpha+\beta)\big(-\cos(\alpha-\beta)\cosh^2 l +\cos\alpha\cos\beta\big)}{(\cosh^2l-\cos^2\alpha)(\cosh^2l-\cos^2\beta)}.$$
We observe that the denominator and the scalar $8\sin(\alpha+\beta)$ in the numerator are always positive; and in the numerator we notice that: in case (1) that $|\alpha-\beta|\leqslant \frac{\pi}{2},$  the scalar $-\cos(\alpha-\beta)$ in front of $\cosh^2l$ is negative or zero, and in case (2) that  $|\alpha-\beta|>\frac{\pi}{2},$ the scalar $-\cos(\alpha-\beta)$ in front of $\cosh^2l$  is positive. As a consequence, in case (1) the value $\phi_{\alpha\beta}(l)$ stays negative for all $l\in\mathbb R,$ and in case (2) the value $\phi_{\alpha\beta}(l)$ stays negative when
$$|l|<l_0=\cosh^{-1}\bigg(\frac{\cos\alpha\cos\beta}{\cos(\alpha-\beta)}\bigg)^{\frac{1}{2}},$$
and becomes positive when $|l|>l_0.$
\end{proof}

\begin{proof}[Proof of Proposition \ref{decreasing}] For each $i\in\{1,2,3,4\},$ on each straight line in $D$ consisting of $\xi$  with a  fixed $\mathrm{Re}\xi,$ let 
$$\psi_i(\mathrm{Im}\xi)=\mathrm{Im}L(\xi-\tau_i)+\mathrm{Im}L(\eta_i-\xi).$$
Then
$$\mathrm{Im}U_{\boldsymbol \alpha}(\xi)=\sum_{i=1}^4\psi_i(\mathrm{Im}\xi),$$
 and it suffices to show that for each $i,$ 
$\psi_i$ is increasing for $\mathrm{Im}\xi<0,$ and  is decreasing for  $\mathrm{Im}\xi>0.$

Now for each $i,$ we have $\mathrm{Re}(\xi-\tau_i)+\mathrm{Re}(\eta_i-\xi)=\eta_i-\tau_i<\pi.$ Then by Lemma \ref{d} (1) and Lemma \ref{dd} with $\alpha=\mathrm{Re}(\xi-\tau_i)$ and $\beta=\mathrm{Re}(\eta_i-\xi),$ we have $\psi_i'(0)=0$ and $\psi_i''(\mathrm{Im}\xi)<0$ for $\mathrm{Im}\xi\in [-l_0,l_0],$ with the understanding that $l_0=+\infty$ if $|\alpha-\beta|\leqslant \frac{\pi}{2}.$ Therefore, $\psi_i'(\mathrm{Im}\xi)>0$ for $\mathrm{Im}\xi\in [-l_0,0)$ and $\psi_i'(\mathrm{Im}\xi)<0$ for $\mathrm{Im}\xi\in (0, l_0].$  

Next, by Lemma \ref{d} (2) and Lemma \ref{dd}, 
$\psi_i'(\mathrm{Im}\xi)>\lim_{\mathrm{Im}\xi\to-\infty}\psi_i'(\mathrm{Im}\xi)>0$ for $\mathrm{Im}\xi<-l_0;$  and 
$\psi_i'(\mathrm{Im}\xi)<\lim_{\mathrm{Im}\xi\to+\infty}\psi_i'(\mathrm{Im}\xi)<0$ for $\mathrm{Im}\xi>l_0.$ 

Putting all together, we have $\psi_i'(\mathrm{Im}\xi)>0$ and $\psi_i$ is increasing for $\mathrm{Im}\xi<0,$ and $\psi_i'(\mathrm{Im}\xi)<0$ and $\psi_i$ is decreasing for $\mathrm{Im}\xi>0.$ 
\end{proof}

\begin{proposition} \label{limder} For $c\in[\max\{\tau_1,\tau_2,\tau_3,\tau_4\},\min\{\eta_1,\eta_2,\eta_3,\eta_4\}],$ let 
$$\Gamma_c=\Big\{\xi\in \overline D\ \Big|\ \mathrm{Re}\xi =c  \Big\}.$$ Then on $\Gamma_c,$ we have
\item $$\lim_{\mathrm{Im}\xi\to +\infty}\frac{\partial \mathrm{Im}U_{\boldsymbol \alpha}(\xi)}{\partial \mathrm{Im}\xi} = -4\pi\quad\text{and}\quad \lim_{\mathrm{Im}\xi\to -\infty}\frac{\partial \mathrm{Im}U_{\boldsymbol \alpha}(\xi)}{\partial \mathrm{Im}\xi} = 4\pi.$$
\end{proposition}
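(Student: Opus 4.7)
The plan is to differentiate $\mathrm{Im}U_{\boldsymbol\alpha}$ termwise and then invoke the limit computations already carried out inside the proof of Lemma \ref{d}. Since the first (double) sum in the definition \eqref{U} of $U_{\boldsymbol\alpha}$ does not depend on $\xi$, on $\Gamma_c$ we have
$$\frac{\partial \mathrm{Im}U_{\boldsymbol\alpha}(\xi)}{\partial \mathrm{Im}\xi}=\sum_{i=1}^4\frac{\partial}{\partial \mathrm{Im}\xi}\mathrm{Im}L\big((c-\tau_i)+\mathbf i\,\mathrm{Im}\xi\big)+\sum_{j=1}^4\frac{\partial}{\partial \mathrm{Im}\xi}\mathrm{Im}L\big((\eta_j-c)-\mathbf i\,\mathrm{Im}\xi\big).$$

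The setting is exactly that of Lemma \ref{d}: each $c-\tau_i$ and each $\eta_j-c$ lies in $(0,\pi)$ because $(\theta_1,\dots,\theta_6)$ are the dihedral angles of a truncated hyperideal hyperbolic tetrahedron and $c\in [\max\{\tau_i\},\min\{\eta_j\}]$. Thus I can apply the four limit formulas that appear in the proof of Lemma \ref{d}. Specifically, as $\mathrm{Im}\xi\to+\infty$, each summand in the first sum contributes $2(c-\tau_i)-\pi$ and each summand in the second sum contributes $2(\eta_j-c)-\pi$; similarly, as $\mathrm{Im}\xi\to-\infty$, the contributions flip sign in the obvious way.

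It remains to evaluate the resulting constants. Adding the eight limits for $\mathrm{Im}\xi\to+\infty$ gives
$$2\sum_{j=1}^4\eta_j-2\sum_{i=1}^4\tau_i-8\pi,$$
and a straightforward bookkeeping using the explicit formulas
$$\tau_1=\alpha_1+\alpha_2+\alpha_3,\ \tau_2=\alpha_1+\alpha_5+\alpha_6,\ \tau_3=\alpha_2+\alpha_4+\alpha_6,\ \tau_4=\alpha_3+\alpha_4+\alpha_5$$
and
$$\eta_1=\alpha_1+\alpha_2+\alpha_4+\alpha_5,\ \eta_2=\alpha_1+\alpha_3+\alpha_4+\alpha_6,\ \eta_3=\alpha_2+\alpha_3+\alpha_5+\alpha_6,\ \eta_4=2\pi$$
shows that each $\alpha_k$ appears in exactly two $\tau_i$'s and in exactly two $\eta_j$'s among $\eta_1,\eta_2,\eta_3$, so
$$\sum_{j=1}^4\eta_j-\sum_{i=1}^4\tau_i=2\pi.$$
Substituting back yields the limit $4\pi-8\pi=-4\pi$. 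The $\mathrm{Im}\xi\to-\infty$ case is identical up to an overall change of sign in every term, so it produces $+4\pi$.

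The only non-mechanical step is the algebraic simplification $\sum\eta_j-\sum\tau_i=2\pi$, which is clean; there is no genuine analytic difficulty here because all the hard work is already packaged in Lemma \ref{d}. The main care one needs to take is to keep track of the sign conventions when applying Lemma \ref{d} to the two sums (one of which involves $\xi-\tau_i$, the other $\eta_j-\xi$), since the second sum uses the variant $L(\alpha-\mathbf i l)$ of the formula.
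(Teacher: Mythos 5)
Your proof is correct and follows essentially the same route as the paper: the paper likewise reduces the question to the limiting derivative formulas established in (the proof of) Lemma \ref{d}, applied on the vertical line with $\alpha=c-\tau_i$, $\beta=\eta_j-c$, and then uses the identity $\sum_{j}\eta_j-\sum_i\tau_i=2\pi$ (the paper pairs $\xi-\tau_i$ with $\eta_i-\xi$ and cites (\ref{lim+})--(\ref{lim-}), whereas you sum the eight individual limits, but the computation is identical). Your explicit verification of $\sum\eta_j-\sum\tau_i=2\pi$ is exactly the bookkeeping the paper leaves implicit, so there is nothing to add.
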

\begin{proof}
By (\ref{lim+}) and (\ref{lim-}), we have
$$\lim_{\mathrm{Im}\xi\to +\infty}\frac{\partial \mathrm{Im}U_{\boldsymbol \alpha}(\xi)}{\partial \mathrm{Im}\xi} = \sum_{i=1}^4 \Big(2\big((c- \tau_i)-(\eta_i-c)\big)-2\pi\Big) =-4\pi,$$
and 
$$\lim_{\mathrm{Im}\xi\to -\infty}\frac{\partial \mathrm{Im}U_{\boldsymbol \alpha}(\xi)}{\partial \mathrm{Im}\xi} = \sum_{i=1}^4 \Big(2\pi-2\big((c-\tau_i)-(\eta_i-c)\big)\Big) =4\pi.$$
\end{proof}

 \begin{proposition}\label{Hess} At the critical point  $\xi^*$ of $U_{\boldsymbol\alpha}$, we have
 \begin{equation*}
\frac{-U''_{\boldsymbol \alpha}(\xi^*)}{\exp\big(\frac{\kappa_{\boldsymbol \alpha}(\xi^*)}{\pi \mathbf i}\big)}=16\sqrt{\det\mathrm{Gram}(\Delta)}.
\end{equation*}
As a consequence, $\xi^*$ is a non-degenerate critical point of $U_{\boldsymbol\alpha}.$
\end{proposition}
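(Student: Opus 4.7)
My plan is to compute both sides of the ratio explicitly and reduce the proposition to a polynomial identity satisfied by the critical point.

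First I would compute the Hessian. Direct differentiation of $L(x)=x^2-\pi x+\pi^2/6-\mathrm{Li}_2(e^{2\mathbf{i}x})$ gives $L''(x)=2\mathbf{i}\cot x$, so
$$U''_{\boldsymbol\alpha}(\xi)=2\mathbf{i}\Big(\sum_{i=1}^4\cot(\xi-\tau_i)+\sum_{j=1}^4\cot(\eta_j-\xi)\Big).$$
A key preliminary observation is that $\exp(L'(x)/(2\mathbf{i}))=2\sin x$, which converts the saddle-point equation $U'_{\boldsymbol\alpha}(\xi^*)=0$ into the identity
$$P\doteq\prod_{i=1}^4\sin(\xi^*-\tau_i)=\prod_{j=1}^4\sin(\eta_j-\xi^*).$$

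Next I would simplify the denominator. Using $1-e^{2\mathbf{i}x}=-2\mathbf{i}e^{\mathbf{i}x}\sin x$ and the combinatorial identities $\sum_{i=1}^4\tau_i=2\sum_k\alpha_k$, $\sum_{j=1}^3\eta_j=2\sum_k\alpha_k$, $\eta_4=2\pi$, a direct computation shows that in $\exp(\kappa_{\boldsymbol\alpha}(\xi^*)/(\pi\mathbf{i}))$ the pure exponential prefactor collapses to $e^{6\pi\mathbf{i}}=1$, the powers of $-2\mathbf{i}$ combine to $1/16$, and the remaining ratio of sines $\prod_j\sin^3(\eta_j-\xi^*)/\prod_i\sin^4(\xi^*-\tau_i)=P^3/P^4$ reduces by the critical-point identity, giving
$$\exp(\kappa_{\boldsymbol\alpha}(\xi^*)/(\pi\mathbf{i}))=\frac{1}{16P}.$$
Setting $f(\xi):=\prod_i\sin(\xi-\tau_i)-\prod_j\sin(\eta_j-\xi)$, which vanishes at $\xi^*$, and observing that $\sum_i\cot(\xi^*-\tau_i)+\sum_j\cot(\eta_j-\xi^*)=f'(\xi^*)/P$, the ratio becomes
$$\frac{-U''_{\boldsymbol\alpha}(\xi^*)}{\exp(\kappa_{\boldsymbol\alpha}(\xi^*)/(\pi\mathbf{i}))}=-32\mathbf{i}\,f'(\xi^*).$$

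The third and main step is to identify $-32\mathbf{i}\,f'(\xi^*)$ with $16\sqrt{\det\mathrm{Gram}(\Delta)}$. After the substitution $z=e^{-2\mathbf{i}\xi}$ and clearing common factors, $f$ becomes proportional to a quadratic $Az^2+Bz+C$ whose coefficients are symmetric polynomials in $u_k=e^{2\mathbf{i}\alpha_k}=-e^{\mathbf{i}\theta_k}$, directly analogous to the quadratic recalled after Proposition \ref{Prop3.1}. Since $\xi^*$ corresponds to a root of this quadratic, $f'(\xi^*)$ equals an explicit nonvanishing prefactor times $\pm\sqrt{B^2-4AC}$. The proposition then reduces to the polynomial identity that, after substituting $\alpha_k=\pi/2+\theta_k/2$, $B^2-4AC$ agrees (up to sign, since $\det\mathrm{Gram}(\Delta)<0$ for a hyperbolic tetrahedron) with $16\det\mathrm{Gram}(\Delta)$. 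This is the dihedral-angle analogue of the edge-length identity used in \cite[Proposition 3.1]{LMSWY} and can be verified by a symmetric-polynomial expansion in $\cos\theta_k$. The main obstacle is this final algebraic identity; once established, the non-degeneracy of $\xi^*$ follows immediately from $\det\mathrm{Gram}(\Delta)\neq 0$.
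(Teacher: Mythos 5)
Your argument is correct, and it follows the same skeleton as the paper's proof: both hinge on the critical point equation $U'_{\boldsymbol\alpha}(\xi^*)=0$, the passage to $z=e^{-2\mathbf i\xi}$ and the quadratic $Az^2+Bz+C$, and the identity $B^2-4AC=16\det\mathrm{Gram}(\Delta)$ (which the paper simply cites from Ushijima rather than re-deriving, as you could too). What you do differently is the bookkeeping: the paper substitutes the critical equation into $\kappa_{\boldsymbol\alpha}(\xi^*)$ and then verifies by a fairly long ``direct computation'' that the resulting rational expression in $z^*,u_1,\dots,u_6$ collapses to $4\big(3Az^*+2B+\tfrac{C}{z^*}\big)=4\big(Az^*-\tfrac{C}{z^*}\big)$, whereas you package the same content trigonometrically via $P=\prod_i\sin(\xi^*-\tau_i)=\prod_j\sin(\eta_j-\xi^*)$, $\exp\big(\tfrac{\kappa_{\boldsymbol\alpha}(\xi^*)}{\pi\mathbf i}\big)=\tfrac{1}{16P}$ and $-U''_{\boldsymbol\alpha}(\xi^*)=-2\mathbf i\,f'(\xi^*)/P$, which makes the cancellation transparent (your intermediate constant is $e^{-2\pi\mathbf i}$ rather than $e^{6\pi\mathbf i}$, but both are $1$, so nothing is affected). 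The only step you leave loose is the ``explicit nonvanishing prefactor'': with your normalizations one has exactly $f(\xi)=\tfrac{Az^2+Bz+C}{16z}$ (the prefactors $e^{-\mathbf i\sum_i\tau_i}$ and $\prod_k u_k$ cancel), hence $f'(\xi^*)=-\tfrac{\mathbf i}{8}\big(Az^*-\tfrac{C}{z^*}\big)=-\tfrac{\mathbf i}{8}A(z^*-z^{**})$ and $-32\mathbf i f'(\xi^*)=-4A(z^*-z^{**})=\mp4\sqrt{B^2-4AC}$; the residual sign is tied to which root of the quadratic is $z^*$ and to the branch of the (purely imaginary, since $\det\mathrm{Gram}(\Delta)<0$) square root, which is precisely the same ambiguity implicit in the paper's statement, so it is a convention to fix rather than a gap. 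Pinning down that prefactor is all that is needed to turn your sketch into a complete proof, and non-degeneracy then follows, as you say, from $\det\mathrm{Gram}(\Delta)\neq0$.
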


 \begin{proof} The proof follows verbatim from that of \cite[Proposition 3.2]{LMSWY} adapting the argument from that of \cite[Lemma 3]{CM}. Namely, from $U'_{\boldsymbol \alpha}(\xi^*)=0,$
we have 
$$\sum_{i=1}^4\log\big(1-e^{2\mathbf i(\xi^*-\tau_i)}\big)=\sum_{j=1}^4\log\big(1-e^{2\mathbf i(\eta_j-\xi^*)}\big)-8 \mathbf i\xi^*+4\mathbf i\sum_{k=1}^6\alpha_k+2\pi \mathbf i;$$
and plugging in  (\ref{kappa}), we have
$$\kappa_{\boldsymbol \alpha}(\xi^*)=2\pi^2+2\pi\sum_{k=1}^6\alpha_k-4\pi\xi^*-\pi \mathbf i\sum_{i=1}^4\log\big(1-e^{2\mathbf i(\xi^*-\tau_i)}\big).$$
For $k\in\{1,\dots,6\},$ let $u_k=e^{2\mathbf i\alpha_k}.$ Consider the following quadratic equation 
\begin{equation*}
Az^2+Bz+C=0
\end{equation*}
with \begin{equation*}
\begin{split}
A=&u_1u_4+u_2u_5+u_3u_6-u_1u_2u_6-u_1u_3u_5-u_2u_3u_4-u_4u_5u_6+u_1u_2u_3u_4u_5u_6,\\
B=&-\Big(u_1-\frac{1}{u_1}\Big)\Big(u_4-\frac{1}{u_4}\Big)-\Big(u_2-\frac{1}{u_2}\Big)\Big(u_5-\frac{1}{u_5}\Big)-\Big(u_3-\frac{1}{u_3}\Big)\Big(u_6-\frac{1}{u_6}\Big),\\
C=&\frac{1}{u_1u_4}+\frac{1}{u_2u_5}+\frac{1}{u_3u_6}-\frac{1}{u_1u_2u_6}-\frac{1}{u_1u_3u_5}-\frac{1}{u_2u_3u_4}-\frac{1}{u_4u_5u_6}+\frac{1}{u_1u_2u_3u_4u_5u_6},\\
\end{split}
\end{equation*}
and let 
\begin{equation*}
z^*=\frac{-B+\sqrt{B^2-4AC}}{2A}\quad\text{and}\quad z^{**}=\frac{-B-\sqrt{B^2-4AC}}{2A}
\end{equation*}
be the two roots.
Then $z^*=e^{-2\mathbf i\xi^*},$ and
\begin{equation*}
\begin{split}
-U''_{\boldsymbol \alpha}(\xi^*)=4\bigg(&\frac{z^*}{1-z^*}+\frac{z^*u_1u_2u_4u_5}{1-z^*u_1u_2u_4u_5}+\frac{z^*u_1u_3u_4u_6}{1-z^*u_1u_3u_4u_6}+\frac{z^*u_2u_3u_5u_6}{1-z^*u_2u_3u_5u_6}\\
& -\frac{z^*u_1u_2u_3}{1-z^*u_1u_2u_3}-\frac{z^*u_1u_5u_6}{1-z^*u_1u_5u_6}-\frac{z^*u_2u_4u_6}{1-z^*u_2u_4u_6}-\frac{z^*u_3u_4u_5}{1-z^*u_3u_4u_5}\bigg).
\end{split}
\end{equation*}
As a consequence, 
\begin{equation*}
\begin{split}
\frac{-U''_{\boldsymbol \alpha}(\xi^*)}{\exp\big({\frac{\kappa_{\boldsymbol \alpha}(\xi^*)}{\pi \mathbf i}}\big)}=&\frac{4(1-z^*u_1u_2u_3)(1-z^*u_1u_5u_6)(1-z^*u_2u_4u_6)(1-z^*u_3u_4u_5)}{z^{*2}u_1u_2u_3u_4u_5u_6}\\
&\bigg(\frac{z^*}{1-z^*}+\frac{z^*u_1u_2u_4u_5}{1-z^*u_1u_2u_4u_5}+\frac{z^*u_1u_3u_4u_6}{1-z^*u_1u_3u_4u_6}+\frac{z^*u_2u_3u_5u_6}{1-z^*u_2u_3u_5u_6}\\
&\quad-\frac{z^*u_1u_2u_3}{1-z^*u_1u_2u_3}-\frac{z^*u_1u_5u_6}{1-z^*u_1u_5u_6}-\frac{z^*u_2u_4u_6}{1-z^*u_2u_4u_6}-\frac{z^*u_3u_4u_5}{1-z^*u_3u_4u_5}\bigg);
\end{split}
\end{equation*}
and by a direct computation, this equals
$$4\bigg(3Az^*+2B+\frac{C}{z^*}\bigg)=4\bigg(Az^*-\frac{C}{z^*}\bigg),$$
which in turn equals
$$4A(z^*-{z^{**}})=4\sqrt{B^2-4AC}=16\sqrt{\det\mathrm{Gram}(\Delta)},$$
where the last inequality comes from the algebraic computation in \cite{U} as one can write 
$$
\mathrm{Gram}(\Delta)=\begin{bmatrix}
1 & \frac{u_1+u_1^{-1}}{2} & \frac{u_2+u_2^{-1}}{2}
    & \frac{u_6+u_6^{-1}}{2}\\
\frac{u_1+u_1^{-1}}{2} & 1 & \frac{u_3+u_3^{-1}}{2}
    & \frac{u_5+u_5^{-1}}{2}\\
   \frac{u_2+u_2^{-1}}{2} & \frac{u_3+u_3^{-1}}{2} & 1
    & \frac{u_4+u_4^{-1}}{2}\\
 \frac{u_6+u_6^{-1}}{2} & \frac{u_5+u_5^{-1}}{2} & \frac{u_4+u_4^{-1}}{2}
    & 1
  \end{bmatrix}.
$$
As also proved in \cite{U}, $\mathrm{Gram}(\Delta)$ has signature $(3,1),$ hence $$U''_{\boldsymbol\alpha}(\xi^*) = -16\exp\bigg({\frac{\kappa_{\boldsymbol \alpha}(\xi^*)}{\pi \mathbf i}}\bigg)\sqrt{\det\mathrm{Gram}(\Delta)}\neq 0,$$ and $\xi^*$ is a non-degenerate critical point of $U_{\boldsymbol \alpha}.$
 \end{proof}

For a sufficiently small $\delta >0$, consider the  region $$D_{\delta}=\Big\{\xi\in \mathbb C\ \Big|\ \mathrm {Re}\xi\in[\max\{\tau_1,\tau_2,\tau_3,\tau_4\}+\delta, \min\{\eta_1,\eta_2,\eta_3,\eta_4\}-\delta]\Big\}.$$ Then we have the following Proposition \ref{bound} whose proof follows verbatim that of \cite[Proposition 3.10 and Proposition 3.11]{LMSWY}. 

 \begin{proposition}\label{bound} \begin{enumerate}[(1)]
 \item 
 For $\delta> 0$  sufficiently small, there exists a constant $K=K_{\delta}>0$ such that 
$$\Bigg|\frac{\partial \mathrm{Im}\kappa_{\boldsymbol\alpha}(\xi)}{\partial\mathrm{Im}\xi}\Bigg|<K$$
for all $\xi\in D_{\delta}.$
\item  
For $b>0$ and $\delta> 0$ both sufficiently small, there exists a constant $N=N_{\delta}>0$ independent of $b$ such that 
$$\mathrm{Im}\nu_{\boldsymbol\alpha,b}(\xi) \leqslant \big|\nu_{\boldsymbol\alpha,b}(\xi)\big|<N$$
for all $\xi\in D_{\delta}.$
\end{enumerate}
\end{proposition}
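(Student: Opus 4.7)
The plan is to treat the two bounds separately, both relying on the explicit structure of $\kappa_{\boldsymbol\alpha}$ and the classical asymptotic expansion of the double sine function $S_b$.

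For part (1), I would compute $\frac{\partial \mathrm{Im}\kappa_{\boldsymbol\alpha}(\xi)}{\partial\mathrm{Im}\xi}$ directly from the explicit formula (\ref{kappa}). The affine part $8\pi^2 + 14\pi\sum_k\alpha_k - 28\pi\xi$ contributes a constant $-28\pi$ to the derivative. The remaining contributions come from the logarithmic terms $\log(1 - e^{2\mathbf{i}(\xi - \tau_i)})$ and $\log(1 - e^{2\mathbf{i}(\eta_j - \xi)})$. In the strip $D_\delta$, the real parts of $\xi - \tau_i$ and $\eta_j - \xi$ are uniformly bounded in $[\delta,\pi - \delta]$, so $1 - e^{2\mathbf{i}(\cdot)}$ stays away from zero uniformly in $\xi \in D_\delta$. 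As $|\mathrm{Im}\xi| \to \infty$, a direct computation analogous to (\ref{limp}) shows the $\mathrm{Im}\xi$-derivatives of these imaginary parts tend to finite limits; together with continuity on any bounded portion of $D_\delta$, this yields a uniform bound $K_\delta$.

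For part (2), I would invoke the standard asymptotic expansion of $\log S_b(z)$ as $b \to 0$ in the holomorphy domain of $S_b$:
\begin{equation*}
2\pi\mathbf{i}\, b^2 \log S_b(z) \;=\; L(\pi b z) \;+\; b^2 \,\phi(\pi b z) \;+\; b^4 R_b(\pi b z),
\end{equation*}
where $L$ is the function in (\ref{eq:Lx}), $\phi$ is the subleading piece built from $\log(1-e^{2\mathbf{i}(\cdot)})$, and $R_b$ is uniformly bounded on compacts away from the poles of $S_b$. Substituting this into the definition of $U_{\boldsymbol\alpha,b}$ and using the change of variables (\ref{alpha-a})--(\ref{tau-t}), the $O(1)$ terms assemble into $U_{\boldsymbol\alpha}(\xi)$, the $O(b^2)$ terms into exactly $\kappa_{\boldsymbol\alpha}(\xi)b^2$, and the $O(b^4)$ terms into $\nu_{\boldsymbol\alpha,b}(\xi) b^4$. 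The uniform boundedness of $R_b$ on the relevant range of arguments --- which is guaranteed by the $b$-admissibility and the definition of $D_\delta$ --- gives the desired bound $|\nu_{\boldsymbol\alpha,b}(\xi)| < N_\delta$.

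The main obstacle is controlling everything uniformly as $|\mathrm{Im}\xi| \to \infty$, since $D_\delta$ is unbounded in the imaginary direction. This is handled by the asymptotic behavior (\ref{eq: asy}) of $S_b$: the exponential factors in (\ref{eq: asy}) match precisely what $L$ and $\phi$ extract, so the remainder $R_b$ remains bounded as $|\mathrm{Im}\xi| \to \infty$, independently of $b$. Since this analysis is performed in detail in \cite[Propositions 3.10--3.11]{LMSWY} and the only change here is the particular values of $\boldsymbol\alpha$ (which do not enter the boundedness estimates), the same argument applies verbatim.
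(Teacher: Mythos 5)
Your proposal is correct and is essentially the paper's own argument: the paper proves Proposition \ref{bound} simply by observing that the estimates follow verbatim from \cite[Propositions 3.10 and 3.11]{LMSWY}, which is exactly the appeal you make at the end, and your sketch of the mechanism (direct differentiation of the explicit formula (\ref{kappa}) for $\kappa_{\boldsymbol\alpha}$, plus the uniform semiclassical expansion of $\log S_b$ with the large-$|\mathrm{Im}\xi|$ behavior controlled by (\ref{eq: asy})) is precisely the content of those cited propositions. The only caveat worth noting is that besides the values of $\boldsymbol\alpha$, the region $D_\delta$ is also different here, but as in the paper the only feature the estimates use is that $D_\delta$ stays a definite distance from the singular points $\tau_i+k\pi$ and $\eta_j+k\pi$, so the transfer is indeed verbatim.
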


\begin{proof}[Proof of Theorem \ref{vol3}] Recall the domain $D=\Big\{ \xi\in\mathbb C\ \Big|\ \max\{\tau_1,\tau_2,\tau_3,\tau_4\}< \mathrm{Re}\xi < \min\{\eta_1,\eta_2,\eta_3,\eta_4\} \Big\}$. Let  $d>0$ be sufficiently small so that the region 
 $$B_{d}=\Big\{\xi\in \mathbb C \ \Big|\ |\mathrm{Re}\xi - \mathrm{Re}\xi^*| \leqslant d \text{ and } |\mathrm{Im}\xi-\mathrm{Im}\xi^*|\leqslant d \Big\}$$
 lies entirely in $D.$ By Propositions \ref{critical1}, \ref{decreasing} and \ref{limder}, there is an $\epsilon_1>0$ such that  
 \begin{equation}\label{ImU}
\mathrm{Im} U_{\boldsymbol\alpha}(\xi^*\pm \mathbf il )< -2\mathrm{Vol}(\Delta)-3\epsilon_1
\end{equation}
and for $l>d;$ and by Propositions \ref{limder},  there is an $L>d$ such that 
\begin{equation}\label{Dv1}
 \frac{\partial \mathrm{Im} U_{\boldsymbol\alpha}}{\partial\mathrm{Im}\xi} (\xi^*+  \mathbf i l)<-2\pi\quad\text{and}\quad  \frac{\partial \mathrm{Im} U_{\boldsymbol\alpha}}{\partial\mathrm{Im}\xi} (\xi^* - \mathbf i l)>2\pi
\end{equation}
for $l>L.$  Consider the contour 
$$\Gamma^*=\Big\{\xi\in D\ \Big|\ \mathrm{Re}\xi = \mathrm{Re}\xi^*  \Big\}.$$
Let
$$\Gamma^*_d=\Gamma^*\cap B_d=\Big\{ \xi \in \Gamma^*\ \Big|\ |\mathrm{Im}\xi-\mathrm{Im}\xi^*|<d\Big\}$$
and let 
$$\Gamma^*_L=\Big\{ \xi \in \Gamma^*\ \Big|\ |\mathrm{Im}\xi-\mathrm{Im}\xi^*|\leqslant L\Big\}.$$
We will show that, as $b\to 0,$
\begin{enumerate}[(I)]
\item 
$$\frac{1}{\pi b}\int_{\Gamma^*_d}\exp\bigg(\frac{U_{\boldsymbol \alpha}(\xi)+ \kappa_{\boldsymbol \alpha}(\xi)b^2+ \nu_{\boldsymbol \alpha,b}(\xi)b^4}{2\pi \mathbf i b^2} \bigg)d\xi =  \frac{e^{\frac{-\mathrm{Vol}(\Delta)}{\pi b^2}}}{\sqrt[4]{-\det\mathrm{Gram}(\Delta)}}\Big(1+O\big(b^2\big)\Big),
$$

\item $$\bigg|\frac{1}{\pi b}\int_{\Gamma^*_L\setminus \Gamma^*_d}\exp\bigg(\frac{U_{\boldsymbol \alpha}(\xi)+ \kappa_{\boldsymbol \alpha}(\xi)b^2+ \nu_{\boldsymbol \alpha,b}(\xi)b^4}{2\pi \mathbf i b^2} \bigg)d\xi\bigg|< O\Big(e^{\frac{-\mathrm{Vol}(\Delta)-\epsilon}{\pi b^2}}\Big),$$
and 

\item $$\bigg|\frac{1}{\pi b}\int_{\Gamma^*\setminus \Gamma^*_L}\exp\bigg(\frac{U_{\boldsymbol \alpha}(\xi)+ \kappa_{\boldsymbol \alpha}(\xi)b^2+ \nu_{\boldsymbol \alpha,b}(\xi)b^4}{2\pi \mathbf i b^2} \bigg)d\xi\bigg|<O\Big(e^{\frac{-\mathrm{Vol}(\Delta)-\epsilon}{\pi b^2}}\Big)$$
\end{enumerate}
for some $\epsilon>0,$ from which the result follows.
\medskip

For (I), we claim that  all the conditions of Proposition \ref{saddle} are satisfied by letting $\hbar=b^2,$ $D=B_d,$ $f=\frac{U_{\boldsymbol \alpha}}{2\pi \mathbf i},$ $g=\exp\big(\frac{\kappa_{\boldsymbol \alpha}}{2\pi \mathbf i }\big),$ $f_\hbar=\frac{U_{\boldsymbol \alpha}+\nu_{\boldsymbol \alpha,b}b^4}{2\pi \mathbf i},$ $\upsilon_h=\frac{\nu_{\boldsymbol \alpha,b}}{2\pi \mathbf i},$ $S=\Gamma^*_d$ and $c=\xi^*.$ 

Indeed, by Proposition \ref{critical1}, $\xi^*$ is a critical point of $f=\frac{U_{\boldsymbol \alpha}}{2\pi \mathbf i}$ in $B_{d},$ hence condition (i) is satisfied. 

By Propositions \ref{decreasing}, $\xi^*$ is the unique maximum point of $\mathrm{Re}f=\frac{\mathrm{Im}U_{\boldsymbol\alpha}}{2\pi}$ on $\Gamma^*_d,$ hence condition (ii) is satisfied; and by Proposition \ref{Hess}, condition (iii) is satisfied.

For condition (iv), since $\xi^*\neq \tau_i$ and $\xi^*\neq\eta_j$ for any $i$ and $j$ in $\{1,2,3,4\},$ $\kappa_{\boldsymbol \alpha}(\xi^*)$ is a finite value. As a consequence, $g(\xi^*)=\exp\big(\frac{\kappa_{\boldsymbol \alpha}(\xi^*)}{2\pi \mathbf i }\big)\neq 0,$ and condition (iv) is satisfied.

For condition (v), by Proposition \ref{bound} (2),  $|\upsilon_{\hbar}(\xi)|=\big|\frac{\nu_{\boldsymbol \alpha,b}(\xi)}{2\pi \mathbf i}\big|<\frac{N}{2\pi}$ on $B_{d}.$ 

For condition (vi), since $\Gamma^*$ is a straight line, it is smooth near $\xi^*.$

Finally, by Proposition \ref{saddle}, Proposition \ref{critical1}  and Proposition \ref{Hess}, we have as $b\to 0,$
\begin{equation*}
\begin{split}
\frac{1}{\pi b}\int_{\Gamma^*_d}\exp\bigg(\frac{U_{\boldsymbol \alpha,b}(\xi)}{2\pi \mathbf ib^2}\bigg) d\xi=& \frac{(2\pi b^2)^\frac{1}{2}}{\pi b} \frac{\exp\big(\frac{\kappa_{\boldsymbol \alpha}(\xi^*)}{2\pi \mathbf i}\big)}{\sqrt{-\frac{U_{\boldsymbol \alpha}''(\xi^*)}{2\pi \mathbf i}}}e^{\frac{U_{\boldsymbol\alpha}(\xi^*)}{2\pi \mathbf i b^2}}\Big(1+O\big(b^2\big)\Big)\\
=&\frac{e^{\frac{-\mathrm{Vol}(\Delta)}{\pi b^2}}}{\sqrt[4]{-\det\mathrm{Gram}(\Delta)}}\Big(1+O\big(b^2\big)\Big).
\end{split}
\end{equation*}
This completes the proof of (I). 
\medskip

For (II) and (III), we have
\begin{equation*}
\begin{split}
& \bigg|\frac{1}{\pi b}\int_{\Gamma^*\setminus \Gamma^*_L}\exp\bigg(\frac{U_{\boldsymbol \alpha}(\xi)+ \kappa_{\boldsymbol \alpha}(\xi)b^2+ \nu_{\boldsymbol \alpha,b}(\xi)b^4}{2\pi \mathbf i b^2} \bigg)d\xi\bigg|\\
\leqslant & \frac{1}{\pi b}\int_{\Gamma^*\setminus \Gamma^*_L}\exp\bigg(\frac{\mathrm{Im}U_{\boldsymbol \alpha}(\xi)+\mathrm{Im}\kappa_{\boldsymbol \alpha}(\xi)b^2+\mathrm{Im}\nu_{\boldsymbol \alpha,b}(\xi)b^4}{2\pi b^2} \bigg)|d\xi|.
\end{split}
\end{equation*}

For  (II), let $\epsilon_1$ be as in (\ref{ImU}). Then by Proposition \ref{bound},  there is a  $b_1>0$ such that 
\begin{equation}\label{last2}
\mathrm{Im}\nu_{\boldsymbol\alpha,b}(\xi)b^4<Nb^4<\epsilon_1 
\end{equation}
for all $b<b_1$ and for all $\xi\in\Gamma^*;$  and together with  (\ref{ImU}), we have
\begin{equation}\label{last3}
\mathrm{Im}U_{\boldsymbol \alpha}(\xi)+\mathrm{Im}  \nu_{\boldsymbol \alpha,b}(\xi)b^4<  -2\mathrm{Vol}(\Delta) - 2\epsilon_1
\end{equation}
for all $b<b_1$ and  $\xi\in\Gamma^*\setminus \Gamma^*_d.$ By the compactness of ${\Gamma^*_L}\setminus \Gamma^*_d,$ there exists an $M>0$ such that 
\begin{equation}\label{Imk}
\mathrm{Im}\kappa_{\boldsymbol\alpha}(\xi)<M
\end{equation}
for all $\xi\in\Gamma^*_L\setminus \Gamma^*_d.$ As a consequence of (\ref{last3}) and (\ref{Imk}), we have
\begin{equation*}
\begin{split}
&\frac{1}{\pi b}\int_{\Gamma^*\setminus \Gamma^*_L}\exp\bigg(\frac{\mathrm{Im}U_{\boldsymbol \alpha}(\xi)+\mathrm{Im}\kappa_{\boldsymbol \alpha}(\xi)b^2+\mathrm{Im}\nu_{\boldsymbol \alpha,b}(\xi)b^4}{2\pi b^2} \bigg)|d\xi|\\
<  & \frac{2(L-d) e^{\frac{M}{2\pi}}}{\pi b} \exp\bigg(\frac{-\mathrm{Vol}(\Delta)-\epsilon_1}{\pi b^2}   \bigg )< O\Big(e^{\frac{-\mathrm{Vol}(\Delta)-\epsilon}{\pi b^2}}\Big)
\end{split}
\end{equation*}
for any $\epsilon<\epsilon_1.$ This completes the proof of (II). 
\smallskip

For (III), let $b_1$ be as in the proof of (II) above and still let $\epsilon_1$ be as in (\ref{ImU}). Then there  is a $b_0\in (0, b_1)$ such that for all $b<b_0,$
\begin{equation}\label{ImK}
\mathrm{Im}\kappa_{\boldsymbol\alpha}(\xi^*\pm \mathbf i L) b^2<\epsilon_1,
\end{equation}
$Kb^2<\epsilon_1$ and $Nb^2<\epsilon_1,$ where $K$ and $N$ are respectively the constants in Proposition \ref{bound}. We claim that, for $\xi\in\Gamma^*\setminus \Gamma^*_L$ and $b<b_0,$ 
\begin{equation}\label{cl}
\mathrm{Im}U_{\boldsymbol \alpha}(\xi)+\mathrm{Im}\kappa_{\boldsymbol \alpha}(\xi)b^2+\mathrm{Im}\nu_{\boldsymbol \alpha,b}(\xi)b^4<-2\big(\mathrm{Vol}(\Delta)+\epsilon_1\big)-(2\pi-\epsilon_1)\big(|\xi-\xi^*|-L\big),
\end{equation}
as a consequence of which we have
\begin{equation}\label{CI}
\begin{split}
& \frac{1}{\pi b}\int_{\Gamma^*\setminus \Gamma^*_L}\exp\bigg(\frac{\mathrm{Im}U_{\boldsymbol \alpha}(\xi)+\mathrm{Im}\kappa_{\boldsymbol \alpha}(\xi)b^2+\mathrm{Im}\nu_{\boldsymbol \alpha,b}(\xi)b^4}{2\pi b^2} \bigg)|d\xi| \\
 < & \frac{1}{\pi b} \exp\bigg(\frac{-\mathrm{Vol}(\Delta)-\epsilon_1}{\pi b^2}\bigg)\int_{\Gamma^*\setminus \Gamma^*_L}\exp\bigg(\frac{-(2\pi-\epsilon_1)\big(|\xi-\xi^*|-L\big)}{2\pi}\bigg) |d\xi|\\
< & O\Big(e^{\frac{-\mathrm{Vol}(\Delta)-\epsilon}{\pi b^2}}\Big)
\end{split}
\end{equation}
for any $\epsilon<\epsilon_1.$ 

For the proof of the claim, by (\ref{Dv1}), Proposition \ref{bound} (1) and the choice of $b_0,$ for $l>L,$ we have 
$$\frac{\partial}{\partial \mathrm{Im}\xi} \Big(\mathrm{Im}U_{\boldsymbol\alpha}(\xi^*+\mathbf il)+\mathrm{Im}\kappa_{\boldsymbol\alpha}(\xi^*+\mathbf il)b^2\Big)<-2\pi+Kb^2<-2\pi+\epsilon_1,$$
and 
$$\frac{\partial}{\partial \mathrm{Im}\xi} \Big(\mathrm{Im}U_{\boldsymbol\alpha}(\xi^*-\mathbf il)+\mathrm{Im}\kappa_{\boldsymbol\alpha}(\xi^*-\mathbf il)b^2\Big)>2\pi-Kb^2>2\pi-\epsilon_1.$$
Together with the Mean Value Theorem, (\ref{ImU}) and (\ref{ImK}), we have 
\begin{equation}\label{Bou}
\begin{split}
\mathrm{Im}U_{\boldsymbol\alpha}(\xi)+ \mathrm{Im}\kappa_{\boldsymbol\alpha}(\xi)b^2 < & \mathrm{Im}U_{\boldsymbol\alpha}(\xi^*\pm \mathbf iL) + \mathrm{Im}
\kappa_{\boldsymbol\alpha}(\xi^*\pm \mathbf iL)b^2 - (2\pi-\epsilon_1) \big |  \xi - (\xi^*\pm \mathbf iL ) \big|\\
< & -2\mathrm{Vol}(\Delta)-2\epsilon_1  -(2\pi-\epsilon_1)\big(|\xi-\xi^*|-L \big)
\end{split}
\end{equation}
for all $\xi \in \Gamma^*\setminus \Gamma^*_L.$  Finally, putting (\ref{Bou}) and (\ref{last2}) together, we have  (\ref{cl}) and the first  inequality in (\ref{CI}); and since 
$$|\xi-\xi^*|-L\to+\infty$$
as $\xi \in \Gamma^*\setminus \Gamma^*_L$ approaches $\infty,$ we have the second inequality in (\ref{CI}). This completes the proof of (III).
\medskip

Putting (I), (II)  and (III) together, we have as $b\to 0,$ 
$$\bigg\{\begin{matrix} a_1 & a_2 & a_3 \\ a_4 & a_5 & a_6 \end{matrix} \bigg\}_b=\frac{e^{\frac{-\mathrm{Vol}(\Delta)}{\pi b^2}}}{\sqrt[4]{-\det\mathrm{Gram}(\Delta)}}\Big(1+O\big(b^2\big)\Big). $$ 
\end{proof}

As an immediate consequence of Theorem \ref{vol3} and Proposition \ref{reflection},  the reflection symmetry, we have 

\begin{theorem}\label{vol} Let $(\theta_1,\dots, \theta_6)$ be the dihedral angles of a truncated hyperideal hyperbolic tetrahedron $\Delta.$ Then as $b\to0,$
$$\bigg\{\begin{matrix} \frac{Q}{2} \pm \frac{\theta_1}{2\pi b} & \frac{Q}{2} \pm  \frac{\theta_2}{2\pi b} & \frac{Q}{2} \pm \frac{\theta_3}{2\pi b} \\ \frac{Q}{2} \pm  \frac{\theta_4}{2\pi b} & \frac{Q}{2} \pm  \frac{\theta_5}{2\pi b} & \frac{Q}{2} \pm  \frac{\theta_6}{2\pi b} \end{matrix} \bigg\}_b=\frac{e^{-\frac{\mathrm{Vol}(\Delta)}{\pi b^2}}}{\sqrt[4]{-\det\mathrm{Gram}(\Delta)}} \Big(1 +O\big(b^2\big)\Big),$$
where in each entry of the $b$-$6j$ symbol the sign $+$ and $-$ can be chosen arbitrarily,  $\mathrm{Vol}(\Delta)$ is the hyperbolic volume of $\Delta$ and $\mathrm{Gram}(\Delta)$ is the Gram matrix of $\Delta$ in the dihedral angles. 
\end{theorem}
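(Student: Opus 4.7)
The plan is to deduce Theorem \ref{vol} directly from Theorem \ref{vol3} by applying the reflection symmetry of Proposition \ref{reflection} to each of the six entries of the $b$-$6j$ symbol independently. All the analytic work — the critical-point analysis of $U_{\boldsymbol\alpha}$, the Hessian computation, and the saddle-point estimate — has already been carried out in the proof of Theorem \ref{vol3}, so nothing new of that flavor needs to be done.

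The key observation is that the right-hand side of Theorem \ref{vol3} depends only on $\mathrm{Vol}(\Delta)$ and $\det \mathrm{Gram}(\Delta)$, both of which are intrinsic to $\Delta$ and therefore insensitive to any change of sign in the entries of the symbol. Thus the substance of Theorem \ref{vol} is simply to upgrade the $+$ case established by Theorem \ref{vol3} to a statement valid for every one of the $2^6$ choices of $\pm$.

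To realize an arbitrary sign pattern, I would proceed entry by entry. By Proposition \ref{reflection}, if $(a_1,\dots,a_6)$ is $b$-admissible, then so is the tuple obtained by replacing any single $a_k$ with $Q-a_k$, and the resulting $b$-$6j$ symbol has the same value. Since this substitution sends $\frac{Q}{2} + \frac{\theta_k}{2\pi b}$ to $\frac{Q}{2} - \frac{\theta_k}{2\pi b}$, it flips the sign in front of $\theta_k/(2\pi b)$ in that slot while leaving the value of the symbol unchanged. Iterating across the six entries realizes every possible sign pattern, each giving the asymptotic formula of Theorem \ref{vol3} verbatim.

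Since the RHS is sign-independent, the desired formula for an arbitrary $\pm$ choice follows immediately. There is no serious obstacle here: the only point that needs to be checked is that $b$-admissibility is preserved after each individual reflection, but this is precisely the content of Proposition \ref{reflection}, so the induction on the number of reflected entries is straightforward.
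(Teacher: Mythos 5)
Your argument is exactly the paper's: Theorem \ref{vol} is presented there as an immediate consequence of Theorem \ref{vol3} together with the reflection symmetry of Proposition \ref{reflection}, applied entry by entry just as you do, with the observation that the right-hand side depends only on $\mathrm{Vol}(\Delta)$ and $\det\mathrm{Gram}(\Delta)$ and is therefore sign-independent. The only caveat (affecting the paper's statement as much as your proof) is that Theorem \ref{vol} as printed lists $\theta_1,\theta_2,\theta_3$ in the top row while Theorem \ref{vol3} has $\theta_4,\theta_5,\theta_6$ there; since reflections only flip signs and never permute entries, you are in effect proving the statement with the arrangement of Theorem \ref{vol3} with arbitrary signs, which matches the intended reading (compare Theorems \ref{covpm} and \ref{vol2pm}).
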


\subsubsection{The anti-de Sitter case}

Let $(\theta_1,\dots, \theta_6)$ be the dihedral angles of a truncated hyperideal tetrahedron $\Delta$ in $\ads^3$. Without loss of generality, we assume that 
$\mathrm{Re}\theta_1=\mathrm{Re}\theta_4=\pi$
 and $\mathrm{Re}\theta_2=\mathrm{Re}\theta_3=\mathrm{Re}\theta_5=\mathrm{Re}\theta_6=0$ in the rest of this subsection.
For $k\in\{1,\dots,6\}$, let $a_k= \frac{Q}{2} + \frac{\theta_k}{2\pi b}$. 
Then we have 
$$\mathrm{Re}t_1=\mathrm{Re}t_2=\mathrm{Re}t_3=\mathrm{Re}t_4=2Q-\frac{b}{2},$$
$$\mathrm{Re}q_1=\mathrm{Re}q_2=3Q-b\quad\text{and}\quad\mathrm{Re}q_3=\mathrm{Re}q_4=2Q.$$
In particular, 
$$\mathrm{Re}q_j-\mathrm{Re}t_i\in\Big\{\frac{b}{2},Q-\frac{b}{2}\Big\}$$
for all $i,j\in\{1,2,3,4\},$
and the six-tuple $(a_1,\dots,a_6)$ is $b$-admissible. Also, by Theorem \ref{criterion2}, Remarks \ref{rm} and \ref{rm2}, we have any $i\in\{1,2,3,4\}$ and $j\in\{1,2\},$
$$\mathrm{Im}q_j<\mathrm{Im}t_i<\mathrm{Im}q_4=0<\mathrm{Im}q_3.$$
By Definition \ref{b6j}, the $b$-$6j$ symbol of $(a_1,\dots,a_6)$ is computed by
\begin{equation}\label{fk2}
\bigg\{\begin{matrix} a_1 & a_2 & a_3 \\ a_4 & a_5 & a_6 \end{matrix} \bigg\}_b=\Bigg(\frac{1}{\prod_{i=1}^4\prod_{j=1}^4S_b(q_j-t_i)}\Bigg)^{\frac{1}{2}}\int_\Gamma \prod_{i=1}^4S_b(u-t_i)\prod_{j=1}^4S_b(q_j-u)d u,
\end{equation}
where the contour $\Gamma$ is any vertical line passing the interval $(2Q-\frac{b}{2},2Q)$. See Figure \ref{Da3} (a).

\begin{figure}[htbp]
\centering
\includegraphics[scale=0.33]{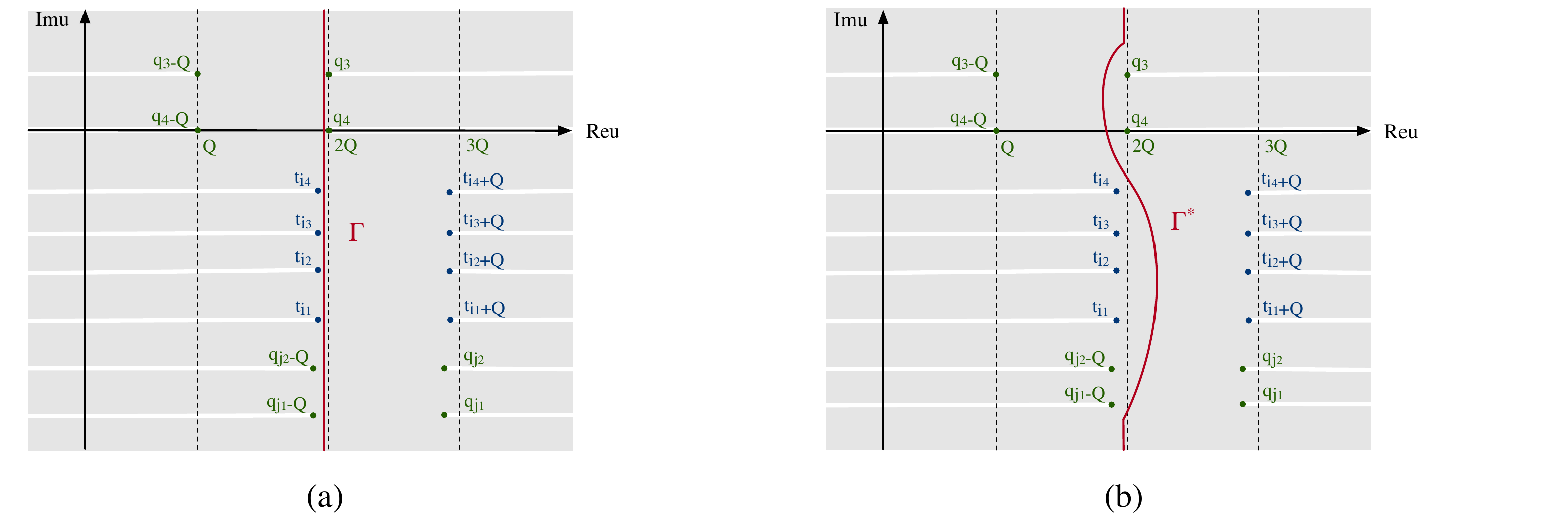}
\caption{Contours $\Gamma$ and $\Gamma^*$, and possible zeros and poles (located in the white rays) of the integrand in (\ref{fk2}), where $\{i_1,i_2,i_3,i_4\}=\{1,2,3,4\}$  with $\mathrm{Im}t_{i_1}\leqslant \mathrm{Im}t_{i_2}\leqslant \mathrm{Im}t_{i_3}\leqslant \mathrm{Im}t_{i_4},$ and $\{j_1,j_2\}=\{1,2\}$ with $\mathrm{Im}q_{j_1}\leqslant \mathrm{Im}q_{j_2}.$}
\label{Da3}
\end{figure}

By Proposition \ref{contour}, we can also choose the contour of integral as follows:
Given $L>0$ and $c\in [2Q-\frac{b}{2},2Q]$, let 
$\Gamma^*$ be a contour satisfying the following condition: for $u\in \Gamma^*$ with $|\mathrm {Im} u|\geqslant L$, $\mathrm {Re} u=c$; and for $u\in \Gamma^*$ with $|\mathrm {Im} u|\leqslant L,$ if $\mathrm{Im}u=\mathrm{Im}t_i$ for some $i\in\{1,2,3,4\},$ then 
$$\mathrm{Re}t_i<\mathrm{Re}u<\mathrm{Re}t_i+Q,$$
and if $\mathrm{Im}u=\mathrm{Im}q_j$ for some $j\in\{1,2,3,4\},$ then 
$$\mathrm{Re}q_j-Q<\mathrm{Re}u<\mathrm{Re}q_j.$$ See Figure \ref{Da3} (b).
Then for $L$ sufficiently large, we have 
\begin{equation*}
\bigg\{\begin{matrix} a_1 & a_2 & a_3 \\ a_4 & a_5 & a_6 \end{matrix} \bigg\}_b=\Bigg(\frac{1}{\prod_{i=1}^4\prod_{j=1}^4S_b(q_j-t_i)}\Bigg)^{\frac{1}{2}}\int_{\Gamma^*} \prod_{i=1}^4S_b(u-t_i)\prod_{j=1}^4S_b(q_j-u)du.
\end{equation*}

Under the change of variables in (\ref{alpha-a}) and (\ref{tau-t}), we have
\begin{equation}\label{retau}
\mathrm{Re}\tau_1=\mathrm{Re}\tau_2=\mathrm{Re}\tau_3=\mathrm{Re}\tau_4=2\pi,
\end{equation}
\begin{equation}\label{reeta}
\mathrm{Re}\eta_1=\mathrm{Re}\eta_2=3\pi\quad\text{and}\quad \mathrm{Re}\eta_3=\mathrm{Re}\eta_4=2\pi;
\end{equation}
and for any $i\in\{1,2,3,4\}$ and $j\in\{1,2\},$
we have
\begin{equation}\label{imtaueta}
\mathrm{Im}\eta_j<\mathrm{Im}\tau_i<\mathrm{Im}\eta_4=0<\mathrm{Im}\eta_3.
\end{equation}
Then by  (\ref{6jint}), we have 
\begin{equation}
\bigg\{\begin{matrix} a_1 & a_2 & a_3 \\ a_4 & a_5 & a_6 \end{matrix} \bigg\}_b=\frac{1}{\pi b}\int_{\Gamma}\exp\bigg(\frac{U_{\boldsymbol \alpha}(\xi)+\kappa_{\boldsymbol \alpha}(\xi)b^2+
\nu_{\boldsymbol \alpha,b}(\xi)b^4}{2\pi \mathbf ib^2} \bigg)d\xi,
\end{equation}
where $U_{\boldsymbol\alpha}$, $\kappa_{\boldsymbol\alpha}$ and $\nu_{\boldsymbol\alpha,b}$ are respectively as defined in (\ref{U}), (\ref{kappa}) and (\ref{nuU}), and the integral contour $\Gamma$ can be chosen as follows:  For  $\delta >0$ sufficiently small and $c \in (\delta, \pi-\delta),$ let $D_{\delta,c}=D_{\delta,c}^{\boldsymbol\alpha}$ be the shaded region depicted in Figure \ref{Ddc2} consisting of $\xi$ such that: (a) $2\pi-c<\mathrm{Re}\xi<2\pi+c$, (b) for $i\in\{1,2,3,4\}$, if $|\mathrm{Im}\xi-\mathrm{Im}\tau_i|<\delta,$ then $\mathrm{Re}\xi>2\pi+\delta,$ (c) for $j\in\{1,2\},$ if $|\mathrm{Im}\xi-\mathrm{Im}\eta_j|<\delta,$ then $\mathrm{Re}\xi>2\pi+\delta,$ and (d) for $j\in\{3,4\},$ if $|\mathrm{Im}\xi-\mathrm{Im}\eta_j|<\delta,$ then $\mathrm{Re}\xi<2\pi-\delta.$ Then $\Gamma$ is a contour in $D_{\delta,c}$ such that for $\xi\in\Gamma$ with $|\mathrm{Im}\xi|>L$
 for some $L$ sufficiently large, $\mathrm{Re}\xi=2\pi.$

\begin{figure}[htbp]
\centering
\includegraphics[scale=0.4]{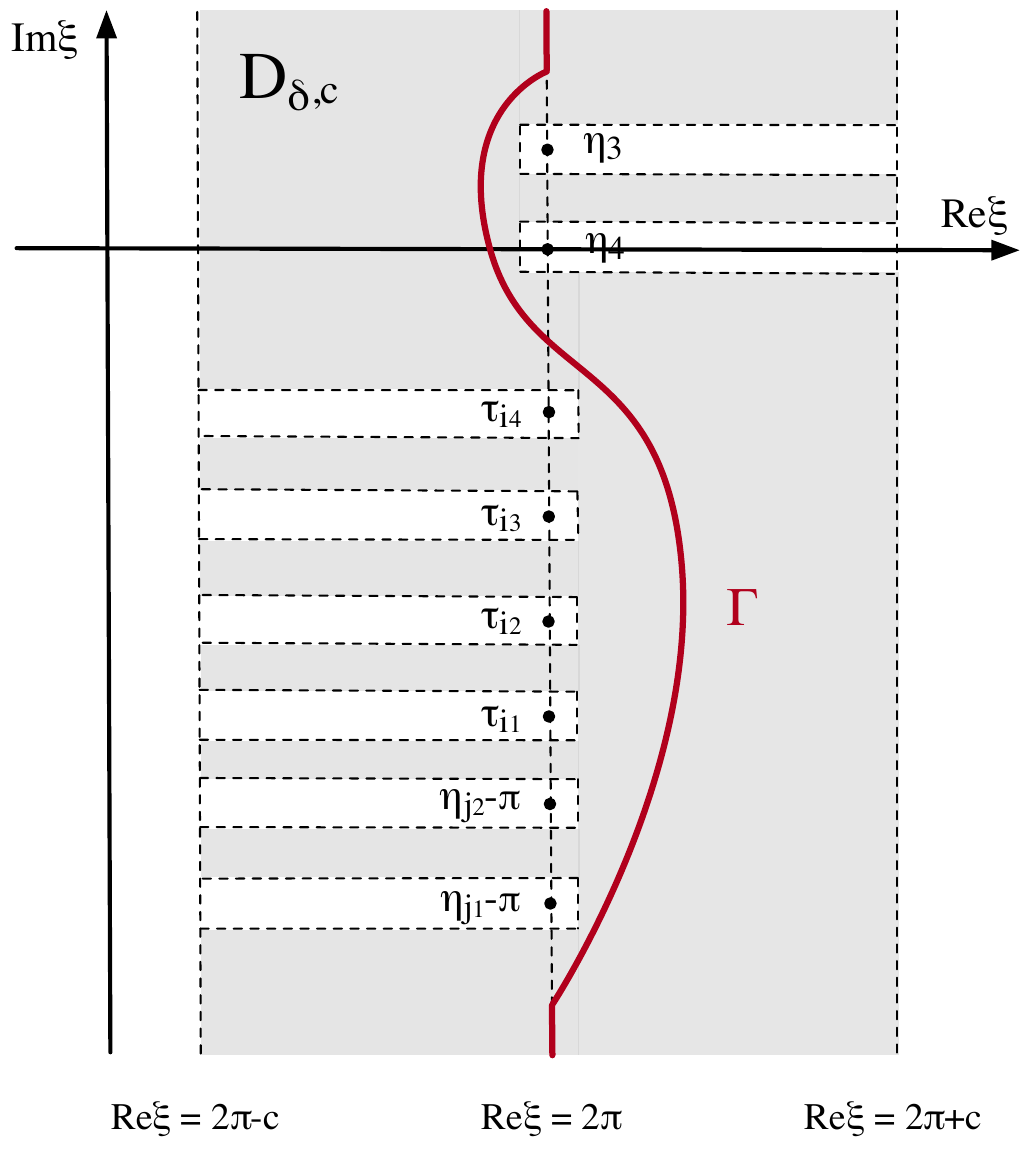}
\caption{Region $D_{\delta,c}$ and contour $\Gamma$, where $\{i_1,i_2,i_3,i_4\}=\{1,2,3,4\}$ with $\mathrm{Im}\tau_{i_1}\leqslant \mathrm{Im}\tau_{i_2}\leqslant \mathrm{Im}\tau_{i_3}\leqslant \mathrm{Im}\tau_{i_4},$ and $\{j_1,j_2\}=\{1,2\}$ with $\mathrm{Im}\eta_{j_1}\leqslant \mathrm{Im}\eta_{j_2}.$}
\label{Ddc2}
\end{figure}

\begin{proposition}\label{critical2} Let  $(\theta_1,\dots, \theta_6)$ be the dihedral angle of a truncated hyperideal tetrahedron $\Delta$ in $\ads^3$. Then the function $U_{\boldsymbol \alpha}(\xi)$ has a unique critical point $\xi^*$ in the domain $D_{\delta,c}$ with $$\mathrm{Re}\xi^*=2\pi,$$
and 
$$U_{\boldsymbol \alpha}(\xi^*)=2\mathrm{Vol}(\Delta).$$
\end{proposition}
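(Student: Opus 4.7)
The approach is to locate the critical points of $U_{\boldsymbol\alpha}$ by reducing $U'_{\boldsymbol\alpha}(\xi)=0$ to a polynomial equation, select the one lying in $D_{\delta,c}$, and finally identify the critical value with $2\mathrm{Vol}(\Delta)$ via Schl\"afli's formula and a deformation argument.

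First, paralleling the derivation preceding Proposition \ref{Hess}, I would differentiate $U_{\boldsymbol\alpha}$ using $L'(x)=2x-\pi+2\mathbf i\log(1-e^{2\mathbf i x})$ and convert $U'_{\boldsymbol\alpha}(\xi)=0$ into the quadratic $Az^2+Bz+C=0$ in $z=e^{-2\mathbf i\xi}$, with the same expressions for $A,B,C$ in terms of $u_k=e^{2\mathbf i\alpha_k}=-e^{\mathbf i\theta_k}$. Under the normalization $\mathrm{Re}\theta_1=\mathrm{Re}\theta_4=\pi$ and $\mathrm{Re}\theta_{2,3,5,6}=0$, one has $u_1,u_4>0$ and $u_2,u_3,u_5,u_6<0$, so $A,B,C$ are all real. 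By the algebraic identity of Uchiyama used in the proof of Proposition \ref{Hess}, the discriminant satisfies $B^2-4AC=16\det\mathrm{Gram}(\Delta)$, which is strictly positive by Theorem \ref{criterion2} (signature $(2,2)$); hence the quadratic has two distinct real roots $z^*_\pm$, and a direct sign check using parts (b) and (c) of Theorem \ref{criterion2} shows that both are positive.

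For each positive real root $z^*$, the equation $e^{-2\mathbf i\xi}=z^*$ has a unique solution in the strip $|\mathrm{Re}\xi-2\pi|<\pi/2$, namely $\xi=2\pi-\tfrac{\mathbf i}{2}\log z^*$, confirming $\mathrm{Re}\xi^*=2\pi$. To establish uniqueness in $D_{\delta,c}$, I would compare the two candidate imaginary parts $-\tfrac{1}{2}\log z^*_\pm$ against $\mathrm{Im}\tau_i$ and $\mathrm{Im}\eta_j$ (constrained by Remark \ref{rm}). A continuity argument in the angles $\theta_k$, starting from a degenerate configuration where one of the candidates escapes $D_{\delta,c}$ through conditions (b)--(d), together with Proposition \ref{cont}, should show that exactly one $\xi^*$ lies inside $D_{\delta,c}$ throughout the parameter space of AdS tetrahedra.

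To identify the critical value, I set $W(\theta_1,\ldots,\theta_6):=U_{\boldsymbol\alpha}(\xi^*)$ and apply the envelope theorem (using $U'_{\boldsymbol\alpha}(\xi^*)=0$) together with $\alpha_k=\tfrac{\pi}{2}+\tfrac{\theta_k}{2}$ to compute
\[
\frac{\partial W}{\partial \mathrm{Im}\theta_k}=\frac{\mathbf i}{2}\cdot\frac{\partial U_{\boldsymbol\alpha}(\xi^*)}{\partial \alpha_k}=-l_k,
\]
where the last equality is obtained by a direct computation in the spirit of the proof of Theorem \ref{ab}, expressing the derivatives of the $L$-terms in terms of the entries of the Gram matrix and using Proposition \ref{cos} and (\ref{cos2}). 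Combined with Proposition \ref{Sch}, which gives $\partial \mathrm{Vol}(\Delta)/\partial \mathrm{Im}\theta_k=-l_k/2$, this shows that $W-2\mathrm{Vol}$ is constant in the dihedral-angle variables on the space of AdS tetrahedra. Deforming $\Delta$ to a flat tetrahedron, Proposition \ref{cont} yields $\mathrm{Vol}(\Delta)\to 0$ while a direct inspection (using $L(0)=L(\pi)=0$ for the real contributions after setting all imaginary parts of $\theta_k$ to zero) gives $W\to 0$, so the constant vanishes and $W=2\mathrm{Vol}(\Delta)$.

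The main obstacle is the uniqueness in step two: the explicit discrimination between the two positive real roots of the quadratic as candidates in $D_{\delta,c}$ requires careful bookkeeping relative to the horizontal lines $\mathrm{Im}\xi=\mathrm{Im}\tau_i$ and $\mathrm{Im}\xi=\mathrm{Im}\eta_j$. A secondary difficulty is the Schl\"afli-style envelope computation, which demands controlled accounting of the branches of $\log(1-e^{2\mathbf ix})$ at the arguments $\xi^*-\tau_i$ and $\eta_j-\xi^*$, especially since the $\eta_j$ with $j\in\{1,2\}$ have $\mathrm{Re}(\eta_j-\xi^*)=\pi$ rather than $0$.
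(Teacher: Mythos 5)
Your reduction to the quadratic $Az^2+Bz+C=0$, the positivity of the two real roots, and your Schl\"afli-plus-degeneration identification of the critical value all match the paper's argument in spirit. However, there is a genuine gap in your uniqueness step, and it stems from treating the quadratic as \emph{equivalent} to $U'_{\boldsymbol\alpha}(\xi)=0$. Because $L'$ involves $2\mathbf i\log\bigl(1-e^{2\mathbf i x}\bigr)$, the identity for $U'_{\boldsymbol\alpha}$ only holds modulo $4\pi$, so exponentiating gives a \emph{necessary} condition: at the two points $\xi^*,\xi^{**}$ with $\mathrm{Re}\xi^*=\mathrm{Re}\xi^{**}=2\pi$ corresponding to the two positive roots, one only knows $U'_{\boldsymbol\alpha}=4k\pi$ and $4k'\pi$ for some integers $k,k'$. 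Your plan is to discard the second candidate by showing it escapes $D_{\delta,c}$ through conditions (b)--(d); this cannot work, since generically \emph{both} points lie inside $D_{\delta,c}$. In the paper's test configuration $\theta_1=\theta_4=\pi-\mathbf i\ln 2$, $\theta_2=\theta_3=\theta_5=\theta_6=\mathbf i\epsilon$, one finds $\xi^{**}\approx 2\pi-0.594\,\mathbf i$ while the nearest excluded bands sit near $\mathrm{Im}\tau_i\approx-0.347$ and $\mathrm{Im}\eta_{1},\mathrm{Im}\eta_2\approx-0.693$, so for small $\delta$ the point $\xi^{**}$ is well inside the domain. The reason $\xi^{**}$ is not a critical point is not its location but the branch ambiguity: one must show $k=0$ and $k'=2$, i.e.\ $U'_{\boldsymbol\alpha}(\xi^{**})=8\pi\neq 0$. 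The paper does this by noting that $k,k'$ are integers depending continuously on the angles, evaluating them at the special configuration above, and reading off the derivative of $\mathrm{Im}U_{\boldsymbol\alpha}$ along the line $\mathrm{Re}\xi=2\pi$ from the piecewise-linear structure of Proposition \ref{PL}: it is $0$ on $(\mathrm{Im}\tau_{i_4},\mathrm{Im}\eta_4)$, where $\xi^*$ lies, and $8\pi$ on $(\mathrm{Im}\eta_{j_2},\mathrm{Im}\tau_{i_1})$, where $\xi^{**}$ lies. Without some argument of this kind you have neither verified that $\xi^*$ is genuinely critical (rather than critical only mod $4\pi$) nor excluded $\xi^{**}$, so uniqueness is not established.

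A smaller remark on your final step: to conclude $W\to 0$ under the degeneration to a flat tetrahedron you need to know which of the two candidate points you are following; the paper again uses Proposition \ref{PL} to pin $\mathrm{Im}\xi^*\in(\mathrm{Im}\tau_{i_4},\mathrm{Im}\eta_4)$, which forces $\xi^*(\epsilon)\to 2\pi$ and hence $U_{\boldsymbol\alpha}(\xi^*(\epsilon))\to 0$. Your ``direct inspection'' should be made to depend on this localization; otherwise the limit could a priori be taken along the spurious root. The Schl\"afli computation itself, including the cofactor identity $\cosh l_1=-G_{12}/\sqrt{G_{11}G_{22}}$ and the constant-determination via Proposition \ref{cont}, is the same as the paper's and is fine.
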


As a direct consequence of Proposition \ref{critical2}, we have the following volume formula of a truncated hyperideal tetrahedron in $\ads^3$ in terms of the dihedral angles. 

\begin{theorem}\label{volume2} 
Let $\Delta$ be a truncated hyperideal tetrahedron in $\ads^3$ with dihedral angles $(\theta_1,\dots ,\theta_6),$ and let 
$$W(\theta_1,\dots,\theta_6)=U_{\boldsymbol\alpha}(\xi^*).$$ Then
$$\mathrm{Vol}(\Delta)=\frac{1}{2}\cdot W(\theta_1,\dots,\theta_6).$$
\end{theorem}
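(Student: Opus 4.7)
The plan is to derive Theorem \ref{volume2} as an essentially immediate consequence of Proposition \ref{critical2}. By the very definition $W(\theta_1,\dots,\theta_6):=U_{\boldsymbol\alpha}(\xi^*)$ stated in the theorem, and the identity $U_{\boldsymbol\alpha}(\xi^*)=2\mathrm{Vol}(\Delta)$ asserted in Proposition \ref{critical2}, we obtain $\mathrm{Vol}(\Delta)=\tfrac{1}{2}W(\theta_1,\dots,\theta_6)$ in one line. All the content therefore lies in justifying Proposition \ref{critical2}, so that is what I would concentrate on.

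My first step would be to locate the critical points of $U_{\boldsymbol\alpha}$ explicitly. Following the strategy already used in the proof of Theorem \ref{ab} and in Proposition \ref{Hess}, the equation $U_{\boldsymbol\alpha}'(\xi)=0$ transforms, under the substitution $z=e^{-2\mathbf i\xi}$, into a quadratic $Az^{2}+Bz+C=0$ whose coefficients are symmetric functions of $u_{k}=e^{2\mathbf i\alpha_{k}}$. Using Theorem \ref{criterion2} on the signature of $\mathrm{Gram}(\Delta)$ and the normalization $\mathrm{Re}\theta_{1}=\mathrm{Re}\theta_{4}=\pi$ (the other real parts vanishing), I would show that the discriminant is positive and that exactly one of the two roots $z^{*}$ corresponds, via $\xi^*=-\tfrac{\mathbf i}{2}\log z^*$, to a point with $\mathrm{Re}\xi^*=2\pi$ sitting in $D_{\delta,c}$. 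Uniqueness inside $D_{\delta,c}$ would follow from an analog of Proposition \ref{decreasing}, using the functional equations of $L$ together with the real-part locations of $\tau_{i}$ and $\eta_{j}$ recorded in \eqref{retau}--\eqref{imtaueta}.

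To identify the critical value with the volume, I would apply an envelope/Schl\"afli argument. Because $\xi^{*}$ is critical, the dependence of $W(\theta_{1},\dots,\theta_{6})=U_{\boldsymbol\alpha}(\xi^{*})$ on the angles is simply the explicit $\boldsymbol\alpha$-dependence of $U_{\boldsymbol\alpha}$ evaluated at $\xi^{*}$. A direct computation (parallel to the one yielding \eqref{co-schlafli} in Theorem \ref{ab}, but now using $L(\theta)$ on the appropriate branches for purely imaginary and for $\pi+\mathbf i\mathbb R$ arguments) would give $\tfrac{\partial W}{\partial \mathrm{Im}\theta_k}=-l_k$, matching the Schl\"afli formula of Proposition \ref{Sch} applied to $2\mathrm{Vol}(\Delta)$. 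Consequently $W-2\mathrm{Vol}(\Delta)$ is locally constant on the space of admissible dihedral angles.

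The main obstacle will be pinning down that constant. I would use the degeneration provided by Proposition \ref{cont}: deform $\Delta$ to a flat truncated hyperideal tetrahedron $\Delta_{0}$ in a space-like totally geodesic plane of $\ads^{3}$, for which $\mathrm{Vol}(\Delta_{0})=0$. The delicate point is that as $\Delta(t)\to\Delta_{0}$, one must track the critical point $\xi^{*}(t)$ and check that it stays in a region where $L$ is holomorphic, so that $U_{\boldsymbol\alpha(t)}(\xi^{*}(t))\to 0$ as well. If this limiting behaviour is controlled (using the quadratic formula for $\xi^{*}$ and the continuity of $L$ on its domain of definition), then the constant vanishes and we get $W=2\mathrm{Vol}(\Delta)$ identically, completing the proof of Proposition \ref{critical2} and hence of Theorem \ref{volume2}.
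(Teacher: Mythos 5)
Your reduction of the theorem itself is exactly the paper's: given Proposition \ref{critical2}, the identity $\mathrm{Vol}(\Delta)=\tfrac12 W(\theta_1,\dots,\theta_6)$ is immediate, and your plan for the critical value (verify $\partial W/\partial\,\mathrm{Im}\theta_k=-l_k$ as in \eqref{co-schlafli}, conclude $W-2\mathrm{Vol}(\Delta)$ is constant, and kill the constant by degenerating to a flat tetrahedron via Proposition \ref{cont}) is precisely how the paper proves Proposition \ref{critical2}. So in outline you are on the paper's route.

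There is, however, one concrete step in your sketch that would fail as written: you claim that "exactly one of the two roots $z^*$ corresponds, via $\xi^*=-\tfrac{\mathbf i}{2}\log z^*$, to a point with $\mathrm{Re}\,\xi^*=2\pi$ sitting in $D_{\delta,c}$." In fact, by Lemma \ref{MathCompetition} one has $A>0$, $B<0$, $C>0$, so \emph{both} roots $z^*,z^{**}$ are positive reals, and both give points $\xi^*,\xi^{**}$ with real part $2\pi$ lying in $D_{\delta,c}$. The quadratic equation is only a \emph{necessary} condition for criticality, because $U_{\boldsymbol\alpha}'$ is determined modulo $4\pi$ (see \eqref{mod4}); the genuine issue is to show $U_{\boldsymbol\alpha}'(\xi^*)=0$ while $U_{\boldsymbol\alpha}'(\xi^{**})=8\pi$. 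The paper does this by observing that $U_{\boldsymbol\alpha}'(\xi^*)=4k\pi$ and $U_{\boldsymbol\alpha}'(\xi^{**})=4k'\pi$ with integers $k,k'$ depending continuously on the angles, and then evaluating at the concrete family $\theta_1=\theta_4=\pi-\mathbf i\ln 2$, $\theta_2=\theta_3=\theta_5=\theta_6=\mathbf i\epsilon$, where Proposition \ref{PL} pins down the slopes $0$ and $8\pi$. Your proposed "analog of Proposition \ref{decreasing}" cannot substitute for this: in the anti-de Sitter case $\mathrm{Im}\,U_{\boldsymbol\alpha}$ is piecewise linear on the line $\mathrm{Re}\,\xi=2\pi$ and is \emph{constant} on the segment containing $\xi^*$ (Proposition \ref{PL}), so no strict monotonicity argument distinguishes the two roots; one also needs the transverse information of Proposition \ref{nonvanish} (nonvanishing of $\partial\,\mathrm{Im}U_{\boldsymbol\alpha}/\partial\,\mathrm{Re}\xi$ away from $\xi^*$) to get uniqueness of the critical point on that segment. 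The same bookkeeping of $\xi^*$ (namely $\mathrm{Im}\,\xi^*\in(\mathrm{Im}\tau_{i_4},\mathrm{Im}\eta_4)$, again from Proposition \ref{PL}) is what guarantees $\xi^*(\epsilon)\to 2\pi$ in your flat-degeneration step, so this branch analysis is needed twice, not just for existence.
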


\begin{remark}\label{3.20}
By Remark \ref{ccov}, Proposition \ref{critical2} and Theorem \ref{volume2} can be re-stated in terms of the genuine volume as 
$$W(\theta_1,\dots,\theta_6)=-2\mathbf i\mathrm{Vol}_{\mathbb C}(\Delta).$$
Compare with \cite[Proposition 3.1 (1)]{LMSWY}, Remark \ref{3.4} and Proposition \ref{critical1}.
\end{remark}

\begin{proof}[Proof of Proposition \ref{critical2}]
The proof follows the arguments adapted from \cite{MY, U, BY, LMSWY}. 
For $k\in\{1,\dots,6\},$ let $u_k=e^{2\mathbf i\alpha_k},$ and let $z=e^{-2\mathbf i\xi}.$ Then by a direct computation or \cite[Eq.\ (3.9)]{BY} we have
\begin{equation}\label{mod4}
U_{\boldsymbol \alpha}'(\xi)=2\mathbf i\log\frac{(1-zu_1u_2u_3)(1-zu_1u_5u_6)(1-zu_2u_4u_6)(1-zu_3u_4u_5)}{(1-z)(1-zu_1u_2u_4u_5)(1-zu_1u_3u_4u_6)(1-zu_2u_3u_5u_6)}\quad\quad (\mathrm{mod}\ 4\pi).
\end{equation} Here $\mathrm{mod}\ 4\pi$ is due to the ``$2\mathbf i\log$'' in~\eqref{eq:Lx}. Then, as a necessary condition, $U'_{\boldsymbol \alpha}(\xi)=0$ implies
\begin{equation*}
\frac{(1-zu_1u_2u_3)(1-zu_1u_5u_6)(1-zu_2u_4u_6)(1-zu_3u_4u_5)}{(1-z)(1-zu_1u_2u_4u_5)(1-zu_1u_3u_4u_6)(1-zu_2u_3u_5u_6)}=1,
\end{equation*}
which is equivalent to the following quadratic equation  (see \cite[Eq. (3.11)]{BY})
$$Az^2+Bz+C=0$$
with 
\begin{equation*}
\begin{split}
A=&u_1u_4+u_2u_5+u_3u_6-u_1u_2u_6-u_1u_3u_5-u_2u_3u_4-u_4u_5u_6+u_1u_2u_3u_4u_5u_6,\\
B=&-\Big(u_1-\frac{1}{u_1}\Big)\Big(u_4-\frac{1}{u_4}\Big)-\Big(u_2-\frac{1}{u_2}\Big)\Big(u_5-\frac{1}{u_5}\Big)-\Big(u_3-\frac{1}{u_3}\Big)\Big(u_6-\frac{1}{u_6}\Big),\\
C=&\frac{1}{u_1u_4}+\frac{1}{u_2u_5}+\frac{1}{u_3u_6}-\frac{1}{u_1u_2u_6}-\frac{1}{u_1u_3u_5}-\frac{1}{u_2u_3u_4}-\frac{1}{u_4u_5u_6}+\frac{1}{u_1u_2u_3u_4u_5u_6}.\\
\end{split}
\end{equation*}
By  \cite[Page 15]{BY} and Theorem \ref{criterion2}, we have $B^2-4AC=16\det\mathrm{Gram}(\Delta)>0$.
Let 
\begin{equation*}\label{xi}
z^*=\frac{-B+\sqrt{B^2-4AC}}{2A}\quad\text{and}\quad {z^{**}}=\frac{-B-\sqrt{B^2-4AC}}{2A}
\end{equation*}
be the two roots of this quadratic equation. Then by Lemma \ref{MathCompetition} below, we have $A>0,$ $B<0$ and $C>0;$ and as a consequence, we have 
$z^*>0$ and $z^{**}>0,$ and hence there are $\xi^*$ and $\xi^{**}$ in $D_{\delta,c}$ with $\mathrm{Re}\xi^*=\mathrm{Re}\xi^{**}=2\pi$ such that $z^*=e^{-2\mathbf i\xi^*}$ and ${z^{**}}=e^{-2\mathbf i{\xi^{**}}}$. 

By~\eqref{mod4} we have 
$$U_{\boldsymbol \alpha}'(\xi^*)=4k\pi\quad\text{and}\quad U_{\boldsymbol \alpha}'(\xi^{**})=4k'\pi$$
for some integers $k$ and $k'.$ We claim that $k=0$ and $k'=2,$ as a consequence of which $\xi^*$ is the unique critical point of $U_{\boldsymbol\alpha}$ in $D_{\delta,c}.$  Indeed, as $k$ and $k'$ are integers depending continuously on $\{\theta_1,\dots,\theta_6\},$ it suffices to compute them at a special value. To do this, let $\theta_1=\theta_4=\pi-\mathbf i \ln 2,$ and let $\theta_2=\theta_3=\theta_5=\theta_6=\mathbf i \epsilon$ for some $\epsilon>0$ sufficiently small. A direct computation verifies that $(\theta_1,\dots,\theta_6)$ satisfies the conditions (a), (b) and (c) of Theorem \ref{criterion2} (2), hence are the dihedral angles of a truncated hyperideal tetrahedron in $\ads^3.$
Then $u_1=u_4=2$ and $u_2=u_3=u_5=u_6=e^{2\mathbf i\epsilon}$ which is sufficiently close to $1.$ As a consequence, the quadratic equation $Az^2+Bz+C=0$ is sufficiently close to $2z^2-\frac{9}{4}z+\frac{1}{2}=0$ with $z^*$ sufficiently close to $\frac{9+\sqrt{17}}{16}\approx 0.820$ and $z^{**}$ sufficiently close to $\frac{9-\sqrt{17}}{16}\approx 0.305,$ or equivalently, $\xi^*\approx 2\pi - 0.099\mathbf i$ and $\xi^{**}\approx 2\pi - 0.594 \mathbf i.$  On the other hand, we have $\mathrm{Im}\eta_1=\mathrm{Im}\eta_2=\ln 2 + 2\epsilon \approx -0.693$ and $\mathrm{Im}\tau_1=\mathrm{Im}\tau_2=\mathrm{Im}\tau_3=\mathrm{Im}\tau_4=\frac{\ln 2}{2}+2\epsilon \approx -0.347.$ This implies that $\mathrm{Im}\xi^*\in(\mathrm{Im}\tau_{i_4},\mathrm{Im}\eta_4)$ and $\mathrm{Im}\xi^{**}\in(\mathrm{Im}\eta_{j_2},\mathrm{Im}\tau_{i_1}).$ Then by Proposition \ref{PL} below, 
$$\frac{\partial \mathrm{Im}U_{\boldsymbol\alpha}(\xi)}{\partial \mathrm{Im}\xi}\Big|_{\xi=\xi^*}=0\quad\text{and}\quad \frac{\partial \mathrm{Im}U_{\boldsymbol\alpha}(\xi)}{\partial \mathrm{Im}\xi}\Big|_{\xi=\xi^{**}}=8\pi,$$
implying that $k=0$ and $k'=2.$

To see the critical value $U_{\boldsymbol\alpha}(\xi^*),$ we use the same argument as in the proof of Theorem \ref{ab}. Namely, let 
$$W(\theta_1,\dots,\theta_6)=U_{\boldsymbol\alpha}(\xi^*);$$
and we first verify that $\frac{W}{2}$ satisfies the Schlafli formula (\ref{sch}), or equivalently, that 
\begin{equation}\label{2sch}
\frac{\partial W(\theta_1,\dots,\theta_6)}{\partial\mathrm{Im}\theta_k}=-l_k.
\end{equation}
By the symmetry of $W$ in its variables, it suffices to prove it for $k=1.$ By the same algebraic computation as in  \cite[Eq. (3.14)]{LMSWY}, we have 
\begin{equation}\label{pW2}
\frac{\partial W(\theta_1,\dots,\theta_6)}{\partial\mathrm{Im}\theta_1}=\frac{\partial W(\theta_1,\dots,\theta_6)}{\partial 
(-\mathbf i\theta_1)}=-\frac{1}{2}\log\frac{G_{12}-\sqrt{G_{12}^2-G_{11}G_{22}}}{G_{12}+\sqrt{G_{12}^2-G_{11}G_{22}}}.
\end{equation}
To show that the right hand side of (\ref{pW2}) equals $-l_1,$ 
by (\ref{cosh}) and (\ref{inner}), we have
$$\cosh l_1=-\frac{G_{12}}{\sqrt{G_{11}G_{22}}}.$$
Then
$$\frac{G_{12}-\sqrt{G_{12}^2-G_{11}G_{22}}}{G_{12}+\sqrt{G_{12}^2-G_{11}G_{22}}}=\frac{-\frac{G_{12}}{\sqrt{G_{11}G_{22}}}+\sqrt{\frac{G_{12}^2}{G_{11}G_{22}}-1}}{-\frac{G_{12}}{\sqrt{G_{11}G_{22}}}-\sqrt{\frac{G_{12}^2}{G_{11}G_{22}}-1}}=\frac{\cosh l_1-\sqrt{\cosh^2l_1-1}}{\cosh l_1+\sqrt{\cosh^2l_1-1}}=e^{2l_1},$$
from (\ref{pW2}) and which, (\ref{2sch}) follows.
This implies that 
$$W(\theta_1,\dots,\theta_6)=2\mathrm{Vol}(\Delta)+K$$
for some constant $K.$ To determine this constant $K$, we check the values at the  dihedral angles  $\boldsymbol\theta(\epsilon)=(\theta_1(\epsilon), \dots, \theta_6(\epsilon))$ of a continuous family of hyperideal tetrahedra in $\ads^3$  that converge as $\epsilon$ tends to $0$ to a flat tetrahedron $\Delta_0$. Then we have 
$\lim_{\epsilon\to0}\boldsymbol\theta(\epsilon)=(\pi,0,0,\pi,0,0),$
$\lim_{\epsilon\to 0}\tau_i(\epsilon)=2\pi$ for all $i\in\{1,2,3,4\}$, $\lim_{\epsilon\to 0}\eta_j(\epsilon)=3\pi$ for all $j\in\{1,2\}$, and $\lim_{\epsilon\to 0}\eta_j(\epsilon)=2\pi$ for all $j\in\{3,4\}$. By Proposition \ref{PL} below, $\mathrm{Im}\xi^*\in (\mathrm{Im}\tau_{i_4},\mathrm{Im}\eta_4),$ which implies that 
$\lim_{\epsilon\to 0}\xi^*(\epsilon)=2\pi$, and 
\begin{equation*}
\lim_{\epsilon\to 0} W(\boldsymbol\theta(\epsilon)) =0=2\mathrm{Vol}(\Delta_0).
\end{equation*}
This, by Proposition \ref{cont}, implies that $K=0,$ completing the proof.
\end{proof}

\begin{lemma}\label{MathCompetition} 
\begin{enumerate}[(1)]
\item $A>0$,
\item $B<0,$ and 
\item $C>0.$
\end{enumerate}
\end{lemma}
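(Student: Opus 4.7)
The plan is to parametrise the $u_k$ explicitly in the AdS case and then verify each sign by direct computation, in the order $B<0$, $A>0$, $C>0$.

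Under the hypotheses $\mathrm{Re}\theta_1=\mathrm{Re}\theta_4=\pi$ and $\mathrm{Re}\theta_k=0$ for $k\in\{2,3,5,6\}$, write $\theta_1=\pi-\mathbf i\varphi_1$, $\theta_4=\pi-\mathbf i\varphi_4$ with $\varphi_1,\varphi_4>0$, and $\theta_k=\mathbf i\varphi_k$ with $\varphi_k>0$ for $k\in\{2,3,5,6\}$. Since $\alpha_k=\tfrac{\pi}{2}+\tfrac{\theta_k}{2}$, one has $u_k=e^{2\mathbf i\alpha_k}=-e^{\mathbf i\theta_k}$, so $u_1=e^{\varphi_1}$, $u_4=e^{\varphi_4}\in(1,\infty)$ and $u_k=-e^{-\varphi_k}\in(-1,0)$ for $k\in\{2,3,5,6\}$. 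In particular $u_k-u_k^{-1}=2\sinh\varphi_k>0$ for every $k$, and the claim $B<0$ is immediate since each of the three summands in $-B$ is a product of two positive quantities of this form.

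For $A>0$, set $p=u_1$, $q=u_4$ and $a=-u_2$, $b=-u_3$, $c=-u_5$, $d=-u_6$, so that $p,q>1$ and $a,b,c,d\in(0,1)$. A direct expansion gives
\[
A=pq(1+abcd)+(ac+bd)-p(ad+bc)-q(ab+cd)=pQ_1-Q_2,
\]
where $Q_1=q(1+abcd)-(ad+bc)$ and $Q_2=q(ab+cd)-(ac+bd)$. Using $q\geqslant 1$ and the factorisation $1+abcd-ad-bc=(1-ad)(1-bc)$ gives $Q_1\geqslant(1-ad)(1-bc)>0$. The elementary inequality $1-xy\geqslant|x-y|$ for $x,y\in[0,1]$ (equivalent to $(1-x^2)(1-y^2)\geqslant 0$), combined with $Q_1-Q_2=q(1-ab)(1-cd)+(a-b)(c-d)$, yields $Q_1-Q_2\geqslant(1-ab)(1-cd)-|(a-b)(c-d)|\geqslant 0$. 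Hence $A=(p-1)Q_1+(Q_1-Q_2)>0$, since $p>1$ strictly.

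The claim $C>0$ is the main obstacle. A naive attempt would use the algebraic identity $C(u_1,\dots,u_6)=A(u_1^{-1},\dots,u_6^{-1})$ and rerun the $A$-argument at the reflected arguments, but this fails because $(1/p,1/q)\in(0,1)^2$ and $A$ can genuinely be negative on the enlarged parameter region (explicit six-tuples in $\mathrm A_1$ that violate conditions (b)--(c) of Theorem~\ref{criterion2} exhibit this). The AdS-specific hypotheses must therefore be invoked. The plan is to combine the identity $4AC=B^2-16\det\mathrm{Gram}(\Delta)$ with $\det\mathrm{Gram}(\Delta)>0$ from condition (a) and a continuity argument: the AdS moduli inside a connected component of $\mathbb A$ is itself connected, $C$ is positive at one explicit symmetric AdS tetrahedron (e.g.\ the one with $\theta_1=\theta_4=\pi-\mathbf i\ln 2$ and small pure-imaginary $\theta_{2,3,5,6}$ used later in the proof of Proposition~\ref{critical2}), and the cofactor inequalities $G_{ij}<0$ of condition (c) should prevent $C$ from vanishing on the AdS moduli. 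Producing a clean algebraic identity that relates the vanishing locus of $C$ to a specific cofactor $G_{ij}$, and thereby making this continuity argument rigorous, is where I expect the real technical work to lie.
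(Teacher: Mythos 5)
Your treatment of (1) and (2) is correct, and your proof of $A>0$ is genuinely different from the paper's: writing $u_1=p,u_4=q>1$, $u_j=-a,-b,-c,-d$ with $a,b,c,d\in(0,1)$, the identity $A=pQ_1-Q_2$ with $Q_1=q(1+abcd)-(ad+bc)\geqslant(1-ad)(1-bc)>0$ and $Q_1-Q_2=q(1-ab)(1-cd)+(a-b)(c-d)\geqslant 0$ is valid, so $A=(p-1)Q_1+(Q_1-Q_2)>0$ using only the ranges of the $u_k$. The paper instead proves $A>0$ by a four-term pairing $(u_1u_6-u_2)(u_3u_4-u_2)\geqslant(1+u_2)^2$, etc., which also invokes the cross-conditions $u_iu_j<-1$ for $i\in\{1,4\}$, $j\in\{2,3,5,6\}$ (coming from Theorem \ref{criterion2} and Remark \ref{rm}(3)); your argument shows these are not needed for $A$, which is a nice simplification. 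The claim $B<0$ is handled the same way in both.

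The genuine gap is (3): you do not prove $C>0$, you only sketch a plan, and the plan as stated does not close the gap. The identity $B^2-4AC=16\det\mathrm{Gram}(\Delta)$ together with $\det\mathrm{Gram}(\Delta)>0$ only says $4AC<B^2$, which places no constraint on the sign of $C$; so the discriminant gives you nothing beyond reality of the roots. The continuity argument then rests on two unproven ingredients -- connectedness of the set of AdS dihedral-angle six-tuples inside a component of $\mathbb A$, and non-vanishing of $C$ on that set -- and the second is essentially the statement you are trying to prove, so nothing has been reduced. Your diagnosis that the cross-conditions must enter is right (indeed, with only $p,q>1$ and $a,b,c,d\in(0,1)$ one can make $C<0$, e.g.\ $p$ large, $q$ near $1$, $a=b=0.9$, $c=d=0.1$), but the paper needs no global argument at all: it proves $C>0$ pointwise, by passing to the reciprocals and recording that $0<u_1^{-1},u_4^{-1}<1$, that $u_j^{-1}<-1$ for $j\in\{2,3,5,6\}$, and crucially that $-1<(u_iu_j)^{-1}<0$ for $i\in\{1,4\}$, $j\in\{2,3,5,6\}$, and then rerunning its decomposition of $A$ in these variables, where the paired factors such as $(v_1v_6-v_2)(v_3v_4-v_2)\geqslant(1+v_2)^2$ are exactly where the cross-conditions do the work. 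This is the ingredient your decomposition discards, which is why your reciprocal transfer fails: $C(u_1,\dots,u_6)=A(u_1^{-1},\dots,u_6^{-1})$ lands outside the region covered by your $Q_1,Q_2$ argument. To complete your proposal along the paper's lines you would need an algebraic estimate for $C$ (equivalently for $u_1\cdots u_6\,C=1+abcd+pq(ac+bd)-p(ab+cd)-q(ad+bc)$) that explicitly uses $pa,pb,pc,pd,qa,qb,qc,qd>1$, rather than a global continuity scheme.
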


\begin{proof}       
To see (1),  we first observe the following:
\begin{enumerate}[(a)]
\item $u_1>1,$ $u_4>1,$ $-1<u_2<0,$ $-1<u_3<0,$ $-1<u_5<0$, $-1<u_6<0,$ and 
\item for $i\in\{1,4\}$ and $j\in \{2,3,5,6\},$ $u_iu_j<-1.$
\end{enumerate}
Indeed, (a) follows directly from the definition of the dihedral angles $\theta_k$'s, and (b) follows directly from Theorem \ref{criterion2}, Remark \ref{rm} and  Remark \ref{rm2} (in particular, (3) of Remark \ref{rm}). 

Then we have 
\begin{equation*}
\begin{split}
A&=(u_1u_4-1)(u_2u_5-1)(u_3u_6-1)\\
&\ \ + \frac{1}{2}\Big( -u_2^2 -u_3^2 -u_5^2 -u_6^2  + 2u_2u_3u_5u_6+2  \\
&\ \ + (u_1u_6-u_2)(u_3u_4-u_2)+ (u_1u_5-u_3)(u_2u_4-u_3)\\
&\ \ + (u_1u_3-u_5)(u_4u_6-u_5)+ (u_1u_2-u_6)(u_4u_5-u_6) \Big)\\
&>(u_1u_4-1)(u_2u_5-1)(u_3u_6-1)\\
&\ \ +\frac{1}{2}\bigg( \Big((u_1u_6-u_2)(u_3u_4-u_2)- (1+u_2)^2\Big)\\
&\ \ + \Big((u_1u_5-u_3)(u_2u_4-u_3) -(1+u_3)^2 \Big) \\
&\ \ + \Big((u_1u_3-u_5)(u_4u_6-u_5) -(1+u_5)^2\Big)\\
&\ \ + \Big((u_1u_2-u_6)(u_4u_5-u_6)- (1+u_6)^2 \Big)\bigg)>0,\\
\end{split}
\end{equation*}
where the first inequality comes from (a) and that 
$u_2u_3u_5u_6+u_2+u_3+u_5+u_6+3=(u_2u_3-1)(u_5u_6-1)+(u_2+1)(u_3+1)+(u_5+1)(u_6+1)>0,$ hence $2u_2u_3u_5u_6+2>-2u_2-2u_3-2u_5-2u_6-4,$
and the last inequality comes from  that each summand is positive due to (a) and (b).
\medskip

To see (3), as a consequence of (a) and (b), we have 
\begin{enumerate}[(a$'$)]
\item 
$0<\frac{1}{u_1}<1,$ $0<\frac{1}{u_4}<1,$ $\frac{1}{u_2}<-1,$ $\frac{1}{u_3}<-1,$ $\frac{1}{u_5}<-1$, $\frac{1}{u_6}<-1$, and
\item for $i\in\{1,4\}$ and $j\in \{2,3,5,6\},$ $-1<\frac{1}{u_iu_j}<0.$
\end{enumerate}
Then the proof follows verbatim that of $A>0.$
\medskip

Finally, (2) follows directly from (a) and (a$'$).
\end{proof}

 \begin{proposition}\label{Hess2} At the critical point  $\xi^*$ of $U_{\boldsymbol\alpha}$, we have
 \begin{equation*}
\frac{-U''_{\boldsymbol \alpha}(\xi^*)}{\exp\big(\frac{\kappa_{\boldsymbol \alpha}(\xi^*)}{\pi \mathbf i}\big)}=16\sqrt{\det\mathrm{Gram}(\Delta)}.
\end{equation*}
As a consequence, $\xi^*$ is a non-degenerate critical point of $U_{\boldsymbol\alpha}.$
\end{proposition}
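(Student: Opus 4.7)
The proof will parallel that of Proposition \ref{Hess} in the hyperbolic case, exploiting the fact that the critical point equation $U'_{\boldsymbol\alpha}(\xi^*)=0$ and the algebraic structure of $U''_{\boldsymbol\alpha}$ depend only on $u_k=e^{2\mathbf i\alpha_k}$ and $z^*=e^{-2\mathbf i\xi^*}$, not on whether $(\theta_1,\dots,\theta_6)$ are hyperbolic or anti-de Sitter dihedral angles.

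The plan is as follows. First, I will use $U'_{\boldsymbol\alpha}(\xi^*)=0$ to derive
\[
\sum_{i=1}^4\log\bigl(1-e^{2\mathbf i(\xi^*-\tau_i)}\bigr)=\sum_{j=1}^4\log\bigl(1-e^{2\mathbf i(\eta_j-\xi^*)}\bigr)-8\mathbf i\xi^*+4\mathbf i\sum_{k=1}^6\alpha_k+2\pi\mathbf i,
\]
and substitute this into the definition (\ref{kappa}) of $\kappa_{\boldsymbol\alpha}$ to obtain the simplified form
\[
\kappa_{\boldsymbol\alpha}(\xi^*)=2\pi^2+2\pi\sum_{k=1}^6\alpha_k-4\pi\xi^*-\pi\mathbf i\sum_{i=1}^4\log\bigl(1-e^{2\mathbf i(\xi^*-\tau_i)}\bigr).
\]
Next, differentiating (\ref{U}) twice gives an explicit rational expression for $U''_{\boldsymbol\alpha}(\xi^*)$ in $z^*$ and the $u_k$'s, formally identical to the one arising in the hyperbolic case. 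Forming the ratio
\[
\frac{-U''_{\boldsymbol\alpha}(\xi^*)}{\exp\bigl(\kappa_{\boldsymbol\alpha}(\xi^*)/(\pi\mathbf i)\bigr)}
\]
clears the logarithmic terms and, after the same algebraic rearrangement as in the proof of Proposition \ref{Hess}, collapses to $4(3Az^*+2B+C/z^*)=4(Az^*-C/z^*)$. Since $z^*$ and the second root $z^{**}=(-B-\sqrt{B^2-4AC})/(2A)$ of $Az^2+Bz+C=0$ from the proof of Proposition \ref{critical2} satisfy $z^*z^{**}=C/A$, this expression equals $4A(z^*-z^{**})=4\sqrt{B^2-4AC}$.

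To finish, I will invoke the algebraic identity $B^2-4AC=16\det\mathrm{Gram}(\Delta)$ proved in \cite{U}, which is a purely formal identity in the variables $u_k$ and hence is valid in the anti-de Sitter setting as well (with the same identification of the Gram matrix in the dihedral angles as the matrix with $1$ on the diagonal and entries $(u_k+u_k^{-1})/2=-\cos\theta_k$ off the diagonal, whose determinant coincides with $\det\mathrm{Gram}(\Delta)$ from Definition \ref{Gramda}). Here the key distinction from the hyperbolic case intervenes: by Theorem \ref{criterion2}, the Gram matrix has signature $(2,2)$, so $\det\mathrm{Gram}(\Delta)>0$, and $\sqrt{B^2-4AC}=4\sqrt{\det\mathrm{Gram}(\Delta)}$ is a positive real number. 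In particular, $U''_{\boldsymbol\alpha}(\xi^*)\neq 0$, so $\xi^*$ is a non-degenerate critical point.

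The work is essentially bookkeeping, since the algebraic identities that drive the proof of Proposition \ref{Hess} are independent of whether the $\theta_k$'s are real or complex. The only real content beyond the hyperbolic computation is the sign of $\det\mathrm{Gram}(\Delta)$, which is supplied cleanly by Theorem \ref{criterion2}; so I do not anticipate any serious obstacle, and the main point is simply to verify that Lemma \ref{MathCompetition} (giving $A>0$, so that $z^*$ and $z^{**}$ are well-defined and distinct) together with the positivity of $\det\mathrm{Gram}(\Delta)$ yields the desired formula with a positive real square root.
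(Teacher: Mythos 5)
Your proposal is correct and follows essentially the same route as the paper: the paper's own proof of Proposition \ref{Hess2} simply records $\cos\theta_k=-\frac{u_k+u_k^{-1}}{2}$, identifies $B^2-4AC=16\det\mathrm{Gram}(\Delta)$ via \cite{U}, and then declares that the computation proceeds verbatim as in Proposition \ref{Hess} (critical point equation, simplified $\kappa_{\boldsymbol\alpha}(\xi^*)$, explicit $U''_{\boldsymbol\alpha}(\xi^*)$, collapse to $4A(z^*-z^{**})=4\sqrt{B^2-4AC}$), which is exactly the bookkeeping you describe. Your observation that the only genuinely new input is the positivity of $\det\mathrm{Gram}(\Delta)$, supplied by the signature $(2,2)$ statement of Theorem \ref{criterion2}, matches how the paper obtains non-degeneracy in the anti-de Sitter case.
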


\begin{proof} We observe that for each $k\in\{1,\dots,6\},$
$$\cos\theta_k=-\frac{u_k+u_k^{-1}}{2}.$$ 
Hence 
$$\mathrm{Gram}(\Delta)=-\begin{bmatrix}
1 & \frac{u_4+u_4^{-1}}{2} & \frac{u_5+u_5^{-1}}{2}
    & \frac{u_3+u_3^{-1}}{2}\\
\frac{u_4+u_4^{-1}}{2} & 1 & \frac{u_6+u_6^{-1}}{2}
    & \frac{u_2+u_2^{-1}}{2}\\
   \frac{u_5+u_5^{-1}}{2} & \frac{u_6+u_6^{-1}}{2} & 1
    & \frac{u_1+u_1^{-1}}{2}\\
 \frac{u_3+u_3^{-1}}{2} & \frac{u_2+u_2^{-1}}{2} & \frac{u_1+u_1^{-1}}{2}
    & 1
  \end{bmatrix},$$
  and as algebraically computed in \cite{U}, 
  $$B^2-4AC=16\det\mathrm{Gram}(\Delta).$$
  Then the rest of the proof follows verbatim that of Proposition \ref{Hess}.
\end{proof}

\begin{proposition}\label{PL} 
As depicted in Figure \ref{Fig:PL},
 $\mathrm{Im}U_{\boldsymbol\alpha}(\xi)$ is a piecewise linear function on the straight line $\Gamma_{2\pi}=\{\xi\in \mathbb C\ |\ \mathrm{Re}\xi=2\pi\}$ with
\begin{equation*}
\frac{\partial \mathrm{Im}U_{\boldsymbol\alpha}(\xi)}{\partial \mathrm{Im}\xi}=
\left\{\begin{array}{ccl}
      4\pi & \text{if } & -\infty < \mathrm{Im} \xi< \mathrm{Im}\eta_{j_1},\\
       6\pi & \text{if } & \mathrm{Im}\eta_{j_1}< \mathrm{Im} \xi< \mathrm{Im}\eta_{j_2},\\
        8\pi & \text{if } & \mathrm{Im}\eta_{j_3}< \mathrm{Im} \xi< \mathrm{Im}\tau_{i_1},\\
         6\pi & \text{if } & \mathrm{Im}\tau_{i_1}< \mathrm{Im} \xi< \mathrm{Im}\tau_{i_2},\\
          4\pi & \text{if } & \mathrm{Im}\tau_{i_2}< \mathrm{Im} \xi< \mathrm{Im}\tau_{i_3},\\
           2\pi & \text{if } & \mathrm{Im}\tau_{i_3}< \mathrm{Im} \xi< \mathrm{Im}\tau_{i_4},\\
            0 & \text{if } & \mathrm{Im}\tau_{i_4}< \mathrm{Im} \xi< \mathrm{Im}\eta_{4},\\
             -2\pi & \text{if } & \mathrm{Im}\eta_{4}< \mathrm{Im} \xi< \mathrm{Im}\eta_{3},\\
              -4\pi & \text{if } & \mathrm{Im}\eta_{3}< \mathrm{Im} \xi< +\infty.\\
    \end{array} \right.
\end{equation*}
\end{proposition}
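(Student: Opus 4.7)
The plan is to compute $\partial_{\mathrm{Im}\xi}\mathrm{Im}U_{\boldsymbol\alpha}$ on $\Gamma_{2\pi}$ directly via the differentiation formula~(\ref{limp}), namely $\partial_l\mathrm{Im}L(\alpha+\mathbf il)=2\alpha-\pi-2\arg\bigl(1-e^{-2l+2\mathbf i\alpha}\bigr)$, applied term by term. Writing $\xi=2\pi+\mathbf il$, the constant term $-\frac12\sum_{i,j}L(\eta_j-\tau_i)$ in~(\ref{U}) drops out, and by (\ref{retau})--(\ref{reeta}) each remaining $L$-factor has the form $L(\mathbf i(l-\mathrm{Im}\tau_i))$, $L(\mathbf i(\mathrm{Im}\eta_j-l))$ for $j\in\{3,4\}$, or $L(\pi+\mathbf i(\mathrm{Im}\eta_j-l))$ for $j\in\{1,2\}$. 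So everything reduces to evaluating the two one-variable derivatives $\partial_s\mathrm{Im}L(\mathbf is)$ and $\partial_s\mathrm{Im}L(\pi+\mathbf is)$ for $s\in\mathbb R$.

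Since~(\ref{limp}) was derived on the open strip $0<\alpha<\pi$, the values at $\alpha=0$ and $\alpha=\pi$ are understood as one-sided limits from within that strip, consistent with the continuous extension of $L$ to $\mathbb C\setminus((-\infty,0)\cup(\pi,\infty))$ described after (\ref{U}). A short case analysis of $\arg\bigl(1-e^{-2l+2\mathbf i\alpha}\bigr)$ yields four cases. For $\alpha\to 0^+$: if $l>0$ then $1-e^{-2l}\in(0,1)$, $\arg=0$, slope $=-\pi$; if $l<0$ then $1-e^{-2l+2\mathbf i\alpha}$ lies in the third quadrant (real part $1-e^{-2l}<0$, imaginary part $-e^{-2l}\sin 2\alpha<0$), so $\arg\to-\pi$ and slope $=\pi$. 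For $\alpha\to\pi^-$: if $l>0$ then $e^{-2l}\in(0,1)$ and $1-e^{-2l+2\mathbf i\alpha}$ lies in the first quadrant, $\arg\to 0$, slope $=\pi$; if $l<0$ then $e^{-2l}>1$ and $1-e^{-2l+2\mathbf i\alpha}$ lies in the second quadrant, $\arg\to\pi$, slope $=-\pi$. Combining,
\begin{equation*}
\partial_s\mathrm{Im}L(\mathbf is)=-\pi\,\mathrm{sgn}(s),\qquad \partial_s\mathrm{Im}L(\pi+\mathbf is)=\pi\,\mathrm{sgn}(s),
\end{equation*}
each piecewise constant with a single jump at $s=0$.

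Summing via the chain rule then yields the contribution of each of the eight $L$-factors to $\partial_{\mathrm{Im}\xi}\mathrm{Im}U_{\boldsymbol\alpha}$: $L(\xi-\tau_i)$ contributes $+\pi$ when $l<\mathrm{Im}\tau_i$ and $-\pi$ when $l>\mathrm{Im}\tau_i$ (jump $-2\pi$ at $l=\mathrm{Im}\tau_i$); $L(\eta_j-\xi)$ with $j\in\{3,4\}$ contributes $+\pi$ when $l<\mathrm{Im}\eta_j$ and $-\pi$ when $l>\mathrm{Im}\eta_j$ (jump $-2\pi$); and $L(\eta_j-\xi)$ with $j\in\{1,2\}$ contributes $-\pi$ when $l<\mathrm{Im}\eta_j$ and $+\pi$ when $l>\mathrm{Im}\eta_j$ (jump $+2\pi$). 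Using the ordering $\mathrm{Im}\eta_{j_1}\leqslant\mathrm{Im}\eta_{j_2}<\mathrm{Im}\tau_{i_1}\leqslant\cdots\leqslant\mathrm{Im}\tau_{i_4}<\mathrm{Im}\eta_4<\mathrm{Im}\eta_3$ from~(\ref{imtaueta}), the slope in the bottommost region $l<\mathrm{Im}\eta_{j_1}$ totals $4\pi+2\pi-2\pi=4\pi$; thereafter the jumps $+2\pi,+2\pi,-2\pi,-2\pi,-2\pi,-2\pi,-2\pi,-2\pi$ at successive crossings produce exactly the nine values $4\pi,6\pi,8\pi,6\pi,4\pi,2\pi,0,-2\pi,-4\pi$ in the statement.

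The one delicate step is the identification of the correct one-sided limit of $\arg\bigl(1-e^{-2l+2\mathbf i\alpha}\bigr)$ at $\alpha=0$ and $\alpha=\pi$ with $l<0$: in both cases the limit is a negative real number sitting on the branch cut of $\arg$, and the answer depends on the sign of $\sin 2\alpha$ near the boundary, which is determined by approaching from within $0<\alpha<\pi$. Once these four signs are pinned down, the rest is straightforward bookkeeping.
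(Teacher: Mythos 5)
Your proposal is correct and follows essentially the same route as the paper: the paper's proof consists precisely of taking the one-sided limits $\alpha\to 0^+$ and $\alpha\to\pi^-$ in (\ref{limp}) to obtain $\partial_l\mathrm{Im}L(\mathbf i l)=-\pi\,\mathrm{sgn}(l)$ and $\partial_l\mathrm{Im}L(\pi+\mathbf i l)=\pi\,\mathrm{sgn}(l)$, and then declaring that the result follows. You simply make explicit the branch-of-$\arg$ case analysis and the term-by-term jump bookkeeping along $\Gamma_{2\pi}$ that the paper leaves implicit, and your signs, orderings from (\ref{retau})--(\ref{imtaueta}), and resulting slopes $4\pi,6\pi,8\pi,6\pi,4\pi,2\pi,0,-2\pi,-4\pi$ all check out.
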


\begin{figure}[htbp]
\centering
\includegraphics[scale=0.3]{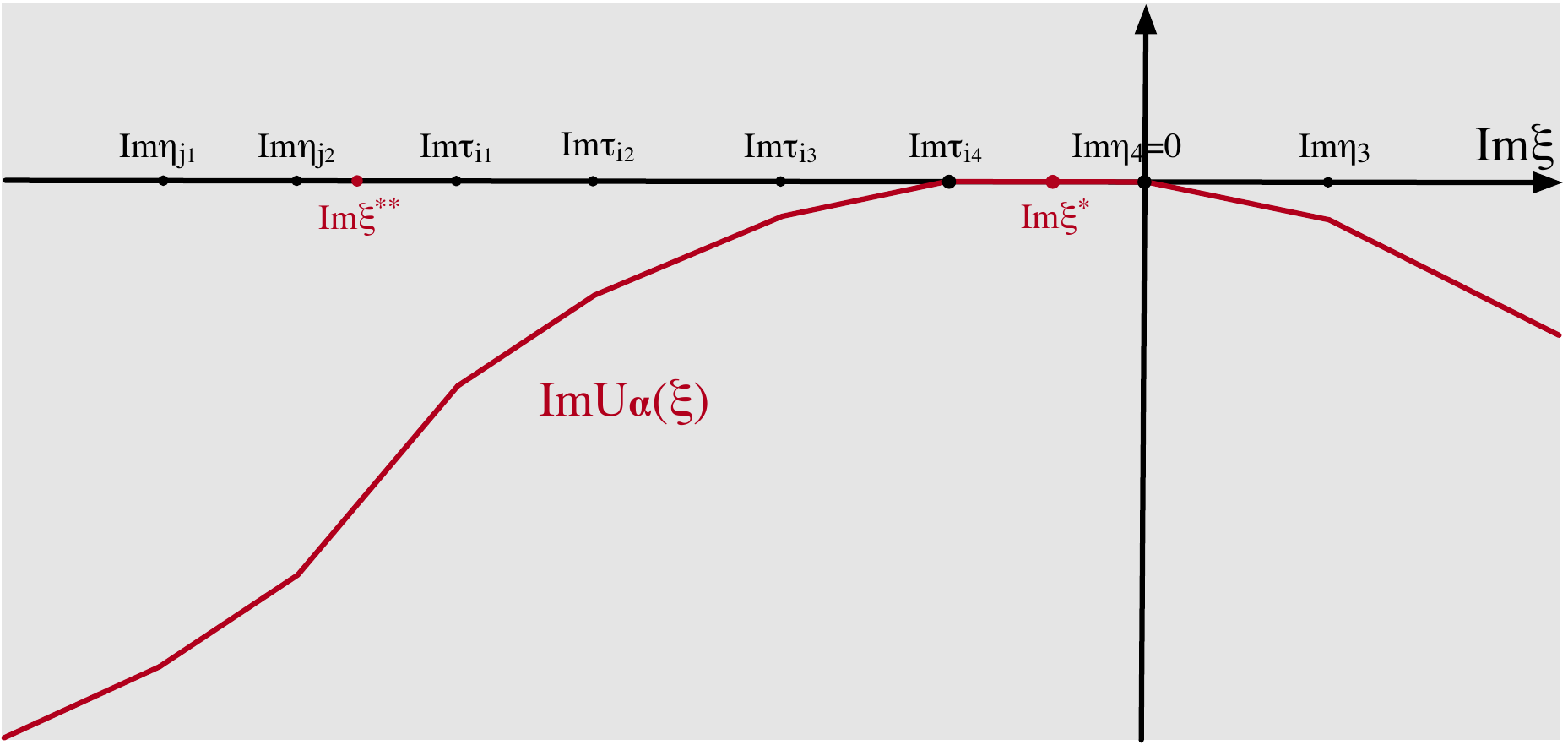}
\caption{Graph of $\mathrm{Im}U_{\boldsymbol\alpha}(\xi)$ in $\mathrm{Im}\xi$.}
\label{Fig:PL}
\end{figure}

\begin{proof} By (\ref{limp}), we have
\begin{equation}\label{a}
\begin{split}
\frac{\partial\mathrm{Im}L(\mathbf il)}{\partial l}=&\lim_{\alpha\to 0^+}\frac{\partial \mathrm{Im}L(\alpha+\mathbf il)}{\partial l}\\
=&\lim_{\alpha\to0^+} \Big(2\alpha-\pi-2\arg\big(1-e^{-2l+2\mathbf i\alpha}\big)\Big)=\left\{\begin{array}{ccl}
-\pi & \text{if} & l>0,\\
\pi & \text{if} & l<0,
 \end{array} \right.
 \end{split}
 \end{equation}
 and
\begin{equation}\label{b}
\begin{split}
\frac{\partial\mathrm{Im}L(\pi+\mathbf il)}{\partial l}=&\lim_{\alpha\to \pi^-}\frac{\partial \mathrm{Im}L(\alpha+\mathbf il)}{\partial l} \\= &\lim_{\alpha\to\pi^-} \Big(2\alpha-\pi-2\arg\big(1-e^{-2l+2\mathbf i\alpha}\big)\Big)=\left\{\begin{array}{ccl}
\pi & \text{if} & l>0,\\
-\pi & \text{if} & l<0.
 \end{array} \right.
  \end{split}
 \end{equation}
 Then the result follows directly from (\ref{a}) and (\ref{b}).
\end{proof}

Let $D$ be the region in $\mathbb C$ consisting of $\xi$ with: (a) for each $i\in\{1,2,3,4\},$ if $\mathrm{Im}\xi=\mathrm{Im}\tau_i,$ then $\mathrm{Re}\xi>2\pi,$ (b) for each $j\in\{1,2\},$ if $\mathrm{Im}\xi=\mathrm{Im}\eta_j,$ then $\mathrm{Re}\xi>2\pi,$ and (c) for each $j\in\{3,4\},$ if $\mathrm{Im}\xi=\mathrm{Im}\eta_j,$ then $\mathrm{Re}\xi<2\pi.$

\begin{proposition}
\label{cor4}
There exists an $\delta>0$ such that: \begin{enumerate}[(1)]
\item  For each $i\in\{1,2,3,4\},$ if $\xi$ lies in the $\delta$-square neighborhood of $\tau_i$ in $D,$  then 
$$\frac{\partial \mathrm{Im}U_{\boldsymbol \alpha}(\xi)}{\partial \mathrm{Re}\xi}<0.$$
\item For each $j\in\{1,2\},$ if $\xi$ lies in the $\delta$-square neighborhood of $\eta_j-\pi$ in $D,$ then
$$\frac{\partial \mathrm{Im}U_{\boldsymbol \alpha}(\xi)}{\partial \mathrm{Re}\xi}>0.$$
\item For each $j\in\{3,4\},$  if $\xi$ lies in the $\delta$-square neighborhood of $\eta_j$ in $D,$ then
$$\frac{\partial \mathrm{Im}U_{\boldsymbol \alpha}(\xi)}{\partial \mathrm{Re}\xi}>0.$$ 
\end{enumerate}
\end{proposition}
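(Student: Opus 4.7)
The plan is to apply the Cauchy--Riemann equations. The function $U_{\boldsymbol\alpha}$ is holomorphic on the interior of $D$: the three conditions (a), (b), (c) defining $D$ were crafted precisely so that every argument $\xi - \tau_i$ and $\eta_j - \xi$ stays off the branch cuts $(-\infty, 0] \cup [\pi, \infty)$ of $L$. Consequently,
\[
\frac{\partial \mathrm{Im}\,U_{\boldsymbol\alpha}(\xi)}{\partial \mathrm{Re}\,\xi} \;=\; \mathrm{Im}\bigl(U_{\boldsymbol\alpha}'(\xi)\bigr), \qquad U_{\boldsymbol\alpha}'(\xi) = \sum_{i=1}^4 L'(\xi - \tau_i) - \sum_{j=1}^4 L'(\eta_j - \xi),
\]
where a direct differentiation of \eqref{eq:Lx} gives $L'(x) = 2x - \pi + 2\mathbf{i}\log(1 - e^{2\mathbf{i}x})$.

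Next I would analyze the local behaviour of $L'$ at its two branch points. Near $x = 0$ the expansion $1 - e^{2\mathbf{i}x} = -2\mathbf{i}x + O(x^2)$ yields
\[
\mathrm{Im}\,L'(x) \;=\; 2\log|2x| + O(1) \;\longrightarrow\; -\infty,
\]
and the periodicity $e^{2\mathbf{i}(x+\pi)} = e^{2\mathbf{i}x}$ gives the same logarithmic divergence near $x = \pi$, with $x$ replaced by $x - \pi$ in the leading term. For each distinguished point in the statement exactly one summand of $U_{\boldsymbol\alpha}'(\xi)$ exhibits this divergence: near $\xi = \tau_i$ it is $L'(\xi - \tau_i)$ at the branch point $0$, contributing $-\infty$ to $\mathrm{Im}\,U_{\boldsymbol\alpha}'$, which gives (1); near $\xi = \eta_j - \pi$ with $j \in \{1,2\}$ it is $-L'(\eta_j - \xi)$ at the branch point $\pi$, and the extra minus sign turns $-\infty$ into $+\infty$, giving (2); near $\xi = \eta_j$ with $j \in \{3,4\}$ it is again $-L'(\eta_j - \xi)$ but now at the branch point $0$, again yielding $+\infty$ and proving (3). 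Once the bounded remainder is controlled, choosing $\delta$ small enough lets the divergent term dictate the sign.

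The main verification, and the step where the arrangement of the $\tau_i$'s and $\eta_j$'s actually intervenes, is to check that near each distinguished point the remaining seven summands of $U_{\boldsymbol\alpha}'(\xi)$ stay uniformly bounded. This amounts to showing that the associated arguments $\tau_i - \tau_{i'}$, $\eta_j - \tau_i$, and $\eta_{j'} - \eta_j$ (for $\xi$ ranging over a small square neighbourhood of the distinguished point) never approach the real rays $(-\infty, 0] \cup [\pi, \infty)$ on which $L'$ is singular. Using the equalities \eqref{retau}, \eqref{reeta} on the real parts together with the strict inequalities in \eqref{imtaueta} on the imaginary parts, each such difference is seen to lie in one of the open half-planes or otherwise off the relevant cut, so $L'$ of it is holomorphic (hence bounded) on a small closed neighbourhood. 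This case-by-case bookkeeping is the only nontrivial obstacle; once it is complete, the divergence of the unique singular summand forces the sign of $\mathrm{Im}(U_{\boldsymbol\alpha}'(\xi))$ and a compactness argument on the closed square produces the uniform $\delta$.
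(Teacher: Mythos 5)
Your proposal is correct and is in substance the same argument as the paper's: the paper differentiates $\mathrm{Im}L$ directly, obtaining $\partial_{\alpha}\mathrm{Im}L(\alpha+\mathbf i l)=2\log\big|1-e^{2\mathbf i(\alpha+\mathbf il)}\big|+2l$ (which is exactly your $\mathrm{Im}L'$), identifies the logarithmic blow-up of the single singular summand at the branch points $0$ and $\pi$ with the signs you describe, checks via (\ref{retau})--(\ref{imtaueta}) that the remaining summands stay finite, and concludes by continuity. Your Cauchy--Riemann phrasing versus the paper's direct computation of the partials is only a cosmetic difference (and your passing claim that $U_{\boldsymbol\alpha}$ is holomorphic on all of $D$ is a slight overstatement because of the far cuts near $\mathrm{Re}\,\xi\leqslant\pi$ and $\mathrm{Re}\,\xi\geqslant 3\pi$, but this is harmless since all the neighborhoods in question lie near $\mathrm{Re}\,\xi=2\pi$).
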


\begin{proof}
The result follows directly from the continuity of $\frac{\partial \mathrm{Im}U_{\boldsymbol \alpha}(\xi)}{\partial \mathrm{Re}\xi}$ and the following claims:
\begin{enumerate}[(a)]
    \item 
For all $i\in\{1,2,3,4\},$ 
$$\lim_{\xi\to\tau_i}\frac{\partial \mathrm{Im}U_{\boldsymbol \alpha}(\xi)}{\partial \mathrm{Re}\xi}= -\infty.$$
\item For all $j\in\{1,2\},$  
$$\lim_{\xi\to\eta_j-\pi}\frac{\partial \mathrm{Im}U_{\boldsymbol \alpha}(\xi)}{\partial \mathrm{Re}\xi}= +\infty.$$
\item For all $j\in\{3,4\},$ 
$$\lim_{\xi\to\eta_j}\frac{\partial \mathrm{Im}U_{\boldsymbol \alpha}(\xi)}{\partial \mathrm{Re}\xi}= +\infty.$$
\end{enumerate}
To prove the claims, by a direct computation, we have
$$ \frac{\partial \mathrm{Im}L(\alpha+\mathbf il)}{\partial \alpha} =2\log\Big|1-e^{2\mathbf i(\alpha+\mathbf i l)}\Big|+2l,$$
$$ \frac{\partial \mathrm{Im}L(-\alpha+\mathbf il)}{\partial \alpha} =-2\log\Big|1-e^{2\mathbf i(-\alpha+\mathbf i l)}\Big|-2l,$$
and $ \frac{\partial \mathrm{Im}L(\pm \alpha+\mathbf il)}{\partial \alpha} $ converges to a finite value as $\pm \alpha + \mathbf i l$ tends to a point that is not $k\pi$ for any integer $k.$ 
As a consequence, we have 
$$\lim_{\alpha +\mathbf i  l\to 0} \frac{\partial \mathrm{Im}L(\alpha+\mathbf il)}{\partial \alpha} =-\infty,$$
and 
$$\lim_{-\alpha+\mathbf i l\to \pi} \frac{\partial \mathrm{Im}L(-\alpha+\mathbf il)}{\partial \alpha} = + \infty\quad\text{and}\quad\lim_{-\alpha+\mathbf i l\to 0} \frac{\partial \mathrm{Im}L(-\alpha+\mathbf il)}{\partial \alpha} = + \infty.$$
Then as $$\frac{\partial \mathrm{Im}U_{\boldsymbol \alpha}(\xi)}{\partial \mathrm{Re}\xi} =\sum_{i=1}^4\frac{\partial L(\xi-\tau_i)}{\partial \mathrm{Re}\xi}+\sum_{j=1}^4\frac{\partial L(\eta_j-\xi)}{\partial \mathrm{Re}\xi},$$
we have
$$\lim_{\xi\to\tau_i}\frac{\partial \mathrm{Im}U_{\boldsymbol \alpha}(\xi)}{\partial \mathrm{Re}\xi}= -\infty$$
for $i\in\{1,2,3,4\},$
$$\lim_{\xi\to\eta_j-\pi}\frac{\partial \mathrm{Im}U_{\boldsymbol \alpha}(\xi)}{\partial \mathrm{Re}\xi}= +\infty$$
for $j\in\{1,2\},$ and 
$$\lim_{\xi\to\eta_j}\frac{\partial \mathrm{Im}U_{\boldsymbol \alpha}(\xi)}{\partial \mathrm{Re}\xi}= +\infty$$
for $j\in\{3,4\}.$
\end{proof}

\begin{proposition}\label{nonvanish}
For $\xi\in D$ with $\mathrm{Re}\xi=2\pi,$ if $\mathrm{Im}\xi\in (\mathrm{Im}\tau_{i_4},\mathrm{Im}\xi^*),$ then 
$$\frac{\partial \mathrm{Im}U_{\boldsymbol\alpha}(\xi)}{\partial \mathrm{Re}\xi}<0;$$
and if $\mathrm{Im}\xi\in (\mathrm{Im}\xi^*,\mathrm{Im}\eta_{4}),$ then 
$$\frac{\partial \mathrm{Im}U_{\boldsymbol\alpha}(\xi)}{\partial \mathrm{Re}\xi}>0.$$
\end{proposition}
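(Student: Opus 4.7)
The plan is to combine the Cauchy–Riemann equations for the holomorphic function $U_{\boldsymbol\alpha}$, the piecewise linearity from Proposition \ref{PL}, the uniqueness of the critical point $\xi^{*}$ from Proposition \ref{critical2}, and the boundary sign information from Proposition \ref{cor4}.

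First I will observe that, along the open segment $\{2\pi+is:s\in(\mathrm{Im}\tau_{i_{4}},\mathrm{Im}\eta_{4})\}$, Proposition \ref{PL} gives $\frac{\partial\mathrm{Im}U_{\boldsymbol\alpha}}{\partial\mathrm{Im}\xi}\equiv 0$. Since each argument $\xi-\tau_{i}$, $\eta_{j}-\xi$ stays inside the domain of holomorphy of $L$ for $\xi$ on this segment, $U_{\boldsymbol\alpha}$ is holomorphic there, so the Cauchy–Riemann equations yield $\frac{\partial\mathrm{Re}U_{\boldsymbol\alpha}}{\partial\mathrm{Re}\xi}=\frac{\partial\mathrm{Im}U_{\boldsymbol\alpha}}{\partial\mathrm{Im}\xi}=0$ on the segment. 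Writing $h(s):=\frac{\partial\mathrm{Im}U_{\boldsymbol\alpha}}{\partial\mathrm{Re}\xi}(2\pi+is)$, this identifies $U_{\boldsymbol\alpha}'(2\pi+is)=ih(s)$, which is purely imaginary.

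Next I will pin down the zero set of $h$. A zero of $h$ at $s_{0}$ gives $U_{\boldsymbol\alpha}'(2\pi+is_{0})=0$, i.e.\ a critical point of $U_{\boldsymbol\alpha}$ on the segment. By Proposition \ref{critical2} the only critical point of $U_{\boldsymbol\alpha}$ in $D_{\delta,c}$ is $\xi^{*}$, which satisfies $\mathrm{Re}\xi^{*}=2\pi$ and (by Proposition \ref{PL} applied to the vanishing of $\frac{\partial\mathrm{Im}U_{\boldsymbol\alpha}}{\partial\mathrm{Im}\xi}$ at a critical point) $\mathrm{Im}\xi^{*}\in(\mathrm{Im}\tau_{i_{4}},\mathrm{Im}\eta_{4})$. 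Thus $h$ has a unique zero on the open segment, located at $s=\mathrm{Im}\xi^{*}$.

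Finally I will use Proposition \ref{cor4} to determine the signs at the two ends. For $s$ slightly greater than $\mathrm{Im}\tau_{i_{4}}$, the point $\xi=2\pi+is$ lies in the $\delta$-square neighborhood of $\tau_{i_{4}}$ inside $D$, so Proposition \ref{cor4}(1) gives $h(s)<0$; for $s$ slightly less than $\mathrm{Im}\eta_{4}=0$, $\xi$ lies in the $\delta$-square neighborhood of $\eta_{4}$ inside $D$, so Proposition \ref{cor4}(3) gives $h(s)>0$. Continuity of $h$ together with the uniqueness of its zero and the Intermediate Value Theorem then force $h<0$ on $(\mathrm{Im}\tau_{i_{4}},\mathrm{Im}\xi^{*})$ and $h>0$ on $(\mathrm{Im}\xi^{*},\mathrm{Im}\eta_{4})$, which is exactly the desired conclusion. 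The main bookkeeping obstacle is checking that the segment really lies in $D$ and in the holomorphic domain, and that the endpoint neighborhoods where Proposition \ref{cor4} applies meet the segment on the correct side of $\mathrm{Re}\xi=2\pi$; both are immediate from the definition of $D$ and from $\mathrm{Re}\tau_{i}=\mathrm{Re}\eta_{4}=2\pi$.
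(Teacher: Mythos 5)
Your proposal is correct and follows essentially the same route as the paper: constancy of $\mathrm{Im}U_{\boldsymbol\alpha}$ on the segment via Proposition \ref{PL}, exclusion of zeros of $\frac{\partial \mathrm{Im}U_{\boldsymbol\alpha}}{\partial \mathrm{Re}\xi}$ away from $\xi^*$ via uniqueness of the critical point, endpoint signs from Proposition \ref{cor4}, and continuity to propagate the signs. The only difference is that you make the Cauchy--Riemann step explicit, which the paper leaves implicit.
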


\begin{proof} Let $\mathrm I$ be the straight line segment connecting $\tau_{i_4}$ and $\eta_4.$ By Proposition \ref{PL}, $\mathrm{Im}U_{\boldsymbol\alpha}$ is a constant on $\mathrm I,$ hence $\frac{\partial \mathrm{Im}U_{\boldsymbol\alpha}}{\partial \mathrm{Im}\xi} \equiv 0$ on $\mathrm I;$ and since $\xi^*$ is the only critical point of $\mathrm U_{\boldsymbol\alpha},$ we have $\frac{\partial \mathrm{Im}U_{\boldsymbol\alpha}}{\partial \mathrm{Re}\xi}\neq 0$ on $\mathrm I\setminus \{\xi^*\}.$ By Proposition \ref{cor4}, $\frac{\partial \mathrm{Im}U_{\boldsymbol\alpha}}{\partial \mathrm{Re}\xi} < 0$ near the end point $\tau_{i_4}$ of $\mathrm I,$ and $\frac{\partial \mathrm{Im}U_{\boldsymbol\alpha}}{\partial \mathrm{Re}\xi} > 0$ near the end point $\eta_4$ of $\mathrm I.$  Therefore, $\frac{\partial \mathrm{Im}U_{\boldsymbol\alpha}}{\partial \mathrm{Re}\xi} < 0$ on $\mathrm I\setminus \{\xi^*\}$ when $\mathrm{Im}\xi\in (\mathrm{Im}\tau_{i_4},\mathrm{Im}\xi^*),$  and $\frac{\partial \mathrm{Im}U_{\boldsymbol\alpha}}{\partial \mathrm{Re}\xi} > 0$ on $\mathrm I\setminus \{\xi^*\}$  when $\mathrm{Im}\xi\in (\mathrm{Im}\xi^*,\mathrm{Im}\eta_{4}).$ 
\end{proof}

Finally, we need the following Proposition \ref{bound2} whose proof follows verbatim that of \cite[Proposition 3.10 and Proposition 3.11]{LMSWY}. 

 \begin{proposition}\label{bound2} 
 \begin{enumerate}[(1)]
 \item 
 For $\delta> 0$  sufficiently small and $c\in(\delta,\pi-\delta)$, there exists a constant $K=K_{\delta,c}>0$ such that 
$$\Bigg|\frac{\partial \mathrm{Im}\kappa_{\boldsymbol\alpha}(\xi)}{\partial\mathrm{Im}\xi}\Bigg|<K$$
for all $\xi\in D_{\delta,c}.$
\item  
For $b>0$ and $\delta> 0$ both sufficiently small and $c\in(\delta,\pi-\delta)$, there exists a constant $N=N_{\delta,c}>0$ independent of $b$ such that 
$$\mathrm{Im}\nu_{\boldsymbol\alpha,b}(\xi) \leqslant \big|\nu_{\boldsymbol\alpha,b}(\xi)\big|<N$$
for all $\xi\in D_{\delta,c}.$
\end{enumerate}
\end{proposition}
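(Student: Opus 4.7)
The plan is to carry over the argument of \cite[Propositions 3.10 and 3.11]{LMSWY} essentially verbatim; the only difference is that in the anti-de Sitter case the horizontal lines $\mathrm{Im}\xi=\mathrm{Im}\tau_i$ and $\mathrm{Im}\xi=\mathrm{Im}\eta_j$ carrying the logarithmic singularities of $U_{\boldsymbol\alpha},$ $\kappa_{\boldsymbol\alpha}$ and $\nu_{\boldsymbol\alpha,b}$ are spread out vertically according to \eqref{imtaueta} rather than all coinciding with the real axis. The defining conditions (a)--(d) of $D_{\delta,c}$ are designed so that the Euclidean distance from $\xi$ to every one of these singularities is bounded below by a constant depending only on $\delta$ and $c.$

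For part (1), I would differentiate the explicit formula \eqref{kappa} for $\kappa_{\boldsymbol\alpha}.$ Since $\kappa_{\boldsymbol\alpha}$ is holomorphic away from the branch cuts of the logarithms, we have $\partial\kappa_{\boldsymbol\alpha}/\partial\mathrm{Im}\xi=\mathbf i\,\kappa_{\boldsymbol\alpha}'(\xi),$ and a direct computation yields
$$\kappa_{\boldsymbol\alpha}'(\xi)=-28\pi-8\pi\sum_{i=1}^4\frac{e^{2\mathbf i(\xi-\tau_i)}}{1-e^{2\mathbf i(\xi-\tau_i)}}-6\pi\sum_{j=1}^4\frac{e^{2\mathbf i(\eta_j-\xi)}}{1-e^{2\mathbf i(\eta_j-\xi)}}.$$
The poles of the summands lie exactly on the singular horizontal lines, and conditions (b)--(d) force $\mathrm{Re}\xi$ there to differ from $\mathrm{Re}\tau_i$ or $\mathrm{Re}\eta_j$ by at least $\delta,$ so the denominators are bounded below uniformly on $D_{\delta,c}.$ As $|\mathrm{Im}\xi|\to\infty$ each ratio tends to $0$ or $-1,$ and one extracts the desired uniform bound $K=K_{\delta,c}.$

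For part (2), the key input is the uniform small-$b$ asymptotic expansion of the double sine function, which under the substitutions \eqref{alpha-a} and \eqref{tau-t} takes the schematic form
$$-2\pi\mathbf i b^2\,\log S_b(z)=L\Big(\pi b z-\frac{\pi b^2}{2}\Big)+b^2\,\lambda\Big(\pi b z-\frac{\pi b^2}{2}\Big)+b^4\,R(z;b),$$
for an explicit $\lambda$ and a remainder $R$ whose modulus is bounded by a constant depending only on the distance of the shifted argument to the branch cuts $(-\infty,0]\cup[\pi,+\infty)$ of $L.$ Inserting this expansion into the eight $\log S_b$ factors that make up $U_{\boldsymbol\alpha,b},$ the $L$ and $\lambda$ contributions combine to form $U_{\boldsymbol\alpha}+\kappa_{\boldsymbol\alpha}b^2$ via \eqref{U} and \eqref{kappa}, and the remainders assemble into $b^4\nu_{\boldsymbol\alpha,b}(\xi);$ the distance conditions defining $D_{\delta,c}$ then upgrade the pointwise remainder bound to the required uniform estimate $N=N_{\delta,c}.$

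The main technical obstacle, already addressed in \cite[Proposition 3.11]{LMSWY}, is establishing that the $\log S_b$ remainder estimate is genuinely uniform along the vertical tails of $D_{\delta,c}$ where $|\mathrm{Im}\xi|\to\infty.$ The integral representation \eqref{eq:def-S} is valid only in the strip $0<\mathrm{Re}z<Q,$ so for $\xi$ with large $|\mathrm{Im}\xi|$ one first iterates the functional equation \eqref{FE1} a uniformly bounded number of times to return the argument to that strip; conditions (a)--(d) are arranged precisely so that after this reduction the shifted argument still lies a definite distance from the branch cuts of $L,$ which delivers the desired uniform control on $R$ and hence on $\nu_{\boldsymbol\alpha,b}.$
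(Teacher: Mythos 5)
Your proposal is correct and follows essentially the same route as the paper, which proves Proposition \ref{bound2} simply by invoking the arguments of \cite[Propositions 3.10 and 3.11]{LMSWY} verbatim, the only new point being that the defining conditions (a)--(d) of $D_{\delta,c}$ keep $\xi$ a uniform distance (depending only on $\delta$ and $c$) from the logarithmic singularities at $\tau_i$, $\eta_{1,2}-\pi$ and $\eta_{3,4}$ — exactly the observation you make. Your explicit differentiation of $\kappa_{\boldsymbol\alpha}$ and the uniform double-sine expansion for $\nu_{\boldsymbol\alpha,b}$ are just the content of those cited propositions spelled out, so no further comment is needed.
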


\begin{proof}[Proof of Theorem \ref{vol2}]
We will carefully choose an integral contour $\Gamma^*$ in (\ref{6jint}), and use  Proposition \ref{saddle}, the Saddle Point Approximation. A seemingly natural choice of the contour is $\Gamma_{2\pi}=\{\xi\in\mathbb C\ |\ \mathrm{Re}\xi=2\pi\}$ that passes through the unique critical point $\xi^*$ of $U_{\boldsymbol\alpha}$ given by Proposition \ref{critical2}.  However, there are  issues with this choice as:
\begin{enumerate}[(a)]
\item  $\Gamma_{2\pi}$ pass through the points $\tau_1,\tau_2,\tau_3,\tau_4,\eta_1-\pi, \eta_2-\pi, \eta_3$ and $\eta_4$ where $U_{\boldsymbol \alpha}$ is only continuous but not holomorphic, and 
\item  the critical point $\xi^*$ is not the unique maximum of $\mathrm{Im}U_{\boldsymbol\alpha}$ on the contour, hence condition (ii) of Proposition \ref{saddle} is not satisfied.
\end{enumerate} 
We will resolve these issues as follows. Let $\delta>0$ be sufficiently small so that: 
\begin{enumerate}[(i)]
    \item $\mathrm{Im}U_{\boldsymbol\alpha}(\xi)<\mathrm{Im}U_{\boldsymbol\alpha}(\xi^*)$ for all $\xi$ in the right-half of the $\delta$-square neighborhood of $\tau_1,\tau_2,\tau_3$ and $\tau_4.$ This can be done due to Corollary \ref{cor4} (1). 
    \item $\mathrm{Im}U_{\boldsymbol\alpha}(\xi)<\mathrm{Im}U_{\boldsymbol\alpha}(\xi^*)$ for all $\xi$ in the left-half of the $\delta$-square neighborhood of $\eta_3$ and $\eta_4.$ This can be done due to Corollary \ref{cor4} (3).   
    \item $\mathrm{Im}U_{\boldsymbol\alpha}(\xi)<\mathrm{Im}U_{\boldsymbol\alpha}(\xi^*)$ for all $\xi$ in the  $\delta$-square neighborhood of $\eta_1-\pi$ and $\eta_2-\pi.$ This can be done because of the continuity of $\mathrm{Im}U_{\boldsymbol\alpha}$ and that $\mathrm{Im}U_{\boldsymbol\alpha}(\eta_j-\pi)<\mathrm{Im}U_{\boldsymbol\alpha}(\xi^*)$
 for $j\in\{1,2\}$ by Proposition \ref{PL}.
 \end{enumerate}
Choose a $c\in(\delta, \pi-\delta),$ and observe that $\Gamma_{2\pi}$ intersects the complement of $D_{\delta,c}$. Then we deform $\Gamma_{2\pi}$ to the new contour $\Gamma'$ by pushing the parts out of $D_{\delta,c}$ into the boundary of $D_{\delta,c}.$ See Figure \ref{contours} (a). 

Let $\mathrm I_\delta$ be the straight line segment connecting $\tau_{i_4}+\mathbf i\delta$ and $\eta_4-\mathbf i\delta.$ Then by Proposition \ref{PL} and the construction of $\Gamma'$, we have 
$$\mathrm{Im}U_{\boldsymbol\alpha}(\xi)<\mathrm{Im}U_{\boldsymbol\alpha}(\xi^*)$$
for all $\xi\in \Gamma'-\mathrm I_\delta,$
and 
$$\mathrm{Im}U_{\boldsymbol\alpha}(\xi)=\mathrm{Im}U_{\boldsymbol\alpha}(\xi^*)$$
for all $\xi\in \mathrm I_\delta.$ Next we let $\Gamma^*$ be the contour obtained from $\Gamma'$ by replacing $\mathrm I_\delta$ by a new arc obtained from $\mathrm I_\delta$ by following the flow of the vector field $\mathbf v=\big(-\frac{\partial \mathrm{Im}U_{\boldsymbol\alpha}}{\partial \mathrm{Re}\xi},0\big)$ for a short time. See Figure \ref{contours} (b). Then by Proposition \ref{nonvanish}, we have
\begin{equation}\label{<0}
\mathrm{Im}U_{\boldsymbol\alpha}(\xi)<\mathrm{Im}U_{\boldsymbol\alpha}(\xi^*)
\end{equation}
for all $\xi\in \Gamma^*\setminus\{\xi^*\}.$
\medskip

\begin{figure}[htbp]
\centering
\includegraphics[scale=0.35]{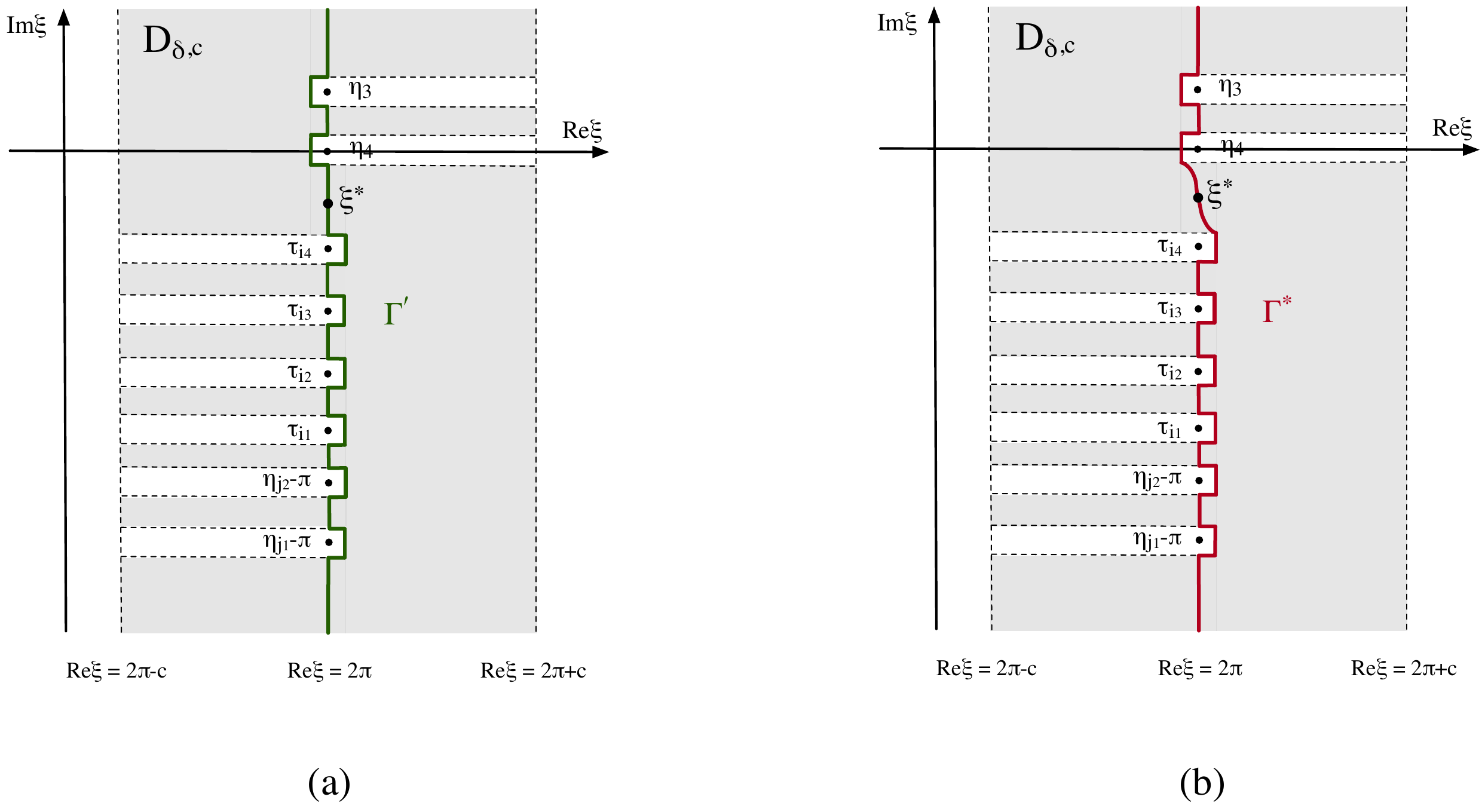}
\caption{Contour $\Gamma'$ and contour $\Gamma^*$.}
\label{contours}
\end{figure}

Now in (\ref{6jint}), we choose $\Gamma=\Gamma^*$ as the contour of integral.  Then we have
$$\bigg\{\begin{matrix} a_1 & a_2 & a_3 \\ a_4 & a_5 & a_6 \end{matrix} \bigg\}_b=\frac{1}{\pi b}\int_{\Gamma^*}\exp\bigg(\frac{U_{\boldsymbol \alpha}(\xi)+\kappa_{\boldsymbol \alpha}(\xi)b^2+\nu_{\boldsymbol \alpha,b}(\xi)b^4}{2\pi \mathbf i b^2} \bigg)d\xi.$$
Let $d>0$ be sufficiently small so that the region
$$B_d=\Big\{\xi\in \mathbb C\ \Big|\ |\mathrm{Re} \xi-\mathrm{Re} \xi^*|<d, |\mathrm{Im}\xi -\mathrm{Im} \xi^*|<d \Big\}$$
lies entirely in $D_{\delta,c},$ 
and 
$$\Gamma^*_d=\Gamma^*\cap B_d=\Big\{ \xi \in \Gamma^*\ \Big|\ |\mathrm{Im}\xi-\mathrm{Im}\xi^*|<d\Big\}.$$
Let $L>0$ be sufficiently large so that 
$$\mathrm{Im}\xi^*-L < \mathrm{Im}\eta_{j_1}-\delta \quad\text{and}\quad \mathrm{Im}\eta_{3}+\delta<\mathrm{Im}\xi^*+L,$$
and let
$$\Gamma^*_L=\Big\{ \xi \in \Gamma^*\ \Big|\ |\mathrm{Im}\xi-\mathrm{Im}\xi^*|\leqslant L\Big\}.$$

We will show that, as $b\to 0,$
\begin{enumerate}[(I)]
\item 
$$\frac{1}{\pi b}\int_{\Gamma^*_d}\exp\bigg(\frac{U_{\boldsymbol \alpha}(\xi)+ \kappa_{\boldsymbol \alpha}(\xi)b^2+ \nu_{\boldsymbol \alpha,b}(\xi)b^4}{2\pi \mathbf i b^2} \bigg)d\xi =  \frac{e^{\frac{-\mathrm{Vol}(\Delta)\cdot\mathbf i}{\pi b^2}}}{\sqrt[4]{-\det\mathrm{Gram}(\Delta)}}\Big(1+O\big(b^2\big)\Big),
$$

\item $$\bigg|\frac{1}{\pi b}\int_{\Gamma^*_L\setminus \Gamma^*_d}\exp\bigg(\frac{U_{\boldsymbol \alpha}(\xi)+ \kappa_{\boldsymbol \alpha}(\xi)b^2+ \nu_{\boldsymbol \alpha,b}(\xi)b^4}{2\pi \mathbf i b^2} \bigg)d\xi\bigg|< O\Big(e^{\frac{-\epsilon}{\pi b^2}}\Big),$$
and 

\item $$\bigg|\frac{1}{\pi b}\int_{\Gamma^*\setminus \Gamma^*_L}\exp\bigg(\frac{U_{\boldsymbol \alpha}(\xi)+ \kappa_{\boldsymbol \alpha}(\xi)b^2+ \nu_{\boldsymbol \alpha,b}(\xi)b^4}{2\pi \mathbf i b^2} \bigg)d\xi\bigg|<O\Big(e^{\frac{-\epsilon}{\pi b^2}}\Big)$$
\end{enumerate}
for some $\epsilon>0,$ from which the result follows.
\medskip

For (I), we claim that  all the conditions of Proposition \ref{saddle} are satisfied by letting $\hbar=b^2,$ $D=B_d,$ $f=\frac{U_{\boldsymbol \alpha}}{2\pi \mathbf i},$ $g=\exp\big(\frac{\kappa_{\boldsymbol \alpha}}{2\pi \mathbf i }\big),$ $f_\hbar=\frac{U_{\boldsymbol \alpha}+\nu_{\boldsymbol \alpha,b}b^4}{2\pi \mathbf i},$ $\upsilon_h=\frac{\nu_{\boldsymbol \alpha,b}}{2\pi \mathbf i},$ $S=\Gamma^*_d$ and $c=\xi^*.$ 

Indeed, by Proposition \ref{critical2}, $\xi^*$ is a critical point of $f=\frac{U_{\boldsymbol \alpha}}{2\pi \mathbf i}$ in $B_{d},$ hence condition (i) is satisfied. 

By the construction of $\Gamma^*$, $\xi^*$ is the unique maximum point of $\mathrm{Re}f=\frac{\mathrm{Im}U_{\boldsymbol\alpha}}{2\pi}$ on $\Gamma^*_d,$ hence condition (ii) is satisfied, and by Proposition \ref{Hess2}, condition (iii) is satisfied.

For condition (iv), since $\xi^*\neq \tau_i$ for any $i\in\{1,2,3,4\},$  $\xi^*\neq\eta_j-\pi$ for $j
\in\{1,2\}$, and $\xi^*\neq \eta_j$ for $j\in\{3,4\},$ $\kappa_{\boldsymbol \alpha}(\xi^*)$ is a finite value. As a consequence, $g(\xi^*)=\exp\big(\frac{\kappa_{\boldsymbol \alpha}(\xi^*)}{2\pi \mathbf i }\big)\neq 0,$ and condition (iv) is satisfied.

For condition (v), by Proposition \ref{bound2} (2),  $|\upsilon_{\hbar}(\xi)|=\big|\frac{\nu_{\boldsymbol \alpha,b}(\xi)}{2\pi \mathbf i}\big|<\frac{N}{2\pi}$ on $B_{d}.$ 

For condition (vi), since $\Gamma^*$ is obtained from  a straight line segment by following the flow of a smooth vector field, it is a smooth embedding near $\xi^*.$

Finally, by Proposition \ref{saddle}, Proposition \ref{critical2}  and Proposition \ref{Hess2}, we have as $b\to 0,$
\begin{equation*}
\begin{split}
\frac{1}{\pi b}\int_{\Gamma^*_d}\exp\bigg(\frac{U_{\boldsymbol \alpha,b}(\xi)}{2\pi \mathbf ib^2}\bigg) d\xi=& \frac{(2\pi b^2)^\frac{1}{2}}{\pi b} \frac{\exp\big(\frac{\kappa_{\boldsymbol \alpha}(\xi^*)}{2\pi \mathbf i}\big)}{\sqrt{-\frac{U_{\boldsymbol \alpha}''(\xi^*)}{2\pi \mathbf i}}}e^{\frac{U_{\boldsymbol\alpha}(\xi^*)}{2\pi \mathbf i b^2}}\Big(1+O\big(b^2\big)\Big)\\
=&\frac{e^{\frac{-\mathrm{Vol}(\Delta)\cdot\mathbf i}{\pi b^2}}}{\sqrt[4]{-\det\mathrm{Gram}(\Delta)}}\Big(1+O\big(b^2\big)\Big).
\end{split}
\end{equation*}
This completes the proof of (I). 
\medskip

For  (II),  by (\ref{<0}) and Proposition \ref{critical2}, $\mathrm{Im}U_{\boldsymbol\alpha}(\xi)<\mathrm{Im}U_{\boldsymbol\alpha}(\xi^*)=0$ for all $\xi\in\Gamma^*\setminus\{\xi^*\};$ and by the compactness of ${\Gamma^*_L}\setminus \Gamma^*_d,$ there exists an $\epsilon_1>0$ such that 
 \begin{equation}\label{ImU2}
\mathrm{Im} U_{\boldsymbol\alpha}(\xi)<-3\epsilon_1
\end{equation}
for all $\xi\in {\Gamma^*_L}\setminus \Gamma^*_d.$ Also by the compactness, there is an $M>0$ such that 
\begin{equation}\label{Imk2}
\mathrm{Im}\kappa_{\boldsymbol\alpha}(\xi)<M
\end{equation}
for all $\xi\in\Gamma^*_L\setminus \Gamma^*_d.$ By Proposition \ref{bound2} (2), there is a  $b_1>0$ such that 
\begin{equation}\label{last5}
\mathrm{Im}\nu_{\boldsymbol\alpha,b}(\xi)b^4<Nb^4<\epsilon_1
\end{equation}
for all $b<b_1$ and $\xi\in\Gamma^*;$ and together with  (\ref{ImU2}), we have
\begin{equation}\label{last4}
\mathrm{Im}U_{\boldsymbol \alpha}(\xi)+\mathrm{Im}  \nu_{\boldsymbol \alpha,b}(\xi)b^4<  - 2\epsilon_1
\end{equation}
for all $b<b_1$ and  $\xi\in\Gamma^*_L\setminus \Gamma^*_d.$  Putting (\ref{Imk2}) and  (\ref{last4}) together, we have
\begin{equation*}
\begin{split}
&\bigg|\frac{1}{\pi b}\int_{\Gamma^*_L\setminus \Gamma^*_d}\exp\bigg(\frac{U_{\boldsymbol \alpha}(\xi)+ \kappa_{\boldsymbol \alpha}(\xi)b^2+ \nu_{\boldsymbol \alpha,b}(\xi)b^4}{2\pi \mathbf i b^2} \bigg)d\xi\bigg|\\
 <  & \frac{2(L-d) e^{\frac{M}{2\pi}}}{\pi b} \exp\bigg(\frac{-\epsilon_1}{\pi b^2}   \bigg )\\
<&  O\Big(e^{\frac{-\epsilon}{\pi b^2}}\Big)
\end{split}
\end{equation*}
for any $\epsilon<\epsilon_1.$ This completes the proof of (II). 
\medskip

For (III), let $b_1$ and $\epsilon_1$ be as in the proof of (II) above. Then there  is a $b_0\in (0, b_1)$ such that for all $b<b_0,$
\begin{equation}\label{ImK2}
\mathrm{Im}\kappa_{\boldsymbol\alpha}(\xi^*\pm \mathbf i L) b^2<\epsilon_1,
\end{equation}
$Kb^2<\epsilon_1$ and $Nb^2<\epsilon_1,$ where $K$ and $N$ are respectively the constants in Proposition \ref{bound}. 

We claim that, for $\xi\in\Gamma^*\setminus \Gamma^*_L$ and $b<b_0,$ 
\begin{equation}\label{cl2}
\mathrm{Im}U_{\boldsymbol \alpha}(\xi)+\mathrm{Im}\kappa_{\boldsymbol \alpha}(\xi)b^2+\mathrm{Im}\nu_{\boldsymbol \alpha,b}(\xi)b^4<-2\epsilon_1-(4\pi-\epsilon_1)\big(|\xi-\xi^*|-L\big),
\end{equation}
as a consequence of which we have
\begin{equation}\label{CI2}
\begin{split}
& \frac{1}{\pi b}\int_{\Gamma^*\setminus \Gamma^*_L}\exp\bigg(\frac{\mathrm{Im}U_{\boldsymbol \alpha}(\xi)+\mathrm{Im}\kappa_{\boldsymbol \alpha}(\xi)b^2+\mathrm{Im}\nu_{\boldsymbol \alpha,b}(\xi)b^4}{2\pi b^2} \bigg)|d\xi| \\
 < & \frac{1}{\pi b} \exp\bigg(\frac{-\epsilon_2}{\pi b^2}\bigg)\int_{\Gamma^*\setminus \Gamma^*_L}\exp\bigg(\frac{-(4\pi-\epsilon_1)\big(|\xi-\xi^*|-L\big)}{2\pi}\bigg) |d\xi|\\
 < & O\Big(e^{\frac{-\epsilon}{\pi b^2}}\Big)
\end{split}
\end{equation}
for any $\epsilon<\epsilon_1.$ 

For the proof of the claim, by  Proposition \ref{PL} and (\ref{ImU2}),  for $l>L,$ i.e., $\xi^*\pm \mathbf il \in \Gamma^*\setminus\Gamma^*_L,$ we have
\begin{equation}\label{ImU2'}
\mathrm{Im}U_{\boldsymbol \alpha}(\xi^*\pm \mathbf il )\leqslant \mathrm{Im}U_{\boldsymbol\alpha}(\xi^*\pm \mathbf i L) < -3\epsilon_1; 
\end{equation} 
and by Proposition \ref{PL}  again and the choice of $L$ and $b_0,$ we have 
$$\frac{\partial}{\partial \mathrm{Im}\xi} \Big(\mathrm{Im}U_{\boldsymbol\alpha}(\xi^*+\mathbf il)+\mathrm{Im}\kappa_{\boldsymbol\alpha}(\xi^*+\mathbf il)b^2\Big)<-4\pi+Kb^2<-4\pi+\epsilon_1,$$
and 
$$\frac{\partial}{\partial \mathrm{Im}\xi} \Big(\mathrm{Im}U_{\boldsymbol\alpha}(\xi^*-\mathbf il)+\mathrm{Im}\kappa_{\boldsymbol\alpha}(\xi^*-\mathbf il)b^2\Big)>4\pi-Kb^2>4\pi-\epsilon_1.$$
Together with the Mean Value Theorem, (\ref{ImU2'}) and (\ref{ImK2}), we have 
\begin{equation}\label{Bou2}
\begin{split}
\mathrm{Im}U_{\boldsymbol\alpha}(\xi)+ \mathrm{Im}\kappa_{\boldsymbol\alpha}(\xi)b^2 < & \mathrm{Im}U_{\boldsymbol\alpha}(\xi^*\pm \mathbf iL) + \mathrm{Im}
\kappa_{\boldsymbol\alpha}(\xi^*\pm \mathbf iL)b^2 - (4\pi-\epsilon_1) \big |  \xi - (\xi^*\pm \mathbf iL ) \big|\\
< &-2\epsilon_1  -(4\pi-\epsilon_1)\big(|\xi-\xi^*|-L \big)
\end{split}
\end{equation}
for all $\xi \in \Gamma^*\setminus \Gamma^*_L.$  Finally, putting (\ref{Bou2}) and (\ref{last5}) together, we have (\ref{cl2}) and  the first  inequality in  (\ref{CI2}); and since 
$$|\xi-\xi^*|-L\to+\infty$$
as $\xi\in \Gamma^*\setminus \Gamma^*_L$ approaches $\infty,$ we have the second inequality in (\ref{CI2}). This completes the proof of (III).
\medskip

Putting (I), (II)  and (III) together, we have as $b\to 0,$ 
$$\bigg\{\begin{matrix} a_1 & a_2 & a_3 \\ a_4 & a_5 & a_6 \end{matrix} \bigg\}_b=\frac{e^{\frac{-\mathrm{Vol}(\Delta)\cdot\mathbf i}{\pi b^2}}}{\sqrt[4]{-\det\mathrm{Gram}(\Delta)}}\Big(1+O\big(b^2\big)\Big). $$ 
\end{proof}

As an immediate consequence of Theorem \ref{vol2} and Proposition \ref{reflection}, the reflection symmetry, we have

\begin{theorem}\label{vol2pm}
Let $(\theta_1,\dots, \theta_6)$ be the dihedral angles of  a truncated hyperideal anti-de Sitter tetrahedron $\Delta.$ Then  as $b\to 0,$
\begin{equation*}\label{AdS/CFT}
\bigg\{\begin{matrix}  \frac{Q}{2}\pm \frac{\theta_4}{2\pi b} & \frac{Q}{2}\pm \frac{\theta_5}{2\pi b} & \frac{Q}{2}\pm\frac{\theta_6}{2\pi b}\\\frac{Q}{2} \pm \frac{\theta_1}{2\pi b} & \frac{Q}{2}\pm\frac{\theta_2}{2\pi b} & \frac{Q}{2}\pm\frac{\theta_3}{2\pi b}  \end{matrix} \bigg\}_b=\frac{e^{\frac{-\mathrm{Vol}(\Delta)\cdot \mathbf i}{\pi b^2} }}{\sqrt[4]{-\det\mathrm{Gram}(\Delta)}} \Big(1 +O\big(b^2\big)\Big),
\end{equation*}
where in each entry of the $b$-$6j$ symbol the sign $+$ and $-$ can be chosen arbitrarily, $\mathrm{Vol}(\Delta)$ is the anti-de Sitter volume of $\Delta$, and $\mathrm{Gram}(\Delta)$ is the Gram matrix of $\Delta$ in the dihedral angles.
\end{theorem}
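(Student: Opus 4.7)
The plan is to deduce Theorem \ref{vol2pm} directly from Theorem \ref{vol2} by applying the reflection symmetry (Proposition \ref{reflection}) independently in each of the six entries of the $b$-$6j$ symbol. Theorem \ref{vol2} already establishes the asymptotic expansion when every sign is taken to be $+$, so it suffices to show that the value of the $b$-$6j$ symbol is invariant under any of the $2^6$ choices of sign patterns appearing on the left-hand side.

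First, I would record the key algebraic identity: for each $k \in \{1,\dots,6\}$,
$$
Q - \Big(\frac{Q}{2} + \frac{\theta_k}{2\pi b}\Big) = \frac{Q}{2} - \frac{\theta_k}{2\pi b}.
$$
Thus flipping the sign in front of the $k$-th $\theta_k$ in the entries of the $b$-$6j$ symbol is precisely the substitution $a_k \mapsto Q - a_k$. By Proposition \ref{reflection}, this substitution preserves the $b$-admissibility of the six-tuple and leaves the value of the $b$-$6j$ symbol unchanged. Iterating this over any subset of $\{1,\dots,6\}$, I would conclude that
$$
\bigg\{\begin{matrix} \tfrac{Q}{2}\pm\tfrac{\theta_4}{2\pi b} & \tfrac{Q}{2}\pm\tfrac{\theta_5}{2\pi b} & \tfrac{Q}{2}\pm\tfrac{\theta_6}{2\pi b}\\ \tfrac{Q}{2}\pm\tfrac{\theta_1}{2\pi b} & \tfrac{Q}{2}\pm\tfrac{\theta_2}{2\pi b} & \tfrac{Q}{2}\pm\tfrac{\theta_3}{2\pi b} \end{matrix}\bigg\}_b = \bigg\{\begin{matrix} \tfrac{Q}{2}+\tfrac{\theta_4}{2\pi b} & \tfrac{Q}{2}+\tfrac{\theta_5}{2\pi b} & \tfrac{Q}{2}+\tfrac{\theta_6}{2\pi b}\\ \tfrac{Q}{2}+\tfrac{\theta_1}{2\pi b} & \tfrac{Q}{2}+\tfrac{\theta_2}{2\pi b} & \tfrac{Q}{2}+\tfrac{\theta_3}{2\pi b} \end{matrix}\bigg\}_b
$$
for any choice of the signs.

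Next, I would apply Theorem \ref{vol2} to the right-hand side, which gives the claimed expansion
$$
\frac{e^{-\mathrm{Vol}(\Delta)\cdot\mathbf{i}/(\pi b^2)}}{\sqrt[4]{-\det\mathrm{Gram}(\Delta)}}\Big(1+O(b^2)\Big).
$$
Since both $\mathrm{Vol}(\Delta)$ and $\det\mathrm{Gram}(\Delta)$ are geometric invariants of the tetrahedron $\Delta$ (and therefore do not depend on the sign pattern chosen on the left), the asymptotic formula is the same in all $2^6$ cases, completing the proof.

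There is essentially no obstacle: the theorem is a corollary. The only minor point requiring care is to verify, in the course of iterating the reflection, that the intermediate six-tuples remain $b$-admissible so that Proposition \ref{reflection} applies at each step; but this is exactly the first conclusion recorded in the statement of Proposition \ref{reflection}, so the argument goes through without modification.
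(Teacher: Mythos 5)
Your proposal is correct and coincides with the paper's own argument: Theorem \ref{vol2pm} is stated there as an immediate consequence of Theorem \ref{vol2} together with Proposition \ref{reflection}, applied entrywise exactly as you do, since flipping the sign of $\theta_k$ is the substitution $a_k\mapsto Q-a_k$. The only detail you flag, preservation of $b$-admissibility at each step, is indeed already part of Proposition \ref{reflection}, so nothing further is needed.
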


\bibliography{b-6j}
\bibliographystyle{abbrv}


\end{document}